\newcommand{\warningsign}{\tikz[baseline=-.55ex] \node[shape=regular polygon, rounded corners=1mm, regular polygon sides=3, inner sep=0pt, draw, text=black, fill=lightgray, thick] {\textbf{!}};}
\newcommand{\protectwarningsign}{\protect\tikz[baseline=-.55ex] \protect\node[shape=regular polygon, rounded corners=1mm, regular polygon sides=3, inner sep=0pt, draw, text=black, fill=lightgray, thick] {\textbf{!}};}
\newcommand{\textsu}[1]{\textup{\textsf{#1}}}
\newcommand{\SI}{\textup{{\ensuremath{\mathcal S\!I}}}}
\newcommand{\inst}{\textup{\textsu{inst}}}
\newcommand{\perc}{\textup{\textsu{perc}}}
\newcommand{\bij}{\textup{\textsu{bij}}}
\newcommand{\la}{\lambda}
\newcommand{\ComCla}[1]{\textup{\textsu{#1}}}
\newcommand{\problem}[1]{\mbox{{\normalfont\textsc{#1}}}} 
\newcommand{\problemp}[1]{\mbox{\normalfont\textsc{$(\problem{#1})$}}}
\newcommand{\sharpP}{\ComCla{\#P}}
\newcommand{\SP}{\ComCla{\#P}}
\newcommand{\GapP}{\ComCla{GapP}}
\newcommand{\GapPP}{\ComCla{GapP}_{\ge 0}}
\newcommand{\PolynP}{\ComCla{PolynP}}
\newcommand{\PolynPP}{\ComCla{PolynP}_{\ge 0}}
\newcommand{\PPAD}{\ComCla{PPAD}}
\newcommand{\PPADS}{\ComCla{PPADS}}
\newcommand{\PPA}{\ComCla{PPA}}
\newcommand{\PPP}{\ComCla{PPP}}
\newcommand{\CLS}{\ComCla{CLS}}
\newcommand{\PLS}{\ComCla{PLS}}
\newcommand{\sharpCOUNTALL}[1][]{\ComCla{\#COUNTALL#1}}
\newcommand{\sharpCOUNTGAP}[1][]{\ComCla{\#COUNTGAP#1}}
\newcommand{\TFNP}{\ComCla{TFNP}}
\newcommand{\PTFNP}{\ComCla{PTFNP}}
\newcommand{\UP}{\ComCla{UP}}
\newcommand{\coUP}{\ComCla{coUP}}
\newcommand{\NP}{\ComCla{NP}}
\newcommand{\coNP}{\ComCla{coNP}}
\renewcommand{\P}{\ComCla{P}}
\newcommand{\CeqP}{\ComCla{C$_=$P}}
\newcommand{\PH}{\ComCla{PH}}
\newcommand{\FP}{\ComCla{FP}}
\newcommand{\acc}{\textup{\textsu{acc}}}
\newcommand{\GL}{\textup{GL}}
\newcommand{\IC}{\mathbb{C}}
\newcommand{\IQ}{\mathbb{Q}}
\newcommand{\IR}{\mathbb{R}}
\newcommand{\IN}{\mathbb{N}}
\newcommand{\IO}{\mathbb{O}}
\newcommand{\IOp}{{\mathbb{O}_+}}
\newcommand{\IZ}{\mathbb{Z}}
\newcommand{\aS}{\mathfrak{S}}
\newcommand{\sgn}{\text{sgn}}
\newcommand{\tensor}{{\textstyle\ensuremath{\bigotimes}}}
\newcommand{\HWV}{\textsu{HWV}}
\newcommand{\mult}{\textup{mult}}
\newcommand{\eq}{\textup{\textsu{eq}}}
\newcommand{\diag}{\text{diag}}
\newcommand{\suppb}{\textup{suppb}}
\numberwithin{equation}{subsection}
\newtheorem{proposition}[equation]{Proposition}
\newtheorem{claim}[equation]{Claim}
\newtheorem{lemma}[equation]{Lemma}
\newtheorem{remark}[equation]{Remark}
\newtheorem{corollary}[equation]{Corollary}
\newtheorem{theorem}[equation]{Theorem}
\newtheorem*{theorem*}{Theorem}
\newtheorem*{proposition*}{Proposition}
\newtheorem{definition}[equation]{Definition}
\theoremstyle{definition}
\newtheorem{example}[equation]{Example}
\newtheorem{conjecture}[equation]{Conjecture}
\def\Ddots{\mathinner{\mkern1mu\raise\p@
\vbox{\kern7\p@\hbox{.}}\mkern2mu
\raise4\p@\hbox{.}\mkern2mu\raise7\p@\hbox{.}\mkern1mu}}
\newcommand{\doublewidetilde}[1]{{%
  \mathpalette\double@widetilde{#1}%
}}
\newcommand{\double@widetilde}[2]{%
  \sbox\z@{$\m@th#1\widetilde{#2}$}%
  \ht\z@=.9\ht\z@
  \widetilde{\box\z@}%
}
\def\.{\hskip.06cm}
\def\ts{\hskip.03cm}
\def\nin{\noindent}
\definecolor{darkblue}{rgb}{0.0,0,0.7}
\newcommand{\darkblue}{\color{darkblue}}
\definecolor{darkred}{rgb}{0.68,0,0}
\newcommand{\darkred}{\color{darkred}}
\definecolor{darkgreen}{rgb}{0,.38,0}
\newcommand{\darkgreen}{\color{darkgreen}}
\newcommand{\defn}[1]{\emph{\darkblue #1}}
\newcommand{\defna}[1]{\emph{\darkred #1}}
\newcommand{\defng}[1]{\emph{\darkgreen #1}}
\newcommand{\per}{\mathrm{per}}
\def\bu{\bullet}
\def\emp{\varnothing}
\def\zz{\mathbb Z}
\def\nn{\mathbb N}
\def\cc{\mathbb C}
\def\rr{\mathbb R}
\def\qqq{\mathbb Q}
\def\pp{\mathbb P}
\def\sm{\smallsetminus}
\def\De{\Delta}
\def\la{\lambda}
\def\ga{\gamma}
\def\de{\delta}
\def\al{\alpha}
\def\be{\beta}
\def\ka{\ell}
\def\io{j}
\def\vr{\varrho}
\def\vp{\varphi}
\def\cB{\mathcal B}
\def\cA{\mathcal A}
\def\<{\langle}
\def\>{\rangle}
\def\La{\Lambda}
\def\Lam{\la}
\def\RR{ {\text {\rm R} } }
\def\0{{\mathbf 0}}
\def\tre{\trianglerighteq}
\def\EE{{\mathcal E}}
\def\bbc{\textbf{\textit{c}}}
\def\bbd{\textbf{\textit{d}}}
\def\bbx{\textbf{\textit{x}}}
\def\bby{\textbf{\textit{y}}}
\def\ba{\textbf{\textit{a}}}
\def\bcup{{\large{\cup}}}
\def\bcap{{\large{\cap}}}
\def\tilde{\widetilde}
\def\mP{{\mathscr{B}}}
\def\BB{\mathscr{L}}
\def\@fnsymbol#1{\ensuremath{\ifcase#1\or *\or **\or \ddagger\or
   \mathsection\or \mathparagraph\or \|\or \dagger\or \dagger\dagger
   \or \ddagger\ddagger \else\@ctrerr\fi}}
\author{Christian Ikenmeyer\thanks{University of Liverpool, United Kingdom, christian.ikenmeyer@liverpool.ac.uk} \and Igor Pak\thanks{University of California, Los Angeles, USA, pak@math.ucla.edu}}
\title{
What is in $\sharpP$ and what is not?
}
\begin{document}
\sloppy
\maketitle
\thispagestyle{empty}

\begin{abstract}
For several classical nonnegative integer functions,
we investigate if they are members of the counting complexity class \ts $\sharpP$ \ts or \ts not.
We prove \ts $\sharpP$ \ts membership in surprising cases,
and in other cases we prove non-membership, relying on standard complexity assumptions or on oracle separations.

We initiate the study of the polynomial closure properties of \ts $\sharpP$
on affine varieties, i.e.,
if all problem instances satisfy algebraic constraints.
This is directly linked to classical combinatorial proofs of algebraic identities and inequalities.
We investigate \ts $\#\TFNP$ \ts and
obtain oracle separations that prove the strict inclusion of \ts $\sharpP$ \ts in all standard syntactic subclasses of \ts
$\#\TFNP-1$.
\end{abstract}

\bigskip
\bigskip

{\footnotesize

\noindent\textbf{Keywords:} Counting complexity, combinatorial proofs, \TFNP, \sharpP, \GapP

\bigskip
\bigskip

\noindent ACM computing classification system:

\smallskip

\noindent $\bullet$ Mathematics of computing $\sim$ Discrete mathematics $\sim$ Combinatorics

\noindent $\bullet$ Theory of computation $\sim$ Computational complexity and cryptography $\sim$ Complexity classes

\bigskip
\bigskip

\noindent AMS subject classification: 05A20, 68Q15 and 68R99

\smallskip

}

\newpage
\clearpage

\setcounter{tocdepth}{2}
{\footnotesize
\tableofcontents
}
\newpage
\clearpage

\section{Introduction}\label{s:intro}

\subsection{Foreword}\label{ss:intro-foreword}
Finding a \defna{combinatorial interpretation} is an everlasting problem in Combinatorics.
Having combinatorial objects assigned to numbers brings them depth and structure,
makes them alive, sheds light on them, and allows them to be studied in a way
that would not be possible otherwise.  Once combinatorial objects are found,
they can be related to other objects via bijections, while the numbers'
positivity and asymptotics can then be analyzed.

Historically, this approach was pioneered by J.J.~Sylvester in
his ``\emph{constructive theory of partitions}''~\cite{Syl82}.
There, Sylvester was able to rederive a host of old
partition identities and prove many new ones by interpreting the
coefficients on both sides as the numbers of certain \emph{Ferrers shapes} (now called
\emph{Young diagrams}), and relating two sides to each other. G.H.~Hardy marveled
at such proofs, calling them ``striking'' and ``unlike any other''~\cite{Har40},
see also~\cite{Pak06}.

Since the 1960s, this approach became a staple in \defng{Enumerative Combinatorics},
reaching as far as undergraduate textbooks~\cite{SW86}, monographs~\cite{Loe11}
and multimedia compendia~\cite{Vie16}.  In \defng{Algebraic Combinatorics}, even one
combinatorial interpretation can introduce revolutionary changes.  Notably, a Young tableau
interpretation of the \emph{Littlewood--Richardson {\rm (LR-)} coefficients} \ts $c^\la_{\mu\ts\nu}$
was discovered in~\cite{LR34}.  These numbers describe the structure constants of the
\emph{Schur functions} multiplication \cite{Mac95,Sta12}.
Over the last few decades, this result led to an avalanche of developments,
culminating with a complete resolution of the
\emph{Horn problem}~\cite{Kly98} (see also~\cite{Ful98}), proof of the
\emph{saturation conjecture}~\cite{KT99}, and polynomial time algorithms
for the vanishing of the LR--coefficients~\cite{BI13,MNS12,Ike16}.


When a combinatorial interpretation exists it is a modern wonder, a starting
point of a combinatorial investigation.  \defna{But what if none is known?}
Such examples in Enumerative Combinatorics are too numerous
to be listed, see e.g.~\cite[$\S$4]{Pak18}.  In Algebraic Combinatorics,
the following are the top three ``most wanted''
combinatorial interpretations, all from \defng{Stanley's list}~\cite{Sta00}:

\smallskip

\nin
$\bullet$ \ \defn{Kronecker coefficients}  \ts $g(\la,\mu,\nu)$ \ts which
generalize LR--coefficients and give structure constants of tensor
products of $\aS_n$-modules.  This celebrated problem goes back to
Murnaghan~\cite{Mur38} and plays a crucial role in \defng{Geometric
Complexity Theory} (GCT), see~\cite{Mul09}. See
\cite{BDO15,IMW17,PP17,PPY19} and~$\S$\ref{ss:open-orbits},
for some recent combinatorial and complexity work on the subject.

\smallskip

\nin
$\bullet$ \ \defn{plethysm coefficients} \ts $p_\la(\mu,\nu)$ \ts which
describe decompositions of Schur functors of $\aS_n$-modules, and is the
main subject of \ts GCT7~\cite{Mul07}, see also \cite{BIP19,FI20,IP17}.
They also appear in connection to the \defng{Foulkes conjecture} in
Representation Theory, see \cite{Bri93,CIM17,Lan15}.

\smallskip

\nin
$\bullet$ \ \defn{Schubert coefficients} \ts $c(u,v,w)$ \ts which give
structure constants of the product of Schubert polynomials,
defined by Lascoux and Sch\"utzenberger~\cite{LS82} in the context
of \defng{cohomology of the Grassmannian}, see \cite{Mac91,Man01}.
We refer to \cite{Knu16,KZ20,MPP14} for examples of positive results.

\smallskip

In all three cases, there is a widespread belief that these
coefficients must have a combinatorial interpretation. A positive
resolution of either problem would be a major breakthrough culminating
decades long study. In the context of GCT,
Mulmuley conjectured~\cite{Mul09} that both Kronecker and plethysm
coefficients are in~$\sharpP$ (see \cite{Val79}), as a step towards proving that \ts $\P\ne\NP$.
Note that all three functions are in
\ts $\GapP_{\geq 0}$\ts, suggesting commonality of the obstacles.

Now, \defna{what if the community is wrong}, and these functions are
not in~$\sharpP$?  Such a possibility has only been raised recently
\cite{Pak19,Spe11}.
 According to Popper's philosophy, a belief needs to be \emph{disprovable} in order to be \emph{scientific}~\cite{Pop62}.
Until now there has been little effort towards proving that some
natural combinatorial functions are not in~$\sharpP$ (see below).
With this paper we initiate a systematic study of this problem.

We show that many natural combinatorial functions are not in~$\sharpP$
under various complexity assumptions.  In a positive direction, we
prove that many functions \emph{are} in~$\sharpP$, some strikingly close
to those that are not.

\smallskip

\subsection{Motivational examples of {\normalfont$\SP$} functions}\label{ss:intro-basics-SP}
\newcounter{motivationalcounter}
Let \ts $\GapPP$ \ts be the class of nonnegative functions in \ts $\GapP \. := \. \{f_1 - f_2 \mid f_1,f_2\in \sharpP\}$.\footnote{The closure $\GapP=\sharpP-\sharpP$ of $\sharpP$ under subtraction was introduced in \cite{fenner1994gap} and indep.\ in~\cite{Gup95}.}
More generally, we consider the class
\ts $\PolynP \. := \. \bigl\{\varphi(f_1,\ldots,f_k) \mid \varphi\in\IQ[x_1,\ldots,x_k], f_i \in \sharpP\bigr\}$,
and study the class \ts $\PolynPP$ \ts of nonnegative
functions in \ts $\PolynP$.  The place to start is to look for
natural
integer
functions in these classes and ask if they lie in~$\SP$.
For the three functions as above the problem remains open, but what is known in other
cases?  Consider the following motivating examples:

\smallskip

\refstepcounter{motivationalcounter} \label{motivational:LE}
\nin
$(\themotivationalcounter)$ \.
Let \ts $e: P\to \nn$ \ts be the number of linear extensions of~$P$, where \ts $P=(X,\prec)$ \ts
is a poset with $n$ elements.  Recall that \ts $e(P)\ge 1$, so \ts $e'(P): = e(P)-1\in \GapPP$.
Now observe that \ts $e'\in \SP$ \ts simply because finding the lex-smallest
linear extension $L$ can be done in polynomial time (see e.g.~\cite{CW95}),
so \ts $e'(P)$ \ts counts linear extensions of~$P$ that are different from~$L$.
Note aside that since \ts $e$ \ts is $\sharpP$-complete~\cite{BW91}, then so is~$\ts e'$.

\smallskip

\refstepcounter{motivationalcounter}%
\label{motivational:sperner}%
\nin
$(\themotivationalcounter)$ \.
Recall \defn{Sperner's lemma} which states that for every
$\{1,2,3\}$-coloring $\chi$ of interior vertices in a side-length $n$-triangle region~$\De_n$ of the plane
whose sides are colored $1$, $2$ and~$3$, respectively, there is a
\defng{rainbow} (123) triangle.  We trust the reader is familiar with
the setting, see e.g.~\cite{Pap94a} and \cite[$\S$6.7]{MM11}.  Here \ts $n$ \ts
is given in binary and $\chi$ is given by a polynomial size circuit.
Denote by \ts $t(\chi)$ \ts the number of rainbow triangles,
so that \ts $t(\chi)-1 \in \GapPP$.

Since the typical proof of Sperner's lemma involves tracing down the
path of non-rainbow triangles until a rainbow triangle is reached,
it may come as a surprise that \ts $t(\chi)-1 \in \SP$.  Indeed,
simply observe that \ts $t(\chi) -1 = 2 \ts t_-(\chi)$,
where \ts $t_\pm(\chi)$ \ts denotes the number of rainbow triangles
with positive/negative orientation.  This follows from
\ts $t(\chi)=t_+(\chi)+t_-(\chi)$ \ts and \ts $t_+(\chi)-t_-(\chi)=1$
\ts equations, see e.g.~\cite[$\S$8]{Pak03}.

\smallskip

\nin
\refstepcounter{motivationalcounter}%
\label{motivational:threecoloring}%
\nin
$(\themotivationalcounter)$ \.
Let \ts $G$ \ts be a simple graph with at least one edge, and let
\ts $f(G)$ \ts be the number of proper $3$-colorings of~$G$.
Then \ts $f(G)/6$ is an integer valued function in $\PolynPP$ \ts by taking into account permutations of colors.
Of the six possible $3$-colorings corresponding to a given $3$-coloring one
can easily choose the lex-smallest, implying that \ts
$f(G)/6 \in \SP$.\footnote{It is important to emphasize that while $f(G)$ is
$\SP$-complete, it is completely irrelevant to the conclusion.  Crucially,
the \defng{lex-smallest test} \ts is in~$\P$ in both this and the previous example.
In non$\ts$--$\ts\SP$ examples of this kind, the lex-smallest test is $\NP$-hard
(see below).} Such solution is not always possible in other problems,
see $\S$\ref{ss:intro-basics-SP-not}\eqref{motivnon:smith},
and algorithmic approaches to equivalence problems have been studied in
\cite{BG83, BG84, FG11}.

\smallskip

\nin
\refstepcounter{motivationalcounter}%
\nin
$(\themotivationalcounter)$ \label{motivational:matchings} \.
Let \. $\delta(k,G) \. := \. m_k(G)^2 \. - \. m_{k-1}(G)\. m_{k+1}(G)$,
where \ts $m_k(G)$ \ts is the number of $k$-matchings in graph~$G$.
The function \ts $\de\in \GapP$ \ts by definition.  By the celebrated
result of Heilmann and Lieb~\cite{HL72}, the sequence \. $m_1(G), m_2(G),\ldots$
\. is \defng{log-concave}, implying that \. $\delta \in \GapPP$\ts. This result is
a starting point of many combinatorial investigations~\cite{God93}, including notably
the ``interlacing families'' series~\cite{MSS13}.  While all signs point to \ts $\de$ \ts
being ``difficult to handle'', it was observed in~\cite{Pak19} that a beautiful
proof in~\cite{Kra96} easily implies that \ts $\de\in \SP$.

\smallskip

\nin
\refstepcounter{motivationalcounter}\label{motivational:fermat}%
\nin
$(\themotivationalcounter)$ \.
Recall \defn{Fermat's little theorem}: \ts For every prime~$\ts p$ \ts and \ts $a \in \nn$, we have:
\[
a^p \. \equiv \. a \. \pmod{p}.
\]
This is one of the most basic and most celebrated results in Number Theory,
see e.g.~\cite[$\S$3.4]{IR82}, and is the starting point of the
\defng{Miller--Rabin primality test}, see e.g.~\cite[$\S$10.8.2]{MM11}.
The theorem can be rephrased as: \ts for all \ts $a \in \IN$,  we have
\[
\varphi(a) \ := \ \tfrac{1}{p}(a^p -a) \in \IN\ts.
\]
It is readily converted into a \ts $\PolynP$ \ts function by substituting \ts
$a \leftarrow N(\phi)$ \ts as follows:

\[
\tfrac{1}{p}\big(N(\phi)^p \. - \. N(\phi)\big) \ \in \ \varphi(\sharpP) \ \subseteq \ \PolynP,
\]
where \ts $N(\phi)$ \ts is the number of satisfying assignments of a Boolean formula~$\phi$.
It was shown by Peterson~\cite{Pet72} (see also~\cite{Gol56})
that this function is actually in~$\sharpP$ (see Proposition~\ref{pro:fermat}),
by giving a combinatorial interpretation for \ts $\varphi(a)$, and in this way reproving Fermat's little theorem.
In other words, we have \ts $\varphi(\sharpP)\subseteq\sharpP$, i.e., the class \ts $\sharpP$ \ts is closed under the \defng{Frobenius map}~$\varphi$.
At the heart of the proof of Proposition~\ref{pro:fermat} is a polynomial-time algorithm for identifying lex-smallest elements as in
$\S$\ref{ss:intro-basics-SP}\eqref{motivational:threecoloring}, but here in a $\IZ/p\IZ$ orbit.

\smallskip

\nin
\refstepcounter{motivationalcounter}%
\nin
$(\themotivationalcounter)$ \.
Consider the following inequality
by Grimmett~\cite{Gri76}:
$$
\tau(G) \, \le \, \frac{1}{n} \left(\frac{2\ts m}{n-1}\right)^{n-1}
$$
for the \emph{number of spanning trees}~$\tau(G)$ in a simple graph \ts $G=(V,E)$ \ts with \ts
$|V|=n$ \ts vertices and \ts $|E|=m$ \ts edges.  One can turn this into a \ts $\GapPP$ \ts
function as follows:
$$f(G) \, := \, (2\ts m)^{n-1} \. - \. n \ts (n-1)^{n-1} \ts \tau(G)\ts.
$$
On the other hand, \emph{given that the inequality holds}, the claim \ts $f\in \SP$ \ts
is trivial since \ts $\tau \in \FP$.  Indeed, since $f(G)$ can be computed in
polynomial time by the matrix-tree theorem, we conclude that \ts $f(G)$ \ts counts
the set of $n$-bit binary strings from $0$ to~$f(G)-1$.\footnote{Combinatorialists
would argue that a combinatorial interpretation should explain \emph{why}
the inequality holds in the first place.  In fact, there are several schools
of thought on this issue (see a discussion in~\cite[$\S$4]{Pak18}).   We believe
that the computational complexity approach is both
the least restrictive and the most formal way to address this.}\,\footnote{In
 the context of GCT, motivated by the work on LR--coefficients, Mulmuley
 asks if Kronecker and plethysm coefficients count the number of integer points in a
 polytope defined by the inequalities with polynomial description~\cite{Mul09}.
 We do not work with this narrower notion in this paper. See, however,~\cite{KM18}.}
This is why it is important in the examples above that our functions are not
obviously in~$\FP$ (e.g., being $\SP$-hard is a good indication), since
otherwise the problem becomes trivial.

\smallskip

\subsection{Motivational non-examples}\label{ss:intro-basics-SP-not}
\newcounter{motivationalnonexcounter}
It may come as a surprise that the non-example
comes from the simplest of the inequalities.

\smallskip

\nin
\refstepcounter{motivationalnonexcounter}%
\nin
$(\themotivationalnonexcounter)$ \.
\defn{Cauchy--Schwartz inequality}:

\begin{equation}\label{eq:AMGM}
a^2 \. + \. b^2 \, \ge \, 2 \ts a \ts b \quad \text{where \ $a,\ts b \ts \in \ts \rr$\ts.}
\end{equation}
Now take \ts $a,b$ \ts to be counting functions.  Formally, for two Boolean formulas \ts
$\phi$ \ts and \ts $\psi$, let
\begin{equation}\label{eq:AMGM-c}
h(\phi,\psi) \, := \, N(\phi)^2 \. + \. N(\psi)^2 \. - \. 2\ts N(\phi) \ts N(\psi) \, = \,
\bigl(N(\phi) \. - \. N(\psi)\bigr)^2.
\end{equation}
By definition, the function \ts $h\in \GapPP$\ts.  Note, however,
that if \ts $h\in \SP$, then we get a polytime witness for \ts $N(\phi) \ne N(\psi)$.
This is unlikely, as it would imply the collapse of polynomial hierarchy to the second level:
\ts $\PH = \Sigma_2^\textup{\textsu{p}}$
(see Proposition~\ref{p:GapP-squared-PH}).  Colloquially, this says that under the natural
complexity assumption \ts $\PH \ne \Sigma_2^\textup{\textsu{p}}$, the Cauchy--Schwartz
inequality~\eqref{eq:AMGM} does not have a combinatorial interpretation in full generality.

\smallskip

\refstepcounter{motivationalnonexcounter}\label{notivationalnon:hadamard}%
\nin
$(\themotivationalnonexcounter)$ \.
The \defn{Hadamard inequality} for real $d\times d$  matrices states:
\begin{equation}\label{eq:Had-3x3}
\det\begin{pmatrix}
      a_{11} & \cdots & a_{1d}\\
      \vdots & \ddots & \vdots \\
      a_{d1} & \cdots & a_{dd}
    \end{pmatrix}^2 \ \le \ \prod_{i=1}^{d}\.\bigl(a_{i1}^2\ts + \ts \ldots \ts + \ts a_{id}^2\bigr)\ts.
\end{equation}
Geometrically, it says that the volume of a parallelepiped in~$\rr^d$ is at most the
product of its basis edge lengths, with equality when these edges are orthogonal.
Note that standard proofs of~\eqref{eq:Had-3x3} involve the eigenvalues of \ts $A=(a_{ij})$,
see e.g.\ \cite[$\S$2.13]{HLP52} and~\cite[$\S$2.11]{BB61},
suggesting that translation into combinatorial language would be difficult.

Substitute all \ts $a_{ij} \gets N(\phi_{ij})$ \ts in~\eqref{eq:Had-3x3}, where \ts $\phi_{ij}$ \ts
are Boolean formulas. 
Denote by \ts $H_d$ \ts the resulting counting function written
in the style of~\eqref{eq:AMGM-c}, i.e, $H_d$ is the difference of the right-hand side and the left-hand side of~\eqref{eq:Had-3x3}.  It is easy to see that \ts $H_2\in \SP$,
see~$\S$\ref{ss:main-notation}.
For \. $d\ge 3$, we prove that \ts $H_d\notin\SP$ \ts under an assumption
that we call the \defna{univariate binomial basis conjecture} (Conjecture~\ref{conj:binomialbasisuniv}).
This is a general conjecture about the structure of \ts $\sharpP$.
Formally, we show the existence of an oracle \ts $A \subseteq\{0,1\}^*$ \ts
with \ts $H_3(\vv\sharpP^A)\not\subseteq\sharpP^A$, see Theorem~\ref{p:Had}.

\smallskip

\nin
\refstepcounter{motivationalnonexcounter}\label{motivational:ahlswede}%
\nin
$(\themotivationalnonexcounter)$ \.
For a simple graph $G$ on $n$ vertices, denote by \.
$\bbd(G)=(d_1,\ldots,d_n)$ \. the degree sequence.
Consider the following natural inequality:
\begin{equation}\label{eq:planar-AD}
\pp[\text{\. $G$ \. is planar\.}] \, \le \, \pp[\text{\. $G$ \. is planar\.} \, | \, \bbd(G) \leqslant \bbc\.]\.,
\end{equation}
where $\bbc=(c_1,\ldots,c_n)$ \. is a given sequence, the inequality \ts $\bbd(G)\leqslant \bbc$ \ts
is coordinate-wise: \ts $d_i\le c_i $ \ts for all \ts $1\le i \le n$, and where
the probability is over uniform random graphs on \ts $[n]=\{1,\ldots,n\}$.
This says that being planar correlates with having small degrees.\footnote{Note aside
that the number of labeled planar graphs on~$n$ vertices can be computed in time
polynomial in~$n$ using Tutte's generating function formulas~\cite{Tut63},
see also~\cite{Noy14,Sch15}.
On the other hand, the number of labeled graphs with a given upper bound on the degrees is
likely not in~$\FP$, cf.~\cite{Wor18}.}

We can convert~\eqref{eq:planar-AD} it into a \ts $\GapPP$ \ts function
as follows:
$$
\aligned
\vr(\bbc) \, := \ \, & 2^{\binom{n}{2}} \. \#\bigl\{\text{planar graphs $G$ on $[n]$ with $\bbd(G)\leqslant \bbc$}\bigr\} \ -  \\
&\qquad - \,\. \#\bigl\{\text{planar graphs on $[n]$}\bigr\} \. \cdot \.  \#\bigl\{\text{graphs $G$ on $[n]$ with $\bbd(G)\leqslant \bbc$}\bigr\}\ts.
\endaligned
$$
This inequality is a simple special case of the \defng{Kleitman inequality}~\cite{Kle66},
which is a corollary of the \defng{Ahlswede--Daykin inequality}~\cite{AD78} (Theorem~\ref{t:AD}).
In Proposition~\ref{pro:ahlswededaykin}, we show that the polynomial inequality implied by
the Ahlswede--Daykin inequality is not in~$\SP$, again under the
univariate binomial basis conjecture.

\smallskip

\refstepcounter{motivationalnonexcounter}\label{motivnon:smith}%
\nin
$(\themotivationalnonexcounter)$ \.
Recall the following \defn{Smith's theorem}~\cite{Tut46}.
Let $e=(v,w)$ be an edge in a cubic graph~$G$. Then the number $N_e(G)$ of
Hamiltonian cycles in $G$ containing~$e$ is always even.  Denote
\ts $f(G,e):=N_e(G)/2$ \ts and observe that \ts $f\in \PolynPP$.
Is \ts $f\in \SP$?  We don't know.  This seems unlikely and remains
out of reach with existing technology.
But let us discuss the context behind this problem.

Tutte's original proof in~\cite{Tut46} uses a double counting argument.
The \defng{Price--Thomason algorithm} \ts for finding another Hamiltonian
cycle in a cubic graph~\cite{Pri77,Tho78} gives a more direct
combinatorial proof of Smith's theorem and implies that this search
problem is in \ts $\PPA$, the class defined by the polynomial parity argument.
In fact, \problem{AnotherHamiltonianCycle} \ts
is a motivational problem for \ts $\PPA$, while \ts \problem{Sperner}, see~$\S$\ref{ss:intro-basics-SP}\eqref{motivational:sperner},
is a motivational problem for \ts $\PPAD$~\cite{Pap94a}\footnote{Several versions of \problem{Sperner} on non-orientable manifolds are $\PPA$-complete \cite{Gri01,DEFLQX21}, as well as e.g.\ the problems Consensus-Halving/Necklace Splitting \cite{FG18,FHSZ20,DFHM22}, and integer factoring (assuming the GRH) \cite{Jer16}.
Main $\PPAD$-complete problems include \defng{Nash equilibrium} \cite{DGP09,CD09} and \defng{hairy ball}~\cite{GH21}.}.

Note that the Price--Thomason algorithm partitions the set of all Hamiltonian cycles into pairs,
but this pairing algorithm is known to require an exponential number of steps in the worst case, see \cite{Cam01,krawczyk1999complexity}.
A polynomial-time algorithm instead would
allow us to search for Hamiltonian cycles and only count the ones that are lexicographically smaller than their pairing partner, which would show that $N_e(G)/2 \in \sharpP$, and $(\problem{AllOtherHamiltonianCyclesThroughEdge}-1)/2\in\sharpP$.
Note that such a pairing algorithm (not for the symmetric group $\aS_2$, but for $\aS_3$) is the reason why $f(G)/6 \in \sharpP$ in $\S$\ref{ss:intro-basics-SP}\eqref{motivational:threecoloring}.

We study the basic search problem \problem{Leaf}\footnote{Search problems are often of the type \problem{AnotherSolution}, but the name does not suggest that. \problem{Leaf} for example could reasonably be called \problem{AnotherLeaf}. We adapt the search problem notation and drop the \problem{Another} prefix and mean the corresponding problem of \emph{counting} all but the given leaf.
The problem of counting \emph{all} leaves when we are \emph{not} given one is called \problem{AllLeaves}. Since all our problems are counting problems, we drop the customary $\#$ in front of the problem name, also to avoid having two $\#$ in the class names, see \S\ref{sec:countingclassesandTFNPINTRO}.
} that is used to define $\PPA$, and that arises directly from \problem{Sperner} by a parsimonious reduction from the $\PPAD$-complete problem \problem{SourceOrSink} and removing the edge directions.
We show that for the corresponding counting problem we have an oracle separation that shows $\problem{AllLeaves}^A/2 \notin\sharpP^A$.
In fact, for the counting version of \problem{Leaf}, where we are given one leaf and count all others, we show that $\problem{Leaf}^A-1 \notin\sharpP^A$.
This has to be contrasted to \problem{Sperner}, where the membership $\problem{Sperner}-1 \in \sharpP$ relativizes, i.e., holds with respect to all oracles.
The oracle instances are significantly more complicated as for the \problem{Hadamard} problem, see \S\ref{ss:intro-basics-SP-not}\eqref{notivationalnon:hadamard}.

\smallskip

\refstepcounter{motivationalnonexcounter}\label{motivnon:chess}%
\nin
$(\themotivationalnonexcounter)$ \.
We have seen that \. $\problem{Sperner}(\chi)-1 = 2t_-(\chi)$, hence \.
$(\problem{Sperner}-1)/2 \in \sharpP$.
It is easy to see that the reverse inclusion holds:
The counting class \ts $\#\PPAD\problemp{Sperner}$ \ts defined by the \problem{Sperner} problem contains \ts
$2\sharpP\ts + \ts 1$, or, in other words, \ts $\sharpP = (\#\PPAD\problemp{Sperner}-1)/2$.
For the other classes in $\TFNP$ we similarly get
\begin{eqnarray*}
\sharpP &=&  \big(\#\PPAD\problemp{Sperner}\ts - \ts 1\big)/\ts{}2 \\
&=& \#\PPADS\problemp{Sink}\. - \. 1 \\
&=& \, \#\CLS\problemp{EitherSolution(Sperner,\ts{}Iter)}\. -\. 1
\end{eqnarray*}
and these equalities relativize.
But for the more complex classes we get oracle separations:
\ts $(\#\PPA\problemp{Leaf}-1)/2$, \ts
$\#\PPP\problemp{Pigeon}-1$ \ts and \ts $\#\PLS\problemp{Iter}-1$ \ts strictly contain $\sharpP$ with respect to an oracle.

But this does not give the complete picture, since non-parsimonious reductions between complete problems give different counting classes. For example if instead of leaves in a graph we count the nodes that are adjacent to leaves (which we call preleaves), then this does not change the complexity of the search problem, but it changes the counting class from $\#\PPA\problemp{Leaf}$ to the class $\#\PPA\problemp{Preleaf}$ (note that the functions in $\#\PPA\problemp{Leaf}$ always attain odd values, while the functions in $\#\PPA\problemp{Preleaf}$ do not have this restriction).
The underlying argument is the \emph{chessplayer algorithm}, see e.g.\ \cite{Pap90, BCEIP}, which results in non-parsimonious reductions, which then give rise to a complexity class inclusion diagram where we have
an oracle with respect to which we have
a strict inclusion of \ts $\sharpP$ \ts in \emph{all} the classes \. $\#\PPAD-1$, \ts $\#\PPADS-1$, \ts
$\#\CLS-1$, \ts $\#\PPA-1$, \ts $\#\PPP-1$ \ts and \ts $\#\PLS-1$. The full class inclusion diagram of our results can be found in Figure~\ref{fig:inclusions}. The definitions of the classes and problems can be found in \S\ref{ss:TFNP-Background}.

\smallskip

These problems are syntactically guaranteed to be nonnegative, but in contrast
to the \problem{Hadamard} problem (for example), here the oracle separations
are much more delicate, as we have to fool the Turing machine while producing
instances of the correct cardinality (which is easier if the problem is a
polynomial evaluated at arbitrary \ts $\sharpP$ \ts functions).
To overpass these obstacles, we introduce the notion of a set-instantiator in Definition~\ref{def:setinstantiator}.
We will also treat cases where we have a nonnegativity guarantee, but no further information about the reason. This requires extra care, see Propositions~\ref{prop:log-concavity} and~\ref{pro:karamatadiagonalization}.

\medskip

\section{Definitions, notations and first steps}\label{s:first}
We start in $\S$\ref{ss:main-closure} with the concept of polynomial closure properties of $\sharpP$.
We then prove some simple separations in $\S$\ref{ss:main-monotone} and~$\S$\ref{ss:main-nonmonotone},
as a warmup before our main results in the next section.

\subsection{Basic notation}\label{ss:main-notation}
Let \ts $\nn = \{0,1,2,\ldots\}$, \ts $\qqq_+=\{x\in \qqq, \ts x \ge 0\}$.
For \ts $i \in \IN$ \ts and \ts $x \in \IR$,  we write \.
$$
\binom xi  \, = \, \frac 1 {i!} \, x \ts (x-1) \. \cdots \. (x-i+1)\ts.
$$
In particular, \ts $\binom{i}{0}=\binom{i}{i}=1$. We think of \. $\binom xi$ \.
as a rational polynomial of degree~$i$.
Note that for \, $0\le x<i$, \. $x \in \IN$, we have \. $\binom{x}{i}=0$.
For a vector \ts $(a_1,\ldots,a_n)\in \rr^n$, we use both \ts $\vv{a}$ \ts and
\ts $\ba$ \ts to denote this vector.

We are assuming the reader is familiar with basic complexity theory and
standard complexity classes: \. $\P$, \. $\NP$, \. $\UP$, \. $\PH$,  \. $\FP$, \. $\SP$, \.
$\GapP$, \. $\PPA$ \. and \. $\PPAD$.  We refer to \cite{AB,MM11,Pap94} for
the definitions and standard results, and to \cite{Aar16,Wig19} for further
background.

\smallskip

\subsection{Closure properties}\label{ss:main-closure}
We say that a map \ts $\vp: \nn^k\to \qqq$ \ts is \defn{integer-valued} \ts if
it only attains integer values.  Similarly, map \ts $\vp$ \ts is \defn{nonnegative},
write \ts $\vp \geqslant 0$, if it only attains nonnegative values.

We  say that
\ts $\vp$ \ts is a \defn{closure property} of~$\SP$, if for all \ts $f_1,\ldots,f_k\in \SP$ \ts
we have \ts $\vp(f_1,\ldots,f_k) \in \SP$.  More concisely, we also write:
\[
\varphi\big(\vv{\sharpP}\big)\, \subseteq\, \sharpP\ts.
\]

This is a generalization of the notation \ts $\GapP=\sharpP-\sharpP$ \ts
from \cite{fenner1994gap}.\footnote{When defining $\vv\sharpP$, two different definitions
are equivalent (in the same way as for $\GapP$).  First, one can  define \ts $\vv\sharpP$ \ts
via $k$ many nondeterministic polynomial time Turing machines and consider the $k$-vector
of their number of accepting paths as the output.  Alternatively, one can define it via one
nondeterministic polynomial time Turing machine that has $k$ many different states
of acceptance and one reject state
(these states of acceptance are usually labeled with $+1$ and $-1$ in \ts $\GapP$).
This is a complexity class of multi-output functions, as, for example, considered
in~\cite{Val76}.}
Let \ts $S\subseteq \nn^k$ \ts be a fixed subset.  We say that
\ts $\vp$ \ts is a \defn{closure property} of~$\ts \SP\ts$ \defn{restricted} \ts to~$S$ \ts (or \defn{on}~$S$),
if for all \ts $f_1,\ldots,f_k\in \SP$ \ts which satisfy \. $\big(f_1(w),\ldots,f_k(w)\big) \in S$ \. for all \ts $w \in\{0,1\}^*$,
we have \ts $\vp(f_1,\ldots,f_k) \in \SP$.

Note that we evaluate these \ts $\SP$ \ts functions on the same input.
For example, in the notation of~$\S$\ref{ss:intro-basics-SP}\eqref{motivational:sperner}, the map
\ts $\vp(t_-,t_+) := t(\chi) = t_- + t_-$ \ts is restricted to \ts $S=\{(t_-,t_+) \mid t_+-t_-=1\}$.
Similarly, in the notation of~$\S$\ref{ss:intro-basics-SP}\eqref{motivational:threecoloring}, we have \ts $S=6\ts\IN$.

We write the restriction to~$S$ as a subscript, usually denoted $\vv\sharpP_{\in S}$, but the property ``$\in S$'' is sometimes notationally replaced by other properties such as ``$\geq\!\!1$'' (in which case $S=\IN_{\geq 1}$) or ``even'' (in which case $S=2\IN$).  For example, in notation
of~$\S$\ref{ss:intro-basics-SP}\eqref{motivational:LE},  we have \ts $e(P)\in \SP_{\ge 1}$.  Similarly,
in the notation of~$\S$\ref{ss:intro-basics-SP-not}\eqref{motivnon:smith}, we have \.
$N_e(G) \in \SP_{\textup{even}}$.  This allows us to write statements such as
$$
\sharpP_{\geq 1}+1 \. \subseteq \. \sharpP\., \quad \text{and} \quad
\sharpP^A_{\textup{even}}/2\. \not\subseteq\. \sharpP^A
$$
for the oracle~$A$ separation.  More generally, in the multivariate case we write
\[
\varphi\big(\vv{\sharpP}_{\in S}\big) \. \subseteq \. \sharpP
\]
for the closure property of~$\SP$ restricted to~$S$.
\cite{HR00}
study the univariate case and call such a restriction a counting property.
These univariate restrictions also play a role in \cite{C+89}
and are the main focus of \cite{GW87}.
The most famous example is probably $\UP=\sharpP_{\in\{0,1\}}$ (if one identifies languages with their characteristic functions, which we do), see \cite{Val76, GS88, Ko85, HT03}. In some contexts it is natural to consider a promise version of $\UP$, see \cite{VV85}, but that is \emph{different} from what we consider here. To make connections to $\TFNP$ more visible, we define $\#\TFNP := \sharpP_{\geq 1}$.

Let \ts $\vp,\psi \in \qqq[x_1,\ldots,x_k]$ \ts be rational polynomials.
We write \. $\vp\geqslant_{\#} \psi$ \.
if \.
$$\vp(f_1,\ldots,f_k) \. - \. \psi(f_1,\ldots,f_k) \. \in \ts \SP \quad \text{for all} \quad f_1,\ldots,f_k\in \SP\ts,
$$
or, equivalently, \. $(\varphi-\psi)(\vv\sharpP)\subseteq\sharpP$.
For example, \. $x^2+3 x\geqslant_\#0$.  Less obviously, \. $x^2\geqslant_\# x$, since \ts $x^2-x= 2\binom{x}{2}$ \ts
which counts unordered pairs \ts $(i,j)$, where \ts $1\le i < j \le x$.  For the Hadamard inequality~\eqref{eq:Had-3x3},
we easily have \ts $H_2\geqslant_\# 0$, since
$$\aligned
\det\begin{pmatrix}
      a & b\\
      c & d
    \end{pmatrix}^2  \, & = \, (ad\ts -\ts bc)^2 \, = \, a^2d^2 \. - \. 2 \ts abcd \. + \. b^2c^2 \\
    & \leqslant_\# \,
     a^2c^2 \. + \. a^2d^2 \. + \. b^2c^2 \. + \. b^2d^2 \, = \, (a^2\ts +\ts b^2)(c^2\ts + \ts d^2)\ts.
     \endaligned
$$
We emphasize again that over the reals this is \emph{not} a valid proof of the Hadamard inequality for $2\times 2$ matrices
since the \ts $2\ts abcd$ \ts term can be negative.  The inequality \ts $H_2(a,b,c,d)\geq 0$ \ts over the reals
follows from the Cauchy--Schwartz inequality in this case.

\smallskip

\subsection{Complete squares}\label{ss:main-monotone}
As in the introduction, we have \. $\GapP \. = \. \sharpP\. - \. \sharpP \. = \. \{f_1 - f_2 \mid f_1,f_2\in \sharpP\}$.
We use the notation \. $[\ComCla{C}=0]$ \. to denote the class of languages \. $L\subseteq\{0,1\}^*$ \. for which there
exists a function \. $f \in \ComCla{C}$ \. with: \ts $w \in L$ \ts if and only if \ts $f(w)=0$.
For example, \ts $[\sharpP=0]=\coNP$ and $[\GapP=0]=\CeqP$.  The following proposition about $k$-th powers of $\GapP$ functions is well known:

\begin{proposition}\label{p:GapP-squared-PH}
If \. $\GapP^k \subseteq \sharpP$ \. for some even~$k$, then \. $\PH = \Sigma_2^\textup{\textsu{p}}$\ts.
\end{proposition}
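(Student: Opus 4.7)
The plan is to chain three standard structural facts: reduce the hypothesis to $\CeqP \subseteq \coNP$, combine with the binary-search equivalence $\P^{\sharpP} = \P^{\CeqP}$, and apply Toda's theorem $\PH \subseteq \P^{\sharpP}$.

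First I would unpack the hypothesis. Since $\CeqP = [\GapP = 0]$ by definition, let $h \in \GapP$ be arbitrary. The key observation is that for even $k$ we have $h(w)^k = 0$ if and only if $h(w) = 0$, so the languages $[h = 0]$ and $[h^k = 0]$ coincide. By the assumption $\GapP^k \subseteq \sharpP$ we have $h^k \in \sharpP$, and $[\sharpP = 0] = \coNP$ as noted just before the statement. Letting $h$ range over $\GapP$ yields $\CeqP \subseteq \coNP$.

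Next I would recall that $\P^{\sharpP} = \P^{\CeqP}$. The inclusion $\P^{\CeqP} \subseteq \P^{\sharpP}$ is trivial; the reverse follows by binary search, since a polynomial time machine with a $\CeqP$ oracle can pin down $f(w)$ exactly for any $f \in \sharpP$ (which is bounded by $2^{\mathrm{poly}(|w|)}$) using polynomially many queries of the form ``is $f(w) = c$?''. Combining with the first step gives $\P^{\sharpP} \subseteq \P^{\CeqP} \subseteq \P^{\coNP} = \P^{\NP}$.

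Finally, Toda's theorem yields $\PH \subseteq \P^{\sharpP}$, so $\PH \subseteq \P^{\NP} \subseteq \Sigma_2^\textup{\textsu{p}}$, and the reverse inclusion is trivial. The only step that actually uses the hypothesis is the first one, and the only place evenness enters is the identity $[h^k = 0] = [h = 0]$, which effectively reduces the $k$-th power hypothesis to the case $k = 2$. I do not anticipate any real obstacle: the proof is a routine composition of well-known inclusions, and the only points to state carefully are the zero-set identity and the binary-search simulation of $\sharpP$ by $\CeqP$.
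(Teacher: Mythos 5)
Your first and last steps are correct and coincide with the paper's: the zero-set identity $[h^k=0]=[h=0]$ for even $k$, combined with the hypothesis $h^k\in\GapP^k\subseteq\sharpP$ and $[\sharpP=0]=\coNP$, gives $\CeqP\subseteq\coNP$; and Toda supplies $\PH\subseteq\P^{\sharpP}$. The gap is in the bridge between them, namely the claim that $\P^{\sharpP}=\P^{\CeqP}$ ``by binary search.'' A $\CeqP$ oracle only answers equality tests, i.e.\ ``is $f(w)=c$?''\ for $f\in\sharpP$ and a candidate $c$ that you must supply. Equality queries do not support binary search: a negative answer eliminates a single candidate, not half of the remaining range, so polynomially many such queries rule out only polynomially many of the up to $2^{\mathrm{poly}(|w|)}$ possible values of $f(w)$. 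Binary search would need threshold queries ``is $f(w)\geq c$?'', which are $\ComCla{PP}$-type rather than $\CeqP$-type and are not known to reduce deterministically to equality tests (e.g.\ realizing the threshold as $\binom{f(w)}{c}\neq 0$ fails because $\GapP$ is only closed under binomial coefficients with polynomially bounded lower entry). A warning sign: if your middle step were correct, you would conclude $\PH\subseteq\P^{\coNP}=\P^{\NP}$, a strictly stronger collapse than the one asserted.

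The repair, and the route the paper takes, is to replace the deterministic base machine by a nondeterministic one. An $\NP$ machine with a $\CeqP$ oracle can nondeterministically \emph{guess} the value $i=g(w)$ of a $\GapP$ oracle function and verify the guess with a single equality query $[g-i=0]$, rejecting on wrong guesses; this gives $\NP^{\GapP}\subseteq\NP^{\CeqP}$ and hence $\PH=\NP^{\PH}\subseteq\NP^{\P^{\sharpP}}=\NP^{\sharpP}=\NP^{\GapP}\subseteq\NP^{\CeqP}$. Your first step then finishes the argument: $\NP^{\CeqP}=\NP^{[\GapP=0]}=\NP^{[\GapP^k=0]}\subseteq\NP^{[\sharpP=0]}=\NP^{\coNP}=\Sigma_2^{\textsu{p}}$.
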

\begin{proof}
Recall that \. $\PH \subseteq \NP^{\CeqP}$, which can be found for example in \cite{Tar91}, \cite{Gre93}
and~\cite{Cur16},
which follows from Toda's $\PH\subseteq\P^{\sharpP}$ theorem (see \cite{Tod91, KVVY93, For97, For09}) as follows:
$
\PH = \NP^\PH \subseteq \NP^{\P^\sharpP} = \NP^\sharpP = \NP^\GapP \ \textup{=\footnotemark}\  \NP^\CeqP.
$
\footnotetext{$\NP^\GapP \subseteq \NP^\CeqP$ holds because instead of calling the oracle for a function $g\in\GapP$
we can nondeterministically guess its
return value $i = g(w)$ and call the $\CeqP$ oracle $[g - i = 0]$ on the input $w$ to check for correctness (continue the
computation if the guess was correct; reject the computation if the guess was wrong).}
We now observe:\\
$\PH \subseteq \NP^{\CeqP}=\NP^{[\GapP=0]} = \NP^{[\GapP^k=0]} \subseteq \NP^{[\sharpP=0]} = \NP^{\coNP} = \Sigma_2^\textup{\textsu{p}}$.
\end{proof}

\begin{corollary}[{\em \defn{Cauchy--Schwartz inequality}}{}]\label{c:AMGM} \. $a^2+b^2 \ts \not\geqslant_\# 2ab$ \. unless \. $\PH = \Sigma_2^\textup{\textsu{p}}$\ts.
\end{corollary}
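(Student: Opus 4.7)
The plan is to reduce this directly to Proposition~\ref{p:GapP-squared-PH} by observing that the right-hand side of \eqref{eq:AMGM-c} is exactly the square of a $\GapP$ function. More precisely, I would assume for contradiction that $a^2+b^2 \geqslant_\# 2ab$, meaning that for every $f_1,f_2 \in \sharpP$, the function
\[
(f_1-f_2)^2 \, = \, f_1^2 \. + \. f_2^2 \. - \. 2\ts f_1\ts f_2
\]
lies in $\sharpP$.

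Next I would take an arbitrary $h \in \GapP$ and use the defining decomposition \ts $h=f_1-f_2$ \ts with \ts $f_1,f_2\in\sharpP$. Squaring gives $h^2 = (f_1-f_2)^2 \in \sharpP$ by the hypothesis. Since $h \in \GapP$ was arbitrary, this establishes the containment \ts $\GapP^2 \subseteq \sharpP$.

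Finally, applying Proposition~\ref{p:GapP-squared-PH} with the even exponent $k=2$, we conclude \ts $\PH = \Sigma_2^\textup{\textsu{p}}$, which is the desired contrapositive.

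There is essentially no obstacle: the only observation needed is the identity $(f_1-f_2)^2 = f_1^2+f_2^2-2f_1f_2$, which turns the Cauchy--Schwartz difference into a generic square of a $\GapP$ function, so the assumed closure of $\sharpP$ under $a^2+b^2-2ab$ is \emph{equivalent} to closure under squaring arbitrary $\GapP$ functions, at which point the previous proposition applies verbatim.
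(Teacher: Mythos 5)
Your proposal is correct and is precisely the argument the paper intends: the corollary is stated immediately after Proposition~\ref{p:GapP-squared-PH} because the identity $a^2+b^2-2ab=(a-b)^2$ together with $\GapP=\sharpP-\sharpP$ shows the hypothesis is equivalent to $\GapP^2\subseteq\sharpP$, and the proposition with $k=2$ finishes. Nothing is missing.
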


This innocent looking corollary has immediate negative consequences on the existence of combinatorial proofs for inequalities (in the sense of combinatorial interpretations of the difference of both sides of the inequality), for example the Cauchy inequality or the Alexandrov--Fenchel inequality, see \S\ref{sec:completesquaresandnonmonotone} for the details.

Given the success in our \defng{matching polynomial}
example $\S$\ref{ss:intro-basics-SP}\eqref{motivational:matchings}, one can ask if this example
is generalizable to other log-concave properties.  Formally,
is it true that \ts $g^2 \ts \geqslant_\# \ts f\ts h$ \ts when
functions \ts $(f,g,h)$ \ts are restricted to \ts
$S = \bigl\{(f,g,h)\in\nn^3 \mid g^2-fh \geq 0\bigr\}$?
We give a negative answer to this question, suggesting that
many log-concavity results and open problems (see~$\S$\ref{ss:open-log-concavity})
are unlikely to have a direct combinatorial proof.

\begin{proposition}[{\em \defn{Log-concavity}}{}]\label{prop:log-concavity}
Let \. $\vp(f,g,h) := g^2-fh$,  and let \.
$S := \{(f,g,h)\in\IN^3 \mid g^2-fh \geq 0\}$.
Then \. $\varphi\big(\sharpP^{\times 3}_{\in S}\big) \ts \not \subseteq \ts \sharpP$ \. unless \.  $\PH=\Sigma_2^{\textsu{p}}$.
\end{proposition}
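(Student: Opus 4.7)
The plan is to reduce the claim to the Cauchy--Schwartz collapse of Corollary~\ref{c:AMGM} (equivalently, the $k=2$ case of Proposition~\ref{p:GapP-squared-PH}), by expressing a perfect square of a $\GapP$ function as $g^2-fh$ with $\sharpP$ ingredients that automatically lie in the promise set $S$.

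The key observation is the algebraic identity $(a-b)^2 = (a+b)^2 - (2a)(2b)$. So given two Boolean formulas $\phi,\psi$, define three functions on the common input $w$ by
\[
f(w) \. := \. 2\ts N(\phi)(w), \quad g(w) \. := \. N(\phi)(w)+N(\psi)(w), \quad h(w) \. := \. 2\ts N(\psi)(w).
\]
Each of $f,g,h$ lies in $\sharpP$, since $\sharpP$ is closed under addition and multiplication by positive integer constants (add one extra nondeterministic bit for the factor of~$2$). Moreover,
\[
g(w)^2 \. - \. f(w)\ts h(w) \, = \, \bigl(N(\phi)(w)-N(\psi)(w)\bigr)^2 \, \geq \, 0,
\]
so $(f(w),g(w),h(w)) \in S$ for every $w$; that is, the triple $(f,g,h)$ sits in the promise class $\sharpP^{\times 3}_{\in S}$ over which the closure property is being asserted.

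Now suppose for contradiction that $\varphi\big(\sharpP^{\times 3}_{\in S}\big) \subseteq \sharpP$. Applied to the triple above, this forces $(N(\phi)-N(\psi))^2 \in \sharpP$ for every pair of Boolean formulas $\phi,\psi$. Since $N(\phi)-N(\psi)$ ranges over all of $\GapP$ as $\phi,\psi$ vary, this is exactly the statement $\GapP^2 \subseteq \sharpP$. By Proposition~\ref{p:GapP-squared-PH} with $k=2$, we conclude $\PH = \Sigma_2^{\textsu{p}}$, completing the reduction.

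There is no real obstacle here beyond spotting the decomposition $(a-b)^2=(a+b)^2-(2a)(2b)$; everything else is bookkeeping. The only subtlety worth flagging is that the closure property is \emph{restricted} to~$S$, so one must verify that the constructed triple satisfies the promise pointwise (not merely in expectation or generically), which is why the identity above is so convenient: the nonnegativity $g^2-fh=(N(\phi)-N(\psi))^2\geq 0$ holds automatically.
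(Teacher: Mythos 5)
Your proposal is correct and follows essentially the same route as the paper: the paper embeds the complete square via $(f,g,h)=(1,\ts x+y,\ts 4xy)$, while you use $(2a,\ts a+b,\ts 2b)$, which is the same identity $(a-b)^2=(a+b)^2-(2a)(2b)$ with the factor of $4$ distributed differently. Both reduce the restricted closure property to $\GapP^2\subseteq\sharpP$ and invoke Proposition~\ref{p:GapP-squared-PH}.
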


\begin{proof}
Let \ts $f:=1$, \ts $g:=(x+y)$ \ts and \ts $h:=4\ts xy$. Observe that \ts
$g^2-fh = (x-y)^2\geq 0$ \ts for all \ts $x,y\in \rr$.
The resulting complete square allows us to use Corollary~\ref{c:AMGM} and prove the result.
We now formalize this approach in the notation above.

Let \ts $\vv\gamma:\IN^2\to\IN^3$ \ts defined by \ts $(x,y)\mapsto\big(1,(x+y),4xy\big)$.
Then \ts $\vv\gamma\big(\sharpP^{\times 2}\big) \subseteq \sharpP^{\times 3}_{\in S}$.  Note that
on the left-hand side we have no index anymore, as the image is guaranteed to lie in $S$.
If we have \. $\varphi\big(\sharpP^{\times 3}_{\in S}\big) \subseteq \sharpP$, then it follows \.
$\varphi\big(\vv\gamma\big(\sharpP^{\times 2}\big)\big) \subseteq \sharpP$. But we have \. $\varphi\big(\vv\gamma\big(\sharpP^{\times 2}\big)\big) = \GapP^2$.
We conclude: \ts if \. $\varphi\big(\sharpP^{\times 3}_{\in S}\big) \subseteq \sharpP$ \. then \.$\GapP^2 \subseteq \sharpP$.
Hence, by Proposition~\ref{p:GapP-squared-PH}, we have \. $\PH=\Sigma_2^{\textsu{p}}$.
\end{proof}

\smallskip

\subsection{Non-monotone closure properties}\label{ss:main-nonmonotone}
A map \ts $\vp: \nn^k\to \qqq$ \ts is called \defn{monotone} if \ts
$\vp(a_1,\ldots,a_k)\le \vp(a_1',\ldots,a_k')$ \ts for all integer \ts $a_1\le a_1'$,
\ldots, $a_k\le a_k'$. For example, polynomials \ts $x/2$, \ts $x-1$ \ts and \ts
$x+y$ \ts are monotone, but \ts $x^2-2x$ \ts and \ts $(x-y)^2$ \ts are not.

\smallskip

\begin{proposition}[Non-monotone closure properties]\label{p:non-monotone}
Fix \ts $k\ge 1$. If \ts $\varphi : \IN^k\to\IN$ \ts is a
non-monotone closure property of \ts $\sharpP$, then \ts $\UP=\coUP$.
\end{proposition}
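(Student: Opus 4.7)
The plan is to show that, under the hypotheses, every $\UP$ language has its complement in $\UP$, so $\UP\subseteq\coUP$, which is equivalent to $\UP=\coUP$. The one-line idea is: use the witness of non-monotonicity of $\varphi$ to compress a $\UP$ instance into an instance of $\sharpP$ whose value flips between two explicit integers $D>D'$ depending on membership, then count ordered-and-sorted $D$-tuples of accepting paths to collapse $\{D,D'\}$ to $\{1,0\}$.

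\textbf{Normalizing the non-monotonicity.} First I would upgrade the witness of non-monotonicity to a single coordinate step. Pick $\vec a\leq\vec a'$ in $\IN^k$ with $\varphi(\vec a)>\varphi(\vec a')$, and connect them by a chain that increments one coordinate by one at each step. Since $\varphi$ decreases overall, some edge of the chain must also decrease, so (up to relabeling coordinates) there exist constants $c,b_2,\ldots,b_k\in\IN$ with
\[
D \ := \ \varphi(c,b_2,\ldots,b_k) \ > \ \varphi(c+1,b_2,\ldots,b_k) \ =: \ D'.
\]
Note that $c,b_i,D,D'$ are fixed constants depending only on $\varphi$.

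\textbf{Embedding a $\UP$ language.} Let $L\in\UP$ be decided by an NTM $M_L$ with $f_{M_L}=\mathbf 1_L\in\sharpP$. Define $f_1:=c+f_{M_L}$, which is in $\sharpP$ since $\sharpP$ is closed under addition of a constant, and let $f_i$ be the constant $\sharpP$-function with value $b_i$ for $i\geq 2$. Because $\varphi$ is a closure property of $\sharpP$, the function
\[
h(w) \ := \ \varphi\bigl(f_1(w),f_2(w),\ldots,f_k(w)\bigr)
\]
lies in $\sharpP$. By construction, $h(w)=D$ if $w\notin L$ and $h(w)=D'$ if $w\in L$. Let $M_h$ be an NTM witnessing $h\in\sharpP$.

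\textbf{Collapsing $\{D,D'\}$ to $\{1,0\}$.} Fix the lexicographic order on computation paths of $M_h$ (all paths can be padded to the same length). Build a new NTM $N$ that, on input $w$, nondeterministically guesses a strictly increasing tuple $p_1<p_2<\cdots<p_D$ of computation paths of $M_h$ on $w$, verifies in polynomial time that each $p_j$ is accepting, and accepts iff all checks pass. Since $D$ is a constant, $N$ runs in polynomial time, and the number of its accepting paths is exactly $\binom{h(w)}{D}$: this equals $1$ if $h(w)=D$ (the unique sorted tuple consisting of all accepting paths of $M_h$) and $0$ if $h(w)=D'<D$. Therefore $N$ is a $\UP$-machine for $\overline L$, proving $\overline L\in\UP$. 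As $L$ was arbitrary in $\UP$, we get $\UP\subseteq\coUP=\UP$.

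\textbf{Anticipated obstacle.} The only subtle point is the last step: $\sharpP$ is not closed under subtraction of constants, so one cannot hope to directly transform ``$D$ or $D'$ accepting paths'' into ``$1$ or $0$ accepting paths'' by subtraction. The binomial-coefficient trick via sorted tuples sidesteps this by exploiting the gap $D>D'$ combinatorially rather than arithmetically; once one sees it, everything else is routine bookkeeping (ensuring $D$ is a genuine constant, that path-length normalization is polynomial, and that the chain reduction to a single-coordinate step is valid).
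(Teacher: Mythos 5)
Your proof is correct and is essentially the paper's argument: both reduce the non-monotonicity witness to a single-coordinate decrease $\varphi(\vv c)=D>\varphi(\vv c+\vv e_i)$, plug a shifted $\UP$ function into that coordinate, and apply the binomial-coefficient closure property $\binom{\cdot}{D}$ to map $\{D,D'\}$ to $\{1,0\}$, yielding the complement in $\UP$. The only difference is cosmetic: you spell out the chain argument for the single-step decrease and the sorted-tuples machine realizing $\binom{x}{D}\in\sharpP$, both of which the paper takes as known.
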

\begin{proof}
Let \ts $\varphi$ \ts be a $k$-variate non-monotone closure property of \ts
$\sharpP$.  Then there exists \. $\vv c \in\IN^k$ \. and \. $i \in [k]$ \.
with \. $\varphi(\vv c)>\varphi(\vv c+\vv{e_i})$, where \ts $\vv{e_i}$ \ts
is the $i$-th standard basis vector.  Let \. $D:=\varphi(\vv c)$,
and let \. $d:=\varphi\big(\vv c+\vv{e_i}\big)$.
Note that
$$
\psi \, : \, f \. \mapsto \. \binom{\varphi(f\ts\cdot\ts \vv{e_i} \. + \. \vv c)}{D}
$$
is a univariate closure property of \ts $\sharpP$.

Now let \. $f \in \UP = \sharpP_{\in\{0,1\}}$ \. be arbitrary.
Let $\beta = f(w)$ for an arbitrary $w \in \{0,1\}^*$.
We have \ts $\beta=0$ \ts if and only if \. $\beta\cdot \vv{e_i}+\vv c=\vv c$, and
if and only if \. $\varphi(\beta \cdot \vv{e_i}+\vv c)=D$.
Similarly, we have \ts $\beta=1$ \ts if and only if \. $\beta\cdot \vv{e_i}+\vv c=\vv c+\vv{e_i}$, and
if and only if \. $\varphi(\beta\cdot \vv{e_i}+\vv c)=d$.
Therefore, \. $\psi(\beta)=1-\beta$.
Hence, we have seen that \ts $\psi(f)=1-f$ \ts and that
\ts $\psi(f) \in \UP$. It follows that \. $f \in 1-\UP = \coUP$.
\end{proof}

A similar use of binomial coefficients can also be found in \cite{BG92}.
Curiously, \ts $x(x-1)^2\geqslant_\# 0$ \ts since \ts $x(x-1)^2 = 6\binom{x}{3} + 2\binom{x}{2}$,
yet by Proposition~\ref{p:non-monotone} we have:

\begin{corollary}\label{c:square-weak} \.
$(x-1)^2 \ts \not\geqslant_\# 0$ \. unless \. $\UP=\coUP$.
\end{corollary}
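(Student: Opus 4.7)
The plan is to deduce the corollary as a one-line consequence of Proposition~\ref{p:non-monotone}. First I would verify that the polynomial $\varphi(x) := (x-1)^2$ really is a map $\nn \to \nn$: on integers, $(x-1)^2$ is a nonnegative integer, so there is no issue. Then the whole statement $(x-1)^2 \geqslant_\# 0$ unpacks precisely to: for every $f \in \sharpP$, the function $(f-1)^2$ lies in $\sharpP$, i.e., $\varphi$ is a univariate closure property of $\sharpP$ in the sense of $\S$\ref{ss:main-closure}.

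Next I would check non-monotonicity, which is the entire combinatorial content of the proof. At the two consecutive integers $0$ and $1$ we compute $\varphi(0) = 1$ and $\varphi(1) = 0$, so although $0 \le 1$ we have $\varphi(0) > \varphi(1)$. Hence $\varphi$ fails the monotonicity condition from the preamble to Proposition~\ref{p:non-monotone}.

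With both hypotheses in place, Proposition~\ref{p:non-monotone} (with $k = 1$, $\vec c = 0$, $\vec{e_i} = 1$, $D = 1$, $d = 0$) applies directly and yields $\UP = \coUP$ from the assumed closure. Contrapositively, $(x-1)^2 \not\geqslant_\# 0$ unless $\UP = \coUP$, which is the claim. There is no real obstacle here; the slight subtlety worth flagging is only the conceptual point that $\varphi$ is pointwise nonnegative and yet fails the complexity-theoretic notion of being ``closed under'' because it is not monotone, in contrast to nearby polynomials such as $x(x-1)^2 = 6\binom{x}{3} + 2\binom{x}{2}$ which are monotone and admit a direct combinatorial interpretation via binomial coefficients.
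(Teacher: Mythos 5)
Your proof is correct and is exactly the paper's argument: the corollary is stated as an immediate consequence of Proposition~\ref{p:non-monotone}, with the only content being the non-monotonicity check $\varphi(0)=1>\varphi(1)=0$, which you carry out. Your explicit instantiation of the parameters $(\vv c, D, d)$ and the closing remark contrasting with $x(x-1)^2$ match the paper's surrounding discussion.
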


Note that \ts $a^2+b^2\geqslant ab$ \ts holds over $\nn$ (but not over~$\rr$),
and is halfway between \ts $a^2+b^2\geqslant 2ab$ \ts
and \ts $a^2+b^2\geqslant 0$.  So one can ask if \ts $a^2+b^2\geqslant_\# ab$.
Observe that \ts $\vp(a,b):=a^2-ab+b^2$ \ts is non-monotone: \ts $\vp(0,2)=4$ \ts
and \ts $\vp(1,2)=3$.  Proposition~\ref{p:non-monotone} then gives:

\begin{corollary}\label{c:AMGM-weak} \.
$a^2+b^2 \ts \not\geqslant_\# \ts ab$ \. unless \. $\UP=\coUP$.
\end{corollary}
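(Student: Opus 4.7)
The plan is to reduce directly to Proposition~\ref{p:non-monotone}, which already does the heavy lifting: any non-monotone closure property of $\sharpP$ would force $\UP = \coUP$. So the only task is to verify that the map $\varphi(a,b) := a^2 - ab + b^2$ fails to be monotone on $\IN^2$, and then argue the contrapositive.

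Concretely, I would take the candidate closure property to be $\varphi(a,b) = a^2 + b^2 - ab$, which lives in $\GapP$ by definition, and observe that the hypothesis $a^2 + b^2 \geqslant_\# ab$ is precisely the statement that $\varphi$ takes values in $\IN$ and $\varphi(\vv\sharpP) \subseteq \sharpP$. To exhibit non-monotonicity, I would plug in a single incrementing pair: for instance $(0,2) \leqslant (1,2)$ coordinatewise, yet
\[
\varphi(0,2) \, = \, 0 - 0 + 4 \, = \, 4 \quad\text{while}\quad \varphi(1,2) \, = \, 1 - 2 + 4 \, = \, 3,
\]
so $\varphi(1,2) < \varphi(0,2)$. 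This is the same flavor of witness already used for the analogous Corollary~\ref{c:square-weak} (where $(x-1)^2$ drops from $1$ at $x=0$ to $0$ at $x=1$). Once non-monotonicity is in hand, Proposition~\ref{p:non-monotone} immediately yields $\UP = \coUP$.

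There is essentially no obstacle here: the substance of the argument is entirely packaged inside Proposition~\ref{p:non-monotone}, whose proof uses the binomial-coefficient trick $\binom{\varphi(f\cdot \vv{e_i} + \vv c)}{D}$ to manufacture a univariate closure property that equals $1 - f$ on $\{0,1\}$-valued $f$, converting any $\UP$ language into its complement. The only thing I need to provide is a concrete decreasing pair for $\varphi$, which the evaluation above supplies. If anything, the mildly interesting point worth flagging is that $a^2 + b^2 \geqslant ab$ does hold pointwise over $\IN$ (so the statement is not ruled out by real-valued positivity considerations, in contrast to the Cauchy--Schwarz case of Corollary~\ref{c:AMGM}), which is why the weaker conclusion $\UP = \coUP$, rather than $\PH = \Sigma_2^{\textsu{p}}$, is the natural one to aim for.
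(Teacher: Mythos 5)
Your proof is correct and is essentially identical to the paper's own argument: the paper also sets $\varphi(a,b) := a^2 - ab + b^2$, exhibits non-monotonicity via exactly the witness $\varphi(0,2)=4 > 3 = \varphi(1,2)$, and then invokes Proposition~\ref{p:non-monotone}. Nothing is missing.
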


Recall the \defn{Motzkin polynomial} \ts $M(x,y):=x^2y^4 + x^4y^2-3x^2y^2+1$.
It follows from the AM-GM inequality applied to positive terms,
that \ts $M(x,y)\ge 0$ \ts for all $x,y\in \rr$.
On the other hand, this polynomial is famously not a \defng{sum of squares}, and is a
fundamental example in Semidefinite Optimization,
see e.g.~\cite{Ble13,Mar08}.
Now, observe that \ts $M(x,y)$ \ts is not monotone: \ts $M(0,1)=1$ \ts and \ts $M(1,1) =0$.
Proposition~\ref{p:non-monotone} then gives:

\begin{corollary}\label{c:Motzkin} \.
$M(x,y) \ts \not\geqslant_\# \ts 0$ \. unless \. $\UP=\coUP$.
\end{corollary}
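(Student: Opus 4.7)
The Motzkin polynomial $M(x,y)$ is designed precisely so that Proposition~\ref{p:non-monotone} applies, so the plan is essentially a one-step reduction to that proposition. The only real content I need to check is that $M$ really is an integer-valued, nonnegative, non-monotone map $\IN^2\to\IN$, so that the hypotheses of Proposition~\ref{p:non-monotone} are met whenever $M\geqslant_\# 0$.

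Concretely, I would argue as follows. First, integer-valuedness on $\IN^2$ is clear since $M\in\IZ[x,y]$. Nonnegativity on $\IN^2$ (in fact on all of $\IR^2$) is noted in the paragraph preceding the corollary and follows from AM--GM applied to the positive monomials $x^2y^4,\ x^4y^2,\ 1$: their arithmetic mean is at least the geometric mean $\sqrt[3]{x^6y^6}=x^2y^2$, giving $x^2y^4+x^4y^2+1\geq 3x^2y^2$. Hence $M\colon\IN^2\to\IN$ is a legitimate candidate for a closure property in the sense of $\S$\ref{ss:main-closure}. If the inequality $M\geqslant_\# 0$ were to hold, then by definition $M(f,g)\in\sharpP$ for every $f,g\in\sharpP$, i.e.\ $M$ would be a closure property of $\sharpP$.

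Next, I would supply the non-monotonicity witness, which is also already recorded in the text: $M(0,1)=1$ while $M(1,1)=0$, so $M$ strictly decreases when the first argument increases from $0$ to $1$. Proposition~\ref{p:non-monotone} then immediately yields $\UP=\coUP$. Taking the contrapositive gives the corollary.

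There is no real obstacle here, since all the work was done in Proposition~\ref{p:non-monotone}: the only thing that could go wrong is choosing a witness of non-monotonicity, and the pair $(0,1)\mapsto 1$, $(1,1)\mapsto 0$ is already written down. The same template will apply verbatim to any integer-valued, nonnegative, non-monotone polynomial, which is why Corollaries~\ref{c:square-weak}, \ref{c:AMGM-weak} and~\ref{c:Motzkin} all have essentially the same proof.
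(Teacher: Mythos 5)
Your proposal is correct and matches the paper's argument exactly: the paper also notes nonnegativity via AM--GM, exhibits the same non-monotonicity witness $M(0,1)=1$, $M(1,1)=0$, and invokes Proposition~\ref{p:non-monotone}. Nothing further is needed.
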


\medskip

\subsection{The binomial basis theorem}\label{ss:main-binomial}
In this section we recall a classical result,
describing all relativizing polynomial closure properties of $\sharpP$ and $\GapP$. Note that we
considered only non-monotone examples in~$\S$\ref{ss:main-monotone} and~$\S$\ref{ss:main-nonmonotone}, while many
natural polynomials are monotone.
Clearly, every polynomial with integer coefficients is a closure property
of~$\GapP$, but might not be a closure property of~$\sharpP$.
If all coefficients of $\varphi$ are nonnegative integers, then $\varphi$
is clearly a closure property of $\sharpP$, but we have seen
that there are more, e.g.\ \ts $\frac 1 2 x^2- \frac 1 2  x = \binom{x}{2} \geqslant_\# 0$.

The main tool to shed light onto these issues is the binomial basis for
the polynomial ring $\IQ[x_1,\ldots,x_k]$, which is given by the polynomials \ts
$\beta_\ba \in \IQ[x_1,\ldots,x_k]$, $\ba=(a_1,\ldots,a_k) \in \IN^k$, via
\[
\beta_\ba(x_1,\ldots,x_k) := \binom{x_1}{a_1}\cdots \binom{x_k}{a_k}.
\]
Every polynomial has a unique expression of finite support in this basis.
The univariate version is well-known under the name of classical numerical polynomials.
The study of the multivariate version goes back to Nagell~\cite{Nag19}.
This basis explains the behavior we observe, as stated in the following fundamental theorem, for which the proof is split up into the $\sharpP$ part, see \cite[Thm.~3.13]{HVW95}, and the $\GapP$ part, see \cite[Thm.~6]{Bei97} (see also the bibliographic notes in \cite[\S5.6]{HO02}).
The $\GapP$ part can be obtained as a direct consequence of the algebraic properties of the binomial basis, see \S\ref{subsec:binomialbasisandivf}.
We will reprove the $\sharpP$ part as a direct corollary of our much more general Diagonalization Theorem~\ref{thm:diagonalization}.

\smallskip

\begin{theorem*}[{{\em\defna{Binomial basis theorem}}, {\em \defn{Thm.~\ref{thm:closure}}}}{}]\ The following four properties for a multivariate polynomial $\varphi$ over $\IQ$ are equivalent:\\
\begin{tabularx}{\textwidth}{Xl}
$\bullet$\ $\varphi$ is a closure property of \ts $\GapP$
&
$\bullet$\ $\varphi$ is a relativizing closure property of \ts $\GapP$
\\
\begin{minipage}{9cm}
$\bullet$\ $\varphi$ is integer-valued\\\mbox{~}
\end{minipage}
&
\begin{minipage}{9cm}
$\bullet$\ the expression of \ts $\varphi$ \ts over the binomial basis has only integer coefficients.
\end{minipage}
\end{tabularx}\\
\noindent Moreover, the following are equivalent:\\
\begin{tabularx}{\textwidth}{Xl}
$\bullet$\ $\varphi$ is a closure property of \ts $\GapP_{\geq 0}$
&
$\bullet$\ $\varphi$ is a relativizing closure property of \ts $\GapP_{\geq 0}$
\\
\begin{minipage}{6cm}
$\bullet$\ $\varphi$ is integer-valued and attains only nonnegative integers\\\mbox{~}
\end{minipage}
&
\begin{minipage}{9cm}
$\bullet$\ the expression of $\varphi$ over the binomial basis has integer coefficients and $\varphi$ attains only nonnegative integers if evaluated at integer points in the nonnegative cone.
\end{minipage}
\end{tabularx}\\
\noindent Moreover, the following are equivalent:
\begin{itemize}
\item $\varphi$ is a relativizing closure property of \ts $\sharpP$,
\item the expression of \ts $\varphi$ \ts over the binomial basis has only nonnegative integer coefficients.
\end{itemize}
\end{theorem*}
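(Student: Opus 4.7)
The plan is to prove the three equivalences in parallel, separating the common algebraic skeleton from the complexity-theoretic instantiation arguments. The shared algebraic fact is classical: any $\varphi \in \qqq[x_1,\ldots,x_k]$ has a unique binomial-basis expansion $\varphi = \sum_\ba c_\ba \beta_\ba$, and the coefficients are all integers iff $\varphi$ is integer-valued on $\IN^k$. This follows from iterated backward differences $\Delta_j f(\vv x) = f(\vv x+\vv{e_j})-f(\vv x)$ and the inversion formula $c_\ba = \sum_{\vv i\leq \ba}(-1)^{|\ba|-|\vv i|}\binom{\ba}{\vv i}\varphi(\vv i)$, the multivariate version going back to Nagell. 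The ``if'' direction of each complexity equivalence is then immediate: $\sharpP^A$ is closed under $f \mapsto \binom{f}{j}$ for fixed $j \in \IN$ (an NTM guesses and verifies an unordered $j$-subset of accepting paths of $f$'s NTM), as well as under sum and product, so nonnegative-integer binomial-basis coefficients yield a relativizing closure property of $\sharpP$; likewise $\GapP^A$ is closed also under subtraction, so integer coefficients give a relativizing closure property of $\GapP$, and this upgrades to $\GapP_{\geq 0}$ under the additional nonnegativity condition on $\IN^k$.

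The ``only if'' direction for the $\GapP$ and $\GapP_{\geq 0}$ cases is immediate without invoking any relativization hypothesis: for every $c_i \in \IZ$ (resp.\ every $c_i \in \IN$) the constant function $w\mapsto c_i$ lies in $\GapP$ (resp.\ $\GapP_{\geq 0}$), so any closure property $\varphi$ is forced to map $\IN^k$ into $\IZ$ (resp.\ into $\IN$), at which point the classical algebraic fact above supplies the binomial-basis conclusion. Thus the equivalence of ``closure property'', ``relativizing closure property'', ``integer-valued'' (plus nonnegativity where applicable), and the binomial-basis condition collapses into a single statement in the $\GapP$ case.

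The main obstacle is the $\sharpP$ ``only if'' direction: a relativizing closure property $\varphi$ of $\sharpP$ must have \emph{nonnegative} binomial-basis coefficients, not merely integer ones. Suppose, toward a contradiction, $c_{\ba^*} < 0$ for some $\ba^*$. The plan is to invoke the Diagonalization Theorem~\ref{thm:diagonalization} to construct an oracle $A$ together with $\sharpP^A$ functions $f_1,\ldots,f_k$ whose values at a prescribed finite family of inputs realize arbitrary nonnegative integer tuples, while simultaneously diagonalizing against every candidate $\sharpP^A$ machine that could witness $\varphi(\vv f)$. Because $\sharpP^A$ admits only nonnegative combinations (sum, product, binomial), a negative coefficient in the unique binomial-basis representation obstructs any uniform NTM witness, and by feeding the inversion formula above into the prescribed tuple values the diagonalization forces an input $w$ at which the required output of $\varphi(\vv f)(w)$ cannot be matched by the relativized counting machine under consideration. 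The delicate point, and the reason we route through the instantiator machinery of Theorem~\ref{thm:diagonalization} rather than a bare counting argument, is that the oracle must rigidly pin down the tuple $(f_1(w),\ldots,f_k(w))$ on all inputs where the negative coefficient is exposed, while retaining enough freedom on other inputs to defeat an arbitrary enumeration of candidate $\sharpP^A$ machines.
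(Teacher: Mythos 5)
Your proposal follows essentially the same route as the paper: the $\GapP$ and $\GapP_{\geq 0}$ equivalences reduce to the Nagell-type binomial-basis characterization of integer-valued polynomials (Proposition~\ref{pro:basis}) together with the relativizing closure of $\GapP^A$ under sums, products and binomial coefficients, and the hard $\sharpP$ direction is delegated to the Diagonalization Theorem~\ref{thm:diagonalization} exactly as the paper does in Theorem~\ref{thm:relatclosure}. The only cosmetic difference is that you make explicit, via constant functions, the step the paper dismisses as ``clearly integer-valued''.
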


\smallskip

\noindent Note that even though \ts  $\sharpP$ \ts  and  \ts $\GapP_{\geq 0}$ \ts
have different relativizing closure properties,
this does not unconditionally separate these two classes.
Note also that the theorem implies that \emph{all polynomial closure properties of
\ts $\GapP$  \ts and \ts  $\GapP_{\geq 0}$  \ts relativize}.
We conjecture that this is true for  \ts $\sharpP$ \ts  as well, which is Conjecture~\ref{conj:binomialbasis}.
Proving this, however, would imply  \ts $\sharpP \ne \sharpP^\NP$  \ts (and hence  \ts $\P\neq\NP$),
even just for the univariate \ts $\varphi=\binom{x-1}2$, see Theorem~\ref{thm:relatPNP}.
We get the following sequence of implications:
\[
\P = \NP \ \ \ \Longrightarrow \ \ \ \sharpP = \sharpP^\NP \ \ \ \stackrel{\textup{Thm.}~\ref{thm:relatPNP}}{\Longrightarrow} \ \ \  \binom{\sharpP-1}{2}\subseteq \sharpP \ \ \ \stackrel{\textup{Prop.}~\ref{p:non-monotone}}{\Longrightarrow} \ \ \ \UP = \coUP.
\]
We call polynomials $\varphi$ whose expression over the binomial basis has only nonnegative integer coefficients \defn{binomial-good}, all others are called \defn{binomial-bad}.

The fact that $H_d$ in \S\ref{ss:intro-basics-SP-not}\eqref{notivationalnon:hadamard} is binomial-bad gives us the described separation, see Proposition~\ref{p:Had}.
One famous instance of a binomial-good polynomial is the Frobenius map from~$\S$\ref{ss:intro-basics-SP}\eqref{motivational:fermat}.
There, binomial-goodness can be interpreted as a combinatorial proof of Fermat's little theorem, see Proposition~\ref{pro:fermat}
and its proof by Peterson.

Theorem~\ref{thm:closure} is
proved in \cite[Thm.~3.13]{HVW95} together with \cite[Thm.~6]{Bei97},
which is in fact an extension of an argument of \cite[Thm~3.1.1]{C+89} and \cite[p.~310]{OH93} about the weakness of $\sharpP$ machines in the presence of oracles.
We prove it as a corollary of our Diagonalization Theorem~\ref{thm:diagonalization} (see \S\ref{sec:diagthmINTRO}), which greatly extends Theorem~\ref{thm:closure}.

\subsubsection{The Ahlswede--Daykin inequality}
More advanced problems, where the set $S$ is given as a semialgebraic set, are also possible, for example for the Ahlswede--Daykin inequality,
see~$\S$\ref{ss:intro-basics-SP}\eqref{motivational:ahlswede} and~$\S$\ref{sec:ahlswededaykinkleitman}.

\smallskip

\begin{proposition}[{\em \defng{Ahlswede--Daykin inequality}}{}]
\label{pro:ahlswededaykin}
Let
$$
S \. :=
\left\{\bigl(\alpha_0,\alpha_1,\beta_0,\beta_1,\gamma_0,\gamma_1,\delta_0,\delta_1,h_1,h_2,h_3,h_4\bigr)\in\IN^{12} \ \Biggl| \
\aligned
& \ts \alpha_0\beta_0+h_1 = \gamma_0\delta_0\ts, \. \alpha_0\beta_1+h_2 = \gamma_0\delta_1\\
& \ts \alpha_1\beta_0+h_3 = \gamma_0\delta_1\ts, \.
\ts \alpha_1\beta_1+h_4 = \gamma_1\delta_1
\endaligned
\right\}
$$
and let
$$
\varphi \, := \, \bigl(\gamma_0+\gamma_1\bigr)\big(\delta_0+\delta_1\big)
\. - \. \big(\alpha_0+\alpha_1\big)\big(\beta_0+\beta_1\big).
$$
Then, under the Univariate Binomial Basis Conjecture~\ref{conj:binomialbasisuniv}, we have \.
$\varphi(\vv \sharpP_{\in S}) \not\subseteq \sharpP$.
\end{proposition}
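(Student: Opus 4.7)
My plan is to imitate Proposition~\ref{prop:log-concavity}: I will exhibit an explicit substitution $\vv\gamma \colon \sharpP \to \sharpP^{\times 12}$ whose image lies in $S$, such that the pullback $\varphi \circ \vv\gamma$ is a fixed univariate polynomial whose binomial-basis expansion has a negative coefficient. The Univariate Binomial Basis Conjecture will then prohibit this polynomial from being a closure property of $\sharpP$, yielding $\varphi(\vv\sharpP_{\in S}) \not\subseteq \sharpP$ as claimed.

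For the target polynomial I will take $(f-1)^2 = 2\binom{f}{2} - \binom{f}{1} + \binom{f}{0}$, which is binomial-bad because of the $-\binom{f}{1}$ term. This choice is guided by the arithmetic identity $(f+1)^2 - 4f = (f-1)^2$. Accordingly, for each $f \in \sharpP$ I set $\alpha_0 = \alpha_1 = f$, $\beta_0 = \beta_1 = 1$, $\gamma_0 = f$, $\gamma_1 = 1$, $\delta_0 = 1$, $\delta_1 = f$, so that $(\alpha_0+\alpha_1)(\beta_0+\beta_1) = 4f$ and $(\gamma_0+\gamma_1)(\delta_0+\delta_1) = (f+1)^2$. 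The four defining equations of $S$ then force $h_1 = h_4 = 0$ and $h_2 = h_3 = f(f-1) = 2\binom{f}{2}$; each of these is in $\sharpP$, counting ordered pairs of distinct elements of~$[f]$. Verifying the four equations of $S$ is then a short computation, and a direct expansion gives $\varphi(\vv\gamma(f)) = (f+1)^2 - 4f = (f-1)^2$. Finally, Conjecture~\ref{conj:binomialbasisuniv} supplies some $f \in \sharpP$ with $(f-1)^2 \notin \sharpP$, and $\vv\gamma(f)$ is the desired witness.

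The hard part will be ensuring that all four slacks $h_i$ are binomial-good (and therefore in $\sharpP$). A naive one-variable substitution mirroring the $(1, x+y, 4xy)$ reduction of the log-concavity proof pushes the entire discrepancy $(f+1)^2 - 4f$ into a single slack, which then equals $(f-1)^2$ itself and, under the conjecture, is not in $\sharpP$. My workaround is to spread the discrepancy symmetrically across the two middle equations of $S$ (the ones constraining $\alpha_0\beta_1$ and $\alpha_1\beta_0$): each contributes a slack $f^2 - f = 2\binom{f}{2}$, which is individually binomial-good, and the binomial-bad polynomial $(f-1)^2$ emerges only from the four-way combination that defines $\varphi$.
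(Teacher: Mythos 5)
Your proposal is correct and is essentially the paper's own proof: the paper uses the substitution $\vv\gamma(x)=\bigl(1,1,x,x,x,1,1,x,0,2\binom{x}{2},2\binom{x}{2},0\bigr)$, which is your assignment with the roles of $\alpha$ and $\beta$ swapped (a symmetry of both $S$ and $\varphi$), and arrives at the same binomial-bad pullback $(f-1)^2 = 2\binom{f}{2}-f+1$. Your key observation about distributing the slack $f^2-f$ across $h_2$ and $h_3$ so that each individual slack is binomial-good is exactly the point of the paper's construction.
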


\begin{proof}
Define \ts $\vv\gamma:\IN\to\IN^{12}$ \ts via \.
$\gamma(x)=\bigl(1,1,x,x,x,1,1,x,0,2\binom{x}{2},2\binom{x}{2},0\bigr)$.
Then \ts $\vv\gamma(\sharpP) \subseteq \sharpP^{\times 12}_{\in S}$\ts. Note that
on the left-hand side we have no index anymore, as the image is guaranteed to lie in $S$.
Assume for the sake of contradiction that we have an inclusion \.
$\varphi(\sharpP^{\times 12}_{\in S}) \subseteq \sharpP$. Then it follows that we have an inclusion \.
$\varphi\big(\vv\gamma(\sharpP)\big) \subseteq \sharpP$.
Conjecture~\ref{conj:binomialbasisuniv} implies that this inclusion relativizes.
Therefore, by Theorem~\ref{thm:closure} the univariate polynomial \ts $\varphi\circ \vv\gamma$ \ts is binomial-good.
But we have \. $\varphi\big(\vv\gamma(f)\big) = f^2-2f+1=2\binom{f}{2}-f+1$, which is binomial-bad, a contradiction.
\end{proof}

\section{Main results}\label{s:main}
In this section we state our main results.
It $\S$\ref{sec:diagthmINTRO} we state the Diagonalization Theorem and we give Karamata's inequality as an involved example for its application. In $\S$\ref{sec:countingclassesandTFNPINTRO} we lift these techniques to
handle $\TFNP$ and its subclasses.
We obtain several oracle separations from $\sharpP$ in this way, see Figure~\ref{fig:inclusions}.

\subsection{The diagonalization theorem}
\label{sec:diagthmINTRO}
In Proposition~\ref{pro:ahlswededaykin} the set $S$ lies on an affine algebraic variety, and the proof goes by embedding a curve given by binomial-good polynomials. This is a way of finding separations, but it remains unclear if such curves always exist or how we can find them.
In general, if $S$ lies on an affine variety $Z$ with vanishing ideal $I$, then we know that if there exists a polynomial $\xi \in I$ such that $\varphi+\xi$ is binomial-good, then $\varphi$ is a polynomial closure property of $\sharpP$ on $S$. This is exactly the insight that gives \ts $\problem{Sperner}-1 \in \sharpP$, where all instances lie on the variety \ts $\{(t_+, t_-) \in \IN^2 \. : \. t_+ - t_- -1 = 0\}$.

The reverse is true in the important case of graph varieties (all our examples fall in this category), as we show in the following Diagonalization Theorem.  Formally, assume that there exist \ts
$\ka \in \{0,\ldots,k\}$, and
polynomial maps \ts $\zeta_b:\IQ^\ka\to\IQ$, where \ts $b\in\{\ka+1,\ldots,k\}$, such that~$Z$ is the image \. $\big(f_1,\ldots,f_\ka,\zeta_{\ka+1}(f_1,\ldots,f_\ka),\ldots,\zeta_k(f_1,\ldots,f_\ka)\big)$.
In this case the vanishing ideal $I$ is generated by the \ts $\zeta_b-f_b$, see~$\S$\ref{sec:graphvarieties}.
We call a coset $\varphi+I$ binomial-good, if it contains a binomial-good polynomial, otherwise $\varphi+I$ is binomial-bad.

\smallskip

\begin{theorem*}[{\em \defna{Diagonalization Theorem}, informal version, see \defn{Thm~\ref{thm:diagonalization}}}{}]
Fix $k$ and \ts $0 \leq \ka \leq k$.
Let \ts $\varphi \in \IQ[f_1,\ldots,f_k]$.
Fix functions \ts $\zeta_b \in \IQ[f_1,\ldots,f_\ka]$.
Set $I$ to be the ideal generated by the \ts $\zeta_b(f_1,\ldots,f_\ka)-f_b$, where \ts $\ka+1 \leq b \leq k$.
Fix a function \. $\problem{Multiplicities}:\{0,1\}^*\to\IN^k$ and $\vv t \in S$.
For $A = \bigcup_{j\geq 0}A_j$, $A_j \subseteq \{0,1\}^j$, $\tilde A_{j}$ being the set of length $j-1$ suffixes of the strings in $A_{j}$ that start with a 1, define
\[
p_{A}(w) := \varphi(\problem{Multiplicities}(\tilde A_{|w|})) \ \ \. \textup{ if } \ \ \. A_{|w|}(0^{|w|}) = 1,
\quad
\textup{and} \ \ \. p_{A}(w) := \varphi(\vv t) \ \  \textup{ otherwise}.
\]
Assume further technical conditions, e.g., the existence of set-instantiators for \problem{Multiplicities}.
If \ts $\varphi+I$ \ts is binomial-bad, then there exists \ts $A\subseteq\{0,1\}^*$, such that
for every nondeterministic polytime Turing machine $M$ there exists~$j$, such that \ts
$p_A(0^j)\neq \#\acc_{M^A}(0^j)$;
and whenever \ts $A(0^j)=1$, we have \ts $\problem{Multiplicities}(\tilde A_j)\in S$.
\end{theorem*}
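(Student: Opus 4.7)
The argument is classical oracle diagonalization combined with the algebraic structure underlying the Binomial Basis Theorem~\ref{thm:closure}. I would enumerate all nondeterministic polynomial-time oracle Turing machines $M_1,M_2,\ldots$ with explicit time bounds $p_i$, and build $A=\bigcup_j A_j$ in stages, where stage $i$ defeats $M_i$. At stage $i$, pick a fresh input length $j_i$ so large that $p_{i'}(j_{i'})<j_i$ for all $i'<i$ (so earlier commitments are irrelevant), and so large that the set-instantiator can operate entirely within strings of length $j_i$. We must then choose $A_{j_i}$ to simultaneously achieve (i) $\problem{Multiplicities}(\tilde A_{j_i})\in S$ whenever $0^{j_i}\in A_{j_i}$, and (ii) $\#\acc_{M_i^A}(0^{j_i})\ne p_A(0^{j_i})$. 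Once stage $i$ succeeds for every $i$, no NPTM computes $p_A$ correctly on all inputs of the form $0^j$, which gives the desired separation.

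First I would extract the polynomial behaviour of the machine. Fix $j=j_i$. For any extension $A_j$, the quantity $\#\acc_{M_i^A}(0^j)$ depends only on bits of $A$ of length $\le j$, which are the committed bits together with $A_j$. Grouping the length-$j$ queries appearing along the computation tree of $M_i$ into the finitely many ``types'' they impose (as in the classical proof of the binomial-good direction of Theorem~\ref{thm:closure}), the accepting count becomes a polynomial in the type-counts whose expansion over the binomial basis has only nonnegative integer coefficients. This is the relativizing content of the theorem.

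Next I would feed in the set-instantiator. For any target $\vv m\in S$, the instantiator produces an $A_j$ with $\problem{Multiplicities}(\tilde A_j)=\vv m$ in a structured way that makes every type-count observed by $M_i$ a polynomial function of the coordinates of $\vv m$. Composing, $\#\acc_{M_i^A}(0^j)$ reduces to a polynomial $\psi(\vv m)$ that is binomial-good in the coordinates of $\vv m$. On the other hand, when $0^j\in A_j$ we have $p_A(0^j)=\varphi(\vv m)$. If these agreed at every $\vv m$ the set-instantiator produces, then, because those realizations are Zariski-dense in the graph variety $Z$ cut out by $I$, we would conclude $\psi\equiv\varphi\pmod{I}$; the binomial-good polynomial $\psi$ would then lie in the coset $\varphi+I$, contradicting binomial-badness. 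Hence some realizable $\vv m^\ast$ witnesses a discrepancy, and setting $A_{j_i}$ to the corresponding instantiation (with $0^{j_i}\in A_{j_i}$) defeats~$M_i$.

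The main obstacle I expect is the second step: ensuring that the set-instantiator is rich enough to (a) make the family of realized multiplicity vectors Zariski-dense in $Z$, so that polynomial identities on this family really do descend modulo $I$, and (b) slot cleanly into $M_i$'s computation tree so that $\#\acc_{M_i^A}$ collapses to a single polynomial in $\vv m$ rather than a piecewise expression shaped by hidden choices $M_i$ makes about which length-$j$ strings to probe. This is precisely the content encoded in the ``further technical conditions'' of the statement, and once those are verified, the algebraic contradiction with binomial-badness forces the stage to succeed and standard bookkeeping assembles the oracle $A$.
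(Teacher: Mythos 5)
Your skeleton — enumerate machines, at each stage either find a length-$j$ oracle block on which the machine already errs or else exploit its consistency, feed in set-instantiator instances, and derive a contradiction with binomial-badness of $\varphi+I$ — matches the paper's proof. But the central step, converting the machine's accepting counts into a binomial-good object living in the coset $\varphi+I$, has a genuine gap, and the mechanism you propose for the contradiction would not work as stated. The set-instantiator only guarantees that a path accepts on $\inst_\SI(\vv s)$ iff its perception is contained in $\vv s$; it does \emph{not} give you that the number of paths with a given perception depends only on the cardinality vector $|\vv s|$, so the accepting count is a priori a function of the chosen subsets, not "a polynomial in the type-counts." The paper closes this with two ingredients you are missing: (i) the consistency of $M$ forced by equation~\eqref{eq:fphinew} controls the count only on realizable cardinalities $|\vv s|\in S$, and for the "holes" (cardinality vectors in the down-closure but outside $S$, e.g.\ odd values for \problem{AllLeaves}) one needs the multipartite hypergraph Ramsey theorem (Theorem~\ref{thm:ramsey}) to pass to sub-lists $Q_1,\dots,Q_k$ on which the count $\Phi$ is monochromatic in each cardinality class; only then does one get a function $\Psi(\vv f)=\sum_{\vv g\leqslant\vv f}\binom{\vv f}{\vv g}c_{\vv g}$ with nonnegative integer coefficients.

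(ii) Your contradiction argues that agreement of $\psi$ with $\varphi$ on the realized multiplicity vectors, together with Zariski-density, forces $\psi\equiv\varphi\pmod I$. But at a fixed stage the instantiator realizes only finitely many multiplicity vectors (those in a box $\llbracket\Delta\rrbracket^k$), and a finite set is never Zariski-dense in a positive-dimensional $Z$; moreover $\Psi$ is only a function on that finite box, not a polynomial of controlled degree, so agreement at finitely many points yields no polynomial identity. The paper's Witness Theorem~\ref{thm:findcounterexample} resolves this differently: the hypothesis $\dim\IQ[Z]_{\leq(\varphi,S)}<\infty$ (which is where Zariski-density of the cone $C'_S$ actually enters, via Claims~\ref{cla:conedense} and~\ref{cla:homparts}) yields a finite interpolation set $S'$; one truncates $\Psi$ to a binomial-good polynomial $\widetilde\Psi$ of bounded multidegree, and then either $\widetilde\Psi$ already disagrees with $\varphi$ on $S'$, or $\widetilde\Psi+I$ lies in the slow-growth space and interpolation forces $\varphi+I=\widetilde\Psi+I$ (contradicting binomial-badness), or $\widetilde\Psi+I$ grows too fast and a "large" binomial basis element $\xi_{\vv e}$ supplies a far-away point $\vv\omega_{\vv e}\in S$ where $\Psi>\varphi$. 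This case analysis, and in particular the growth argument in the last case, is absent from your plan and cannot be replaced by density alone.
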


The Diagonalization Theorem~\ref{thm:diagonalization} is the technical heart of this paper.
It is stated in high generality,
and we apply it to a large set of examples of very different flavor, such as for example the Hadamard inequality or $\#\PPA-1$.
Its proof relies on the Witness Theorem~\ref{thm:findcounterexample},
whose proof uses methods from several areas of mathematics including algebraic geometry and Ramsey theory.

As an illustration
we now apply the Diagonalization Theorem to Karamata's inequality, see~$\S$\ref{sec:karamata} for the full details.
In the Karamata setting we are given $f_i, g_i \in \sharpP$, $1 \leq i \leq n$,
such that the following functions $h_i$, $1 \leq i < n$, are also all in $\sharpP$~:
$$
h_i \, := \, f_1 \ts +\ts \ldots \ts +\ts  f_i \. - \. g_1 \ts -\ts \ldots \ts -\ts  g_i,
$$
and we are also guaranteed that
\begin{equation}\label{eq:karamatalastINTRO}
f_1 \ts +\ts \ldots \ts +\ts f_n \. - \. g_1 \ts -\ts \ldots \ts -\ts  g_n \, = \, 0.
\end{equation}
This assumption is called
\defn{majorization}, see~$\S$\ref{sec:karamata}.
Moreover, the functions
\begin{equation}
\label{eq:karamatadeINTRO}
\textup{
$d_i \. := \. f_i \ts - \ts f_{i+1}$ \ \ and \ \ $e_i \. := \. g_i \ts - \ts g_{i+1}$
}
\end{equation}
are also in \ts $\sharpP$ \ts for all \ts $1 \leq i < n$.
Let $Z\subseteq \IQ^{5n-3}$ denote the variety of points that satisfy the
constraints~\eqref{eq:karamatalastINTRO} and \eqref{eq:karamatadeINTRO}, and let \ts
$S = Z\cap \IN^{5n-3}$.
Let $\gamma\in\GapPP$ be any monotone integer-valued function.
Define the Karamata function as
$$
K_{n,\gamma}(\vv f,\vv g) \ := \ \sum_{i=1}^{n} \gamma(f_i) \, - \, \sum_{i=1}^{n} \gamma(g_i).
$$
Clearly \ts $K_{n,\gamma}(\vv \sharpP_{\in S}) \subseteq \GapP$.
In fact, even \ts $K_{n,\gamma}(\vv \GapP) \subseteq \GapP$.
Karamata's inequality implies that the answer is always nonnegative for inputs from~$S$.
Hence, $K_{n,\gamma}(\vv{\sharpP}_{\in S})\subseteq\GapP_{\geq 0}$.
For which $\gamma$ do we have \ts
$K_{n,\gamma}(\vv{\sharpP}_{\in S})\subseteq\sharpP$?
\begin{itemize}
\item For affine linear $\gamma$ we clearly have $K_{n,\gamma}=0\in\sharpP$.
\item For $\gamma(t)=t^2$, we have $K_{2,\gamma}(f_1,f_2,g_1,g_2) = (d_1 + e_1 )h_1$ on $S$. This can be seen by plugging in $d_1 = f_1-f_2$, $e_1 = g_1-g_2$, and $g_2=f_1+f_2-g_1$. Clearly $(d_1 + e_1 )h_1 \in \sharpP$.
This has several proofs, for example instead of $(d_1 + e_1 )h_1$ we could have taken $2 h_1 + 2 e_1 h_1 + 4\binom{h_1}{2}$ with the same argument.
\item For $\gamma(t)=t^2$, we have $K_{3,\gamma}(f_1,f_2,f_3,g_1,g_2,g_3) = (d_1 +e_1) h_1 + (d_2+e_2) h_2 \in \sharpP$
on $S$.
\item For $\gamma(t)=\binom{t}{2}$, we have $K_{2,\gamma}(f_1,f_2,g_1,g_2) = (e_1+1) h_1 + 2\binom{h_1}{2} \in \sharpP$
on $S$.
\item For $\gamma(t)=\binom{t}{2}$, we observe that for the \emph{double} we have $2K_{3,\gamma}(\vv\sharpP_{\in S})\subseteq\sharpP$ via the observation that $2K_{3,\binom{t}{2}} = K_{3,t^2}$ on~$S$ (the affine linear parts cancel out).
\end{itemize}
All inclusions $K_{n,\gamma}\subseteq\sharpP$ in this section so far relativize.
The next proposition shows that the doubling we just used was in fact necessary, because otherwise we obtain an oracle separation.

\smallskip

\begin{proposition*}[see Proposition~\ref{pro:karamatadiagonalization}]
There exists \ts $A\subseteq\{0,1\}^*$ \ts such that \.
$K_{3,\binom{t}{2}} \big(\vv{\sharpP}^A_{\in S}\big) \not\subseteq \sharpP^A$.
\end{proposition*}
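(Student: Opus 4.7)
The plan is to invoke the Diagonalization Theorem (Theorem~\ref{thm:diagonalization}) with $\varphi = K_{3,\binom{t}{2}}$ and the graph-variety structure on $Z \supseteq S$.  The ambient ring is $\IQ[f_1,f_2,f_3,g_1,g_2,g_3,h_1,h_2,d_1,d_2,e_1,e_2]$ (so $k=12$), and $Z$ is cut out by the seven linear relations implicit in the definition of $S$: the definitions of $h_1,h_2,d_1,d_2,e_1,e_2$ together with the majorization equality $\sum_i f_i = \sum_i g_i$.  Choosing $\kappa = 5$ free coordinates such as $(h_1,h_2,e_1,e_2,g_3)$, I would solve these linear relations to obtain polynomials $\zeta_b$ for the remaining seven coordinates, and set $I$ to be the ideal generated by $\zeta_b - f_b$ as in the theorem.

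The technical heart is establishing binomial-badness of $\varphi + I$.  Reducing $\varphi$ modulo $I$ in the five free variables gives the representative
\[
K_{3,\binom{t}{2}} \ \equiv \ h_1 e_1 \. + \. h_2 e_2 \. + \. h_1^2 \. + \. h_2^2 \. - \. h_1 h_2 \pmod{I},
\]
whose binomial expansion has coefficient $-1$ on $\binom{h_1}{1}\binom{h_2}{1}$.  To rule out \emph{every} binomial-good representative in the infinite-dimensional coset, I would exploit the doubling identity $2\ts K_{3,\binom{t}{2}} \equiv h_1(d_1+e_1) + h_2(d_2+e_2) \pmod{I}$, whose right-hand side is binomial-good.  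Any binomial-good representative $\psi$ of $K_{3,\binom{t}{2}}$ would force $2\psi \equiv h_1(d_1+e_1)+h_2(d_2+e_2) \pmod{I}$ with $2\psi$ binomial-good and with every binomial coefficient in $2\IN$; reducing both congruent expressions to normal form in $\IQ[h_1,h_2,e_1,e_2,g_3]$, I would exhibit an odd binomial coefficient in a fixed slot that cannot be matched by the double of any nonnegative-integer combination of binomial basis elements.  This parity obstruction is the algebraic embodiment of the intuition that the combinatorial halving of $h_1(d_1+e_1)+h_2(d_2+e_2)$ fails to land in $\sharpP$.

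With the binomial-badness in hand, I would construct a polynomial-time set-instantiator for $\problem{Multiplicities}:\{0,1\}^* \to \IN^{12}$ via standard disjoint-union and injection-based subtraction gadgets, ensuring that the oracle-controlled cardinalities of the five free coordinates and the seven derived ones always realize a point of~$S$.  Invoking the Diagonalization Theorem then delivers the oracle~$A$ and the separation.  The main obstacle lies in the binomial-badness step: unlike for Proposition~\ref{pro:ahlswededaykin}, every substitution $\vv\gamma$ into $Z$ with binomial-good components is forced by the constraints $d_i,e_i \geq 0$ to include $e_i$-contributions that absorb the $-h_1 h_2$ term, yielding a binomial-good composition $\varphi \circ \vv\gamma$; no curve-based pullback to a univariate binomial-bad polynomial is available, and the obstruction must instead be extracted directly from the algebraic structure of $\varphi + I$ by means of the parity argument flowing from the doubling identity.
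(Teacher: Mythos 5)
Your overall architecture matches the paper's: the same twelve coordinates, the same seven affine-linear relations, the same choice of $\ka=5$ parameters $(g_3,e_1,e_2,h_1,h_2)$, an appeal to the Diagonalization Theorem~\ref{thm:diagonalization} with $\problem{Multiplicities}=\problem{OccurrenceMulti}_{12}$ (your ``gadget'' set-instantiators are overkill here --- Theorem~\ref{thm:occurrencesetinstantiator} already supplies them essentially for free, since this is a pure closure-property question and not a \TFNP{} one), and your algebra is correct: on $Z$ one indeed has $K_{3,\binom{t}{2}} \equiv h_1e_1+h_2e_2+h_1^2+h_2^2-h_1h_2$ and $2K_{3,\binom{t}{2}}\equiv h_1(d_1+e_1)+h_2(d_2+e_2)$. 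You also correctly diagnose why the curve-substitution trick of Proposition~\ref{pro:ahlswededaykin} is unavailable and why the obstruction must be extracted from the whole coset $\varphi+I$.

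The gap is in the decisive step, the binomial-badness of $\varphi+I$. By the Polyhedron Theorem~\ref{thm:polyhedron} this is exactly the assertion that the polyhedron $\mathcal P(\varphi,\zeta)\subseteq\IQ^{91}$ (cut out by $21$ integer equations and the nonnegativity constraints) has no integer point, and the paper verifies this by computer. Your proposed substitute --- ``exhibit an odd binomial coefficient in a fixed slot that cannot be matched by the double of any nonnegative-integer combination'' --- cannot work as a single-slot parity obstruction. The defining equations of $\mathcal P(\varphi,\zeta)$ always admit \emph{integer} solutions if one drops nonnegativity: since $\varphi'=\tau^*(\varphi)$ is integer-valued, its coefficients over the binomial basis of the five parameters are integers, and assigning these to the pure-parameter variables $x_{\vv e}$ (whose $w_{\vv e}$ are unit vectors) and $0$ elsewhere solves the linear system over $\IZ$. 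So no congruence functional separates the integer lattice from the affine span; the only thing excluding integer points is the interaction of the parity of certain coordinates with the inequalities $x_{\vv e}\geq 0$ (the offending coefficient $-1$ of $\binom{h_1}{1}\binom{h_2}{1}$ in $\varphi'$ must be produced by genuinely mixed-sign vectors $w_{\vv e}$, whose positive entries must in turn be cancelled nonnegatively, and so on). That is an integer-feasibility question, not a parity computation, and your sketch does not resolve it. Until you either carry out the integer-programming infeasibility certificate by hand or, as the paper does, verify it computationally, the proof is incomplete at its central point.
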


\begin{proof}[Proof sketch]
We use the Diagonalization Theorem~\ref{thm:diagonalization}.
We have $5n-3 = 12$, so $S = Z\cap \IN^{12}$.
We fix an arbitrary order of the 12 variables: \ts $\big(f_1,f_2,f_3,g_1,g_2,g_3,d_1,d_2,e_1,e_2,h_1,h_2\big)$.
The variety $Z$ is then given as the kernel of the linear map given by the following left matrix:
\begin{tikzpicture}
\node at (0,0) {
$\left(\begin{smallmatrix}
  1& -1&  0&  0&  0&  0& -1&  0&  0&  0&  0&  0\\
  0&  1& -1&  0&  0&  0&  0& -1&  0&  0&  0&  0\\
  0&  0&  0&  1& -1&  0&  0&  0& -1&  0&  0&  0\\
  0&  0&  0&  0&  1& -1&  0&  0&  0& -1&  0&  0\\
  1&  0&  0& -1&  0&  0&  0&  0&  0&  0& -1&  0\\
  1&  1&  0& -1& -1&  0&  0&  0&  0&  0&  0& -1\\
  1&  1&  1& -1& -1& -1&  0&  0&  0&  0&  0&  0
\end{smallmatrix}\right)$
};
\node at (5,0) {\resizebox{2cm}{0.3cm}{$\rightsquigarrow$}};
\node at (10,0) {
$\left(\begin{smallmatrix}
  1& -1&  0&  0&  0&  0& -1&  0&  0&  0&  0&  0\\
  0&  1& -1&  0&  0&  0&  0& -1&  0&  0&  0&  0\\
  0&  0&  0&  1& -1&  0&  0&  0& -1&  0&  0&  0\\
  0&  0&  0&  0&  1& -1&  0&  0&  0& -1&  0&  0\\
  1&  0&  0& -1&  0&  0&  0&  0&  0&  0& -1&  0\\
  1&  1&  0& -1& -1&  0&  0&  0&  0&  0&  0& -1\\
  1&  1&  1& -1& -1& -1&  0&  0&  0&  0&  0&  0
\end{smallmatrix}\right)$
};

\end{tikzpicture}
To obtain the necessary parametrization $\zeta$ of $Z$,
we convert it to row echelon form (the fact that the entries are integer is convenient, but not necessary for our techniques to apply), which is the right matrix.
We set $\ka=5$, $k=12$.
Permuting the order of the columns to $(6,9,10,11,12,1,2,3,4,5,7,8)$,
we obtain affine linear functions $\zeta_8,\ldots,\zeta_{12}$, each $\zeta_b$ depends linearly on the first 5 variables.
We set $\problem{Multiplicities} = \problem{OccurrenceMulti}_{12}$,
which is the function defined as follows: on input $w\in\{0,1\}^*$ we split $w$ into $12$ parts of roughly the same size, and the output is the vector that specifies how many 1s are in the first part, how many 1s are in the second part, and so on.
We set
\[
\varphi\big(f_1,f_2,f_3,g_1,g_2,g_3,d_1,d_2,e_1,e_2,h_1,h_2\big)\ := \ f_1^2+f_2^2+f_3^2-g_1^2-g_2^2-g_3^2\,.
\]
After verifying the technical assumptions,
to apply the theorem it remains to show that $\varphi+I$ is binomial-bad.
Since all $\zeta_b$ are affine linear, this claim boils down to checking that polyhedron \.
$\mathcal P_{\varphi,\zeta}$ \. does not have integer points.
We formalize and generalize this implication in the Polyhedron Theorem~\ref{thm:polyhedron}.
Here we have a polyhedron in $\IQ^{91}$ given by $21$ linear equations intersected with the nonnegative orthant.
We use a computer to set up the polyhedron. Indeed, it contains the half-integer point that shows that $2K_{3,\binom{t}{2}}(\vv\sharpP_{\in S})\subseteq\sharpP$,
but it \emph{does not} contain an integer point, which
gives $A$ such that $\varphi(p_A)\notin\sharpP^A$, but $\varphi(p_A) \in K_{3,\binom{t}{2}} (\vv{\sharpP}^A_{\in S})$.
\end{proof}

The Polyhedron Theorem \ref{thm:polyhedron} is also what we use for studying counting classes from $\TFNP$ (see \S\ref{sec:countingclassesandTFNPINTRO}), but the corresponding polyhedra there are very simple, see e.g.~Theorem~\ref{thm:decrementationseparation} and Theorem~\ref{thm:unbalancedflowseparation}.
The polyhedron for \problem{Sperner} for example has the integer point $(2,0)$ to represent that $\varphi = 2t_- + 0t_+$ on the variety.
The technical difficulty in those cases is not the polyhedron, but the existence of set-instantiators.
If we just study closure properties of $\sharpP$, then trivial set-instantiators can be used.

\subsection{Counting classes and {\normalfont\TFNP}}
\label{sec:countingclassesandTFNPINTRO}
In this section define the counting classes for which we claimed in \eqref{motivnon:smith} in $\S$\ref{ss:intro-basics-SP-not} that many of them coincide with $\sharpP$, while others are strictly stronger w.r.t.\ an oracle.
In order to do so, we attach oracles to the syntactic subclasses of $\TFNP$.\footnote{We consider $\CLS$, $\PLS$, $\PPAD$, $\PPADS$, $\PPA$, and $\PPP$ here, see e.g.~\cite{GP17}. The instances are exponentially large (di)graphs given succinctly by circuits or lists of circuits. For the sake of simplicity, we will assume in this discussion that finite lists of circuits are merged into a single circuit with additional input bits.\label{footnote:singlecircuit}}

Consider for example the relation \ts $\problem{rLonely}$.  This is for \ts $\PPA$, as the other classes are handled analogously.
Let \. $(C,x) \in \problem{rLonely}$ \. if and only if \.
$$x \neq 0 \ \wedge \ \big(C'(x)=x \ \vee \ C'(C'(x))\neq x\big),
$$
where $C$ is the description of a polynomially-sized multi-output Boolean circuit that describes the partner function on an exponentially large graph,
and $C'$ is the syntactic modification to $C$ that ensures that $C'(0)=0$.

Now, let \ts $\ComCla{r}\PPA$ \ts be the set of polynomially balanced relations $R$ for which a pair \ts $(\alpha,\beta)$ \ts of polytime computable maps exists with \ts $(C,\beta(x)) \in R$ \ts if and only if \. $(\alpha(C),x)\in \problem{rLonely}$.
These are the relations that correspond to search problems in $\PPA$.
Let $\ComCla{rP}$ denote the set of polynomially balanced relations that can be evaluated in polynomial time.
By definition, \ts $\ComCla{r}\PPA\subseteq\ComCla{rP}$.

For a language \ts $A \subseteq \{0,1\}^*$ \ts we define analogously \.
$(C,x) \in \problem{rLonely}^A$ \. if and only if \. \big($C(x)=x$ \. $\vee$ \. $C(C(x))\neq x$\big),
but now we allow the circuit $C$ to have arbitrary arity oracle gates that query the oracle $A$.
Let $\ComCla{r}\PPA^A$ be the set of polynomially balanced relations $R$ for which a pair $(\alpha,\beta)$ of polynomial-time computable maps exists with \. $(C,\beta(x)) \in R$ \. if and only if \. $(\alpha(C),x)\in \problem{rLonely}^A$. Note here that the only difference is that $\alpha(C)$ can have oracle gates.
Let $\ComCla{rP}^A$ denote the set of polynomially balanced relations that can be evaluated in polynomial time with access to~$A$.
By definition, we have \.
$\ComCla{r}\PPA^A\subseteq\ComCla{rP}^A$.\footnote{We use the \textsf{r}-prefix to avoid notational issues similar to the ones discussed in \cite{HV95, BS01}. We do not claim to have found a particularly good notation, but a suggestive one.}

We define the corresponding counting class \. $\#\PPA^A$ \. via
\[\textstyle
f \in \#\PPA^A \ \ \Longleftrightarrow \ \ \exists R\in\ComCla{r}\PPA^A \ : \ f(w) = \sum_{y\in\{0,1\}^*} R(w,y).
\]
Recall that

\vspace{-0.7cm}

\[\textstyle
f \in \sharpP^A \ \ \Longleftrightarrow \ \ \exists R\in\ComCla{rP}^A \ : \ f(w) = \sum_{y\in\{0,1\}^{*}} R(w,y).
\]
Hence, clearly \. $\#\PPA^A\subseteq\sharpP^A$, and in fact \. $\#\PPA^A\subseteq\sharpP^A_{\geq 1}$ \.
for all languages~$A$.\footnote{In fact, $\sharpP_{\geq 1}$ can be thought of as the counting analogue of $\TFNP$, i.e., it is reasonable to define \. $\#\TFNP:=\sharpP_{\geq 1}$.}\,\footnote{We choose this approach, which is different from the type-2 complexity approach in \cite{BCEIP,BM04}, because we
want to compare our counting classes to $\sharpP^A$.}

For the study of whether or not a problem is in $\sharpP$ we need the finer viewpoint that is obtained when insisting on $(\alpha,\beta)$ being a \emph{parsimonious reduction}, i.e.,
\big($(C,\beta(x))\in R$ and $(C,\beta(y))\in R$\big) implies $x=y$.
Since not all $\PPA$-complete problems are equivalent to each other via parsimonious reductions, this gives rise to different counting complexity classes,
depending on the $\PPA$-complete problem.
We write $\#\PPA\problemp{P}$ to indicate that we mean the counting class defined by problem $\problem{P}$ under parsimonious reductions.\footnote{Note that the prefix $\PPA$ is actually superfluous in this case, but we keep it for clarity.} For example, observe that all functions in $\#\PPA\problemp{Lonely}$ output odd integers on every input, while the class $\#\PPA$ contains more functions than that (as we will see when discussing \problem{Preleaf}).
In fact, for that reason it makes sense to study the class $(\#\PPA\problemp{Lonely}+1)/2$ and $(\#\PPA\problemp{Lonely}-1)/2$ and ask if they are subsets of $\sharpP$.

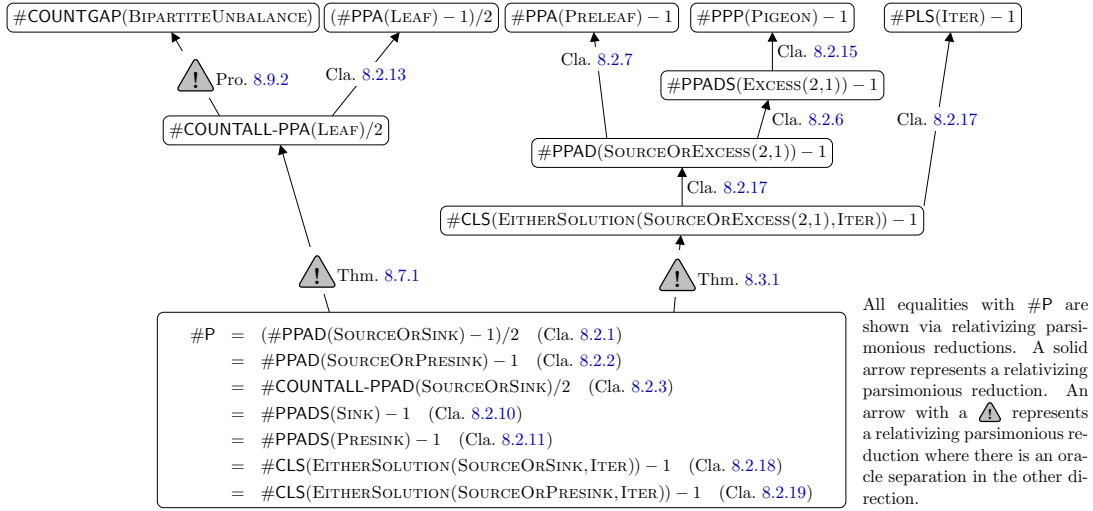
\begin{figure}
\quad\scalebox{0.6}{
\begin{tikzpicture}
\node[draw, rounded corners] (sharpP) at (0,-0.7) {
\begin{minipage}{15cm}
\vspace{-1em}\begin{eqnarray*}
\sharpP
&=& (\#\PPAD\problemp{SourceOrSink}-1)/2 \quad (\textup{Cla.~\ref{cla:sourceorsinkequalssharpP})}\\
&=& \#\PPAD\problemp{SourceOrPresink}-1  \quad (\textup{Cla.~\ref{cla:sourceorpresinkequalssharpP})}\\
&=& \sharpCOUNTALL[-PPAD]\problemp{SourceOrSink}/2 \quad (\textup{Cla.~\ref{cla:countallsourceorpresinkhalfequalssharpP}}) \\
&=& \#\PPADS\problemp{Sink}-1 \quad (\textup{Cla.~\ref{cla:sinkequalssharpP}})\\
&=& \#\PPADS\problemp{Presink}-1 \quad (\textup{Cla.~\ref{cla:presinkequalssharpP}})\\
&=& \#\CLS\problemp{EitherSolution(SourceOrSink,\ts{}Iter)}-1 \quad (\textup{Cla.~\ref{cla:CLSssiequalssharpP}}) \\
&=& \#\CLS\problemp{EitherSolution(SourceOrPresink,\ts{}Iter)}-1 \quad (\textup{Cla.~\ref{cla:CLSspiequalssharpP}})
\end{eqnarray*}
\end{minipage}
};
\node[draw, rounded corners] (CLS) at (4,3.5) {$\#\CLS\problemp{EitherSolution(SourceOrExcess(2,1),\ts{}Iter)}-1$};
\node[draw, rounded corners] (PLS) at (10,8) {$\#\PLS\problemp{Iter}-1$};
\node[draw, rounded corners] (PPAD) at (4,5) {$\#\PPAD\problemp{SourceOrExcess(2,1)}-1$};
\node[draw, rounded corners] (PPAPRELEAF) at (2,8) {$\#\PPA\problemp{Preleaf}-1$};
\node[draw, rounded corners] (PPAPRELEAFHALF) at (-2,8) {$(\#\PPA\problemp{Leaf}-1)/2$};
\node[draw, rounded corners] (PPADS) at (6,6.5) {$\#\PPADS\problemp{Excess(2,1)}-1$};
\node[draw, rounded corners] (PPP) at (6,8) {$\#\PPP\problemp{Pigeon}-1$};
\node[draw, rounded corners] (COUNTALLLEAFHALF) at (-5,5.5) {$\sharpCOUNTALL[-PPA]\problemp{Leaf}/2$};
\node[draw, rounded corners] (BIPARTITEUNBALANCE) at (-7.5,8) {$\sharpCOUNTGAP\problemp{BipartiteUnbalance}$};
\draw[-{Triangle[length=2mm, width=2mm]}]
($(sharpP.north east)!0.5!(sharpP.north)$)
-- (CLS) node [midway, fill=white] {\phantom{Thm.~\ref{thm:decrementationseparation}}{\Large$\warningsign$}Thm.~\ref{thm:decrementationseparation}};
\draw[-{Triangle[length=2mm, width=2mm]}] (CLS.north east) -- (PLS) node [midway, fill=white] {Cla.~\ref{cla:CLSinPLSandPPAD}};
\draw[-{Triangle[length=2mm, width=2mm]}] (CLS) -- (PPAD) node [midway, xshift=1cm] {Cla.~\ref{cla:CLSinPLSandPPAD}};
\draw[-{Triangle[length=2mm, width=2mm]}]
($(PPAD.north east)!0.5!(PPAD.north)$)
-- (PPADS) node [midway, xshift=1cm] {Cla.~\ref{cla:PPADinPPADS}};
\draw[-{Triangle[length=2mm, width=2mm]}] (PPADS) -- (PPP) node [midway, xshift=1cm] {Cla.~\ref{cla:PPADSinPPP}};
\draw[-{Triangle[length=2mm, width=2mm]}]
($(PPAD.north west)!0.5!(PPAD.north)$)
-- (PPAPRELEAF) node [near end, fill=white] {Cla.~\ref{cla:PPADinPPA}};
\draw[-{Triangle[length=2mm, width=2mm]}]
($(COUNTALLLEAFHALF.north east)!0.5!(COUNTALLLEAFHALF.north)$)
-- (PPAPRELEAFHALF) node [midway, fill=white] {Cla.~\ref{cla:PPAhalfinPPA}};
\draw[-{Triangle[length=2mm, width=2mm]}]
($(sharpP.north west)!0.5!(sharpP.north)$)
-- (COUNTALLLEAFHALF) node [near start, fill=white] {\phantom{Thm.~\ref{thm:halvingseparation}}{\Large$\warningsign$}Thm.~\ref{thm:halvingseparation}};
\draw[-{Triangle[length=2mm, width=2mm]}]
($(COUNTALLLEAFHALF.north west)!0.5!(COUNTALLLEAFHALF.north)$)
-- (BIPARTITEUNBALANCE) node [midway, fill=white] {\phantom{Pro.~\ref{pro:unbalancedflowseparation}}{\Large$\warningsign$}Pro.~\ref{pro:unbalancedflowseparation}};
\node at (10.5,-0.5){\begin{minipage}{5cm}
All equalities with $\sharpP$ are shown via relativizing parsimonious reductions. A solid arrow represents a relativizing parsimonious reduction. An arrow with a {\footnotesize {\protect$\protectwarningsign$}} represents a relativizing parsimonious reduction where there is an oracle separation in the other direction.
\end{minipage}};
\end{tikzpicture}}
\caption{The relativizing equalities and inclusions; and the oracle separations.}
 \label{fig:inclusions}
\end{figure}

Aside from the classical problems we study slightly adjusted problems that are not equivalent via parsimonious reductions (see the detailed list in~$\S$\ref{ss:TFNP-Background}).
Each class is defined via parsimonious reductions to a complete problem. The naming prefixes \. $\sharpCOUNTALL[-\PPA]$ \.
and \. $\sharpCOUNTGAP$ \. are essentially flavor.
By definition we have \.
$\#\PPA\problemp{P} \subseteq \#\PPA$ \. for all search problems \. $\problem{P}\in\PPA$, and analogously for all other search problems.
The relativizing inclusions and oracle separations that we find are depicted in Figure~\ref{fig:inclusions}.
All equalities with $\sharpP$ are shown via relativizing parsimonious reductions and they are proved in $\S$\ref{sec:syntacticcountingsubclasses}.
A solid arrow represents a relativizing parsimonious reduction. An arrow with a \scalebox{0.8}{$\warningsign$} represents a relativizing parsimonious reduction where there is an oracle separation in the other direction.
This means that all classes that are above $\sharpP$ in the figure strictly contain $\sharpP$ (with respect to an oracle).

We find a surprisingly large number of counting problem classes that, if adjusted properly with ``$-1$'' and ``$/2$'' are \emph{equal to $\sharpP$}.
This includes the canonical counting versions of \ts $\PPAD$, \ts $\PPADS$ \ts and \ts $\CLS$.
Only after making slight changes to the problems via non-parsimonious polytime equivalences (similar to the \emph{chessplayer algorithm}, see e.g.\ \cite{Pap90, BCEIP}), we obtain that the non-parsimonious counting classes strictly contain \ts $\sharpP$, which puts the new problems outside of~$\ts\sharpP$.

We identify two main ``reasons'' (i.e., oracle separations), one for ``$-1$'' (the decrementation separation, see~$\S$\ref{sec:decrementationseparation}), and one for ``$/2$'' (the halving separation, see $\S$\ref{sec:halvingseparation}).
There are two versions of $\#\PPA$ at the top of the diagram, one for each of the two reasons, and they are not easily comparable.\footnote{
It is not even clear if $\#\PPA\problemp{Leaf}-1$ is contained in $(\#\PPA\problemp{Leaf}-1)/2$.
One would want to just double a \problem{Leaf} instance, but that will end up creating 2 leaves too many. In other words, the class $\#\PPA\problemp{Leaf}-1$ seems to not be closed under the operation of scaling a function by 2.}
It is also noteworthy that we know of no counting version of \ts $\PPA$, \ts $\PLS$ \ts or \ts $\PPP$ \ts that equals~$\ts\sharpP$.\footnote{Note that since $\PTFNP$ (see \cite{GP18}) contains $\CLS$, the decrementation separation also shows $\#\PTFNP^A-1\not\subseteq\sharpP^A$.}

Since the polynomials $x-1$, $\frac x 2$, and $\frac {x-1}{2}$ are monotone, the tool to prove the separations is the Diagonalization Theorem~\ref{thm:diagonalization}.
The main difference from all separations so far is that now the instances are much more involved.
In the halving separation we have to ``hide'' the partner vertex from the $\sharpP$ machine, and in the decrementation separation we have to ``hide'' which of the solutions is connected to the zero vertex.
This is especially difficult for $\#\PLS\problemp{Iter}-1$ (and hence for $\#\CLS-1$).\footnote{Notably, $\PLS$ is also missing from the oracle separations in~\cite{BCEIP}.}
We formalize our approach in the definition of a set-instantiator in Definition~\ref{def:setinstantiator} and the necessary set-instantiators are created
in Section~\ref{sec:TFNP}.

Even though we are mainly interested in
membership and non-membership in $\sharpP$,
with only a little extra work our techniques directly
give
us another oracle separation
$$\sharpCOUNTALL[-PPA]\problemp{Leaf}^A/2 \ \subsetneq \ \sharpCOUNTGAP\problemp{BipartiteUnbalance}^A\..$$
This is because after doubling we have \.
$\sharpCOUNTALL[-PPA]\problemp{Leaf}^A \subseteq \sharpP^A$, while still
$$\sharpP^A \ \subsetneq \ 2\. \sharpCOUNTGAP\problemp{BipartiteUnbalance}^A
$$
(see Proposition~\ref{pro:unbalancedflowseparation}).

\subsection{Structure of the paper}
Section~\ref{sec:binomialbasistechnical} gives a high-level introduction of the proof ideas that lead to oracle separations and the binomial basis theorem, which is a first step towards our Diagonalization Theorem~\ref{thm:diagonalization}.
Section~\ref{sec:witnesstheorem} proves the Witness Theorem~\ref{thm:findcounterexample}, which is a generalization of the naive oracle separation proof approach to all nontrivial graph varieties. It is a key ingredient of the Diagonalization Theorem, which is introduced in
Section~\ref{sec:diagonalizationtheorem}, which uses the notation of set-instantiators, a formal way to treat not only the closure properties of $\sharpP$, but syntactic subclasses of $\TFNP$.

Section~\ref{sec:combinineq} uses the theory that we developed in the earlier chapters to handle the details of the Cauchy inequality, the Alexandrov--Fenchel inequality, the Hadamard inequality, Fermat's little theorem (in $\sharpP$), the Ahlswede--Daykin inequality, and the Karamata inequality.
Section~\ref{sec:TFNP} establishes all necessary set-instantiators and proves all oracle separations from Figure~\ref{fig:inclusions}.
We conclude with final remarks and open problems in Section~\ref{s:open}.
\medskip

\section{The binomial basis}
\label{sec:binomialbasistechnical}

\subsection{Oracle separations and the binomial basis: An informal high-level view}
\label{sec:proofideas}

\nin
$(1)$~{\bf Fooling polytime Turing machines with oracles.}
The initial idea is classical.
Consider the following problem \problem{ConnectedComponents}. A problem instance is similar to an instance to the $\PPA$-complete problem $\problem{Sink}$,
i.e., we have an exponentially large undirected graph whose nodes have degree at most two, but in the \problem{ConnectedComponents} problem we are not guaranteed a sink at node zero. The graph is given as a pair of two circuits $C_1$ and $C_2$; an edge from $x$ to $y$ is present in $G$ if and only if \. $\big($($C_1(x)=y$ \. or \. $C_2(x)=y$) \. and \. ($C_1(y)=x$ \. or \. $C_2(y)=x$)$\big)$. The function \problem{ConnectedComponents} counts the number of connected components in $G$, ignoring isolated nodes.

We can quickly see that if we replace the circuits with black boxes (i.e., oracles),
then this problem is not in $\sharpP$. Assume there exists a nondeterministic Turing machine~$M$
for which the number of accepting paths
$\#\acc_M(G)=\problem{ConnectedComponents}(G)$ for all $G$.
Let $G$ be an instance with 2 connected components of superpolynomial size each. Then $M$ has 2 accepting paths, but every path can only access a polynomial number of oracle positions. Therefore we can take two edges $\{u,v\}$ and $\{x,y\}$ of $G$ that lie in different connected components
and that are not queried by $M$,
we remove them, and replace them with the two edges $\{u,x\}$, $\{v,y\}$ to obtain a graph $G'$ with only 1 connected component. But since the two accepting paths of $M$ do not query the oracle at these positions, $M$ will have at least two accepting paths on the instance $G'$, which is a contradiction.

\smallskip
\nin
(2)~{\bf Combinatorial Diagonalization.}
This simple idea described above can be used in more sophisticated ways, see \cite[Thm~3.1.1]{C+89}.
We take the counting problem \problem{CircuitSat} in which we are given a Boolean circuit and the number of accepting paths is supposed to be the number of inputs for which the circuit outputs True.
Clearly $\problem{CircuitSat}\in\sharpP$.
We modify the problem by defining a function $\varphi:\IN\to\IN$ with $\varphi(0)=0$, $\varphi(1)=1$, $\varphi(2)=1$, and the other values are not relevant. Replace the circuit with an oracle, so instead of counting input strings for which the circuit outputs True, we now search for strings for which the oracle query returns~1. We now argue that \. $\varphi(\problem{CircuitSat}) \notin \sharpP$ \. with respect to that oracle (if the oracle is chosen correctly).

To see this, assume the existence of a nondeterministic Turing machine~$M$
with \. $\#\acc_M(C)=\varphi(\problem{CircuitSat}(C))$.
If a computation path of $M$ does not query any~1 at any oracle position, then it cannot accept. Indeed, if it does accept, then we can change the oracle to all zeros to obtain a contradiction.  If there is exactly one oracle position with a 1 (say, at position $a$), then $M$ must have exactly one accepting path and that path queries the oracle at~$a$.
Now, if we change the oracle by setting it to zero at~$a$ and to~1 at~$b$, where $b$ was not queried by the accepting path of $M$ before, then we get another accepting path. Let us call this accepting path~$\tau$.
What happens if we take the oracle that has a~1 at position~$a$ and a~1 at position $b$ at the same time? If $\tau$ does not query the oracle at position $a$, then \emph{both} accepting paths will accept on this instance, which is a contradiction, because $\varphi(2)=1<2$. The key point is that this happens \emph{with high probability} if the positions are chosen uniformly at random.

\smallskip
\nin
(3)~{\bf The binomial basis.}
The binomial basis theorem classifies completely the set of polynomials $\varphi$ for which this argument works.
If we pick a polynomial $\varphi$ in the discussion above, it could have been \.
$\varphi(f) = \frac{1}{6} f^3 - f^2 + \frac{11}{6} f$ \. for example.
This is a monotone function whose values are nonnegative integers, so we have no immediate way rule out membership in~$\sharpP$.
The key insight that relates this polynomial to what we observed is when we write it in its \emph{binomial basis}, the basis of the vector space of polynomials that is spanned by the binomial coefficients: \.
$\varphi(f) = f-\binom{f}{2}+\binom{f}{3}$.
The problematic issue is the negative coefficient $-1$ for $\binom{f}{2}$. This $-1$ is exactly the amount by which we overshoot with our number of accepting paths when the oracle has two 1s.
We call a polynomial binomial-good if all its coefficients in the binomial basis are nonnegative integers, otherwise it is binomial-bad.

This idea of univariate polynomial closure properties of $\sharpP$ generalizes to the \emph{multivariate case}, i.e., for \ts $\varphi$ \ts a multivariate polynomial. Here we again have a (multivariate) binomial basis, and say that a multivariate polynomial is binomial-good if all its coefficients in this basis are nonnegative integers.

\smallskip
\nin
(4)~{\bf The multivariate binomial basis.}
The multivariate case is of high interest when studying inequalities in combinatorics, but it is also already important when studying concrete instantiations such as $\#\PPAD-1$.
For a $\problem{SourceOrSink}$ instance (which is parsimoniously equivalent to \problem{Sperner}) we want to count the number number of nonzero sources $f$ plus the number of sinks~$g$ minus~1, i.e. $\varphi(f,g)=f+g-1$.
We know that in the $\problem{SourceOrSink}$ problem, for all instances we have $f-g+1=0$. This makes it possible to calculate $f+g-1 = 2f$, which implies $\#\PPAD\problemp{SourceOrSink}-1 \subseteq \sharpP$.
It is useful to think of the two functions $f+g-1$ and $2f$ as being the same function in the quotient ring $\IQ[f,g]/\langle f-g+1\rangle$. This ring is known as the \emph{coordinate ring} $\IQ[Z]$ of the affine variety $Z=\{(f,g)\mid f-g+1=0\}$.

We will employ this viewpoint in Section~\ref{sec:witnesstheorem} to prove the Witness Theorem~\ref{thm:findcounterexample},
which says that in many situations our notion of ``binomial-good on an affine variety'' is consistent with our intuition.
The main challenge in proving this theorem is that some nonnegative integer points on the variety $Z$ might actually \emph{not} correspond to a problem instance. For $\problem{SourceOrSink}$ they all do, but for example for $\problem{Iter}$ we have $Z=\IQ$ and there is no instance with 0 solutions;
or for $\problem{AllLeaves}$ we also have $Z=\IQ$, but there is no instance with an odd number of solutions.
The saving grace is that in all our cases the variety $Z$ is a graph variety and hence we can study asymptotic behavior via using a variant of \defng{Ramsey's theorem}.

The case $\problem{SourceOrSink}$ is a hyperplane, i.e., given by a single equation, but in general we have more equations, for example
for \. $\#\CLS\problemp{EitherSolution(SourceOrExcess(2,1),\ts{}Iter)}$, or
when treating Karamata's inequality in~$\S$\ref{sec:karamata}.
In the case where all constraints are affine linear (and $\varphi$ is arbitrary) we can rephrase the definition of binomial-goodnes of $\varphi$ on a graph variety given by functions $\zeta$ as follows: $\varphi$ is binomial-good on the graph variety if and only if the polyhedron $\mathcal P_{\varphi,\zeta}$ contains an integer point (see the Polyhedron Theorem~\ref{thm:polyhedron}). These integer points correspond directly to $\sharpP$ algorithms, for example for $\#\PPAD\problemp{SourceOrSink}-1$ there is an integer point $(2,0)$ to represent that $\varphi = 2f + 0g$ on the variety.

\smallskip

\nin
(5)~{\bf Set-instantiators.}
To make these ideas work for actual problem instances such as \problem{SourceOrSink} or \problem{Iter} instead of just for oracles for which we count the 1s, we introduce the notion of a set-instantiator and we construct the set-instantiators for all relevant cases. Intuitively, a set-instantiator mimics the behavior of the oracle idea presented above. For the existence of a set-instantiator one defines how to set up random instances of specified cardinalities that sit nicely in each other so that the Turing machine $M$ is fooled and
behaves in the same way as for a \problem{CircuitSat} instance. This becomes very challenging when the problem does not allow permuting the instance around, which is the case for \problem{Iter}, where the problem is always directed in a fixed direction. The \#\PLS\problemp{Iter} case is the most challenging (which carries over to $\#\CLS$), and we use a refined treatment.

\smallskip

\subsection{The binomial basis and integer-valued functions}
\label{subsec:binomialbasisandivf}
This section covers some basic properties about the binomial basis.
Fix~$k$.
Let \. $S \subseteq \IN^k$ \. and let \. $D := \{\vv b \in \IN^k \. \mid \.\exists \vv a \leqslant \vv b : \vv a \in S\}$ \.
denote the downwards closure of~$S$.
It is instructive to think of $D = \IN^k$, although we will need it in greater
generality in Section~\ref{sec:witnesstheorem}, where we work with functions
that are not defined on all of \ts $\IN^k$.

We are interested in functions \ts $D\to\IN$ \ts and functions \. $\IZ^k\to\IZ$, and also in function $S_\leqslant\to\IN$.
We focus on \ts $D\to\IN$ \ts first.
This set of functions lies in the $\IQ$-vector space of all functions \ts $D\to\IQ$.
For \ts $\vv b \in D$ \ts we define the binomial function \. $\beta_{\vv b} : D\to\IQ$ \. via
\[
\beta_{\vv b}(x_1,\ldots,x_k) \ := \ \binom{x_1}{b_1}\cdots \binom{x_k}{b_k} \ = \ \binom{\vv x}{\vv b}.
\]
The \defn{binomial basis} is defined as the set of all binomial functions \.
$\{\beta_{\vv b}\mid \vv b \in D\}$.
Recall that a function \. $D \to \IQ$ \. or \. $\IZ^k \to \IQ$ \. is \defn{integer-valued} \ts
if it only attains integer values.

\smallskip

\begin{proposition}[{\rm cf.~\cite{Nag19}}]\label{pro:basis}
Every function \. $\varphi: D\to\IQ$ \. can be expressed as a $($possibly infinite$)$ \ts $\IQ$-linear combination of elements from the binomial basis. This expression is unique.
The function $\varphi$ is integer-valued \ts if and only if \ts its coefficients in the binomial basis are all integer.
\end{proposition}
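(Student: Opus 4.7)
The plan is to exploit a triangularity property of the binomial basis. The key observation is that for $\vv a, \vv b \in \IN^k$ the value $\beta_{\vv b}(\vv a) = \binom{a_1}{b_1}\cdots \binom{a_k}{b_k}$ vanishes whenever $b_i > a_i$ for some $i$, i.e., whenever $\vv b \not\leqslant \vv a$ in the componentwise partial order, while $\beta_{\vv a}(\vv a) = 1$. Consequently the ``infinite matrix'' $\big(\beta_{\vv b}(\vv a)\big)_{\vv a,\vv b \in D}$ is unitriangular with respect to this partial order, and for every fixed $\vv a$ the set $\{\vv b \in \IN^k : \vv b \leqslant \vv a\}$ is finite.

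First I would construct the coefficients $c_{\vv b}$ recursively along any linear extension of the componentwise order on $D$, using that $D$ is downward closed so the indices involved stay in $D$. Setting
\[
c_{\vv a} \ := \ \varphi(\vv a) \ - \sum_{\vv b \in D,\ \vv b \lneq \vv a} c_{\vv b}\ts \beta_{\vv b}(\vv a),
\]
a finite sum by the triangularity, one obtains by rearrangement $\varphi(\vv a) = \sum_{\vv b \leqslant \vv a} c_{\vv b}\ts \beta_{\vv b}(\vv a)$ for every $\vv a \in D$. Hence the formal series $\sum_{\vv b \in D} c_{\vv b}\ts \beta_{\vv b}$ --- which is pointwise a finite sum on $D$ --- agrees with $\varphi$ everywhere on $D$. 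For uniqueness, if two such series agree on $D$ then their difference $\sum d_{\vv b}\ts \beta_{\vv b}$ vanishes on $D$; evaluating at any $\leqslant$-minimal $\vv a$ with $d_{\vv a}\neq 0$ would collapse the sum to $d_{\vv a}=0$, a contradiction.

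For the integer-valuedness equivalence, I would note that $\beta_{\vv b}(\vv a) \in \IZ$ for all $\vv a, \vv b \in \IN^k$ since each $\binom{a_i}{b_i}$ is an integer. Hence if every $c_{\vv b} \in \IZ$, then $\varphi(\vv a)$ is a finite sum of integers and is itself an integer. Conversely, if $\varphi$ is integer-valued, the recursive formula expresses $c_{\vv a}$ as an integer combination of $\varphi(\vv a)$ and the previously computed $c_{\vv b}$, weighted by integers $\beta_{\vv b}(\vv a)$; an induction along the linear extension then yields $c_{\vv b} \in \IZ$ for all $\vv b \in D$.

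I do not anticipate a substantive obstacle; the only thing to be careful about is justifying the ``possibly infinite'' series, which is exactly the pointwise finiteness of $\{\vv b \in \IN^k : \vv b \leqslant \vv a\}$, and noting the existence of a linear extension of the componentwise order on $D$ along which the induction can be carried out.
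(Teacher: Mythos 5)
Your proposal is correct, and on two of the three claims it coincides with the paper's proof: existence is obtained from the same recursion \. $c_{\vv a} = \varphi(\vv a) - \sum_{\vv b \leqslant \vv a,\ \vv b\neq \vv a} c_{\vv b}\ts\beta_{\vv b}(\vv a)$, justified by exactly the unitriangularity you identify ($\beta_{\vv a}(\vv a)=1$ and $\beta_{\vv b}(\vv a)=0$ unless $\vv b\leqslant \vv a$), and the integrality equivalence is read off from that same recursion in both directions, as in the paper. The one place you genuinely diverge is uniqueness. You evaluate the difference of two expansions at a $\leqslant$-minimal element of its support --- which exists because $\leqslant$ is well-founded on $\IN^k$ --- so the sum collapses to a single coefficient, forcing it to vanish. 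The paper instead takes the minimal total degree $m$ appearing in the support, extracts the degree-$m$ homogeneous component of the series, and invokes linear independence of the monomials $x^{\vv b}$ as functions on $D$. Your route is a pure evaluation argument: it never has to make sense of homogeneous parts of a possibly infinite sum of polynomials, and it uses only the pointwise finiteness of $\{\vv b : \vv b\leqslant \vv a\}$, so it is arguably the cleaner and more robust of the two for the infinite-support case. The only bookkeeping worth spelling out is the one you already flagged: well-foundedness of the componentwise order (to get minimal elements and to ground the recursion) and downward closure of $D$ (so the recursion never leaves the domain).
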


\smallskip

If we only consider polynomials $\varphi$, then this result is classical and
slightly easier to prove.  The following is a
variation on the original argument by Nagell~\cite{Nag19}
(see also~\cite[p.~13]{Nar95}).

\smallskip

\begin{proof}
We first show uniqueness. We assume for the sake of contradiction that a function has two distinct expressions in the binomial basis. Then the difference of them is a nontrivial expression of the zero function:
\begin{equation}\tag{$\dagger$}\label{eq:nontrivialexpressionofzero}
0 \,= \,\sum_{\vv b \in \IN^k} c_{\vv b} \beta_{\vv b}\,, \ \ \.  \text{where} \ \ c_I \in \IQ \ \ \text{and not all \. $c_{\vv b}$ \. are zero}.
\end{equation}
Let $m \in \IN$ denote the smallest number for which $c_{\vv b} \neq 0$, $|\vv b|=m$.
Every \ts $\beta_{\vv b}$ \ts has a unique monomial of largest total degree,
which is \. $x^{\vv b} := x_1^{b_1} \cdots x_k^{b_k}$ \. of degree~$|\vv b|$.
Moreover,
\begin{equation}\tag{$\ast$}\label{eq:crucialvanishingproperty}
\text{if} \ \ \beta_{\vv a}(x_1,\ldots,x_k)\neq 0\., \ \  \text{then} \ \ \vv x \. \geqslant \. \vv a.
\end{equation}
Therefore \eqref{eq:nontrivialexpressionofzero} induces a nontrivial linear combination of 0 in the homogeneous degree $m$ part:
\[
0 \, = \ \sum_{\vv b \in D, \. |\vv b|=m} \. \frac{c_{\vv b}}{b_1!\cdots b_k!} \, x^{\vv b}\,, \ \ \.  \text{where} \ \ c_{\vv b} \in \IQ \ \ \text{and not all \. $c_{\vv b}$ \. are zero}.
\]
Note that this is a finite linear combination.
But the monomials $x^{\vv b}$ are linearly independent functions on $D$, so all $c_{\vv b}=0$, which is a contradiction.
Hence the uniqueness is proved.

It remains to show that every function $\varphi:D\to\IQ$ can be expressed over the binomial basis (not necessarily with finite support).
The coefficients $c_{\vv b}$ for expressing $\varphi$ over the binomial basis are constructed using the following recursive property.
We set
\begin{equation}\label{eq:defcI}
c_{0,\ldots,0} \ := \ \varphi(0,\ldots,0) \quad\quad \text{ and } \quad\quad
c_{\vv b} \ := \ \varphi(\vv b) \, - \, \sum_{\vv a \leqslant \vv b, \. \vv a \neq \vv b} \. c_{\vv a}\ts \beta_{\vv a}(\vv b)
\end{equation}
This defines all \ts $c_{\vv b}$.
We have \. $\varphi = \sum_{\vv b \in \IN^k} c_{\vv b} \beta_{\vv b}$, because \ts $\beta_{\vv b}(\vv b)=1$, and
\eqref{eq:crucialvanishingproperty} implies that \. $\beta_{\vv a}(\vv b)=0$ \. for all other \ts $\vv a$ \ts over which is not summed.

Since the \ts $\beta_{\vv b}$ \ts are integer-valued functions we have that if the coefficients in the binomial basis are integers, then the function is integer-valued. By construction in the proof above (see how \ts $c_{\vv b}$ \ts is defined) this works in the other direction as well:
\ts if \ts $\varphi$ \ts is integer-valued, then the coefficients in the binomial basis are all integer.
\end{proof}

\smallskip

For \ts $D=\IN^k$,
we call polynomials $\varphi$ whose expression over the binomial basis has only nonnegative integer coefficients \defn{binomial-good}, all others are called \defn{binomial-bad}.
For \ts $D \subseteq \IN^k$ \ts we call \ts $\varphi$ \. \defn{$D$-good} \.  if the coefficients over the binomial basis are nonnegative integers;  otherwise \ts $\varphi$ \ts is \. \defn{$D$-bad}.

We remark that Proposition~\ref{pro:basis} generalizes from \. $D=\IN^k$ \.
to functions \. $\IZ^k \to \IQ$ \. as follows, with basically the same proof.
For \. $\vv b \in \IN^k$, we define the function \. $\tilde\beta_{\vv b} : \IZ^k\to\IQ$ \. via
\[
\tilde \beta_{\vv b}(x_1,\ldots,x_k) \ := \ \binom{x_1-\lfloor b_1\rfloor}{b_1}\cdots \binom{x_k-\lfloor b_k\rfloor}{b_k}.
\]
The \defn{shifted binomial basis} is defined as the set of all binomial functions \.
$\big\{\.\tilde\beta_{\vv b}\, :\, \vv b \in \IN^k\ts\big\}$.

\smallskip

\begin{proposition}\label{pro:shiftedbasis}
Every function \. $\varphi: \IZ^k\to\IQ$ \. can be expressed as a $($possibly infinite$)$ \ts $\IQ$-linear combination of elements from the shifted binomial basis. This expression is unique.
The function~$\ts\varphi$ \ts is integer-valued \ts if and only if \ts its coefficients in the shifted binomial basis are all integer.
\end{proposition}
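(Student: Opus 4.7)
The plan is to parallel the proof of Proposition~\ref{pro:basis}, since its two key ingredients transfer with only cosmetic changes: each $\tilde\beta_{\vv b}$ is a polynomial whose unique monomial of maximal total degree $|\vv b|$ is $x_1^{b_1}\cdots x_k^{b_k}/(b_1!\cdots b_k!)$, identical to that of $\beta_{\vv b}$; and there is still a triangular evaluation structure, although the test point needs to be shifted.

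For uniqueness, I would suppose toward contradiction that a nontrivial relation $\sum c_{\vv b}\tilde\beta_{\vv b} \equiv 0$ holds on $\IZ^k$. Let $m := \min\{|\vv b| : c_{\vv b} \neq 0\}$; extracting the homogeneous degree-$m$ component yields a nontrivial $\IQ$-linear dependency among the distinct monomials $x^{\vv b}$ with $|\vv b| = m$, which is impossible. This is exactly the argument following~\eqref{eq:crucialvanishingproperty} in the proof of Proposition~\ref{pro:basis}.

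For existence and integrality, the natural evaluation point is $2\vv b$, at which $\tilde\beta_{\vv b}(2\vv b) = \prod_i\binom{b_i}{b_i} = 1$. I would run a recursion along a well-ordering of $\IN^k$ (say, by $|\vv b|$ refined lexicographically), setting
\[
c_{\vv b} \ := \ \varphi(2\vv b) \ - \sum_{\vv a\text{ earlier than }\vv b} c_{\vv a}\ts\tilde\beta_{\vv a}(2\vv b).
\]
Every $\tilde\beta_{\vv a}(2\vv b)\in\IZ$, so if $\varphi$ is integer-valued then each $c_{\vv b}$ is integer; the converse is immediate, since the shifted binomial coefficients take integer values on $\IZ^k$. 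By construction, the expansion matches $\varphi$ at each test point $2\vv b$, and the uniqueness step above then forces pointwise agreement on all of $\IZ^k$.

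The main technical obstacle I expect lies in making sense of the resulting (possibly infinite) sums as honest pointwise expressions. Unlike $\beta_{\vv b}$, whose vanishing set $\{\vv b : \beta_{\vv b}(\vv x) = 0\}$ is cofinite in $\vv b$ for every fixed $\vv x$ (guaranteeing automatic finiteness of the sums in Proposition~\ref{pro:basis}), the shifted basis never vanishes when $\vv x$ has a negative coordinate, so convergence at such $\vv x$ must be addressed separately before one can read off the claimed expansion. The saving grace is that uniqueness forces the answer to coincide with the recursive construction wherever any valid expansion exists, so the task reduces to checking convergence rather than identifying the correct coefficients.
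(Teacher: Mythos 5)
The paper disposes of this proposition with the single remark that it follows ``with basically the same proof'' as Proposition~\ref{pro:basis}, so you are supplying an argument the authors omit; your outline has the right shape, but the obstacle you flag at the end is not a technicality to be ``addressed separately'' --- it is the crux, and it is more severe than you state. Taking the printed definition at face value, the univariate factor $\tilde\beta_{b}(x)=\binom{x-b}{b}$ vanishes at an integer $x$ only for $b\le x\le 2b-1$; hence already $\tilde\beta_b(0)=\binom{-b}{b}=(-1)^b\binom{2b-1}{b}\neq 0$ for every $b\ge 1$, and more generally $\tilde\beta_b(2b_0)=\binom{2b_0-b}{b}\neq 0$ for every $b>2b_0$. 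So the analogue of property~\eqref{eq:crucialvanishingproperty} fails at \emph{every} integer point, not only at negative ones: infinite linear combinations are nowhere pointwise finite, your series is not visibly equal to $\varphi(2\vv b_0)$ even at your own interpolation nodes (the tail over $b>2b_0$ survives), and the ``extraction of the homogeneous degree-$m$ component'' in your uniqueness step is not licensed for infinite relations, since in Proposition~\ref{pro:basis} that extraction rests precisely on the vanishing of all high-degree basis elements near the origin. The statement only becomes the advertised routine transfer after the shift is corrected so that the vanishing sets exhaust $\IZ$ --- e.g.\ $\binom{x_i+\lfloor b_i/2\rfloor}{b_i}$, which vanishes exactly on the $b_i$ consecutive integers $-\lfloor b_i/2\rfloor,\dots,\lceil b_i/2\rceil-1$; then for each fixed $\vv x\in\IZ^k$ all but finitely many basis elements vanish, and both the triangular recursion (run along the enumeration $0,-1,1,-2,2,\dots$ of $\IZ$ in each coordinate) and the uniqueness argument transfer verbatim.

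Separately, even granting pointwise finiteness, your closing inference is a non sequitur: uniqueness says that two expansions representing the \emph{same} function on $\IZ^k$ have equal coefficients; it cannot promote agreement on the sparse set $\{2\vv b:\vv b\in\IN^k\}$ to agreement on all of $\IZ^k$. The corresponding step in Proposition~\ref{pro:basis} is sound only because the interpolation nodes there are \emph{all} of $D$, so the recursion \eqref{eq:defcI} certifies the value of the series at every point of the domain. Your nodes must likewise enumerate all of $\IZ^k$ and be triangular with respect to the vanishing pattern of the basis, which is again exactly what the corrected shift provides.
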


For polynomials it does not matter which of the two bases we use, so we use the simpler (unshifted) one.
We remark that if \ts $\varphi$ \ts is a multivariate polynomial (nonnegative or not) of degree~$d$, then its expression over the binomial basis has finite support. This just follows from the fact that the set of all \ts $\beta_{\vv b}$ \ts with \ts $|\vv b|\leq d$ \ts is linearly independent, and hence (by counting the dimension) is a basis of the space of polynomials of degree \ts $\leq d$.

\smallskip

\subsection{The binomial basis theorem}
We start with the following general definition formalizing and generalizing
definitions in~$\S$\ref{ss:main-closure}.

\smallskip

\begin{definition}
A function $\varphi : \IN^k \to \IN$ is a \ts \defn{closure property of $\sharpP$} \. if the following holds: \ts if \. $f_1,\ldots,f_k \in \sharpP$, then the function \ts $\varphi(f_1,\ldots,f_k)$ \ts defined via
\[
\bigl[\varphi(f_1,\ldots,f_k)\bigr](x_1,\ldots,x_k) \ := \ \varphi\bigl(f_1(x_1),\ldots,f_k(x_k)\bigr)
\]
is in \ts $\sharpP$. In other words, \. $\varphi\big(\vv \sharpP\big) \subseteq \sharpP$.
We analogously define closure properties of \ts $\sharpP^A$ for any language~$A$ as follows.

We say that a closure property $\varphi$ of \ts $\sharpP$ \ts \defn{relativizes} \.
if for every language $A$ the function $\varphi$ is a closure property of \ts $\sharpP^A$.
A function \. $\varphi : \IZ^k \to \IZ$ \. is a closure property of \ts $\GapP$ if \.
$\varphi\big(\vv \GapP\big)\subseteq\GapP$.
We say that a closure property $\varphi$ of $\GapP$ \. \defn{relativizes} \.
if for every language~$A$ the function $\varphi$ is a closure property of \ts $\GapP^A$.
\end{definition}

\smallskip

\begin{theorem}[{\em\defn{Binomial basis theorem}}, see {\cite[Thm.~3.13]{HVW95}} and {\cite[Thm.~6]{Bei97}}, stated above in~$\S$\ref{ss:main-binomial}{}]\label{thm:closure}  \ \\
The following properties for a multivariate polynomial $\varphi$ are equivalent:
\begin{itemize}
\item $\varphi$ \ts is a relativizing closure property of \ts $\GapP$,
\item $\varphi$ \ts  is a closure property of \ts $\GapP$,
\item $\varphi$ \ts  is integer-valued,
\item the expression of \ts $\varphi$ \ts over the binomial basis has only integer coefficients.
\end{itemize}
\noindent Moreover, the following are equivalent:
\begin{itemize}
\item $\varphi$ \ts  is a relativizing closure property of \ts $\GapP_{\geq 0}$\ts,
\item $\varphi$ \ts  is a closure property of \ts $\GapP_{\geq 0}$\ts,
\item $\varphi$ \ts  is integer-valued and attains only nonnegative integers,
\item the expression of $\varphi$ over the binomial basis has integer coefficients and \ts $\varphi$ \ts attains only nonnegative integers if evaluated at integer points in the nonnegative cone.
\end{itemize}
\noindent Moreover, the following are equivalent:
\begin{itemize}
\item $\varphi$ \ts is a relativizing closure property of \ts $\sharpP$,
\item the expression of \ts $\varphi$ \ts over the binomial basis has only nonnegative integer coefficients.
\end{itemize}
\end{theorem}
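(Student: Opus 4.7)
The plan is to establish the three blocks of equivalences in cascading order: first the $\GapP$ block, then the $\GapP_{\ge 0}$ block (which is essentially a decoration of the first), and finally the $\sharpP$ block, whose harder direction is the only genuinely new ingredient.

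For the $\GapP$ block I would prove the cycle (integer-valued $\Leftrightarrow$ integer binomial coefficients) $\Rightarrow$ (relativizing closure of $\GapP$) $\Rightarrow$ (closure of $\GapP$) $\Rightarrow$ (integer-valued). The first step is exactly Proposition~\ref{pro:basis}. The last step is trivial: constants are in $\GapP$, so closure forces $\varphi(\IZ^k)\subseteq\IZ$. The middle step is also immediate. The one non-trivial implication is that integer binomial coefficients imply relativizing closure, and for that it suffices (by linearity) to show $\beta_{\vv b}(\vv f)=\prod_i \binom{f_i}{b_i}\in\GapP^A$ whenever each $f_i\in\GapP^A$. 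For $f\in\sharpP^A$, the function $\binom{f}{b}$ is in $\sharpP^A$: an NTM can guess $b$ accepting paths of the machine for $f$ in increasing lexicographic order and accept iff all are distinct accepting paths. For $f=g-h\in\GapP^A$ with $g,h\in\sharpP^A$, I would invoke the Vandermonde-type identity
\[
\binom{g-h}{b}\ =\ \sum_{i+j=b}(-1)^{j}\binom{g}{i}\binom{h+j-1}{j}
\]
and close under products and integer combinations, which $\GapP^A$ admits. The $\GapP_{\ge 0}$ block is then proved by the same cycle, noting that the only extra hypothesis, nonnegativity of $\varphi$ on $\IN^k$, is preserved in each step and is exactly what rules out negative outputs in the closure.

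For the $\sharpP$ block, the easy direction (nonnegative integer binomial coefficients imply relativizing closure of $\sharpP$) is already delivered by the argument above, since $\binom{f}{b}\in\sharpP^A$ when $f\in\sharpP^A$, products of $\sharpP^A$ functions remain in $\sharpP^A$, and nonnegative integer linear combinations do too. The hard direction---binomial-bad $\varphi$ fails to be a relativizing closure property---is where I would apply the Diagonalization Theorem~\ref{thm:diagonalization} with $k$ variables, $\kappa=k$ and trivial ideal $I=0$ (there are no affine constraints), $\problem{Multiplicities}=\problem{OccurrenceMulti}_k$ (the function that reads off the number of $1$s in $k$ disjoint blocks of the oracle), and the trivial set-instantiator that merely plants $1$s inside those blocks. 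Since $\varphi+I=\varphi$ is binomial-bad by assumption, the Diagonalization Theorem hands back an oracle $A$ on which $\varphi(\vv f)\notin\sharpP^A$ for the $\sharpP^A$ functions $f_i$ that count $1$s in the $i$th block of $A$, so $\varphi$ is not a relativizing closure property of $\sharpP$.

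The real obstacle is therefore not in the algebraic bookkeeping but in feeding all of this into the Diagonalization Theorem correctly, and of course in the Diagonalization Theorem itself. The delicate point is the one sketched in $\S$\ref{sec:proofideas}(2)--(3): a single negative binomial coefficient $c_{\vv b^\ast}$ creates an ``overshoot'' at oracles of that exact multiplicity profile that cannot be hidden from a hypothetical $\sharpP^A$ machine, because a Ramsey/probabilistic argument (packaged here as the Witness Theorem~\ref{thm:findcounterexample}) forces the accepting paths of any candidate NTM to be essentially oblivious to most of the $1$s they have not queried. Verifying that the trivial set-instantiator actually satisfies the technical hypotheses of Theorem~\ref{thm:diagonalization}, and that the resulting polyhedron $\mathcal P_{\varphi,\zeta}$ has no integer point precisely when $\varphi$ is binomial-bad, is the step I expect to require the most care; once that is in place, the theorem follows by direct invocation and the three equivalence blocks assemble as described.
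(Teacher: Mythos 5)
Your proposal is correct and follows essentially the same route as the paper: the $\GapP$ and $\GapP_{\geq 0}$ blocks reduce to Proposition~\ref{pro:basis} together with closure of $\GapP^A$ under binomial coefficients, sums and products (which the paper simply cites from \cite{fenner1994gap}, whereas you re-derive the binomial-coefficient case via the upper-negation/Vandermonde identity --- a valid filling-in of that cited lemma), and the hard $\sharpP$ direction is exactly the paper's Theorem~\ref{thm:relatclosure}, i.e.\ the Diagonalization Theorem with trivial ideal, $\problem{OccurrenceMulti}_k$, and the near-trivial set-instantiator of Theorem~\ref{thm:occurrencesetinstantiator}. The only superfluous worry is the polyhedron $\mathcal P_{\varphi,\zeta}$: with $I=\langle 0\rangle$ binomial-badness of $\varphi+I$ is just binomial-badness of $\varphi$ itself, so no integer-point check is needed there.
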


\begin{proof}
If \ts  $\varphi$ \ts  is a closure property of \ts $\GapP$, then clearly  \ts $\varphi$ \ts  is integer-valued.
If $\varphi$ \ts  is integer-valued, then its expression over the binomial basis has finite support and all coefficients are integers by Proposition~\ref{pro:basis}. In \cite{fenner1994gap}, see closure property~5, they present the proof that \ts $\GapP^A$ \ts is closed under taking binomial coefficients, addition, and product.  Thus,
every integer-valued  \ts $\varphi$  \ts whose its expression over the binomial basis has finite support,
and all coefficients are integers is a relativizing closure property of \ts $\GapP$.
Clearly, every relativizing closure property of \ts $\GapP$  \ts is a closure property of \ts $\GapP$,
which shows the equivalence of the first four items.
The same argument (with the obvious minimal modifications) shows the equivalence of the second four items.

If the expression of  \ts $\varphi$  \ts over the binomial basis has finite support and all coefficients are nonnegative integers, then $\varphi$ is a relativizing polynomial closure property of  \ts $\sharpP$, because the  \ts $\sharpP$  \ts closure proofs for addition, multiplication, and taking binomial coefficients relativize.
It remains to show the converse.
This is more technical and is a simple application of the much more general Diagonalization Theorem~\ref{thm:diagonalization}, and hence we postpone the proof to
Theorem~\ref{thm:relatclosure}.
\end{proof}

\smallskip

\subsection{The binomial basis conjecture}
Note that if the polynomial closure properties of $\sharpP$ all relativize, then Theorem~\ref{thm:closure} gives a complete classification of all polynomial closure properties of $\sharpP$: Those $\varphi$ whose expression over the binomial basis only nonnegative integers as coefficients. We conjecture that this is indeed the correct classification:
\begin{conjecture}[{\em \defna{Binomial basis conjecture}}{}]\label{conj:binomialbasis}
{\em The polynomial closure properties of \ts $\sharpP$ \ts all relativize.}
\end{conjecture}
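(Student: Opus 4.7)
By the Binomial Basis Theorem~\ref{thm:closure}, the conjecture is equivalent to the statement that every binomial-bad polynomial $\varphi$ fails to be a closure property of $\sharpP$ in the unrelativized world. My plan is to try to derandomize the oracle separation for binomial-bad $\varphi$ that Theorem~\ref{thm:closure} and the Diagonalization Theorem~\ref{thm:diagonalization} already provide relative to an adversarially chosen oracle.

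The first step would be to focus on the canonical minimal obstruction $\varphi(x) = \binom{x-1}{2}$, which is already binomial-bad, and by the implication chain in the excerpt suffices for $\sharpP \neq \sharpP^\NP$ once proven unconditionally; so any attack on the full conjecture must at least resolve this univariate case. The second step would be to replace the oracle diagonalization by an explicit combinatorial construction: for any fixed nondeterministic polytime machine $M$ that supposedly satisfies $\#\acc_M(w) = \binom{f(w)-1}{2}$ with $f \in \sharpP$, one would exhibit a polytime-constructible family of Boolean formulas admitting two ``gadget'' variants whose numbers of satisfying assignments differ by a controlled additive amount, but which $M$ cannot distinguish. A natural approach is to build gadgets from a hard counting function (say a suitable restriction of the permanent) pushed through a Nisan--Wigderson-style combinatorial design, so that the relativized argument of the Diagonalization Theorem transfers with the oracle bits simulated by the gadget's outputs.

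The main obstacle is fundamental and the reason I do not expect an unconditional proof from current technology: as already observed in the excerpt, even the univariate case unconditionally implies $\P \neq \NP$, so any full proof must either settle that separation outright or rest on an auxiliary hypothesis. A realistic intermediate target is therefore a conditional result, for instance assuming that some function in $\sharpP$ requires $2^{\Omega(n)}$-size nondeterministic circuits; a standard hardness-to-pseudorandomness conversion would then instantiate the gadget construction and yield the desired separation, giving the conjecture as a corollary of existing structural hardness assumptions. Absent such an assumption, I would expect progress only on restricted fragments of the conjecture, such as closure properties computable with limited use of an $\NP$ oracle, where one may substitute existing query or circuit lower bounds for the full derandomization step.
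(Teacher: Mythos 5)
The statement you are addressing is a \emph{conjecture} in the paper, not a theorem: the authors offer no proof of it, and indeed they prove (Theorem~\ref{thm:relatPNP} combined with Proposition~\ref{p:non-monotone}) that even the univariate special case would unconditionally imply $\sharpP\neq\sharpP^{\NP}$ and hence $\P\neq\NP$. So there is no proof in the paper to compare your attempt against, and your proposal correctly does not claim to supply one. Your preliminary reductions are accurate: by Theorem~\ref{thm:closure} the conjecture is indeed equivalent to saying that every binomial-bad polynomial fails to be a closure property of $\sharpP$, the polynomial $\binom{x-1}{2}=\binom{x}{0}-\binom{x}{1}+\binom{x}{2}$ is indeed binomial-bad, and your identification of the $\P$ vs.\ $\NP$ barrier is exactly the paper's own observation.

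That said, be clear about the status of what you have written: it is a research program, not a proof, and the substantive steps remain entirely unexecuted. In particular, the proposed ``derandomization'' of the Diagonalization Theorem via gadgets built from a hard counting function and a Nisan--Wigderson-style design is not an argument but a hope; the relativized separation works precisely because the adversary controls exponentially many oracle bits that no polynomial-time computation can inspect, and replacing those bits by any polytime-constructible object destroys the counting argument unless one already has a strong lower bound in hand. Your suggested conditional route (assuming exponential nondeterministic circuit lower bounds) is plausible in spirit but would itself require a new argument showing that such hardness suffices to fool a $\sharpP$-machine's \emph{exact count} rather than merely its acceptance behavior, which is a different and harder requirement than standard hardness-to-pseudorandomness conversions address. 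None of this is a defect relative to the paper --- the authors likewise leave the conjecture open --- but you should present it as an open problem with a barrier, not as a proof with a gap.
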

Note that Theorem~\ref{thm:closure} says that Conjecture~\ref{conj:binomialbasis} is true if we replace \ts $\sharpP$ \ts by \ts $\GapP$ or by $\GapP_{\geq 0}$.
Note also that Conjecture~\ref{conj:binomialbasis} is equivalent to saying that the polynomial closure properties of \ts $\sharpP$ \ts are exactly the binomial-good polynomials.
Sometimes it is sufficient to use the weaker univariate version of the binomial basis conjecture:

\begin{conjecture}[{\em \defna{Univariate binomial basis conjecture}}{}]\label{conj:binomialbasisuniv}
{\em The univariate polynomial closure properties of \ts $\sharpP$ \ts all relativize.}
\end{conjecture}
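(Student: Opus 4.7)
The Univariate Binomial Basis Conjecture is the contrapositive of the statement: \emph{every binomial-bad univariate polynomial $\varphi$ fails to be a closure property of $\sharpP$ in the unrelativized world}. By Theorem~\ref{thm:closure}, the relativizing univariate closure properties of $\sharpP$ are exactly the binomial-good ones, so the conjecture is equivalent to saying that if $\varphi(f) \in \sharpP$ for all $f \in \sharpP$, then $\varphi$ must be binomial-good. My plan is to try to push the Diagonalization Theorem~\ref{thm:diagonalization}, which already produces an oracle separation for every binomial-bad $\varphi$, into an explicit unrelativized counterexample $f \in \sharpP$ with $\varphi(f) \notin \sharpP$.

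Write $\varphi = \sum_i c_i \binom{x}{i}$ and let $j$ be the least index with $c_j < 0$. The oracle argument of Theorem~\ref{thm:diagonalization} and the informal sketch of $\S$\ref{sec:proofideas} fool any fixed $\sharpP$ machine by perturbing the hidden portion of a sparse oracle so that the true count jumps from $j{-}1$ to $j$, forcing exactly $|c_j|$ spurious accepting paths. To unrelativize, I would replace the random hidden oracle positions by witnesses to a computationally hard counting problem, and diagonalize against an enumeration $M_1, M_2, \ldots$ of nondeterministic polynomial-time machines: at stage $i$, the input is built so that the hidden extra configurations cannot be detected or cancelled by $M_i$ in polynomial time. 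The set-instantiators of Section~\ref{sec:TFNP} describe precisely which combinatorial instance families support this hiding with high probability against random oracles; the task is to produce an unconditional, uniform family doing the same against $\sharpP$ verifiers.

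The principal obstacle is unavoidable and structural: as the excerpt observes, even the simplest case $\varphi(x) = \binom{x-1}{2}$ already yields $\sharpP \neq \sharpP^{\NP}$ (and hence $\P \neq \NP$) via Theorem~\ref{thm:relatPNP}, so a complete proof must unconditionally separate complexity classes at least as strongly as $\P \neq \NP$. A realistic intermediate target is therefore a conditional theorem: assuming a standard hardness hypothesis (existence of one-way functions, non-collapse of the counting hierarchy, or a pseudorandom generator fooling nondeterministic counting machines), derive the conjecture by substituting the random oracle of Theorem~\ref{thm:diagonalization} with the pseudorandom construction and verifying that the set-instantiator analysis carries through in the derandomized setting. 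Identifying the minimal hardness assumption that suffices to drive this substitution is where I would expect the genuine technical content of any proof attempt to reside, and it is almost certainly the true gatekeeper to an unconditional resolution.
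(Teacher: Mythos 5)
The statement you were asked to prove is stated in the paper as a \emph{conjecture} (Conjecture~\ref{conj:binomialbasisuniv}), not a theorem: the paper offers no proof of it, and indeed explicitly observes (Theorem~\ref{thm:relatPNP}) that proving it --- even for the single polynomial $\binom{x-1}{2}$ --- would yield $\sharpP\neq\sharpP^{\NP}$ and hence $\P\neq\NP$. Your proposal correctly reconstructs this landscape: you identify that, via Theorem~\ref{thm:closure}, the conjecture is equivalent to saying every closure property of $\sharpP$ is binomial-good, and you correctly identify that any complete argument must unconditionally separate complexity classes. On the diagnosis you are in full agreement with the paper.

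But as a proof the proposal has a gap that is total rather than local: no step of the argument is actually carried out. The plan to ``replace the random hidden oracle positions by witnesses to a computationally hard counting problem'' is precisely the step that nobody knows how to do --- the Diagonalization Theorem~\ref{thm:diagonalization} and the set-instantiators are inherently relativized constructions, and there is no known way to instantiate the hidden structure with an unrelativized $\sharpP$ function without already assuming the separation one is trying to prove. Your fallback of a conditional theorem under a hardness hypothesis is a reasonable research direction, but it is not attempted either: you do not specify the hypothesis, the construction, or why the set-instantiator analysis would survive derandomization. In short, you have correctly explained why the conjecture is open; you have not proved it, and neither does the paper. Had the proposal claimed to establish the conjecture outright, that claim would be wrong; as written, it is an honest and accurate assessment of an open problem rather than a proof.
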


We do not know if these two conjectures are equivalent.
Even the univariate version of the binomial basis conjecture implies \ts $\P \neq \NP$ \ts and will therefore be very difficult to prove.

\begin{theorem}\label{thm:relatPNP}
Conjecture~\ref{conj:binomialbasisuniv} implies $\sharpP\neq\sharpP^{\NP}$ \ts $($and hence, in particular, \ts $\P\neq\NP)$.
\end{theorem}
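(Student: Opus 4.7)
The plan is to prove the contrapositive: I will show that if $\sharpP = \sharpP^{\NP}$, then there is a univariate polynomial $\varphi \in \IQ[x]$ that is a closure property of $\sharpP$ but is binomial-bad. By the equivalences in Theorem~\ref{thm:closure} this means $\varphi$ cannot be a \emph{relativizing} closure property of $\sharpP$, directly contradicting Conjecture~\ref{conj:binomialbasisuniv}.

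The candidate, flagged by the implication chain in~$\S$\ref{ss:main-binomial}, is
\[
\varphi(x) \. := \. \binom{x-1}{2} \. = \. \binom{x}{2} \. - \. \binom{x}{1} \. + \. \binom{x}{0}\ts,
\]
whose binomial expansion carries a $-1$ on $\binom{x}{1}$ and is therefore binomial-bad. To prove $\varphi$ is nonetheless a closure property of $\sharpP$ under the assumption $\sharpP = \sharpP^{\NP}$, I will describe, for any $f = \#\acc_M \in \sharpP$, a nondeterministic polytime oracle machine $N$ with $\#\acc_{N^{\NP}}(w) = \binom{f(w)-1}{2}$ for all~$w$; the assumption then transfers $\binom{f-1}{2}$ from $\sharpP^{\NP}$ back into $\sharpP$.

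The construction of $N$ dispatches deterministically on the $\NP$-oracle query ``does $M(w)$ have an accepting path?''. If the answer is \emph{no}, $N$ runs a single deterministic accepting branch, matching $\binom{-1}{2}=1$. If \emph{yes}, $N$ uses the $\NP$ oracle to binary-search the lexicographically smallest accepting path $p^\ast$ of $M(w)$, then nondeterministically guesses an unordered pair $\{p_1,p_2\}$ of strings of the appropriate length and accepts iff both are accepting paths of $M(w)$ distinct from $p^\ast$; the number of such pairs is exactly $\binom{f(w)-1}{2}$, as required for $f(w)\ge 1$.

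The main subtlety I expect is simply matching the $f(w)=0$ edge case on the nose: a single dummy branch is precisely what the correction $-\binom{x}{1}+\binom{x}{0}$ in the binomial expansion demands, and getting this bookkeeping right is what forces the dispatch on the $\NP$-oracle answer to sit \emph{inside} a single $\sharpP^{\NP}$ computation (rather than being performed by an outer machine). This is exactly why the hypothesis $\sharpP = \sharpP^{\NP}$, and not something weaker such as $\P=\NP$ routed through a different construction, is the natural one to use. Beyond this, everything is routine: verifying the binomial-basis expansion is a one-line calculation, and the final appeal to Theorem~\ref{thm:closure} under Conjecture~\ref{conj:binomialbasisuniv} closes the argument.
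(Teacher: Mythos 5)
Your proposal is correct and follows essentially the same route as the paper: both pass to the contrapositive, use the binomial-bad polynomial $\binom{x-1}{2}=\binom{x}{0}-\binom{x}{1}+\binom{x}{2}$, and realize $\binom{f-1}{2}$ in $\sharpP^{\NP}$ by accepting once when $f(w)=0$ (matching $\binom{-1}{2}=1$) and otherwise counting pairs of witnesses distinct from the lex-smallest one. The only cosmetic difference is that you locate the smallest witness by a prefix search up front, whereas the paper tests ``is there a smaller witness?'' per guessed pair; both are routine uses of the $\NP$ oracle.
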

\begin{proof}
This is similar to the idea in \cite[Thm~3.12]{OH93}.
We claim that if $\sharpP = \sharpP^{\NP}$, then $\binom{x-1}{2}=\frac 1 2 x^2 - \frac 3 2 x + 1= \binom{x}{0}- \binom{x}{1}+ \binom{x}{2}$ is a polynomial closure property of $\sharpP$. This is shown as follows.

Given a function \ts $f \in \sharpP$ \ts
we construct the following  \ts $\sharpP^{\NP}$ \ts machine: \ts call the $\NP$ oracle to see if there is at least one witness.
If not, accept.
Otherwise, count pairs of distinct witnesses that are both not the smallest witness (an \ts $\NP$ \ts oracle call is used to see if there is a smaller witness).
The resulting function $g$ is in \ts $\sharpP^{\NP}$.
By construction, if $f(w)=0$, then $g(w)=1$. Moreover, if \ts $f(w)>0$, then \ts $g(w)=\binom{f(w)-1}2$. Since \ts $\binom{0-1}{2}=\frac{(-1)(-2)}{2}=1$, it follows that \ts $g=\binom{f-1}{2}$.
\end{proof}

\medskip

\section{{\normalfont$\sharpP$} on affine varieties: The witness theorem}
\label{sec:witnesstheorem}
In this section we prove the Witness Theorem~\ref{thm:findcounterexample}, which is the crucial theorem for dealing with different sets~$S$, in particular if~$S$ is a subset of an affine algebraic variety, such as for example for \. $\#\PPAD\problemp{SourceOrSink}$,
or if~$S$ has holes such as for \. $\#\PPA\problemp{Leaf}$.

\subsection{Notation for the witness theorem}
Denote \. $\llbracket d \rrbracket := \{0,1,2,\ldots,d\}$, which is not to be confused with \. $[d]=\{1,2,\ldots,d\}$.
Fix~$k$.  As before, we write \. $\vv f := (f_1,\ldots,f_k)$ \. for vectors of length~$k$.
Let \. $\IO := \IN^k$ \. be the set of integer points in the \defn{nonnegative orthant}.
As before, for \. $\vv f, \ts \vv g \in \IO$, we write \. $\vv g \leqslant \vv f$ \. if and only if \. $g_a \leq f_a$ \.
for all \. $1 \leq a \leq k$.
For \. $\vv g \leqslant \vv f$, we write
$$
\binom{\vv f}{\vv g} \ := \ \prod_{a=1}^k \binom{f_a}{g_a}.
$$

Let \ts $S\subseteq \IO$, and 
let \ts $\varphi\in\IQ[\vv f]$.
We want to determine whether or not \ts $\varphi$ \ts is a relativizing multivariate closure property of \ts $\sharpP$ \ts under the guarantee that the cardinalities of the instances only come from the set~$S$.
We will see that under certain assumptions on the Zariski closure of~$S$, this only depends on \ts $\varphi$ \ts and the Zariski closure of~$S$.
In fact, in many cases this can be characterized to exactly be the case when \ts $\varphi+I$ \ts contains a binomial-good polynomial, where~$I$ is the vanishing ideal of the set~$S$.

Let \. $I=I(S) \. := \. \big\{\varphi \in \IQ[\vv f] \, : \, \varphi(S) = \{0\}\big\}$ \. be the vanishing ideal of~$S$.
Let \. $Z=\overline S^{\.\textup{Zar}}$ be the Zariski closure, i.e., \. $Z = \big\{\vv f \in \IQ^k \, \big| \, \forall \varphi \in I(S): \varphi(\vv f) = 0\big\}$.

\begin{remark}{\rm
The Zariski closure over $\IQ$ is the same as the Zariski closure over $\IC$ intersected with $\IQ^k$.\footnote{See,
e.g., the proof in \ts \href{https://math.stackexchange.com/questions/279243}{math.stackexchange.com/questions/279243}\ts.}
We will work entirely over $\IQ$ and all polynomials have coefficients from~$\IQ$.}
\end{remark}

\smallskip

\subsection{Functions that grow slowly}
\label{sec:growslowly}

We will need to make an asymptotic growth behavior analysis, hence we define the vector space of functions that grow slower than $\varphi$ on $S$:
\[
\IQ[Z]_{\leq(\varphi,S)} \
:= \
\big\{\.\psi + I \in \IQ[Z] \ \, \big| \, \ \exists \ts \alpha \in \IQ \ \, \forall \ts \vv g \in S: \. \big|[\psi+I](\vv g)\big| \. \leq \. \alpha \. \big|[\varphi+I](\vv g)\big|\.\big\},
\]
where \ts $|a|$ \ts denotes the absolute value of \ts $a\in \qqq$.
Note that this is a linear subspace of \ts $\IQ[Z]$.
For some applications we are particularly interested in the case where \. $\dim\IQ[Z]_{\leq(\varphi,S)}<\infty$, for a fixed~$S$ and for all~$\varphi$,
while for others we also have a fixed~$\varphi$.
Clearly, if~$S$ is finite, then \ts $\IQ[Z]$ \ts itself is finite dimensional, and hence \. $\dim\IQ[Z]_{\leq(\varphi,S)}<\infty$.

\begin{example}
We will mostly study the cases where \. $\dim\IQ[Z]_{\leq(\varphi,S)}<\infty$,
but the cases where \. $\dim\IQ[Z]_{\leq(\varphi,S)}=\infty$ \. are also interesting,
as the following examples show.
Fix \ts $k=2$. Consider \.
 $S = \{(x,y) \in \IO\mid y = 2^x\}$, \. $I=0$ \. and \. $Z = \IQ^2$.
For \ts $\varphi=y-x$, we have
the infinite dimensional linear subspace \.
$\langle1,x,x^2,x^3,\ldots\rangle \subseteq \IQ[Z]_{\leq(\varphi,S)}$.
In fact, \. $y-x = 1+\binom x 2+\binom x 3+\binom x 4+\ldots$\ts, so \ts $y-x$ \ts
has only nonnegative integers in its binomial basis expansion.
For \ts $\varphi=y-2x$, we have the same infinite dimensional linear subspace, but \ts $y-2x$ \ts is not monotone.
Indeed, for \ts $x=0$ \ts we have \ts $y-2x = 1$, while for \ts $x=1$ \ts we have \ts $y-2x = 0$.
Thus if \ts $y-2x$ \ts is a closure property of \ts $\sharpP$ \ts on~$S$, then \ts $\UP=\coUP$ \ts
by Proposition~\ref{p:non-monotone}.
\end{example}

We write \ts $\textup{suppb}$ \ts for the support in the binomial basis.
The next claim shows that in many situations removing a finite amount of points from $S$ does not change whether the dimension of $\dim \IQ[Z]_{\leq(\varphi,S)}$ is finite or not.

\begin{claim}
If \ts $S'\subseteq S$ \ts with \ts $S\sm S' < \infty$
\ts and \ts $\overline S = \overline{S'} = Z$,
then for all \ts $\varphi\in\IQ[Z]$ \ts we have \.
$\dim \IQ[Z]_{\leq(\varphi,S)} < \infty$ \.
if and only if \.
$\dim \IQ[Z]_{\leq(\varphi,S')} < \infty$.
\end{claim}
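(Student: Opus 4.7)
The plan is to exhibit $\IQ[Z]_{\leq(\varphi,S)}$ as a finite-codimension subspace of $\IQ[Z]_{\leq(\varphi,S')}$, after first noting the reverse inclusion is automatic. The assumption $\overline S = \overline{S'} = Z$ will be used to guarantee that $I(S) = I(S') = I(Z)$, so that the coordinate ring $\IQ[Z]$ is unambiguously defined and evaluation of a coset $\psi + I$ at any point $g \in Z$ (in particular at any $g \in S$) is well-defined.

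First I would dispose of the easy direction. Since $S' \subseteq S$, any constant $\alpha$ witnessing $|[\psi+I](g)| \leq \alpha\ts |[\varphi+I](g)|$ for all $g \in S$ trivially works on $S'$, giving the inclusion
\[
\IQ[Z]_{\leq(\varphi,S)} \ \subseteq \ \IQ[Z]_{\leq(\varphi,S')}\ts,
\]
so finite-dimensionality of the right-hand side immediately yields that of the left.

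For the reverse implication, the key step is to identify a finite set of linear conditions whose imposition on $\IQ[Z]_{\leq(\varphi,S')}$ cuts out $\IQ[Z]_{\leq(\varphi,S)}$. Let $T := \{g \in S \setminus S' : \varphi(g) = 0\}$, which is finite since $|S \setminus S'| < \infty$. I would then prove
\[
\IQ[Z]_{\leq(\varphi,S)} \ = \ \bigl\{\psi + I \in \IQ[Z]_{\leq(\varphi,S')} \, : \, \psi(g) = 0 \textup{ for all } g \in T\bigr\}\ts.
\]
The inclusion $(\subseteq)$ is immediate, since a valid $\alpha$ on $S$ forces $|\psi(g)| \leq 0$ at any $g \in T$. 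For $(\supseteq)$, the idea is that on the finitely many remaining points of $(S \setminus S') \setminus T$ the value $\varphi(g)$ is nonzero by construction, so enlarging $\alpha$ to
\[
\alpha' \, := \, \max\bigl(\alpha,\ts\max_{g \in (S\setminus S')\setminus T} |\psi(g)|/|\varphi(g)|\bigr)
\]
produces a constant valid on all of~$S$.

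Once this equality is in hand, $\IQ[Z]_{\leq(\varphi,S)}$ is cut out of $\IQ[Z]_{\leq(\varphi,S')}$ by at most $|T|$ linear evaluation functionals $\psi + I \mapsto \psi(g)$, so its codimension is bounded by $|T| \leq |S\setminus S'|$, which is finite. It follows that $\dim \IQ[Z]_{\leq(\varphi,S')} \leq \dim \IQ[Z]_{\leq(\varphi,S)} + |S\setminus S'|$, closing the equivalence. I do not anticipate a real obstacle; the only subtlety is recognising the dichotomy between points of $S \setminus S'$ where $\varphi$ vanishes (which impose one linear equation each) and those where it does not (which impose no constraint, because $\alpha$ can always absorb finitely many additional ratios).
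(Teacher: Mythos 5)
Your proof is correct and follows essentially the same route as the paper's: the same easy inclusion $\IQ[Z]_{\leq(\varphi,S)} \subseteq \IQ[Z]_{\leq(\varphi,S')}$, and the same dichotomy at each point of $S\setminus S'$ between $\varphi(g)\neq 0$ (absorbed by enlarging $\alpha$) and $\varphi(g)=0$ (imposing one homogeneous linear condition). The only cosmetic difference is that you handle all removed points at once, whereas the paper reduces to $|S\setminus S'|=1$ and iterates.
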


\begin{proof}
Clearly \. $\dim \IQ[Z]_{\leq(\varphi,S)} \leq \dim \IQ[Z]_{\leq(\varphi,S')}$, which proves one direction.
It suffices to treat the case \ts $|S\sm S'|=1$. Let \ts $S\sm S' = \{\vv e\}$.
We make a case distinction based on whether or not \ts $\varphi(\vv e) = 0$.

If \ts $\varphi(\vv e) \neq 0$, then consider any arbitrary \. $\psi+I \in \IQ[Z]_{\leq(\varphi,S')}$.
We will see that \. $\psi+I \in \IQ[Z]_{\leq(\varphi,S)}$.
Let $\alpha$ be such that \. $\forall \ts \vv g \in S': \. \big|[\psi+I](\vv g)\big| \. \leq \. \alpha \. \big|[\varphi+I](\vv g)\big|$.
Similarly, let \. $\alpha' \ts := \ts \max\big\{\alpha,|\psi(\vv e)|/|\varphi(\vv e)|\big\}$.
It follows that \. $\forall \ts \vv g \in S: \. \big|[\psi+I](\vv g)\big| \. \leq \. \alpha' \. \big|[\varphi+I](\vv g)\big|$, hence \. $\psi+I\in\IQ[Z]_{\leq(\varphi,S)}$.

If \ts $\varphi(\vv e) = 0$, then we have \.
$\IQ[Z]_{\leq(\varphi,S)} = \big\{\psi \in \IQ[Z]_{\leq(\varphi,S')} \mid \psi(\vv e) = 0\big\}$. Since $\psi(\vv e) = 0$ is a homogeneous linear constraint, the dimension when imposing the constraint either stays the same or goes down by~1.
\end{proof}

\smallskip

Let \. $\xi_{\vv e} \ts := \ts \binom{\vv f}{\vv e} \ts \in \ts\IQ\big[\ts\vv f\ts\big]$ \. denote the binomial basis function to the exponent vector $\vv e \in \IO$.
A function \ts  $\xi_{\vv e}$ \ts is called \defn{small} \. if
\[
\forall \. \vv f \ts \in \ts S \, : \, \xi_{\vv e}(\vv f) \, \leqslant \, \varphi(\vv f).
\]
Otherwise, we call \ts $\xi_{\vv e}$ \ts \defn{large}.
Since functions in $I$ vanish on $S$, we see that \ts $\xi_{\vv e}$ \ts
is small if and only if every element of \ts $\xi_{\vv e}+I$ \ts is small.
In this case we say that \ts $\xi_{\vv e}+I$ \ts is a small element of \ts $\IQ[Z]$.

\begin{claim}\label{cla:existsxi}
Fix \. $\varphi \in \IQ[Z]$.
Suppose \. $\psi\in\IQ[\vv f]$ \. is binomial-good
and \.
$\psi + I \notin \IQ[Z]_{\leq(\varphi,S)}$\ts.
Then there exists \ts $\vv g \in S$ \ts and \.
$\vv e\in\textup{suppb}(\psi)$ \. with \. $\xi_{\vv e}(\vv g) > \varphi(\vv g)$.
\end{claim}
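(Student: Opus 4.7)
I will argue by contrapositive: assume that for every $\vv g \in S$ and every $\vv e \in \suppb(\psi)$ one has $\xi_{\vv e}(\vv g) \leq \varphi(\vv g)$, and deduce $\psi + I \in \IQ[Z]_{\leq(\varphi,S)}$. The key ingredients are (i) the sign-free additivity coming from $\psi$ being binomial-good, and (ii) the fact that binomial basis functions take nonnegative integer values on $\IO$.

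First I would dispose of the degenerate case $\psi \in I$: here $\psi(\vv g) = 0$ for all $\vv g \in S$, so $\psi + I = 0 \in \IQ[Z]_{\leq(\varphi,S)}$, contradicting the hypothesis. Hence $\psi \notin I$, so in particular $\suppb(\psi) \neq \emptyset$.

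Next, since $\vv g \in S \subseteq \IO = \IN^k$, each $\xi_{\vv e}(\vv g) = \prod_{a=1}^{k}\binom{g_a}{e_a}$ is a nonnegative integer. Picking any $\vv e_0 \in \suppb(\psi)$, the assumption yields $\varphi(\vv g) \geq \xi_{\vv e_0}(\vv g) \geq 0$, so $\varphi(\vv g) = |\varphi(\vv g)|$ on all of $S$. Because $\psi$ is binomial-good, we can write $\psi = \sum_{\vv e \in \suppb(\psi)} c_{\vv e}\ts \xi_{\vv e}$ with $c_{\vv e} \in \IN$. Since both the coefficients and the values $\xi_{\vv e}(\vv g)$ are nonnegative, no cancellation occurs and
\[
|\psi(\vv g)| \, = \, \sum_{\vv e \in \suppb(\psi)} c_{\vv e}\ts \xi_{\vv e}(\vv g) \, \leq \, \Bigl(\sum_{\vv e \in \suppb(\psi)} c_{\vv e}\Bigr)\ts \varphi(\vv g) \, = \, \alpha\ts |\varphi(\vv g)|,
\]
where $\alpha := \sum_{\vv e \in \suppb(\psi)} c_{\vv e}$ is a (finite) nonnegative integer, hence in $\IQ$. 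Since this bound holds uniformly for $\vv g \in S$, we obtain $\psi + I \in \IQ[Z]_{\leq(\varphi,S)}$, the desired contradiction.

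There is no substantive obstacle: the only subtle point is checking that the two nonnegativities (of the $c_{\vv e}$ and of the $\xi_{\vv e}(\vv g)$) conspire to make $|\psi(\vv g)|$ equal to $\sum_{\vv e} c_{\vv e}\xi_{\vv e}(\vv g)$ without any $|\cdot|$-induced loss, and to force $\varphi(\vv g) \geq 0$ on $S$ so that $\varphi(\vv g)$ and $|\varphi(\vv g)|$ coincide. Both follow directly from evaluating on $S \subseteq \IN^k$.
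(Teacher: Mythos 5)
Your proof is correct and is essentially the contrapositive of the paper's own argument: the paper sets $\alpha$ equal to the sum of the (nonnegative integer) binomial coefficients of $\psi$, uses $\psi+I\notin\IQ[Z]_{\leq(\varphi,S)}$ to produce $\vv g \in S$ with $\psi(\vv g)>\alpha\ts\varphi(\vv g)$, and concludes by pigeonhole, whereas you assume every $\xi_{\vv e}\leq\varphi$ on $S$ and sum to get $|\psi(\vv g)|\leq\alpha\ts|\varphi(\vv g)|$. Same decomposition, same constant $\alpha$, same use of nonnegativity of the $c_{\vv e}$ and of $\xi_{\vv e}$ on $\IN^k$; there is no gap.
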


\begin{proof}
Decompose $\psi$ over the binomial basis
obtaining nonnegative integer coefficients. Set $\alpha$ to the sum of these coefficients.
By definition, there exists \ts $\vv g \in S$ \ts with \. $\psi(\vv g)>\alpha\ts\varphi(\vv g)$.
By the pigeonhole principle there exists \ts $\vv e$ \ts with \. $\xi_{\vv e}(\vv g)>\varphi(\vv g)$.
\end{proof}

\smallskip

\subsection{Graph varieties}
\label{sec:graphvarieties}
In this section we study an important class of varieties for which \.
$\dim\IQ[Z]_{\leq(\varphi,S)}<\infty$.
We write \. $\vv v \in \IQ^\ka$.
For \. $1 \leq b \leq \ka$, we let \ts $f_b := v_b$.
Given functions \. $\zeta_b \in \IQ[\vv v]$ \. and \.
$\ka+1 \leq b \leq k$,
we define \.
$
\eq_b := \zeta_b(\vv v) - f_{b} \in \IQ[\vv f]$.
We call each \ts $f_b$ \ts with \. $1 \leq b \leq \ka$, a \defn{parameter} \ts
and each \ts $f_b$ \ts  with \. $\ka+1 \leq b \leq k$ \.
a \defn{non-parameter}.
Note that for parameters we have \ts $f_b=v_b$ \ts (and we use these symbols interchangeably),
while for non-parameters there does not exist a \ts $v_b$.
Let
\[
Z \ := \ \big\{\ts \vv f \in \IQ^k \,\. \big| \,\. \eq_{\ka+1}(\vv f)\ts=\ts\ldots\ts=\.\eq_{k}(\vv f)\ts=\ts 0\ts\big\}
\]

Let \. $\tau:\IQ^\ka\to Z$ \. be defined via \. $\tau(v_1,\ldots,v_\ka) \ts :=\ts \big(v_1,\ldots,v_\ka,\zeta_{\ka+1}(\vv v),\ldots,\zeta_{k}(\vv v)\big)$.
Clearly, map~$\ts\tau$ \ts is bijective, where the inverse function
is the projection to the first \ts $\ka$ \ts coordinates.
Let $\tau^*$ be the pullback algebra homomorphism \. $\tau^*:\IQ[\vv f]\to\IQ[\vv v]$ \.
defined as \. $\tau^*(\varphi):=\varphi\circ\tau$,
that replaces each occurrence of each non-parameter \ts $f_b$ \ts
by the polynomial \ts $\zeta_b(\vv v)$ \ts in the parameters.
Let \. $I\subseteq \IQ[\vv f]$ \. be the ideal generated by \. $\eq_{\ka+1},\ldots,\eq_{k}$\..
We start by establishing some well-known basic facts about graph varieties.

\begin{claim}\label{cla:Iisker}
We have: \.
$\tau^*(\eta)\in\eta+I$ \. and \. $I = \ker \.\tau^*$.
\end{claim}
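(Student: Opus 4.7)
The plan is to prove both statements using the standard fact that substituting $f_b \mapsto \zeta_b$ in a polynomial reduces it modulo the ideal generated by the $f_b - \zeta_b$. I will treat $\IQ[\vv v]$ as the subring of $\IQ[\vv f]$ consisting of polynomials in the parameters only, using the identification $v_b = f_b$ for $1 \leq b \leq \ka$. Under this identification, $\tau^*(\eta)$ lives in $\IQ[\vv f]$ and the containment $\tau^*(\eta) \in \eta + I$ makes sense.

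For the first statement, I will argue by induction that $\eta - \tau^*(\eta) \in I$. The base case is $\eta \in \IQ[\vv v]$, where $\tau^*(\eta) = \eta$ and the difference is zero. For the inductive step, it suffices to handle a single non-parameter $f_b$: writing $\eta = \sum_{i \geq 0} f_b^{\,i}\, p_i$ with $p_i \in \IQ[v_1,\ldots,v_\ka,f_{\ka+1},\ldots,\widehat{f_b},\ldots,f_k]$, the identity $f_b^{\,i} - \zeta_b^{\,i} = (f_b - \zeta_b)(f_b^{\,i-1} + f_b^{\,i-2}\zeta_b + \cdots + \zeta_b^{\,i-1}) = -\eq_b \cdot q_i$ shows that $\eta - \eta|_{f_b = \zeta_b} \in (\eq_b) \subseteq I$. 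Iterating this substitution over all non-parameters $b = \ka+1,\ldots,k$ (each substitution only decreases the set of occurring non-parameters) eventually produces $\tau^*(\eta)$, and all successive differences lie in~$I$.

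For the second statement, the inclusion $I \subseteq \ker \tau^*$ is immediate: $\tau^*$ is an algebra homomorphism (being precomposition with the polynomial map $\tau$), so its kernel is an ideal, and each generator satisfies $\tau^*(\eq_b) = \zeta_b(\vv v) - \zeta_b(\vv v) = 0$. The reverse inclusion $\ker \tau^* \subseteq I$ follows from the first statement: if $\tau^*(\eta) = 0$, then the relation $\eta - \tau^*(\eta) \in I$ gives $\eta \in I$.

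The main (minor) obstacle is the bookkeeping in the inductive reduction, in particular keeping clear that after substituting $f_b \mapsto \zeta_b$ one obtains a polynomial involving the remaining non-parameters and the parameters, so iteration terminates cleanly at $\tau^*(\eta)$. No deep machinery is needed — everything reduces to the elementary identity $x^i - y^i = (x-y)(x^{i-1} + \cdots + y^{i-1})$ applied variable by variable.
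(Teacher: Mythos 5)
Your proof is correct and follows essentially the same route as the paper's: both arguments show that each substitution $f_b \mapsto \zeta_b$ changes the polynomial only by an element of $I$ (the paper does this one monomial occurrence at a time, you do it all at once via the telescoping identity $x^i - y^i = (x-y)(x^{i-1}+\cdots+y^{i-1})$, which is an immaterial difference), and both deduce $\ker\tau^* \subseteq I$ from the containment $\tau^*(\eta) \in \eta + I$ together with the easy inclusion $I \subseteq \ker\tau^*$. No gaps.
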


\begin{proof}
First, observe that \. $I\subseteq \ker\tau^*$.  Indeed, the kernel of every ring homomorphism
is always an ideal, so we only have to check this for the generators of~$I$. But clearly \.
$\eq_b \in\ker\.\tau^*$.

Now let \. $\eta \in\ker\tau^*$.
Consider \ts $\eta$ \ts as a polynomial in the non-parameters, i.e., as an element of the ring \.
$\big(\IQ[\vv v]\big)\big[f_{\ka+1},\ldots,f_{k}\big]$.
Since \ts $\tau^*$ \ts is replacing non-parameters by polynomials in parameters only, we can
consider the application of \ts $\tau^*$ \ts as a process
of iteratively substituting (in any order)
a single variable in a monomial of~$\eta$.
This decreases this monomial's degree, so this process terminates.

Let \. $\tau^*_\io$ \. be step \. $\io$ \. in this process of evaluating \ts $\tau^*$.
If at step \ts $\io$ \ts the variable \ts $f_b$, \ts $b > \ka$, occurs in the
monomial~$q$ with a strictly positive power, then we can write
\[
\textstyle
\tau^*_\io(q) \ = \ \tau^*_\io\big(\frac{q}{f_b} \ts f_b\big) \ =  \ \frac{q}{f_b} \. \zeta_b(\vv v)\..
\]
But we also see that
\[
q \. + \. \underbrace{\tfrac{q}{f_b}\big(\zeta_b(\vv v)\ts - \ts f_b\big)}_{\in I} \ = \ \tfrac{q}{f_b} \.\zeta_b(\vv v)\..
\]
This implies that \. $\tau^*_\io(q)\in q+I$.

Iterating this process, in every step only elements of $I$ are added. Hence $\tau^*(\eta)\in\eta+I$.
Since we assumed that $\eta\in\ker \tau^*$ it follows that $0 \in\eta+I$ and hence $\eta\in I$.
\end{proof}

\smallskip

\begin{claim}\label{cla:IequalsIZ}
$I\ts =\ts \mathcal{I}(Z)$.
\end{claim}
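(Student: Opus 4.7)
The plan is to prove both inclusions, and to leverage Claim~\ref{cla:Iisker} which already identifies $I$ with $\ker \tau^\ast$.

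First, the inclusion $I \subseteq \mathcal{I}(Z)$ is immediate from the definitions: the generators $\eq_{\ka+1}, \ldots, \eq_k$ of $I$ vanish on $Z$ by construction of $Z$, and since $\mathcal{I}(Z)$ is an ideal, the whole ideal they generate lies in $\mathcal{I}(Z)$.

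For the reverse inclusion, I would use Claim~\ref{cla:Iisker} to replace $I$ by $\ker \tau^\ast$ and show $\mathcal{I}(Z) \subseteq \ker \tau^\ast$. Recall that $\tau : \IQ^\ka \to Z$ is a bijection (its inverse is the projection onto the first $\ka$ coordinates), so every point of $Z$ is of the form $\tau(\vv v)$ for some $\vv v \in \IQ^\ka$. Hence if $\eta \in \mathcal{I}(Z)$, then the polynomial $\tau^\ast(\eta) = \eta \circ \tau \in \IQ[\vv v]$ vanishes at every point of $\IQ^\ka$. Since $\IQ$ is infinite, a polynomial in $\IQ[v_1,\ldots,v_\ka]$ that vanishes on all of $\IQ^\ka$ must be the zero polynomial. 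Therefore $\tau^\ast(\eta) = 0$, i.e., $\eta \in \ker \tau^\ast = I$.

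Combining both inclusions yields $I = \mathcal{I}(Z)$. No step is really an obstacle here; the content is packaged in Claim~\ref{cla:Iisker} and in the basic fact that $\tau$ parametrizes $Z$ bijectively. The only subtlety worth flagging explicitly is the use of $\IQ$ being infinite to pass from ``vanishes as a function on $\IQ^\ka$'' to ``is the zero polynomial,'' which is what makes the argument work over $\IQ$ (and would equally work over any infinite field).
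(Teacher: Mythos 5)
Your proof is correct and follows essentially the same route as the paper: the paper also reduces the reverse inclusion to showing that $\tau^*(\eta)$ vanishes identically on $\IQ^\ka$ (via the parametrization $\tau$ of $Z$) and then invokes multivariate interpolation over the infinite field $\IQ$ together with Claim~\ref{cla:Iisker} to conclude $\eta \in I$. The only cosmetic difference is that the paper phrases the step as ``$\eta' = \tau^*(\eta)$ vanishes on $Z$, hence on $\IQ^\ka$'' whereas you evaluate $\eta\circ\tau$ directly; the substance is identical.
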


\begin{proof}
Let $\eta$ be a multivariate polynomial in $k$ variables.
The direction \. $I\subseteq \mathcal I(Z)$ \. is clear: \ts
if \ts $\eta\in I$, then $\eta$ vanishes on $Z$, because all $\eq_b$ vanish on~$Z$.

Now let $\eta$ vanish on $Z$. We show that $\eta\in I$.
Set \. $\eta' := \tau^*(\eta)$ \. and observe that \ts $\eta' \in \eta + I$ \ts by \textup{Claim}~\ref{cla:Iisker}.
Since $\eta$ vanishes on $Z$ and since $I\subseteq \mathcal I(Z)$, we have that $\eta'$
vanishes on $Z$. But since $\eta'$ only uses parameter variables,
$\eta'$ vanishes identically on $\IQ^\ka$.
By multivariate interpolation on $\IQ^\ka$ we have \ts $\eta'=0$, hence \ts $0 = \eta' \in \eta+I$, and therefore \ts $\eta\in I$.
\end{proof}

In other words,
we get the short exact sequence
\[
0 \ \to \ I \ \to \ \IQ[\vv f] \ \stackrel{\tau^*}{\to} \ \IQ[\vv v] \ \to \ 0.
\]
Since \. $\IQ[Z]=\IQ[\vv f]/I$, it follows \.
$\IQ[Z]\simeq \IQ[\vv v]$ \. with the isomorphism induced by \ts $\tau^*$,
and its inverse given by \ts $\psi\mapsto\psi+I$.

\begin{claim}
Let $S$ lie Zariski-dense in $Z$, and let \ts $S':=\tau^{-1}(S)$.  Then \ts
$S'$ \ts lies Zariski-dense in~$\ts\IQ^\ka$.
\end{claim}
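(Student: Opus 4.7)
The plan is to prove the contrapositive in the standard algebraic-geometry way: I will take an arbitrary polynomial vanishing on $S'$ and use the machinery of $\tau^*$ together with the Zariski density of $S$ in $Z$ (and Claims~\ref{cla:Iisker}--\ref{cla:IequalsIZ}) to conclude that the polynomial is identically zero.

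More concretely, suppose $\eta \in \IQ[\vv v]$ vanishes on $S'$; I aim to show $\eta = 0$. The first step is to lift $\eta$ to a polynomial in $\IQ[\vv f]$. Since the parameter variables $v_1,\ldots,v_\ka$ are literally the first $\ka$ coordinates $f_1,\ldots,f_\ka$ (the remark after the definition makes this identification explicit), I can simply view $\eta$ as an element $\varphi \in \IQ[\vv f]$ that happens to involve only the parameter variables. Because $\tau^*$ substitutes non-parameter variables $f_b$ ($b > \ka$) by $\zeta_b(\vv v)$ and leaves parameter variables alone, this lift satisfies $\tau^*(\varphi) = \eta$.

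Next, for any $\vv w \in S'$, write $\vv f := \tau(\vv w) \in S$. By definition of pullback, $\varphi(\vv f) = \varphi(\tau(\vv w)) = [\tau^*(\varphi)](\vv w) = \eta(\vv w) = 0$, so $\varphi$ vanishes on all of $S$. Since $S$ is Zariski-dense in $Z$ and $\varphi$ is a polynomial, $\varphi$ must vanish on all of $Z$, i.e.\ $\varphi \in \mathcal{I}(Z)$. By Claim~\ref{cla:IequalsIZ} this gives $\varphi \in I$, and by Claim~\ref{cla:Iisker} we have $I \subseteq \ker \tau^*$, so $\tau^*(\varphi) = 0$. Combining with $\tau^*(\varphi) = \eta$ yields $\eta = 0$, as desired.

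I expect no real obstacle here: the bijection $\tau$ is a polynomial isomorphism of affine varieties (a graph parametrization), and the short exact sequence displayed just before the claim already does the structural work. The only thing to be slightly careful about is the identification of the parameter variables $v_b$ with $f_b$, which is needed to make the statement $\tau^*(\varphi) = \eta$ literally correct after lifting $\eta$ from $\IQ[\vv v]$ to $\IQ[\vv f]$; once that is pinned down, the density transfer is immediate.
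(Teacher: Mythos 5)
Your proposal is correct and is essentially the paper's own argument, just written out in more detail: the paper likewise lifts the vanishing polynomial to the parameter variables, uses Zariski density of $S$ in $Z$ to place it in $I=\mathcal{I}(Z)$, and concludes it is zero because it involves only parameters (i.e.\ it survives $\tau^*$ unchanged). Your extra care with the identification $v_b = f_b$ and the explicit use of $I=\ker\tau^*$ fills in exactly the steps the paper leaves implicit.
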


\begin{proof}
Assume that a function $\psi\in\IQ[\vv v]$ vanishes on $S'$. Then $\psi$ also vanishes on $S$. This means that $\psi \in I$. But $\psi$ only uses parameter variables, hence $\psi=0$.
\end{proof}

We now study asymptotic growth behavior in the parameter space.
For $S'=\tau^{-1}(S)$ and $\varphi'=\tau^*(\varphi)$
we define
\[
\IQ[\vv v]_{\leq(\varphi',S')} \
:= \
\big\{\.\psi \in \IQ[\vv v] \ \big| \ \exists \ts \alpha \in \IQ \ \forall \ts \vv g \in S' \, : \, |\psi(\vv g)| \. \leq \. \alpha\ts |\varphi'(\vv g)|\.\big\}.
\]

\smallskip

\begin{claim}\label{cla:growthonparameterspace}
Let \. $\varphi'=\tau^*(\varphi)$ \. and let \. $S'=\tau^{-1}(S)$.
The isomorphism \. $\IQ[Z]\ts \simeq \ts \IQ[\vv v]$ \.
induces an isomorphism of linear subspaces \.
$\IQ[Z]_{\leq(\varphi,S)} \ts \simeq \ts \IQ[\vv v]_{\leq(\varphi',S')}$\..
\end{claim}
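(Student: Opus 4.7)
My plan is to show that the isomorphism $\tau^*:\IQ[Z]\to\IQ[\vv v]$ from Claim~\ref{cla:Iisker}, which sends $\psi+I$ to $\tau^*(\psi)=\psi\circ\tau$, directly matches the two growth conditions by a pointwise translation of the defining inequality. The key observation is that since $S\subseteq Z$ and $\tau:\IQ^\ka\to Z$ is a bijection, $\tau$ restricts to a bijection $S'\leftrightarrow S$ with inverse the projection onto the first $\ka$ coordinates. This lets us pair up every $\vv g\in S$ with a unique $\vv g'=\tau^{-1}(\vv g)\in S'$, and vice versa.

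The core computation I would carry out is the identity
\[
[\psi+I](\vv g) \ = \ \psi(\vv g) \ = \ \psi(\tau(\vv g')) \ = \ \tau^*(\psi)(\vv g'),
\]
valid for any representative $\psi$ of the class $\psi+I$ and any $\vv g\in S$ with $\vv g'=\tau^{-1}(\vv g)$. The first equality uses $I=\mathcal I(Z)$ from Claim~\ref{cla:IequalsIZ}, which guarantees that all representatives agree on $Z\supseteq S$; the third equality is just the definition of $\tau^*$. Applying the same identity to $\varphi$ gives $[\varphi+I](\vv g)=\tau^*(\varphi)(\vv g')=\varphi'(\vv g')$.

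Using these two identities, the condition $|[\psi+I](\vv g)|\leq \alpha\ts|[\varphi+I](\vv g)|$ for all $\vv g\in S$ becomes, under the substitution $\vv g=\tau(\vv g')$, exactly the condition $|\tau^*(\psi)(\vv g')|\leq \alpha\ts|\varphi'(\vv g')|$ for all $\vv g'\in S'$, \emph{with the same constant} $\alpha$ witnessing both. Conversely, starting from membership in $\IQ[\vv v]_{\leq(\varphi',S')}$ and substituting $\vv g'=\tau^{-1}(\vv g)$ yields membership in $\IQ[Z]_{\leq(\varphi,S)}$ with the same $\alpha$. Hence $\tau^*$ sends $\IQ[Z]_{\leq(\varphi,S)}$ into $\IQ[\vv v]_{\leq(\varphi',S')}$ and its inverse sends $\IQ[\vv v]_{\leq(\varphi',S')}$ into $\IQ[Z]_{\leq(\varphi,S)}$.

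There is no real obstacle: the claim is essentially a book-keeping statement asserting that the isomorphism $\tau^*$ is compatible with evaluation on $S$ and on $S'$. Since $\tau^*$ is already a $\IQ$-linear isomorphism, its restriction to the two matched subspaces is automatically a linear isomorphism. The only mild subtlety to flag in the write-up is that the evaluation $[\psi+I](\vv g)$ is representative-independent precisely because $\vv g\in Z$ and $I=\mathcal I(Z)$, which is why Claim~\ref{cla:IequalsIZ} is invoked rather than just Claim~\ref{cla:Iisker}.
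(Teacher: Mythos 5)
Your proposal is correct and follows essentially the same route as the paper's proof: both translate the defining inequality pointwise through the bijection $\tau$ between $S'$ and $S$ using the identity $[\psi+I](\vv g)=[\tau^*(\psi)](\tau^{-1}(\vv g))$, with the same constant $\alpha$ witnessing membership on both sides. Your explicit appeal to Claim~\ref{cla:IequalsIZ} for representative-independence of the evaluation is a reasonable extra remark but not a substantive difference.
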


\begin{proof}
Let \. $\psi+I \in \IQ[Z]_{\leq(\varphi,S)}$.
Let \ts $\alpha\in \qqq$ \ts be such that for all \. $\vv g \in S$ \. we have:
$$
\big|[\psi+I](\vv g)\big| \ \leq \ \alpha\ts\big|[\varphi+I](\vv g)\big|\ts.
$$
Note that
$$
[\psi+I](\vv g) \, = \, \psi(\vv g)  \, = \,  \psi\big(\tau\big(\tau^{-1}(\vv g)\big)\big)
 \, = \,  \big[\tau^*(\psi)\big]\big(\tau^{-1}(\vv g)\big).
$$
Therefore, for all \. $\vv g \in S$, we have
$$
\big|\big[\tau^*(\psi)\big]\big(\tau^{-1}(\vv g)\big)\big| \ \leq \ \alpha\.\big|\big[\tau^*(\varphi)\big]\big(\tau^{-1}(\vv g)\big)\big|\ts.
$$
In other words,
for all \. $\vv g \in S$, we have
$$\big|[\psi']\big(\tau^{-1}(\vv g)\big)\big| \ \leq \ \alpha\.\big|[\varphi']\big(\tau^{-1}(\vv g)\big)\big|\ts,
$$
Since \. $S' = \tau^{-1}(S)$,
it follows that \.
$|\psi'(\vv g)| \ts \leq \ts \alpha\ts |\varphi'(\vv g)|$ \. for all \. $\vv g \in S'$.
The argument also works in the other direction, because \ts $\tau(S')=S$.
\end{proof}

\smallskip

Let \. $\IOp :=\IQ_+^\ka$. Note that we do not require integrality of the points in~$\IOp$.
We define the cone \ts $C_{S'}$ \ts of rays in \ts $\IOp$ \ts with infinitely many elements of~$S'$, i.e.\
\. $C_{S'} \ts := \ts \big\{\vv v \in \IOp \,\big|\, |S' \cap \IQ\vv v| = \infty\big\}$.
Even though \. $S'\subseteq\IQ^\ka$ \. is Zariski-dense, it can be that
\ts $C_{S'}=\emp$, for example if \ts $\ka=2$ \ts and \.
$S'=\{\vv v \in \IOp \mid v_2\geq v_1^2\}$.
Nevertheless, in many important cases, we have  \ts $C_{S'}$ \ts
is Zariski-dense in \ts $\IQ^\ka$.

\smallskip

\begin{claim}\label{cla:conedense}
If \. $C_{S'}$ \. is Zariski-dense in \ts $\IQ^\ka$,
then \. $\dim\IQ[Z]_{\leq(\varphi,S)}<\infty$.
\end{claim}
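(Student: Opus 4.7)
The plan is to transport the problem into the parameter space via Claim~\ref{cla:growthonparameterspace}, reducing the statement to: if $C_{S'}$ is Zariski-dense in $\IQ^\ka$, then $\dim \IQ[\vv v]_{\leq(\varphi',S')}<\infty$, where $\varphi' := \tau^*(\varphi)$ and $S' := \tau^{-1}(S)$. We may assume $\varphi'\not\equiv 0$ on $S'$, since otherwise $\IQ[Z]_{\leq(\varphi,S)}=\{0\}$. The strategy is to show that every representative $\psi'$ of an element of $\IQ[\vv v]_{\leq(\varphi',S')}$ satisfies $\deg \psi' \leq d := \deg \varphi'$, which forces $\IQ[\vv v]_{\leq(\varphi',S')}$ to embed in the finite-dimensional space of polynomials of degree $\leq d$ in $\ka$ variables.

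The key tool is a degree estimate along rays. Write $\psi' = \sum_{i\geq 0} \psi'_i$ and $\varphi' = \sum_{j=0}^{d} \varphi'_j$ in homogeneous components, so that for any $\vv v_0 \in \IQ^\ka$ and any scalar $t$,
\[
\psi'(t\vv v_0) \, = \, \sum_{i\geq 0} t^i \. \psi'_i(\vv v_0), \qquad \varphi'(t\vv v_0) \, = \, \sum_{j=0}^d t^j \. \varphi'_j(\vv v_0).
\]
Pick an arbitrary $\vv v_0 \in C_{S'}$. By definition of $C_{S'}$ the set $S'\cap \IQ \vv v_0$ is infinite. Because $S\subseteq \IN^k$ and $\tau^{-1}$ is the projection to the first $\ka$ coordinates, we have $S'\subseteq\IN^\ka$, so the values of $t$ with $t\vv v_0 \in S'$ form an infinite, unbounded subset of $\IQ_{>0}$.

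For those $t$ the bound $|\psi'(t\vv v_0)|\leq \alpha\.|\varphi'(t\vv v_0)|$ holds with a single $\alpha$. Since both sides are univariate polynomials in $t$ and the inequality persists as $t\to\infty$ along an unbounded set, a standard leading-term comparison gives
\[
\max\{i \,:\, \psi'_i(\vv v_0)\neq 0\} \ \leq \ \max\{j \,:\, \varphi'_j(\vv v_0)\neq 0\} \ \leq \ d.
\]
(The corner case in which $\varphi'(t\vv v_0)$ is identically zero in $t$ forces $\psi'(t\vv v_0)$ to vanish on an infinite set of $t$, hence identically in $t$, so all $\psi'_i(\vv v_0) = 0$ and the inequality is trivially satisfied.) In particular, for every $i>d$, $\psi'_i(\vv v_0) = 0$.

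This holds for every $\vv v_0 \in C_{S'}$, so each homogeneous piece $\psi'_i$ with $i>d$ is a polynomial that vanishes on the Zariski-dense set $C_{S'}$; hence $\psi'_i \equiv 0$. Therefore $\deg \psi' \leq d$, which yields the required inclusion
\[
\IQ[\vv v]_{\leq(\varphi',S')} \ \hookrightarrow \ \IQ[\vv v]_{\leq d}\ts,
\]
a finite-dimensional space. Pulling back through the isomorphism of Claim~\ref{cla:growthonparameterspace} gives $\dim\IQ[Z]_{\leq(\varphi,S)}<\infty$. The only subtle steps are verifying that the $t$-values along a ray are automatically unbounded (which follows from integrality of $S'$) and handling the degenerate case $\varphi'(t\vv v_0)\equiv 0$; the rest is a leading-coefficient argument plus Zariski density.
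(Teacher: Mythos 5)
Your proof is correct and follows essentially the same route as the paper's: transport to the parameter space via Claim~\ref{cla:growthonparameterspace}, observe that a ray in $C_{S'}$ meets $S'$ in points whose scaling parameters $t$ are unbounded (by integrality of $S'$), compare leading terms in $t$, and invoke Zariski density of $C_{S'}$ to force $\deg\psi'\leq\deg\varphi'$. The only organizational difference is that the paper applies density once to select a single ray avoiding the hypersurface cut out by the top homogeneous part of $\psi'$, whereas you argue pointwise along every ray and then use density to annihilate each homogeneous component of degree exceeding $\deg\varphi'$; the two are interchangeable.
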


\begin{proof}
Let \. $r_{\varphi'} \ts := \ts \max\big\{\ts |\vv e| \. : \. \vv e \in \suppb(\varphi')\ts \big\}$.
Consider \. $\psi' \in \IQ[\vv v]$ \. with coefficients \ts $c_{\vv e}$.
Define \. $E \ts := \ts\big\{\ts\vv f \in \suppb(\psi') \, \big| \, r_{\varphi'} \ts <\ts |\vv f|\ts\big\}$,
and suppose \. $E\neq\emp$.

Let \. $r_E := \max\big\{|\vv e| \,: \,\vv e \in E\big\}$. By assumption, we have \ts $r_E > r_{\varphi'}$\ts.
Denote \. $E_{\max} := \bigl\{\.\vv e~\in~E \,:\. r_E = |\vv e|\.\bigr\}.$
Consider a polynomial
$$
p(v_1,\ldots,v_\ka) \ := \sum_{\vv e\in E_{\max}} \. c_{\vv e}\. v_{1}^{e_1}\.\cdots\. v_{\ka}^{e_\ka}\,,
$$
and define the hypersurface \. $H = \mathcal V(p)\subseteq \IQ^\ka$.
By definition, all \. $c_{\vv e}$ \. are nonzero and there is at least one summand,
so \ts $H$ \ts is indeed a hypersurface. Since \ts $C_{S'}$ \ts is Zariski-dense in \ts $\IQ^\ka$,
we conclude that \ts $C_{S'}$ \ts does not lie in any hypersurface.  Hence,
it follows that there exists \. $\vv v_H \in C_{S'}\sm H$.

Observe that
\begin{eqnarray*}
|\psi'(t \vv v_\varepsilon)|
&=&
\left|\sum_{\vv e\in E_{\max}}c_{\vv e}
v_{H,1}^{e_1}\cdots v_{H,\ka}^{e_\ka}\right|
\, t^{|\vv e|} \ \, + \ \, \text{lower order terms in $t$}
\\
&=&
\underbrace{\left|\sum_{\vv e\in E_{\max}}
c_{\vv e}v_{H,1}^{e_1}\cdots v_{H,\ka}^{e_\ka}
\right|}_{\neq 0\text{, because $\vv v_H \notin H$}} \, t^{r_E}
\ \, + \ \,  o(t^{r_E}),
\end{eqnarray*}
where the lower order terms are bounded by
\[
\big|\varphi'(t \vv v_H)\big| \ = \ O\big(t^{|\vv f_{\varphi'}|}\big) \ = \ o(t^{r_E}).
\]
Since there are infinitely many elements of $S'$ along this ray, we have \.
$\psi' \notin \IQ[\vv v]_{\leq(\varphi',S')}$.
Hence, a necessary criterion for \.
$\psi' \in \IQ[\vv v]_{\leq(\varphi',S')}$ \.
is \,\ts $\suppb(\psi')\subseteq \big\{\ts\vv e \in \IOp \,:\, r_{\varphi'} \geq |\vv e|\ts\big\}$, which is a finite set.
Hence \. $\dim \IQ[\vv v]_{\leq(\varphi',S')}<\infty$. The proof is finished using
Claim~\ref{cla:growthonparameterspace}.
\end{proof}

\smallskip

Recall that \. $Z := \big\{\vv f \in \IQ^k \,:\, \eq_{\ka+1}(\vv f)=\ldots=\eq_{k}(\vv f)=0\big\}$, where \.
$\eq_b := \zeta_b(\vv v) - f_{b}$.

\smallskip

\begin{claim}\label{cla:homparts}
We assume that \ts $Z$ \ts contains at least one integer point.
Let \ts $S := \IO\cap Z$.
Furthermore, we assume that for all \. $\ka+1\leq b \leq k$, the polynomial \ts $\zeta_b$ \ts is not constant.
Let \. $\zeta_b^{\textup{hom}}\in\IQ[\vv v]$ \. be the top nonzero homogeneous part
of \. $\zeta_b$, for all \. $\ka+1\leq b \leq k$.
Suppose there exists a point $\vv v \in \IOp$ that satisfies the strict inequalities
\begin{equation}\label{eq:openeqns}
\tag{$\ast$}
\zeta_b^{\textup{hom}}(\vv v) \ts > \ts 0 \quad \text{for all} \quad \ka+1\ts \leq \ts b \ts\leq k\ts.
\end{equation}
Then \ts $C_{S'}$ \ts lies Zariski-dense in \ts $\IQ^\ka$ $\big($and hence \. $\dim\IQ[Z]_{\leq(\varphi,S)}<\infty$ \. by Claim~\ref{cla:conedense}$\big)$.
\end{claim}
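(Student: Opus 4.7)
Denote by $\vv v^* \in S$ the hypothesized integer point and by $\vv v^\dagger \in \IOp$ a point witnessing~\eqref{eq:openeqns}; after replacing $\vv v^\dagger$ by $\vv v^\dagger + (\epsilon,\ldots,\epsilon)$ for small rational $\epsilon > 0$, I may assume $\vv v^\dagger$ has all coordinates strictly positive, so that
\[
V_\IR \ := \ \bigl\{\vv v \in \IR^\ka_{> 0} \ :\ \zeta_b^{\textup{hom}}(\vv v) > 0\ \ \forall\, b > \ka\bigr\}
\]
is a non-empty open cone in $\IR^\ka$. My plan is to exhibit a Zariski-dense subset of $C_{S'}$ of the form
\[
T \ := \ \bigl\{\vv v^* + DN\vv u \ :\ \vv u\in V_\IR \cap \IN^\ka,\ N \geq N_0(\vv u)\bigr\},
\]
where $D \in \IN_{\geq 1}$ is an \emph{integrality period} built from $\vv v^*$. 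The coset $\vv v^* + D\IZ^\ka$ will automatically satisfy the integrality condition defining $S'$, the cone $V_\IR$ will handle positivity of $\zeta_b$ and $\zeta_b^{\textup{hom}}$ via leading-term dominance, and a periodicity/CRT argument on each ray through the origin will lift membership from $S'$ to $C_{S'}$.

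Two technical ingredients are needed. First, to construct $D$, write $E_b\zeta_b \in \IZ[\vv v]$ and expand via multivariate Taylor:
\[
\zeta_b(\vv v^* + D\vv w) - \zeta_b(\vv v^*) \ = \ \sum_{|\alpha| \geq 1} \tfrac{1}{\alpha!}\partial^\alpha\zeta_b(\vv v^*)\,(D\vv w)^\alpha.
\]
The key observation is that divided derivatives preserve integrality of polynomial coefficients, so $\tfrac{1}{\alpha!}\partial^\alpha \zeta_b \in \tfrac{1}{E_b}\IZ[\vv v]$; taking $D$ divisible by $\mathrm{lcm}_b E_b$ lets each $D^{|\alpha|}/E_b$ with $|\alpha| \geq 1$ absorb the denominator, so $\zeta_b(\vv v^* + D\vv w) \in \IZ$ for all $\vv w \in \IZ^\ka$ and all $b > \ka$. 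Second, I claim $V_\IR \cap \IN^\ka$ is Zariski-dense in $\IR^\ka$: if $p \in \IQ[\vv v]$ vanishes on it, then for each $\vv v_0 \in V_\IR \cap \IQ^\ka$ the multiples $N \vv v_0$ (with $N$ a positive multiple of the denominator of $\vv v_0$) lie in $V_\IR \cap \IN^\ka$ at infinitely many $N$, forcing the univariate polynomial $p(N\vv v_0)$ in $N$ to vanish identically; hence $p$ vanishes on the line $\IR\vv v_0$, and topological density of such lines in the non-empty open set $V_\IR \subseteq \IR^\ka$ combined with continuity of $p$ forces $p \equiv 0$.

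Combining, for $\vv u \in V_\IR \cap \IN^\ka$ and $\vv v_N := \vv v^* + DN\vv u$, the polynomial $N \mapsto \zeta_b(\vv v_N)$ has leading coefficient $D^{d_b}\zeta_b^{\textup{hom}}(\vv u) > 0$ (where $d_b = \deg \zeta_b \geq 1$); combined with integrality from the period $D$, this gives $\zeta_b(\vv v_N) \in \IN$ for $N \geq N_0(\vv u)$, and similarly $\zeta_b^{\textup{hom}}(\vv v_N) > 0$ for large $N$, so $T \subseteq S' \cap V_\IR$. To upgrade $S' \cap V_\IR \subseteq C_{S'}$, fix $\vv v \in S' \cap V_\IR$ and set $p_b(m) := \zeta_b(m\vv v) \in \IQ[m]$; then $p_b(1) \in \IZ$ and the leading coefficient $\zeta_b^{\textup{hom}}(\vv v) > 0$, and a univariate version of the Taylor argument above shows $p_b$ takes integer values on some arithmetic progression through $m = 1$. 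Intersecting these progressions over $b > \ka$ via CRT yields infinitely many $m \in \IN$ with $m\vv v \in S'$, so $\vv v \in C_{S'}$. Zariski-density of $T$ then follows by propagation: if $p$ vanishes on $T$, then for each fixed $\vv u \in V_\IR \cap \IN^\ka$ the polynomial $N \mapsto p(\vv v^* + DN\vv u)$ has infinitely many roots and hence vanishes identically in $N$; each of its $N$-coefficients is a polynomial in $\vv u$ vanishing on the Zariski-dense set $V_\IR \cap \IN^\ka$ and hence vanishes identically; thus $p(\vv v^* + DN\vv u) \equiv 0$ in $(\vv u, N)$, forcing $p \equiv 0$.

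The principal delicacy is that $\zeta_b(\vv v) \in \IZ$ is a congruential constraint that interacts awkwardly with open-set and Zariski-density arguments. The workaround is to anchor the whole construction on the coset $\vv v^* + D\IZ^\ka$, where integrality is automatic; positivity is then routine from leading-term analysis in $V_\IR$, the ray-lifting $S' \cap V_\IR \subseteq C_{S'}$ is a one-variable echo of the integrality-period argument combined with CRT, and Zariski density of $T$ propagates from that of $V_\IR \cap \IN^\ka$ via the polynomial-in-$N$ trick.
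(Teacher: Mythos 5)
Your proof is correct, and its three load-bearing ingredients coincide with the paper's: integrality of the $\zeta_b$ is periodic on a lattice coset anchored at (the parameter projection of) the integer point of $Z$ --- your divided-derivative Taylor expansion with period $D$ is the same fact as the paper's reduction of $\eta\ts\zeta_b$ modulo $\eta$; positivity under large scalings comes from the top homogeneous parts $\zeta_b^{\textup{hom}}(\vv u)>0$ on the open cone; and the upgrade from $S'$ to $C_{S'}$ uses the one-variable periodicity of $m\mapsto \zeta_b(m\vv v)$ along a ray, which is exactly the paper's statement $(\dagger)$. Where you genuinely diverge is the endgame. The paper shows that every point of $(\vv v+B_\varepsilon)\cap\IOp$ lies in the \emph{Euclidean} closure of $C_{S'}$ (scale up by a large $t$, snap to the nearest point of the integrality lattice, rescale by $1/t$), and then passes from a Euclidean-dense open piece to Zariski density. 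You instead exhibit an explicit subset $T=\{\vv v^*+DN\vv u\}\subseteq C_{S'}$ and propagate the vanishing of a polynomial through the parametrization $(\vv u,N)\mapsto \vv v^*+DN\vv u$, coefficient by coefficient in $N$, using that the integer points of the open cone $V_{\IR}$ are Zariski dense. Both routes work; yours avoids the metric approximation and the intercept-theorem bookkeeping, at the cost of the auxiliary density lemma for $V_{\IR}\cap\IN^\ka$ and the coefficient-wise vanishing argument. One cosmetic slip: the hypothesis provides an integer point of $Z$, not of $S$, so your anchor $\vv v^*$ should be introduced as $\tau^{-1}$ of that point --- an element of $\IZ^\ka$ with all $\zeta_b(\vv v^*)\in\IZ$; since nothing in your argument uses nonnegativity of $\vv v^*$, this is only a matter of wording.
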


We remark that if all \ts $\zeta_b$ \ts have an integer as their constant coefficient (for example, if all \ts $\zeta_b$ \ts are homogeneous degree \ts $\geq 1$ \ts polynomials), then an integer point in \ts $Z$ \ts is obtained by setting all parameters to zero.

\begin{proof}[Proof of Claim~\ref{cla:homparts}]
Let \ts $\eta$ \ts be the least common multiple of the denominators of the coefficients of all~$\ts\zeta_b$.
Consider the stretched polynomial \. $\eta\ts \zeta_b$, which has integer coefficients.
For all \. $\vv v \in \IZ^\ka$, we have \. $\eta\ts\zeta_b(\vv v) \equiv 0 \pmod{\eta}$ \. if and only if \. $\zeta_b(\vv v)\in\IZ$.
Hence, we can work in the ring \. $\IZ/\eta\ts\IZ$.  Then, for every integer direction vector \.
$\vv z \in \IZ^\ka$ \. we have \. $\zeta_b(\vv v)\in\IZ$ \. if and only if \. $\zeta_b(\vv v+\eta\vv z)\in\IZ$.

By assumption, $Z$ \ts contains an integer point \ts $\vv f$ \ts for all \.
$\vv y = \tau^{-1}(\vv f)$ \. and for all direction vectors \. $\vv z \in \IZ^\ka$.  Therefore,
\. $\zeta_b(\vv y+\eta\vv z)\in\IZ$ \.
for all \. $\ka+1\leq b\leq k$.
Here we think of the set \. $\vv y  + \eta\ts \IZ^\ka \ts = \ts
\big\{\ts\vv y+\eta\vv z \. : \. \vv z \in \IZ^\ka\ts\big\}$, where \. $\vv y \in \tau^{-1}(Z)$,
as an affine shift of the axis parallel orthogonal grid with side length \ts
$\eta$ \ts and offset vector \ts $\vv y$.  Let us prove that
\begin{equation}
\label{eq:billard}
\tag{$\dagger$}
\aligned
& \text{if} \ \
\zeta_b(\vv v)\in\IZ \ \ \text{for all} \ \ \ka+1\leq b\leq k, \ \, \text{then for infinitely many} \ \. s\in\IN\ts, \\
& \text{we have} \ \
\zeta_b(s\vv v)\in\IZ \ \ \text{for all} \ \ \ka+1\leq b\leq k\ts.
\endaligned
\end{equation}
The proof of~$(\dagger)$ is similar to those when studying periodic orbits on a
$\ka$-dimensional rectangular block billiard table torus with side lengths \ts $\eta$ (see e.g.~\cite{Roz19,Tab05}),
and goes as follows.
We have \. $\zeta_b(\vv v)\in\IZ$, if and only if \. $\eta\ts\zeta_b(\vv v) \equiv 0 \pmod{\eta}$.
We work over \. $\IZ/\eta\IZ$.
Since the sequence \. $s\vv v \in (\IZ/\eta\IZ)^\ka$ \. is periodic with period length at most
\ts $\eta^\ka$, it follows that the sequence of vectors \.
$\bigl(\eta\ts\zeta_b(s\vv v) \. \mod \. \eta, \ \ka+1 \leq b \leq k\big) \. \in \. (\IZ/\eta\IZ)^{k-\ka}$ \.
is periodic of length at most $\eta^\ka$. This proves~$(\dagger)$.

\smallskip

Let \. $\overline{\divideontimes}^{\.\IQ}$ \. denote the Euclidean closure,
and \. $\overline{\divideontimes}^{\.\textup{Zar}}$ \. the Zariski closure.
Let \. $B_\varepsilon = \big\{\vv u \.:\. |\vv u|<\varepsilon\big\}$ \. be the open ball with radius~$\varepsilon>0$.
Since the strict inequalities \eqref{eq:openeqns} are open conditions, there exists $\varepsilon\in\IQ_{>0}$ such that each element in $\vv v + B_\varepsilon$ satisfies the strict inequalities~\eqref{eq:openeqns}.
We now prove
\begin{equation}\label{eq:inconeclosure}
\tag{$\ddagger$}
(\vv v + B_\varepsilon)\cap\IOp \subseteq \overline{C_{S'}}^{\.\IQ}
\end{equation}
This implies \. $(\vv v + B_\varepsilon)\cap\IOp\subseteq \overline{C_{S'}}^{\.\textup{Zar}}$.
Since \ts $B_\varepsilon$ \ts is full-dimensional and since each of its affine shifts with center in $\IOp$ has a full-dimensional orthant in $\IOp$, it follows that \.
$\overline{(\vv v + B_\varepsilon)\cap\IOp}^{\.\textup{Zar}} = \IQ^\ka$.
Hence \.
$\IQ^\ka = \overline{(\vv v + B_\varepsilon)\cap\IOp}^{\.\textup{Zar}}
\subseteq \overline{C_{S'}}^{\.\textup{Zar}} \subseteq \IQ^\ka$.
This implies that the subset relationships are actually all equalities, so in particular \. $\overline{C_{S'}}^{\.\textup{Zar}} = \IQ^\ka$.

It remains to show~\eqref{eq:inconeclosure}.
Let \. $\vv w \in (\vv v + B_\varepsilon)\cap\IOp$ \. be arbitrary, in particular,
suppose \ts $\vv w$ \ts satisfies~\eqref{eq:openeqns}.
For all \. $e\in\IQ_{>0}$ \. we now construct \. $\vv w'' \in C_{S'}$ \. with distance to $\vv w$ at most~$e$.
This proves~\eqref{eq:inconeclosure}.

Let \. $B_b := \big\{\ts\vv u \in \IQ^\ka \.:\. \zeta_b^{\textup{hom}}(\vv u) = 0\ts\big\}.$
Each \ts $B_b$ \ts is a closed set, so if a point is not in \ts $B_b$, then it has a strictly positive distance to~$\ts B_b$.
Let \ts $\Delta(\vv w)$ \ts be the minimum of the distances from \ts $\vv w$ \ts to~$\ts B_b$.
By the intercept theorem we have \. $\Delta(t\ts \vv w) = t \ts \Delta(\vv w)$.

Since \ts $\vv w$ \ts satisfies \eqref{eq:openeqns},
we can find $t$ large enough such that
\. $\Delta(t\vv w) > \eta^k$ \. and  \. $\eta^k/t < e$.
Let \ts $\vv w'$ \ts be a point in \. $\vv y+\eta\cdot\IZ^\ka$ \. that is closest to~$\vv w$.
By \eqref{eq:billard}, there are infinitely many nonnegative integer scalars \ts
$s \in \IN$ \ts such that
\[
\zeta_b(s\vv w') \ts \in \ts \IZ \quad \text{for all} \quad \ka+1\ts \leq \ts b\ts \leq \ts k\ts.
\]
Since \ts $\Delta(\vv w')>0$, we have \ts $\vv w'$ \ts also satisfies~\eqref{eq:openeqns}.
Moreover, the distance between \ts $\vv w'/t$ \ts and \ts $\vv w$ \ts is less than~$e$ (again, by the intercept theorem).
Let \. $\vv w'' := \vv w'/t$.  Then:
\begin{itemize}
 \item \. $\vv w''$ \ts satisfies \eqref{eq:openeqns},
 \item \. the distance between $\vv w''$ and $\vv w$ is less than~$e$, and
 \item \. there are infinitely many nonnegative integer scalars \ts $s \in \IN$ \ts such that
\[
\zeta_b(s\vv w'') \ts \in \ts \IZ \quad \text{for all} \quad \ka+1\ts \leq \ts b\ts \leq \ts k\ts.
\]
\end{itemize}
It remains to prove that \. $\vv w'' \in C_{S'}$.  For this, it suffices to show that for
all large enough \ts $s$ \ts we have
\begin{equation}
\label{eq:circledast}
\tag{$\circledast$}
\zeta_b(s\vv w'') \ts > \ts 0 \quad \text{for all} \quad \ka+1\ts \leq \ts b\ts \leq \ts k\ts.
\end{equation}
This can be seen as follows.
Since
\[
\zeta_b^{\textup{hom}}(s\vv w'') \ts > \ts 0 \quad \text{for all} \quad \ka+1\ts \leq \ts b\ts \leq \ts k\ts,
\]
it follows that
$$
\zeta_b(s \vv w'') \ = \ s^{\deg \zeta_d} \underbrace{\zeta_b^{\textup{hom}}(\vv w'')}_{>0} \, + \, o\big(s^{{\deg \zeta_d}}\big)\ts.
$$
Hence, for all large enough $s$ we have~\eqref{eq:circledast}, as desired.
\end{proof}

\subsection{The witness theorem}
Let \. $S_{\leqslant}:= \{\vv f \in \IO \.\mid\. \exists \ts\vv g \leqslant \vv f : \vv g \in S\}$ \. denote the downwards closure of $S$.
Recall from Section~\ref{sec:binomialbasistechnical} that for \ts $D\subseteq \IO$, a function \. $\Psi:D\to\IN$ \.
is called \defn{$D$-good} \ts if all its coefficients in the binomial basis are nonnegative integers.
Recall that a multivariate polynomial is called \defn{binomial-good} \ts if all coefficients in its binomial basis expansion are nonnegative integers, i.e., it is \ts $\IO$-good.
Otherwise we call the polynomial \defn{binomial-bad}.
A coset of polynomials (for example $\varphi+I$) is called \defn{binomial-good} \ts
if at least one of its representatives is binomial-good, otherwise it is called \defn{binomial-bad}.

\smallskip

\begin{theorem}[{\em \defn{The witness theorem}}{}]
\label{thm:findcounterexample}
Let \. $S\subseteq \IO$, let \. $I=I(S)$ \. be the vanishing ideal, and denote \.
$\IQ[Z] := \IQ[\vv f]/I$. Fix \. $\varphi\in\IQ[\vv f]$,
and suppose that \. $\dim\IQ[Z]_{\leq(\varphi,S)}<\infty$.
For an integer \ts $\Delta\in\IN$, denote \. $D=D(\De,S) := \llbracket \Delta\rrbracket^k\cap S_{\leqslant}$\ts.
Then there exists \ts $\Delta\in\IN$, such that for every binomial-bad coset \ts $(\varphi+I)$ \ts and
a $D$-good function \. $\Psi:D \to \IN$\ts,
there exists
\ts $\vv{f}\in S\cap D$ \ts
with \. $\Psi(\vv f) \neq \varphi(\vv f)$.
\end{theorem}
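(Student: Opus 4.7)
My plan is to prove the contrapositive: assuming $\varphi + I$ is binomial-bad and that, for contradiction, for every $\Delta \in \IN$ there exists a $D$-good function $\Psi^{(\Delta)} : D \to \IN$ with $\Psi^{(\Delta)}(\vv f) = \varphi(\vv f)$ for all $\vv f \in S \cap D$, I will extract from such a family a binomial-good polynomial representative of $\varphi + I$, contradicting binomial-badness.

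The first step is the local forcing mechanism underlying Claim~\ref{cla:existsxi}. Expanding $\Psi^{(\Delta)} = \sum_{\vv e \in D_\Delta} c_{\vv e}^{(\Delta)}\,\xi_{\vv e}$ with $c_{\vv e}^{(\Delta)} \in \IN$, nonnegativity of each summand together with the identity $\Psi^{(\Delta)}(\vv f) = \varphi(\vv f)$ on $S \cap D_\Delta$ gives the term-by-term bound $c_{\vv e}^{(\Delta)}\,\xi_{\vv e}(\vv f) \leq \varphi(\vv f)$ for all $\vv e \leq \vv f$ in $S \cap D_\Delta$. Consequently, whenever a point $\vv f_0 \in S \cap D_\Delta$ satisfies $\vv f_0 \geq \vv e$ and $\xi_{\vv e}(\vv f_0) > \varphi(\vv f_0)$, we are forced to have $c_{\vv e}^{(\Delta)} = 0$, since $c_{\vv e}^{(\Delta)}$ is a nonnegative integer strictly smaller than $1$.

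The heart of the argument then exploits $\dim V < \infty$ with $V := \IQ[Z]_{\leq(\varphi,S)}$ to select $\Delta$ from two finite subsets of~$S$. Since $V$ is finite-dimensional and embeds into $\IQ^S$ via evaluation (using $I = I(S)$), there is a finite separating set $T \subseteq S$ with $V \to \IQ^T$ injective. I will also produce a finite witness set $F \subseteq S$ with the property that for every $\vv e \in \IN^k$ with $\xi_{\vv e} + I \notin V$ there is some $\vv f_0 \in F$ satisfying $\vv f_0 \geq \vv e$ and $\xi_{\vv e}(\vv f_0) > \varphi(\vv f_0)$. Setting $\Delta := \max_{\vv f \in T \cup F} \|\vv f\|_\infty$, the forcing step kills $c_{\vv e}^{(\Delta)}$ for every $\vv e$ with $\xi_{\vv e} + I \notin V$, so the binomial support of $\Psi^{(\Delta)}$ lies entirely inside $\{\vv e : \xi_{\vv e} + I \in V\}$. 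Hence $\Psi^{(\Delta)} + I = \sum c_{\vv e}^{(\Delta)}(\xi_{\vv e} + I)$ is a finite nonnegative integer combination of elements of the linear subspace $V$ and therefore lies in $V$; since $\varphi + I \in V$ trivially and $\Psi^{(\Delta)}$ agrees with $\varphi$ on $T$, injectivity forces $\Psi^{(\Delta)} + I = \varphi + I$ in $\IQ[Z]$. The $D$-good polynomial $\Psi^{(\Delta)}$ is then a binomial-good representative of $\varphi + I$, contradicting binomial-badness.

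The main obstacle is constructing the finite witness set $F$. For each individual $\vv e$ with $\xi_{\vv e} + I \notin V$ a witness $\vv f_0(\vv e) \in S$ exists by definition of $V$, but the set of such $\vv e$'s can be infinite and the minimal $\|\vv f_0(\vv e)\|_\infty$ may a priori grow with $\vv e$. Making these witnesses uniform in a single finite $F$ is where the combination of algebraic-geometric and Ramsey-theoretic methods announced in the introduction becomes essential: growth comparisons between $\xi_{\vv e}$ and $\varphi$ on $S$ controlled through the finite-dimensional quotient $V$, together with a well-quasi-ordering argument on exponent vectors in $\IN^k$ via Dickson's lemma, should reduce the infinitely many potential witnesses to finitely many essential ones. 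Concretely, I expect to prove the implication that once $\xi_{\vv e}$ is bounded by $\varphi$ on a large enough finite subset of $S$, the coset $\xi_{\vv e} + I$ is itself forced into $V$; its contrapositive delivers the desired uniform witness set $F$.
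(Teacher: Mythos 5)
Your reduction of the theorem to a finite witness set $F\subseteq S$ is where the argument breaks: the set $F$ you require cannot exist. For the forcing step to kill $c_{\vv e}^{(\Delta)}$ you need a point $\vv f_0\in F$ with $\vv f_0\geqslant \vv e$ (otherwise $\xi_{\vv e}(\vv f_0)=0$ and the term-by-term bound is vacuous), yet you ask $F$ to serve \emph{every} $\vv e$ with $\xi_{\vv e}+I\notin V$. Since $V$ is finite-dimensional, in general infinitely many $\vv e$ with unbounded coordinates satisfy $\xi_{\vv e}+I\notin V$, and a finite $F$ cannot coordinatewise dominate them all. Concretely, for $k=1$, $S=\IN$, $I=0$ and $\varphi=x^2$, every $\binom{x}{e}$ with $e\geq 3$ lies outside $V$, and any witness for it must be at least $e$; no finite set suffices. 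The escape hatches you sketch do not close this gap: the set of bad exponents need not be upward closed, a witness for a minimal bad $\vv e$ is useless for $\vv e'\geqslant\vv e$ (again because $\xi_{\vv e'}$ vanishes at points not dominating $\vv e'$), and the implication ``$\xi_{\vv e}\leq\varphi$ on a large enough finite subset of $S$ implies $\xi_{\vv e}+I\in V$'' is false, since membership in $V$ is a condition on all of $S$. There is also a circularity lurking: $\Delta$ determines which exponents can appear in the support of $\Psi^{(\Delta)}$, while the witnesses for those exponents must themselves lie in $\llbracket\Delta\rrbracket^k$, and enlarging $\Delta$ to capture them admits new exponents.

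The paper's proof never zeroes out coefficients and thereby avoids this circularity. It first fixes a finite separating set $S'\subseteq S$ for $V$ (your set $T$) and a degree bound $d$ with $S'\subseteq\llbracket d\rrbracket^k$; only the finitely many large $\xi_{\vv e}$ with $\vv e\in\llbracket d\rrbracket^k$ ever need witnesses $\omega_{\vv e}$, and $\Delta$ is chosen afterwards to contain them. Given $\Psi$, one truncates its binomial expansion to exponents of multidegree at most $(d,\ldots,d)$, obtaining a binomial-good $\widetilde\Psi$ with $\widetilde\Psi=\Psi$ on $\llbracket d\rrbracket^k$ and $\widetilde\Psi\leq\Psi$ on $\llbracket\Delta\rrbracket^k$. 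Then either $\widetilde\Psi$ disagrees with $\varphi$ somewhere on $S'$ (done), or $\widetilde\Psi+I\in V$ and interpolation on $S'$ forces $\widetilde\Psi+I=\varphi+I$, contradicting binomial-badness, or $\widetilde\Psi+I\notin V$ and Claim~\ref{cla:existsxi} produces a single large $\xi_{\vv e}$ in its support, whence $\Psi(\omega_{\vv e})\geq\widetilde\Psi(\omega_{\vv e})\geq\xi_{\vv e}(\omega_{\vv e})>\varphi(\omega_{\vv e})$. Your separating set and your forcing computation are both sound and reappear in spirit here; what is missing from your proposal is the truncation-plus-monotonicity step that confines the exponents requiring witnesses to a finite set fixed \emph{before} $\Delta$, replacing the impossible global coefficient-killing by a single overshoot at one point.
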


\begin{proof}
A function in \ts $\IQ[Z]$ \ts is completely specified by its evaluations at all points in~$S$,
which can be seen as follows.  Let \. $\gamma,\gamma' \in \IQ[\vv f]$, such that \ts $\gamma|_S\ts =\ts \gamma'|_S$\ts.
Then \. $(\gamma-\gamma')|_S=0$ \. which implies \. $(\gamma-\gamma')|_Z=0$.  Therefore, \.
$\gamma-\gamma'\in I$, and \. $\gamma+I = \gamma'+I$, i.e., we have equality as functions in~$\IQ[Z]$.
We conclude:
\begin{equation}\label{eq:evalinjective}
\text{
the evaluation map \ $\textsu{eval} :  \IQ[Z]\ts \to \ts \IQ^S$ \ is injective.}
\end{equation}

To simplify the notation, denote \. $V := \IQ[Z]_{\leq(\varphi,S)}$.
By~\eqref{eq:evalinjective}, the restriction \. $\textsu{eval}{}\ts|_V : V \to \IQ^S$ \.
is injective.  Let \ts $r$ \ts denote the rank of the linear map \. $\textsu{eval}{}\ts|_V$,
and note that the injectivity implies that \. $|S|\geq r$.
We now construct a set $S' \subseteq S$.
We start with the empty set and iteratively enlarge it point by point from~$S$
in such a way that the rank of the evaluation map
\. $\textsu{eval}{}\ts|_V : V \to \IQ^{S'}$ \.
increases at each step by~1, up to rank~$r$.
This gives a set \ts $S'\subseteq S$ \ts of $r$ points
such that the map \. $\textsu{eval}_{V,S'}:V\to\IQ^{S'}$ \. is injective.
This means:
\begin{equation}\label{eq:interpolationonSprime}
\text{an element in $V$ is completely specified by its evaluations at~$S'$.}
\end{equation}

Let $d$ be the smallest \ts $d\geq \delta$ \ts  such that \.
$S'\subseteq\llbracket d\rrbracket^k$.
For each \ts $\xi_e$ \ts that is large (see above), let \ts $\omega_e \in S$ \ts
be a point with
\begin{equation}\label{eq:farawaypoint}
\xi_e(\omega_e) \. > \. \varphi(\omega_e).
\end{equation}
Let $\Delta$ be such that for all large \ts $e \in \llbracket d\rrbracket^k$ \ts
we have \. $\omega_e \in \llbracket\Delta\rrbracket^k$.

\smallskip

We are now ready to prove the result.
Note that we have not considered~$\Psi$ so far.
Since $\Psi$ is $D$-good, we can enlarge the domain of definition of $\Psi$ to the whole $\llbracket\Delta\rrbracket^k$ in a way such that $\Psi$ is $\llbracket\Delta\rrbracket^k$-good. Let \ts $\Psi_\De$ \ts denote the unique polynomial
with multidegree \. $\leq (\Delta,\ldots,\Delta)$, and which is binomial-good.

Let \. $\widetilde \Psi$ \. be the polynomial that arises from \ts $\Psi_\De$ \ts by setting all coefficients in the binomial basis whose multidegree is not \. $\leqslant (d,\ldots,d)$ to zero.
Clearly \. $\widetilde \Psi$ \. is also binomial-good.
Moreover, for the restrictions to \ts $\llbracket d\rrbracket^k$, we have \.
$\Psi|_{\llbracket d\rrbracket^k} \ts =\ts \widetilde \Psi|_{\llbracket d\rrbracket^k}$\ts.
On the other hand, on \ts $\llbracket \Delta\rrbracket^k$ \ts we have \.
$\Psi|_{\llbracket \Delta\rrbracket^k} \ts \geq \ts \widetilde \Psi|_{\llbracket \Delta\rrbracket^k}$ \. because $\Psi$ is binomial-good.

Suppose \.
$\varphi|_{S'}\neq \widetilde \Psi|_{S'}$.
This gives the desired \ts $\vv f$, because \.
$S' \subseteq \llbracket d\rrbracket^k\subseteq \llbracket \Delta\rrbracket^k$ \. and
$\Psi|_{\llbracket d\rrbracket^k}=\widetilde \Psi|_{\llbracket d\rrbracket^k}\ts.$
Thus we can consider only the case when
\[
\varphi|_{S'} \. = \. \widetilde \Psi|_{S'}\..
\]
Assume for the sake of contradiction that \. $\widetilde \Psi+I \in V$. Then, by~\eqref{eq:interpolationonSprime},
we have \. $\varphi+I = \widetilde \Psi+I$. This is a contradiction to \ts $\varphi+I$ \ts being binomial-bad and \ts
$\widetilde \Psi+I$ \ts being binomial-good.
Hence we are left to consider the case that
\[
\widetilde \Psi+I \. \notin \ts V.
\]
Since \ts $\widetilde \Psi$ \ts is binomial-good, Claim~\ref{cla:existsxi} gives \. $\vv e \in \textup{suppb}(\widetilde \Psi)$ \.
with \ts $\xi_{\vv e}$ \ts large.
Moreover, since \ts $\widetilde \Psi$ \ts is binomial-good, all coefficients in the binomial expansion are nonnegative.
Hence \. $\widetilde \Psi \geq \xi_e$ on $\IO$.
In particular, we have:
$$\bigl[\widetilde \Psi+I\bigr](\vv\omega_e) \ \geq \ \xi_e(\vv\omega_e) \ \stackrel{\eqref{eq:farawaypoint}}{>} \ \varphi(\vv\omega_e)\ts.
$$
Since \. $\Psi\ts|_{\llbracket \Delta\rrbracket^k} \. \geq \. \widetilde \Psi\ts|_{\llbracket \Delta\rrbracket^k}$\ts, we found the desired \ts $\vv f := \vv \omega_e$.
This completes the proof.
\end{proof}

\subsection{Integer points in the polyhedron $\mathcal P(\varphi,\zeta)$}
\label{subsec:affinelinear}
In the Witness Theorem~\ref{thm:findcounterexample} we need to have a binomial-bad \ts $\varphi+I$.
In this section we discuss the important case of having only affine linear constraints. We use it for many of the cases in this paper.

Formally, given an ideal $I$ generated by \ts $\zeta_b-f_b$ \ts with \ts $b>\ka$, such that
\ts $\zeta_b$ \ts are affine linear in $f_1,\ldots,f_\ka$, and given a polynomial $\varphi$,
the task in this section is to determine whether or not \ts $\varphi+I$ \ts is binomial-good.
We will see that this is the case if and only if the polyhedron \ts $\mathcal P(\varphi,\zeta)$ \ts
contains an integer point, see below.

Let \ts $Z$ \ts be the vanishing set of~$I$, and let \ts $Z_\IZ$ \ts denote its integer points.
We work under the assumption that \. $C'_{\tau^{-1}(Z_\IZ)}$ \. lies Zarsiki-dense in \ts $\IQ^\ka$,
which is for example guaranteed if all \ts $\zeta_b$ \ts have only integer coefficients.

First, note that we can express \ts $\varphi$ \ts in the parameters by plugging in \ts
$\zeta_b$ \ts for each $f_b$, $b > \ka$.
We call the resulting polynomial \. $\varphi'=\tau^*(\varphi)$.
Let \ts $\delta$ \ts be the degree of \ts $\varphi'$.
Recall that \ts $\IO := \IN^k$.
Denote \ts $\IO_\delta := \bigl\{\vv e \in \IO \.:\. |\vv e|\leq \delta\bigr\}$.
Take all $k$ variables $f_b$ and consider the binomial basis elements \.
$\binom{\vv f}{\vv e}$ \. with \. $\vv e\in\IO_\delta$\ts.  Each \ts $\vv e$ \ts gets
a variable \ts $x_{\vv e}$, and we define
\[
q \ := \ \sum_{\vv e\ts\in\ts\IO_\delta} \. \tau^*{}\hskip-0.095cm{}\left(\binom{\vv f}{\vv e}\right) \.\cdot \. x_{\vv e}\..
\]
Note here that \ts $\tau^*$ \ts is only affine linear, so the degree does not increase.

By expressing \. $\tau^*\hskip-0.080cm{}\left(\binom{\vv f}{\vv e}\right)$  over the binomial basis
in \ts $\vv v$, we assign to each \. $\vv e \in \IO_\delta$ \. a coefficient vector \.
$w_{\vv e} \in \IQ^{\nn^\ka_\delta}$, where \. $\nn^\ka_\delta := \big\{\vv v \in \IN^\ka \.:\. |\vv v|\leq\delta\big\}$.
Note that if \ts $\vv e$\ts is contained in \ts $\IN^\ka$, then \ts $y_{\vv e}$ \ts has a single 1 at position \ts $\vv e$
\ts and the rest are zeros.
We also express \ts $\varphi'$ \ts over the binomial basis and get coefficients \.
$\varphi'_{\vv v}$, for all \. $\vv v \in \nn^\ka_\delta$\ts.

We search for a nonnegative integer assignment to the \. $x_{\vv e} \in \IN$ \. such that
\[
\varphi' \ = \ \sum_{\vv e \in \IO_\delta} \. x_{\vv e} \, w_{\vv e}\..
\]
Note that the degree restriction \. $\vv e \in \IO_\delta$ \. is not completely obvious
and we will talk about it in Claim~\ref{cla:degreerestriction}.  We conclude:
\begin{equation}\label{eq:polyhedronequations}
\varphi'_{\vv v} \ = \ \sum_{\vv e \in \IO_\delta} \. x_{\vv e} \, (w_{\vv e})_{\vv v}
\quad \text{for all} \quad \vv v \in \nn^\ka_\delta\.,
\end{equation}
where the index \ts $\vv v$ \ts means taking the coefficient of \ts $\vv v$ \ts in the binomial basis.

Now, the question of the existence of such a vector \ts $x$ \ts of nonnegative integers
is the question of an integer point in the polyhedron \ts $\mathcal P(\varphi,\zeta)$ \ts
defined by the equations \eqref{eq:polyhedronequations} and the nonnegativity constraints on all variables.
The following theorem states this in the simplified situation when \ts $\zeta_b$ \ts have only
integer coefficients, while in general we only need that $\tau^{-1}(Z_\IZ)$ is Zarsiki-dense in~$\ts\IQ^\ka$,
where \ts $Z_\IZ$ \ts are the integer points in~$Z$.

\smallskip

\begin{theorem}[{\em \defn{The polyhedron theorem}}{}]
\label{thm:polyhedron}
Let \ts $I$ \ts be an ideal generated by the \ts $\zeta_b-f_b$, \ts $\ka<b\leq k$,
such that all \ts $\zeta_b$ \ts are affine linear in \. $f_1,\ldots,f_\ka$,
with integer coefficients. 
Fix a polynomial~$\varphi$.
Then \ts
$\varphi+I$ \ts is binomial-good if and only if there exists an integer point in the polyhedron
\ts $\mathcal P(\varphi,\zeta)$.
\end{theorem}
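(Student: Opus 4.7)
The key tool is the ring isomorphism $\tau^{*}\colon \IQ[\vv f]/I \to \IQ[\vv v]$ from Claims~\ref{cla:Iisker}--\ref{cla:IequalsIZ}, under which the coset $\varphi+I$ corresponds to $\varphi':=\tau^*(\varphi)\in\IQ[\vv v]$, which has degree at most $\delta$ because substituting the affine linear $\zeta_b$ for $f_b$ cannot raise degree. By Proposition~\ref{pro:basis} a polynomial $\psi\in\IQ[\vv f]$ is binomial-good precisely when its unique expansion $\psi=\sum_{\vv e\in\IO} x_{\vv e}\binom{\vv f}{\vv e}$ has $x_{\vv e}\in\IN$ for every $\vv e$, with finite support. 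So ``$\varphi+I$ binomial-good'' amounts to the existence of finitely supported $(x_{\vv e})\geqslant 0$ with $\sum_{\vv e} x_{\vv e}\,\tau^*\bigl(\binom{\vv f}{\vv e}\bigr)=\varphi'$; expanding both sides in the $\vv v$-binomial basis (with coefficients $(w_{\vv e})_{\vv v_0}$) and invoking uniqueness rewrites this as the infinite linear system
\[
\sum_{\vv e\in\IO} x_{\vv e}\,(w_{\vv e})_{\vv v_0}\;=\;\varphi'_{\vv v_0}\qquad(\vv v_0\in\IN^\ka),
\]
where $\varphi'_{\vv v_0}=0$ whenever $|\vv v_0|>\delta$. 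The theorem therefore reduces to showing that this system admits a nonnegative integer solution if and only if its restriction to $\vv e\in\IO_\delta$, $\vv v_0\in\nn^\ka_\delta$ (namely~\eqref{eq:polyhedronequations}) does.

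The direction $(\Leftarrow)$ is then essentially immediate. Given integers $x_{\vv e}\geqslant 0$ for $\vv e\in\IO_\delta$ satisfying~\eqref{eq:polyhedronequations}, set $\psi:=\sum_{\vv e\in\IO_\delta} x_{\vv e}\binom{\vv f}{\vv e}$, which is binomial-good by construction. Each $\tau^*\bigl(\binom{\vv f}{\vv e}\bigr)$ has degree at most $|\vv e|\le\delta$ in $\vv v$, so $\tau^*(\psi)$ has degree at most $\delta$; reading off binomial coefficients via~\eqref{eq:polyhedronequations} gives $\tau^*(\psi)=\varphi'=\tau^*(\varphi)$, hence $\psi-\varphi\in\ker\tau^*=I$, and $\psi\in\varphi+I$ is the required binomial-good representative.

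For the direction $(\Rightarrow)$, any binomial-good $\psi=\sum_{\vv e\in F} x_{\vv e}\binom{\vv f}{\vv e}\in\varphi+I$ (finite $F\subset\IO$, $x_{\vv e}\in\IN$) immediately yields, via the computation above, a nonnegative integer solution of the full infinite system; what remains is the \emph{degree restriction}, i.e.\ showing that one may take $F\subseteq\IO_\delta$. This is Claim~\ref{cla:degreerestriction} and is the main obstacle. The plan is to choose a binomial-good representative of minimal $\ell_1$-size $\sum_{\vv e} x_{\vv e}$ (breaking ties by lex-maximal multidegree of the support) and to derive a contradiction if some $\vv e^{*}$ with $|\vv e^{*}|>\delta$ survives. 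Since $\tau^*(\psi)=\varphi'$ has degree $\le\delta$ in $\vv v$, all contributions to $\vv v$-monomials of degree $>\delta$ in $\tau^*(\psi)$ must cancel; the affine linearity of the $\zeta_b$ makes the top-degree part of each $\tau^*\bigl(\binom{\vv f}{\vv e}\bigr)$ an explicit monomial in $v_1,\ldots,v_\ka$ and in the linear parts of the $\zeta_b$, so these cancellation relations can be combined with the $I$-relations $f_b\equiv\zeta_b(\vv v)\pmod I$ (for $b>\ka$) to trade the offending $\binom{\vv f}{\vv e^{*}}$ for a nonnegative integer combination of strictly lower-degree binomial basis elements in the same $I$-coset, contradicting minimality. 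Granted this reduction, the truncation $x|_{\IO_\delta}$ satisfies~\eqref{eq:polyhedronequations}---the omitted equations indexed by $|\vv v_0|>\delta$ hold automatically, both sides being zero---and so provides the desired integer point of $\mathcal P(\varphi,\zeta)$.
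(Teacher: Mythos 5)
Your overall architecture is the same as the paper's: via $I=\ker\tau^*$, binomial-goodness of $\varphi+I$ is equivalent to writing $\varphi'=\tau^*(\varphi)$ as $\sum_{\vv e}x_{\vv e}\,\tau^*\bigl(\binom{\vv f}{\vv e}\bigr)$ with finitely many nonnegative integer coefficients, and your ``if'' direction (an integer point of $\mathcal P(\varphi,\zeta)$ yields a binomial-good representative supported on $\IO_\delta$) is correct and is exactly what the paper means by ``by definition''. You also correctly isolate the crux of the ``only if'' direction as the degree restriction, i.e.\ Claim~\ref{cla:degreerestriction}.

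The gap is in your proof of that degree restriction. Your descent — take a representative minimizing $\sum_{\vv e}x_{\vv e}$ and ``trade the offending $\binom{\vv f}{\vv e^*}$ for a nonnegative integer combination of strictly lower-degree binomial basis elements in the same $I$-coset'' — is asserted, not proved, and there is no reason such a trade exists. Rewriting modulo $I$ by substituting $f_b\mapsto\zeta_b(\vv v)$ and re-expanding in the binomial basis generically produces \emph{negative} coefficients; indeed, the whole point of the theorem is that deciding whether a coset admits a nonnegative rewriting is nontrivial. The cancellation relations among the degree-$(>\delta)$ parts of $\tau^*(\psi)$ give you linear identities with rational coefficients, not a nonnegative integer certificate of lower degree, so minimality is never contradicted. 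What the paper does instead — and what your sketch is missing — is a positivity-and-growth argument showing that the configuration you try to descend from simply cannot occur: choose a strictly positive $\vv v$ on a ray of $C_{S'}$ avoiding the zero sets of the linear parts of the $\zeta_b$; each $\tau^*(\xi_{\vv e})$ is nonnegative at the infinitely many $t$ with $\tau(t\vv v)\in\IO$, hence eventually nonnegative in $t$, and $\tau^*(\xi_{\vv e^*})(t\vv v)=\Theta\bigl(t^{|\vv e^*|}\bigr)$ because the affine linear substitution preserves the degree of the unique top monomial of $\xi_{\vv e^*}$. A sum with nonnegative weights containing such a term therefore grows like $\omega(t^\delta)$ and cannot equal $\varphi'$, whose degree is $\delta$. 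Note that a purely algebraic cancellation of the high-degree parts is not obviously impossible (the linear parts of the $\zeta_b$ can take negative values off $C_{S'}$), so this positivity-along-rays input is essential and must be supplied explicitly.
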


\begin{proof}
The existence of a nonnegative integer point \ts $x_{\vv e}$ \ts that satisfies \eqref{eq:polyhedronequations}
implies that \ts $\varphi+I$ \ts is binomial-good by definition.  For the reverse,
if \ts $\varphi+I$ \ts is binomial-good, then there exist finitely many nonnegative integers \ts
$x_{\vv e}$ \ts with \. $\varphi'\ts =\ts \sum_{\vv e}\ts x_{\vv e}\. w_{\vv e}$.
The following Claim~\ref{cla:degreerestriction} implies that here we can indeed assume that $\vv e \in \IO_\delta$ in this sum, which finishes the proof.
\end{proof}

\smallskip

\begin{claim}
\label{cla:degreerestriction}
Let \ts $x_{\vv e} > 0$ \ts for some \ts $\vv e \in \IN^k$ \ts with \ts $|\vv e| > \delta$.
Moreover, assume that all \ts $x_{\vv e}\geq 0$ \ts and that
$x_{\vv e}> 0$ for only finitely many $\vv e$.
Then \. $\varphi' \ts \neq \ts \sum_{\vv e} \. x_{\vv e} \. w_{\vv e}$.
\end{claim}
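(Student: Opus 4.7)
The plan is to compare top-degree homogeneous parts in $\vv v$ on both sides of the identity $\varphi' = \sum_{\vv e} x_{\vv e} w_{\vv e}$. Since each $\zeta_b$ is affine linear in $v_1,\ldots,v_\ka$, substituting $\zeta_b(\vv v)$ for $f_b$ can never raise the degree, so $\deg \tau^*\!\bigl(\binom{\vv f}{\vv e}\bigr) \leq |\vv e|$; and when the linear part $L_b$ of each $\zeta_b$ (for $b>\ka$) is nonzero, the inequality is an equality. Writing $\zeta_b = L_b(\vv v)+c_b$ and using that the top-degree part of $\binom{x}{n}$ is $\tfrac{1}{n!}x^n$, the degree-$|\vv e|$ homogeneous component of $\tau^*\!\bigl(\binom{\vv f}{\vv e}\bigr)$ works out to
\[
\frac{1}{\vv e!}\prod_{b\leq\ka} v_b^{e_b}\prod_{b>\ka} L_b(\vv v)^{e_b},
\]
a product of linear forms with strictly positive leading coefficient $1/\vv e!$.

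Next, I would assume for contradiction that $\varphi' = \sum_{\vv e} x_{\vv e} w_{\vv e}$ with the $x_{\vv e}\geq 0$ finitely supported and some $x_{\vv e^*}>0$ having $|\vv e^*|>\delta$. Setting $M := \max\{\,|\vv e| : x_{\vv e}>0\,\}$, we have $M>\delta$. On the left-hand side the degree-$M$ homogeneous part of $\varphi'$ vanishes since $\deg\varphi' = \delta < M$, whereas on the right-hand side it equals
\[
H(\vv v) \, := \, \sum_{|\vv e|=M} \frac{x_{\vv e}}{\vv e!}\prod_{b\leq\ka} v_b^{e_b}\prod_{b>\ka} L_b(\vv v)^{e_b},
\]
forcing $H\equiv 0$ as a polynomial. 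To contradict this I would exhibit a witness $\vv v^* \in \IOp$ at which $v_b^* > 0$ for all $b\leq\ka$ and $L_b(\vv v^*)>0$ for all $b>\ka$; such a point exists under the Polyhedron Theorem's standing Zariski-density hypothesis on $C_{\tau^{-1}(Z_\IZ)}$, by the same argument pattern used in Claim~\ref{cla:homparts}. At $\vv v^*$ every summand of $H(\vv v^*)$ is strictly positive (all factors positive, coefficient $x_{\vv e}/\vv e!>0$), so $H(\vv v^*)>0$, contradicting $H\equiv 0$ and establishing the claim.

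The main obstacle — really the only delicate step — is producing the positivity witness $\vv v^*$; this is precisely where the Zariski-density hypothesis underlying the Polyhedron Theorem is cashed in. An edge case that deserves separate care is when some $\zeta_b$ with $b>\ka$ is a constant, so that $L_b\equiv 0$ and the degree counting above degenerates: in that situation $\tau^*\!\bigl(\binom{f_b}{e_b}\bigr) = \binom{c_b}{e_b}$ is merely a scalar, and one absorbs this scalar into the weights $x_{\vv e}$ and reduces to the subproblem in which every remaining $\zeta$ is genuinely non-constant, thereby recovering the top-degree bookkeeping above.
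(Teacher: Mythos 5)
Your proof is correct and takes essentially the same route as the paper's: both are top-degree comparisons that use (i) that $\tau^*$ preserves the degree of $\xi_{\vv e}$ because the $\zeta_b$ are affine linear and non-constant, and (ii) that the nonnegativity of the weights $x_{\vv e}$ together with strict positivity of the leading contributions at a suitable point of the positive orthant forbids any cancellation in degree $M>\delta$. The only divergence is presentational: the paper argues asymptotically along a ray, writing $\varphi'(t\vv v)=\Theta(t^{\deg\varphi'})$ versus $\tau^*(\xi_{\vv e})(t\vv v)=\omega(t^{\deg\varphi'})$ and deriving positivity of the leading coefficients from the nonnegativity of the binomial functions at the infinitely many lattice points of $S'$ on the ray, whereas you extract the degree-$M$ homogeneous component explicitly and certify its positivity through the factorization $\tfrac{1}{\vv e!}\prod_{b\le\ka} v_b^{e_b}\prod_{b>\ka} L_b(\vv v)^{e_b}$ at a sign witness $\vv v^*$ — both justifications are valid under the standing density hypothesis.
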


\begin{proof}
We define the vector space of functions that grow slowly:
\[
\IQ[\vv v]_{\leq\varphi'} \
:= \
\bigl\{\.\psi \in \IQ[\vv v] \ \big| \ \exists \ts\alpha \in \IQ \ \, \forall\ts \vv g \in \IOp \,:\, |\psi(\vv g)| \. \leq \. \alpha \ts |\varphi'(\vv g)|\.\bigr\}.
\]
Clearly \. $\varphi' \in \IQ[\vv v]_{\leq\varphi'}$\ts.
Since all functions \ts $\tau^*(\xi_{\vv f})$ \ts
are eventually nonnegative forever on any ray \. $\IQ_{\geq 0}\vv v$ \. with \ts
$\vv v \in C_{S'}$, it suffices to prove that \. $\tau^*(\xi_{\vv e})\notin\IQ[\vv v]_{\leq\varphi'}$\ts.

Since \ts $|\vv e| > \delta$, we have \. $\deg \xi_{\vv e}> \deg \varphi'$.
Since \ts $\tau^*$ \ts replaces variables by affine linear non-constant polynomials, and since \ts
$\xi_{\vv e}$ \ts has a single monomial of highest degree,
we have \. $\deg\big(\tau^*(\xi_{\vv e})\big) \ts = \ts \deg(\xi_{\vv e})$.

Since \ts $C_{S'}$ \ts is dense in \ts $\IQ^\ka$\ts, we can choose
\ts $\vv v \in \IOp$ \ts that is strictly positive in each component.
We have
\[
\varphi'(t \vv v) \ = \ \Theta\big(t^{\deg(\varphi')}\big),
\]
whereas
\[
\xi_{\vv e}'(t \vv v) \ = \ \Theta\big(t^{\deg(\xi_{\vv e})}\big) \ = \ \omega\big(t^{\deg(\varphi')}\big).
\]
This implies the result. \end{proof}

\medskip

\section{The diagonalization theorem}
\label{sec:diagonalizationtheorem}
In this section we use set-instantiators, the Witness Theorem~\ref{thm:findcounterexample}
and a multivariate version of Ramsey's hypergraph theorem to prove the Diagonalization Theorem~\ref{thm:diagonalization}.
The Diagonalization Theorem is our only method of proving oracle separations from \ts $\sharpP$ \ts
when the polynomial \ts $\varphi$ \ts is monotone, such as for example \ts $f-1$, $f/2$ \ts or \ts $(f-1)/2$.

\smallskip

For a subset \. $B\subseteq \{0,1\}^j$,  we write \. $\tilde B \subseteq\{0,1\}^{j-1}$ \. do denote
the set of suffixes of all strings that start with~1.  For a subset \. $B\subseteq \{0,1\}^{j-1}$,
we write \. $\{1\}\boxplus B \subseteq\{0,1\}^{j}$ \. to denote the union of \ts
$\{0^j\}$ \ts with the set of strings that start with~1 and continue with a string from~$B$.

\smallskip

\subsection{Set-instantiators}

We want to consider computation paths of nondeterministic Turing machines, but the actual computational device we are arguing about is a nondeterministic Turing machine with oracle access to an oracle that is defined up to strings of length \ts $<j$, and where the oracle answers with~0 for all oracle queries of length \ts $>j$.
We capture this in the following definition.

\begin{definition}
A \defn{computation path} $\tau$ of a nondeterministic Turing machine
on some input is defined as the sequence of its nondeterministic choice bits and the answers to its length \ts $j$ \ts
oracle queries $($both types of bits appear in the same list, ordered chronologically$)$. 
Formally, it is an element of \ts $\{0,1\}^*$. 
\end{definition}

The same Turing machine can yield the same computation path on different inputs (for example, when not the whole input is read) or when having access to different oracles, because the oracles can differ in positions that are not queried.
We are especially interested in the case where the input is \ts $0^j$ \ts
and the oracles differ in exactly the set \. $A_j\subseteq\{0,1\}^j$ \.
of length \ts $j$ \ts strings.

Given a nondeterministic Turing machine \ts $M$ \ts  and an oracle \. $A_{<j} := \bigcup_{j'<j}A_{j'}$ \.
where \. $A_{j'}\subseteq\{0,1\}^{j'}$, and give a subset \. $B \subseteq \{0,1\}^{j-1}$,
we are interested in the number of accepting paths of \ts $M$ \ts when given oracle access to \.
$A_{<j}\cup (\{1\}\boxplus B)$, where \ts $A_{<j}$ \ts is fixed.
We define
\begin{equation}
\label{eq:hMB}
h_M^B(w) \ := \ \#\acc_{M^{A_{<j}\cup (\{1\}\boxplus B)}} \ts (w).
\end{equation}
It is instructive to think of \ts $A_{<j}$ \ts and \ts $M$ \ts
as together forming a computational device that has oracle access to some subset \.
$B\subseteq\{0,1\}^{j-1}$.

\smallskip

For \. $\vv b \in \IN^k$, we write \. $\mP(\vv b) := \mP([b_1]) \times \cdots \times \mP([b_k])$,
where \ts $\mP([a])$ \ts is the set of all subsets of \. $[a]=\{1,\ldots,a\}$.
For \. $\vv s,\ts \vv t \in \mP(\vv b)$, we write \. $\vv s \subseteq \vv t$ \. if \.
$s_a \subseteq t_a$ \. for all \ts $1\leq a \leq k$.
For an element \. $\vv s \in \mP(\vv b)$, we write \. $|\vv s| := \big(|s_1|,\ldots,|s_k|\big)$.

\begin{definition}[Set-instantiator against $(M,j,A_{<j},S,\vv b)$]
\label{def:setinstantiator}
Let \ts $M$ \ts be a nondeterministic Turing machine,
let \ts $j \in \IN$, and let \. $A_{<j} \subseteq \{0,1\}^*$ \.
be a language that contains only strings of length \ts $<j$.
Let \ts $S \subseteq \IN^k$ \ts be a set and let \ts $\vv b \in \IN^k$. 
We set \. $\mP(\vv b)_S := \big\{\vv s \in \mP(\vv b) \,: \, |\vv s|\in S\big\}$.

Let \ts $\top$ \ts be a symbolic top element above \. $\mP(\vv b)$, i.e. $\vv s \subsetneq \top$ \. for all \. $\vv s \in \mP(\vv b)$.
A \ts \defn{set-instantiator} \ts $\SI$ \ts is a pair of

$\circ$ \ an instantiation function \. $\textup{\textsu{inst}}_\SI : \mP(\vv b)_S \to \{0,1\}^{[2^j]}$\ts,  \ts and

$\circ$ \  a perception function \. $\textup{\textsu{perc}}_\SI : \{0,1\}^* \to \mP(\vv b) \cup \{\top\}$\ts,

\smallskip
\nin
such that the following property holds for all \. $\vv s \in \mP(\vv b)_S\.:$

\smallskip
$\bu$ \ $\tau\in\{0,1\}^*$ is an accepting path for the computation \. $h_M^{\textup{\textsu{inst}}_\SI(\vv s)}(0^j)$
if and only if $\textup{\textsu{perc}}_\SI(\tau) \subseteq \vv s$.
\end{definition}

\smallskip

The intuition is that a computation path queries the oracle and sees the existence of several objects ($k$ different types of objects), and then decides to accept or not based solely on the set of objects perceived, independent of whether or not there are actually other unqueried objects in the oracle. The Turing machine might even \emph{know} that there must be other objects for some syntactic reason and can take that information into account.

For example, in \problem{Sperner} we have \ts $k=2$, and we consider rainbow triangles of positive/negative orientation.  We know that \. $t_+ - t_- -1=0$, so if we see \ts $t_+\geq 3$ \ts and \ts $t_-\geq 3$, then we know that there must be at least one rainbow triangle of positive orientation that we have not seen.
Note that if an accepting path $\tau$ sees an object in the oracle and then we change the oracle, then running the same computation we will at some point get a different oracle answer, and hence $\tau$ will not be a computation path of this \ts (input, oracle) \ts combination.

Formally, in the above definition we think of accepting paths as having a perception from $\mP(\vv b)$, while computation paths that never accept on any of the instantiations are given perception $\top$.
Note also that from the definition it is immediately clear that from a set-instantiator with a set~$S$, we get a set-instantiator with the same parameters for every subset of~$S$.

We usually do not mention $A_{<j}$ in the context of set-instantiators, as it has no effect on the construction of set-instantiators, and is also understood from the context.
When discussing polynomial closure properties of $\sharpP$, set-instantiators almost trivially exist (see~$\S$\ref{sec:CountallOccSetInstantiator}), but for counting classes coming from $\TFNP$ this is not obvious. We create the necessary set-instantiators in $\S$\ref{sec:IterSetInstantiator},
$\S$\ref{sec:SourceOrExcessSetInstantiator},
$\S$\ref{sec:CLSSetInstantiator},
$\S$\ref{sec:CountallLeafSetInstantiator},
and
$\S$\ref{sec:BipartiteUnbalanceSetInstantiator}.

\subsection{Diagonalization theorem statement}
\label{sec:diagonalization}

Our main tool for constructing oracles that separate from $\sharpP$ is the following Theorem~\ref{thm:diagonalization}, which depends on the parameters \. $\varphi$, \ts $\zeta$, \ts $S$, \ts \problem{Multiplicities}, and \ts $\vv t$.
In most situations \ts $\vv t\in S$ \ts can be arbitrary, so $\vv t$ is often not specified.
To use the theorem well, \problem{Multiplicities} should map into~$S$.

For \. $A_{<j}\in\{0,1\}^{<j}$,
we say that a nondeterministic oracle Turing machine \ts $M$ \. \defn{answers consistently} \ts for \. $(j,A_{<j},\problem{Multiplicities})$,
if for every $B\subseteq\{0,1\}^{j-1}$ we must have the number \. $\#\acc_M^{A_{<j}\cup {(\{1\}\boxplus B)}}(w)$ \.
is the same for all \ts $B$ \ts that have the same \. $\problem{Multiplicities}(B)$, for all \ts $w\in\{0,1\}^*$.
In other words, we must have:
$$\#\acc_M^{A_{<j}\cup (\{1\}\boxplus B)}(w) \ = \ \#\acc_M^{A_{<j}\cup (\{1\}\boxplus C)}(w)
$$
for all \. $C\in\{0,1\}^{j-1}$ \. with \. $\problem{Multiplicities}(B)=\problem{Multiplicities}(C)$.

\smallskip

\begin{theorem}[{\em \defn{Diagonalization Theorem}}{}]
\label{thm:diagonalization}
Fix \. $0 \leq \ka \leq k$ \. and let \ts $\IO := \IN^k$.
We write \. $\vv f=(f_1,\ldots,f_k)$ \. and \. $\vv v = (v_1,\ldots,v_\ka)$.
Fix \. $\varphi \in \IQ[\vv f]$.  Let  \. $\zeta_b \in \IQ[\vv v]$ \.
be non-constant functions and set $I$ to be the ideal generated by the \.
$\zeta_b(\vv v)-f_b$, where \. $\ka+1 \leq b \leq k$.

Define \. $Z := \{\vv f \in \IQ^k \mid \eq_{\ka+1}(\vv f)=\ldots=\eq_{k}(\vv f)=0\}$,
where \. $\eq_b := \zeta_b(\vv v) - f_{b}$. Denote \. $T := \IO\cap Z$ \. and let \. $S \subseteq T$.
Consider a map \. $\tau:\IQ^\ka\to Z$ \. defined as
$$\tau(v_1,\ldots,v_\ka) \ := \big(v_1,\ldots,v_\ka,\zeta_{\ka+1}(\vv v),\ldots,\zeta_{k}(\vv v)\big).
$$
Set \, $C'_{S} := \big\{\vv v \in \IQ_{\geq 0}^\ka \,:\, |\tau^{-1}(S) \cap \IQ\vv v| = \infty\big\}$.
Assume that

\smallskip

$(1)$ \ $Z$ contains at least one integer point,

$(2)$ \  $\overline{C'_{S}}^{\.\textup{Zar}} = \overline{C'_{T}}^{\.\textup{Zar}}$\ts, \ts and

$(3)$ \  there exists a point \. $\vv v \in \IQ_{\geq 0}^\ka$ \. that satisfies  strict inequalities
\begin{equation*}
\zeta_b^{\textup{hom}}(\vv v) > 0 \quad \text{for all} \quad \ka+1\ts \leq \ts b \ts \leq \ts k,
\end{equation*}
\hskip1.35cm where \. $\zeta_b^{\textup{hom}}\in\IQ[\vv v]$ \. is the top nonzero homogeneous part of \ts $\zeta_b$.

\smallskip

\nin
Fix a set of multivariate functions $\problem{Multiplicities}: \.\mP\big(\{0,1\}^{j-1}\big)\to \IO$.
Assume that for every nondeterministic polynomial-time Turing machine \ts $M$ \ts and
for every \ts $\vv f$, there exist infinitely many \ts $j\in \nn$, such that for every \. $A_{<j}\in\{0,1\}^{<j}$,
either \ts $M$ \ts does not answer consistently for \. $\big(j,A_j,\problem{Multiplicities}\big)$ \. or there
is a set-instantiator \ts $\SI$ \ts against \. $\big(M,j,A_{<j},S,\vv f\big)$
with \. $\problem{Multiplicities}\big(\inst_\SI(\vv s)\big)\ts =\ts |\vv s|$ \. for all \. $\vv s \in \mP(\vv f)_S$.

\smallskip

Fix any \. $\vv t \in S$.  For \. $A \subseteq\{0,1\}^*$, we write:
$$A  \, = \, \bigcup_{j\geq 0} \. A_j\,, \quad \text{where} \quad A_j \subseteq \{0,1\}^j\ts.
$$
Define
\[
p_{A}(w) \ := \
\begin{cases}
\varphi\big(\problem{Multiplicities}\big(\tilde A_{|w|}\big)\big)  & \text{if} \ \  A_{|w|}(0^{|w|}) \. = \. 1 \\
\varphi(\vv t) & \text{otherwise},
\end{cases}
\]
where \. $\tilde A_{j}$ \. is the set of length \ts $j-1$ \ts suffixes of the strings in \ts $A_{j}$ \ts
that start with a~1.

Finally, suppose \. $\varphi+I$ \. is binomial-bad.
Then there exists an oracle \. $A\subseteq\{0,1\}^*$ \. such that
for every nondeterministic polynomial-time Turing machine \ts $M$ \ts
there exists \ts $j$ \ts such that \. $p_A(0^j)\neq \#\acc_{M^A}(0^j)$ \.
and whenever $A(0^j)=1$, then \. $A_j = \{1\}\boxplus\, \inst_\SI(\vv s)$ \.
for some $\vv s$ and one of the $\SI$ above.
\end{theorem}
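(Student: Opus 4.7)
The plan is to combine the Witness Theorem~\ref{thm:findcounterexample} with a stagewise diagonalization over nondeterministic polynomial-time oracle Turing machines, using the set-instantiator hypothesis to convert the combinatorics of oracle answers into evaluations of a binomial-good function on $D$.

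First I would verify that the Witness Theorem applies. Hypothesis~(3) is precisely the strict positivity condition of Claim~\ref{cla:homparts}, and hypothesis~(1) supplies the integer point required there; together with Claim~\ref{cla:conedense} this gives $\dim\IQ[Z]_{\leq(\varphi,T)}<\infty$. Hypothesis~(2) then transfers Zariski density from $C'_T$ to $C'_S$, so likewise $\dim\IQ[Z]_{\leq(\varphi,S)}<\infty$. Since $\varphi+I$ is binomial-bad, Theorem~\ref{thm:findcounterexample} hands us a $\Delta\in\IN$ such that, with $D=\llbracket\Delta\rrbracket^k\cap S_\leqslant$, every binomial-good function $\Psi:D\to\IN$ already fails to agree with $\varphi$ somewhere on $S\cap D$.

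Next I would enumerate all nondeterministic polynomial-time oracle Turing machines $M_1,M_2,\ldots$ and construct $A$ inductively. At stage $i$, with $A_{<j_i}$ in hand, pick $j_i$ strictly larger than all previously used lengths and than the running-time parameters of $M_1,\ldots,M_{i-1}$; by the ``infinitely many $j$'' clause, applied with $\vv f=(\Delta,\ldots,\Delta)$, we can furthermore arrange that either (a) $M_i$ is not consistent for $(j_i,A_{<j_i},\problem{Multiplicities})$, or (b) a set-instantiator $\SI$ against $(M_i,j_i,A_{<j_i},S,\vv f)$ exists. In case~(a), two sets $B_1,B_2\subseteq\{0,1\}^{j_i-1}$ with equal $\problem{Multiplicities}$ but distinct accept counts witness that $p_A(0^{j_i})$ (which by construction depends only on $\problem{Multiplicities}(\tilde A_{j_i})$) cannot match both; taking the $B_\ell$ that misses yields a valid $A_{j_i}$, and we can ensure it arises from a set-instantiator by choosing $j_i$ so that the same instance also belongs to a case~(b) family.

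The heart of the argument, and the main obstacle, lies in case~(b): one must show that the function
\[
\Psi(\vv f') \, := \, \#\acc_{M_i^{\,A_{<j_i}\cup (\{1\}\boxplus \inst_\SI(\vv s))}}(0^{j_i}) \qquad (|\vv s|=\vv f')
\]
is binomial-good on $D$. It is well-defined by consistency together with the identity $\problem{Multiplicities}(\inst_\SI(\vv s))=|\vv s|$. Grouping accepting paths $\tau$ by their perception $\vv p=\textup{\textsu{perc}}_\SI(\tau)\in\mP(\vv f)$, the defining property of $\SI$ gives $\Psi(\vv f')=\sum_{\vv p\subseteq\vv s_0}n_{\vv p}$ for any fixed $\vv s_0$ with $|\vv s_0|=\vv f'$, where $n_{\vv p}\in\IN$ counts computation paths of perception $\vv p$ and only finitely many $n_{\vv p}$ are nonzero since $M_i$ runs in polynomial time. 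Applying a multivariate hypergraph Ramsey theorem to the finite coloring $\vv p\mapsto n_{\vv p}$ and passing to a symmetric subset of coordinates in each of the $k$ components---absorbed by choosing the initial $\vv f$ sufficiently large from the outset---we reduce to the case where $n_{\vv p}$ depends only on the cardinality vector $|\vv p|$. Writing $n_{\vv p}=n_{|\vv p|}$, we then obtain $\Psi(\vv f')=\sum_{\vv c\leqslant\vv f'}n_{\vv c}\binom{\vv f'}{\vv c}$, which is visibly binomial-good. The Witness Theorem now supplies $\vv f^{\Psi}\in S\cap D$ with $\Psi(\vv f^{\Psi})\neq\varphi(\vv f^{\Psi})$; choosing any $\vv s$ of cardinality $\vv f^{\Psi}$ and setting $A_{j_i}=\{1\}\boxplus\inst_\SI(\vv s)$ defeats $M_i$, since $\#\acc_{M_i^A}(0^{j_i})=\Psi(\vv f^{\Psi})\neq\varphi(\vv f^{\Psi})=p_A(0^{j_i})$. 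Iterating the construction produces the desired oracle $A$, and by construction every nontrivial $A_j$ has the required form $\{1\}\boxplus\inst_\SI(\vv s)$. The delicate step is the Ramsey symmetrization interacting with the a priori finite set-instantiator size $\vv f$; this is what forces the ``infinitely many $j$'' in the hypothesis and the upfront inflation of $\vv f$.
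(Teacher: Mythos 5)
Your skeleton matches the paper's proof: verify the finite-dimensionality hypothesis of the Witness Theorem~\ref{thm:findcounterexample} via Claims~\ref{cla:conedense} and~\ref{cla:homparts} using hypotheses (1)--(3), diagonalize over machines, split into the inconsistent case and the set-instantiator case, symmetrize with the multipartite Ramsey theorem~\ref{thm:ramsey} so that accept counts depend only on cardinality vectors, and then use the second half of the Witness Theorem to locate a point of $S\cap\llbracket\Delta\rrbracket^k$ where the resulting $D$-good function disagrees with $\varphi$. That is exactly the intended architecture.

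There is, however, a genuine gap in your Ramsey step. You color perceptions by $\vv p \mapsto n_{\vv p}$, the number of computation paths with perception exactly $\vv p$, and call this a ``finite coloring.'' Finiteness is not enough: the Ramsey threshold $\Lambda=\RR(\Delta^{\times k},\Delta^{\times k},c)$ must be fixed \emph{before} $j$ is chosen, because the set-instantiator is requested against the fixed size $\vv b=\Lambda^{\times k}$ and only then does the hypothesis supply a suitable $j$. But the $n_{\vv p}$ can be as large as the number of computation paths of $M_i$ on length-$j$ inputs, i.e.\ exponential in $j$, so your coloring has no bound on the number of colors that is independent of $j$, and the choice of $\Lambda$ becomes circular. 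The paper escapes this by working with the cumulative, hence monotone, count $\Phi(\vv t)=\sum_{\vv p\subseteq\vv t}n_{\vv p}$: if $\Phi(\vv t)$ ever exceeds $\varphi(|\vv s|)$ for a minimal $\vv s\in\mP(\vv b)_S$ containing $\vv t$, then monotonicity together with consistency already produces an overshooting instance $\inst_\SI(\vv s)$ and the diagonalization succeeds at once; otherwise $\Phi$ is bounded by a constant $c$ depending only on $\varphi$, $\Delta$ and $S$, one applies Ramsey to the $c$-coloring $\vv s\mapsto\Phi(\vv s)$, and the fact that the individual $n_{\vv p}$ then depend only on $|\vv p|$ follows by induction over $\subseteq$. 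You need this preliminary bounding step (or an equivalent cap) for the symmetrization to go through. A minor further point: in your case (a) the clause ``we can ensure it arises from a set-instantiator'' is neither needed nor justified --- since $p_A$ depends only on $\problem{Multiplicities}$, an inconsistent machine already disagrees with $p_A$ on some oracle extension, and that extension is simply taken as $A_j$.
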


\smallskip

Note that the technical conditions $(1)$, $(2)$, and~$(3)$ \. are very easy
to check in most situations. They exist to prevent degenerate cases.
The rest of this section is devoted to the proof of Theorem~\ref{thm:diagonalization}.

\smallskip

\subsection{Proof setup}
\label{sec:Diag-thm-proof-setup}
In the beginning, we follow the diagonalization framework from
Theorem~3.1.1(b) in~\cite{C+89},  to construct an oracle \. $A \subseteq \{0,1\}^*$ \.
such that the function \ts $p_A$ \ts has the desired properties.
The actual implementation of this strategy gets quite technical
in some instances: we invoke the Witness Theorem~\ref{thm:findcounterexample}
and use a generalization of Ramsey's theorem.

The computational complexity of the function \ts $\problem{Multiplicities}$ \ts
will not play any role here, but may play a role when invoking the theorem,
where it is usually solved by a $k$-tuple of \ts $\sharpP$ \ts machines.
In fact, $\problem{Multiplicities}$ \ts is usually evaluated on length \ts $2^{j-1}$ \ts
instances.  With the standard identification \. $\mP(\{0,1\}^{j-1}) \simeq \{0,1\}^{2^{j-1}}$ \.
a subset \. $B \subseteq \{0,1\}^{j-1}$ \. can be interpreted as a length \ts $2^{j-1}$ \ts
instance of the \ts $\problem{Multiplicities}$ \ts problem.

We construct \. $A:=\bigcup_{j\in \IN} A_j$ \. as the union of sets \.
$A_j \subseteq \{0,1\}^{j}$, where we will define \ts $A_j$ \ts iteratively for larger
and larger~$j$.
Once we define \ts $A_j$, all \ts $A_{j'}$ \ts with \ts $j'<j$ \ts
will not be changed again. All those \ts $A_{j'}$ \ts with \ts $j'<j$ \ts and
that have not been defined up to that point, are set to~$\emp$.

We enumerate the set of nondeterministic polynomial-time Turing machines that have oracle access.
Note that this enumeration is independent of the specific oracle.
Let $M_i$ denote the $i$-th machine.
For a specific oracle~$A$, we denote the corresponding counting function by \. $H_i^A \in \sharpP^A$. 
We proceed by diagonalizing over $i$, so assume that even though $A$
is not fully specified, we have enough information to know that \.
$H_{i'}^A \neq p_A$ \. for all \ts $i'<i$.
We show that \. $H_{i}^A \neq p_A$ \. by defining further details of~$A$,
i.e., defining \ts $A_j$ \ts for larger \ts $j$.

Let \ts $j$ \ts be the smallest integer such that \ts $A_{j}$ \ts
has not been already defined, and such that \ts $j$ \ts is large enough
for all upcoming claims
(the upcoming constructions do not depend on the exact value of~$j$,
but just require~$j$ to be large).  We now define \ts $A_j$ \ts in this case as follows.

Let \. $A_{<j} \. := \. \bigcup_{j'<j} \. A_{j'}$.
If there exist \. $B \subseteq \{0,1\}^j$ \. with
$$
H_i^{A_{<j} \cup B}(0^j) \, \neq \, p_{A_{<j} \cup B}(0^j)\ts,
$$
then define \ts $A_j := B$.  Also define \ts
$A_{j'}:=\emp$ \ts for all \ts $j < j' \leq j''$, where \ts $j''$ \ts
is the longest length of an oracle query made by the computation \. $H_i^{A_{<j} \cup B}(0^j)$.
In this case we have \. $H_{i'}^A \neq \varphi(f)$ \. for all \ts $i'\leq i$, which was the goal.

For treating the other case, it is sufficient to
analyze the case where for \emph{all} \.
$B \subseteq \{0,1\}^j$ \ts such that \ts $0^j\in B$.
Here we have:
\begin{equation}\label{eq:fphinew}
H_i^{A_{<j} \cup B}(0^j) \ = \ p_{A_{<j} \cup B}(0^j) \ = \ \varphi\big(\problem{Multiplicities}(\tilde B)\big)
\end{equation}
and find a contradiction as follows.

Equation~\eqref{eq:fphinew} implies that the number of accepting paths in the computation \.
$H_i^{A_{<j} \cup B}(0^j)$ \. only depends on the value \ts $\problem{Multiplicities}(\tilde B)$ \ts
and not on  $\tilde B$ itself.
Hence, \ts $M_i$ \ts answers consistently for \. $(j,A_{<j},\problem{Multiplicities})$.
We will need this property mainly for the existence of a set-instantiator for \.
$\#\PLS\problemp{Iter}$, see~$\S$\ref{sec:IterSetInstantiator}.

The case \eqref{eq:fphinew} is then brought to a contradiction in~$\S$\ref{sec:setinstantiators} via instances generated by set-instantiators, see~\eqref{eq:counterexamplepoint}.
We write
$$h_i^B(0^j) \ := \ H_i^{A_{<j}\cup {(\{1\}\boxplus B)}}
$$
to simplify the notation, which is also a simplification of the notation \. $h_{M_i}^{\{1\}\boxplus B}(0^j)$ \. from~\eqref{eq:hMB}.

\smallskip

\subsection{A Ramsey-type theorem}
\label{sec:Ramsey-proof}
The following Ramsey-type result is used in
the next section to construct set-instantiators.   We need the following
definitions, some of which we recall from the previous sections.

\smallskip

Fix the \defn{dimension} \ts $k$ \ts and the \defn{number of colors} \ts
$c$.
Let \. $\vv m = (m_1,\ldots,m_k)\in \nn^k$ \. and \.
$\vv n = (n_1,\ldots,n_k)\in \nn^k$.  We write \. $\vv m \leqslant \vv n$ \.
if \. $m_i\le n_i$ \. for all \. $1\le i \le k$. Recall that \ts
$[n]=\{1,\ldots,n\}$.

Let \ts $X_1,\ldots,X_k$ \ts be an ordered list of finite or countably
infinite sets, and
write \. $\vv X = (X_1,\ldots,X_k)$.  Denote \. $|\vv X| := \bigl(\ts
|X_1|\ts,\ts\ldots\ts,\ts|X_k|\ts\bigr)$.
For $\vv a \in \IN^k$ we say that \. $\vv X$ \. is an \ts \defn{$\vv
a$-list} \ts if \. $|\vv X| = \vv a$.

Denote \. $\vv Y\subseteq \vv X$ \.
when we have \. $Y_i\subseteq X_i$ \. for all \. $1\le i \le k$.  We say
that
$\vv Y$ \. is a \ts \defn{subset-list} \ts of \. $\vv X$ \. in this
case, and
denote by \. $\BB(\vv X):=\bigl\{\vv Z \.:\. \vv Z \subseteq \vv
X\bigr\}$ \.
the set of subset-lists of~$\ts \vv X$.
A subset-list that is also an $\vv a$-list is called an \ts \defn{$\vv
a$-subset-list}.

A \defn{$c$-coloring} \ts of \ts $\vv X$ \ts is a map \.
$\BB(\vv X) \to [c]$.  We say that \. $\vv X$ \.  is \ts \defn{$\vv
m$--monochromatic} \ts
if for all \. $\vv a \leqslant \vv m$, all \. $\vv a$-subset-lists \.
$\vv Y\subseteq \vv X$
have the same color.

\smallskip

\begin{theorem}[{\em\defn{multipartite hypergraph Ramsey theorem}}{}]
\label{thm:ramsey}
Fix \ts $k\ge 1$.
For every \. $\vv n$, $\vv m \in \nn^k$ \. and every integer \ts $c\ge 1$,
there exists a natural number \. $\RR=\RR(\vv n,\vv m,c)$,
such that for every $c$-coloring of a list \. $\vv X=(X_1,\ldots,X_k)$ \.
of finite sets of size \. $|X_i|\ge \RR$, there is a \ts $\vv m$--monochromatic
subset-list \. $\vv Y\subseteq \vv X$ \. with \. $|\vv Y|\geqslant \vv n$.
\end{theorem}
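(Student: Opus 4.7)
The plan is to reduce Theorem~\ref{thm:ramsey} to a ``single-size'' product hypergraph Ramsey statement and then iterate that statement over the finitely many exponent vectors \ts $\vv a \leqslant \vv m$. The key observation is that \ts $\vv m$-monochromaticity of \ts $\vv Y$ \ts is the conjunction of the \. $\prod_{i=1}^{k}(m_i+1)$ \. separate conditions ``all $\vv a$-subset-lists of \ts $\vv Y$ \ts receive the same color,'' one for each \. $\vv a \leqslant \vv m$. Since this is a finite conjunction, I would process the conditions in any fixed order: at each step, apply the single-size version to the current candidate subset-list, shrinking it to a sub-subset-list that is monochromatic for the next \ts $\vv a$; the conditions already achieved are preserved under passing to any further sub-subset-list. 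Choosing \ts $R$ \ts large enough at the start ensures that after all \. $\prod(m_i+1)$ \. shrinkings, each coordinate size still exceeds \ts $n_i$.

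The technical heart is therefore the following single-size version: for every \. $\vv a,\vv n \in \IN^k$ \. and every \. $c\geq 1$, there exists \ts $R^\star(\vv a,\vv n,c)$ \ts such that every $c$-coloring of the \ts $\vv a$-subset-lists of any list \ts $\vv X$ \ts with \. $|X_i|\geq R^\star$ \. admits a subset-list \. $\vv Y \subseteq \vv X$ \. with \. $|\vv Y| \geqslant \vv n$ \. on which the coloring is constant. I would prove this by induction on the dimension~\ts $k$. The base case \ts $k=1$ \ts is the classical hypergraph Ramsey theorem. For the inductive step, given a coloring \ts $\chi$ \ts of the \ts $\vv a$-subset-lists of \. $(X_1,\ldots,X_k)$, first fix sizes \. $|X_2|,\ldots,|X_k|$ \. large enough to support the inductive hypothesis in dimension \ts $k-1$. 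For each \ts $a_1$-subset \. $S\subseteq X_1$, the restriction \ts $\chi_S$ \ts is an induced $c$-coloring of the \. $(a_2,\ldots,a_k)$-subset-lists of \. $(X_2,\ldots,X_k)$, and there are only \. $C := c^{\prod_{i=2}^{k}\binom{|X_i|}{a_i}}$ \. possible such ``types.'' Apply the classical Ramsey theorem with \ts $C$ \ts colors to \ts $X_1$ \ts to find \. $Y_1\subseteq X_1$ \. with \. $|Y_1|=n_1$ \. on which all \ts $a_1$-subsets have the same type, i.e., induce a common coloring \ts $\chi^\star$ \ts of the \. $(a_2,\ldots,a_k)$-subset-lists of \. $(X_2,\ldots,X_k)$. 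Apply the inductive hypothesis to \ts $\chi^\star$ \ts to obtain \. $Y_2\subseteq X_2,\ldots,Y_k\subseteq X_k$ \. with \. $|Y_i|\geq n_i$ \. on which \ts $\chi^\star$ \ts is constant. Then the subset-list \. $(Y_1,Y_2,\ldots,Y_k)$ \. is $\chi$-monochromatic, as required.

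The main obstacle I anticipate is the Ramsey-number bookkeeping across the nested inductions: in the inductive step the number of colors \ts $C$ \ts fed into classical Ramsey on \ts $X_1$ \ts depends on the sizes of \. $X_2,\ldots,X_k$, which must themselves be chosen (via the inductive hypothesis) before \ts $X_1$ \ts can be sized. This forces a careful ordering of parameter selection --- first fix the sizes needed for the \ts $k-1$ \ts dimensions via the inductive hypothesis, then size \ts $X_1$ \ts accordingly --- yielding an iterated-exponential bound on \ts $R^\star$ \ts in the parameters \ts $\vv n$, \ts $\vv a$, \ts $c$. Composing this with the \. $\prod_{i=1}^k(m_i+1)$-fold iteration over all \. $\vv a\leqslant \vv m$ \. from the first paragraph produces a finite (if astronomically large) value of \. $R(\vv n,\vv m,c)$, completing the proof.
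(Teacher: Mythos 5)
Your proposal is correct, but it follows a genuinely different route from the one in the paper. The two arguments share the opening reduction: $\vv m$--monochromaticity is the conjunction of the finitely many conditions ``all $\vv a$-subset-lists receive one color'' over the $\prod_i(m_i+1)$ vectors $\vv a\leqslant\vv m$, each condition being inherited by sub-subset-lists, so one may treat one exponent vector at a time and compose the resulting Ramsey numbers. For a single $\vv a$ the paths diverge. The paper (following \cite{GPS12}, and explicitly not the ``more standard argument'' of \cite{GRS90}) first establishes an \emph{infinite} version: it inducts over the vectors $\vv b\leqslant\vv a$, zeroes one coordinate $q$ with $a_q\neq 0$, enumerates the $\vv b$-subset-lists, and repeatedly applies the infinite hypergraph Ramsey theorem inside coordinate $q$ to attach a stable ``infcolor'' to each; it then derives the finite bound nonconstructively by a compactness/diagonalization step, amalgamating a hypothetical sequence of bad colorings of $[z]^k$ into a bad coloring of $\IN^k$ and contradicting the infinite statement. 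Your proof is instead the classical direct finite argument for the Product Ramsey Theorem: induction on the dimension $k$, coloring each $a_1$-subset of $X_1$ by its induced ``type'' on $(X_2,\ldots,X_k)$ and invoking the ordinary finite hypergraph Ramsey theorem with $c^{\prod_{i\geq 2}\binom{|X_i|}{a_i}}$ colors. What each buys: yours yields explicit (iterated-exponential) bounds at the cost of the parameter-ordering bookkeeping you correctly flag --- the sizes of $X_2,\ldots,X_k$ must be fixed by the inductive hypothesis before $X_1$ can be sized; the paper's avoids all quantitative bookkeeping, which is harmless since no explicit bounds are needed downstream. In a full write-up you should dispatch the degenerate coordinates with $a_i=0$ (where the unique $a_i$-subset is empty) and spell out the backward composition of the single-size Ramsey numbers over the list of all $\vv a\leqslant\vv m$ so that the final sizes still dominate $\vv n$; neither is a gap.
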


\smallskip

Since we do not need the explicit quantitative bounds,
the proof below is based on the approach in~\cite{GPS12}, rather than a more
standard argument in Ramsey theory, see e.g.~\cite{GRS90,Nes95}.  We give it here
for the sake of completeness and because it is not covered by the (usual)
\defng{hypergraph Ramsey theorem}.

\smallskip

\begin{proof}
Given $k$ countably infinite sets \ts $X_1,\ldots,X_k$\ts, and
let all subset-lists of cardinality \ts $\leqslant \vv m$ \ts be colored with~$[c]$.
We first prove:

\smallskip

\nin
($\lozenge$) \  there is an infinite $\vv m$--monochromatic subset list \. $\vv Y\subseteq \vv X$.
Here \. $\vv Y = (Y_1,\ldots,Y_k)$, such that \. $Y_i\subseteq X_i$ \. are infinite, for all \.
$1\le i \le k$.

\smallskip

Fix \. $\vv a \leqslant \vv m$.
If we can prove that there are infinite subsets \. $Y_1\subseteq X_1,\ldots,Y_k\subseteq X_k$ \.
for which all $\vv a$-subset-lists are monochromatic, then we can iterate this argument on \.
$\vv Y$ for a another vector \. $\vv a' \leqslant \vv m$, and so on, until we treated all \. $\vv a \leqslant \vv m$.

We proceed by induction. We assume by induction that for each \. $\vv b \leqslant \vv a$, \.
$\vv b \neq \vv a$, we have a subset-list of infinite subsets such that each $\vv b$-subset is monochromatic.
We choose \ts $q \in [k]$ \ts with \ts $a_q \neq 0$.
Let \. $\vv b := (a_1,\ldots,a_{q-1},0,a_{q+1},\ldots,a_k)$.
Take any way of enumerating the set of all $\vv b$-subset-lists.
Note that every $\vv b$-subset-list can be extended to an $\vv a$-subset-list in infinitely many ways.
Each of these extensions has a color.

For the first $\vv b$-subset-list we choose an infinite monochromatic subset in the set of all extensions
(which exists by the usual hypergraph Ramsey theorem).
We attach the color of this subset to the $\vv b$-subset-list and call it its \emph{infcolor}.
Now delete all vertices in coordinate $q$ that are not in this monochromatic set and proceed with the next $\vv b$-subset-list.
We iterate this, and in this way attach an infcolor to each $\vv b$-subset-list.
By induction we can choose a subset-list of infinite subsets in which all
$\vv b$-subset-lists are
monochromatic w.r.t.\ infcolor.
In the resulting subset-list of infinite subsets, all $\vv a$-subset-lists have the same color.  This completes the proof of~$(\lozenge)$.

\smallskip

We now prove the existence of the lower bound \ts $\RR(\vv n,\vv m,c)$ \ts as in the theorem.
Assume that there exists no such bound.
Then for each \ts $z\in\IN$ \ts we find a coloring \ts $\textup{col}_z$ \ts
of the subset-lists of \ts $[z]^{k}$ \ts such that \ts $[z]^{k}$ \ts
contains no cardinality \ts $\vv m$ \ts subset-list in which for all \ts $\vv a \leqslant \vv m$ \ts
all $\vv a$-subset-lists are monochromatic.

We enumerate the set of all subset-lists of \ts $\IN^{k}$ \ts
that are an $\vv a$-subset-list for some \ts $\vv a \leqslant \vv m$.
We assign a color to the first of this list so that the color agrees
with the color choice of infinitely many $\textup{col}_z$ (note that not all
\ts $\textup{col}_z$ \ts might assign a color to this element, but infinitely many do).
Now assign a color to the second of the list so that it agrees with the color choice of infinitely
many \ts $\textup{col}_z$ \ts that already agree with the color of the first element.
We proceed in this manner and obtain a coloring of all $\vv a$-subset-lists, $\vv a \leqslant \vv m$,
of \ts $\IN^{k}$.  By construction, no cardinality \ts $\vv m$ \ts subset-list
has for all \ts $\vv a \leqslant \vv m$ \ts that all $\vv a$-subset-lists are monochromatic.
This is a contradiction to ($\lozenge$), as desired.
\end{proof}

\smallskip

\subsection{Set-instantiators via Ramsey's theorem}
\label{sec:setinstantiators}
We can now finish the proof of the Diagonalization Theorem~\ref{thm:diagonalization} that
we started in~$\S$\ref{sec:Diag-thm-proof-setup}.
Recall that we invoke the Witness Theorem~\ref{thm:findcounterexample} with our \ts $S$ \ts and \ts $\varphi$ \ts
to obtain a \ts $\Delta\in\IN$ \ts as in the theorem.

Denote \. $\Delta^{\times k} := (\Delta,\ldots,\Delta) \in \IN^k$.
Let \ts $\Lam \geq \Delta$ \ts be large enough for Theorem~\ref{thm:ramsey} to hold
(note that this also implies that $j$ is large).
Let \. $\SI:=(\SI)_{\Lam^{\times k}}$ \. be the set-instantiator with \. $\vv b = \Lam^{\times k}$,
with \. $M=M_i$, \ts  $j$ \ts and \. $A_{<j}$ \. being as before, and with \ts $S$ \ts
is intersected with \. $\llbracket\Lam\rrbracket^{\times k}$.
From now on we have $S \leftarrow S \cap \llbracket\Lam\rrbracket^{k}$.
Recall that \. $\mP\bigl(\Lam^{\times k}\bigr) = \mP([\Lam]) \times \ldots \times \mP([\Lam])$, \. $k$ \ts times.

\smallskip

\begin{definition}[Filling the definitional holes with ``perception or below'']
For $\vv t \in \mP(\Lam^{\times k})$ we define $\Phi(\vv t)$ as the number of $\tau\in\{0,1\}^*$
that satisfy $\textup{\textsu{perc}}_\SI(\tau) \subseteq \vv t$.
\end{definition}

We observe that for \. $\vv t \in \mP(\Lam^{\times k})_S$ \. we have \. $\Phi(\vv t) = h_i^{\textsu{inst}_\SI(\vv t)}(0^j)$.
So this definition fills the holes in the domain of definition of the function $\vv t \mapsto h_i^{\textsu{inst}_\SI(\vv t)}(0^j)$ that is only defined on $\mP(\Lam^{\times k})_S \subseteq \mP(\Lam^{\times k})$.
Note that
\begin{equation}\label{eq:Phibinomialgood}
\Phi(\vv t) \ = \ \bigl|\bigl\{\.\tau\in\{0,1\}^* \,: \, \textsu{perc}_\SI(\tau)\subseteq\vv t\.\bigr\}\bigr|,
\end{equation}
so in particular the function \ts $\Phi$ \ts is monotonic.

By Definition~\ref{def:setinstantiator},
for every \. $\vv t \in \mP(\Lam^{\times k})$ \.
with \. $|\vv t| \in S_{\leqslant}$\ts,
we find \. $\vv s \in \mP(\Lam^{\times k})_S$ \. such that \. $\vv t \subseteq \vv s$
Take an \ts $\vv s$ \ts for which \ts $|\vv s|$ \ts is lexicographically smallest.
Define \. $\Omega(\vv t):=\varphi(|\vv s|)$.
Note that by the definition of a set-instantiator,
\ts $\Omega(\vv t)$ \ts is the same number
for all \ts $\vv t$ \ts that have the same \ts $|\vv t|$,
so \. $\overline\Omega\bigl(|\vv t|\bigr)\ts :=\ts \Omega(\vv t)$ \. is well-defined.

If \. $\Phi(\vv t)>\Omega(\vv t)$, then
also \. $\Phi(\vv s)>\Omega(\vv t)$ \. by monotonicity.  Hence \.
$h_i^{\textsu{inst}_\SI(\vv s)}(0^j) > \varphi(|\vv s|)$.
By Definition~\ref{def:setinstantiator},
we have \.
$\problem{Multiplicities}(\inst_\SI(\vv s))=|\vv s|$ \.
It now follows that \.
$h_i^{\textsu{inst}_\SI(\vv s)}(0^j) > \varphi(\problem{Multiplicities}(\inst_\SI(\vv s)))$.
This allows us to define \. $A_j := (\inst_\SI(\vv s))'$ \.
and get a contradiction to~\eqref{eq:fphinew}.

It remains to treat the case when this is impossible.
Therefore, from now on we assume that for all \ts $\vv t$ \ts with \. $|\vv t| \in S_\leqslant$ \.
we have \. $\Phi(\vv t) \leq \Omega(\vv t) = \varphi\big(|\vv s|\big)$.
We are not interested in \. $\vv t\notin S_\leqslant$.
Define \. $c \ts := \ts 1\ts +\ts \max\big\{\.\overline\Omega(\vv f) \,:\, \vv f \in \llbracket\Delta\rrbracket^k\cap S_\leqslant\ts\big\}$.

\smallskip

\begin{proposition}
\label{pro:ramseyLambdaDelta}
Let \ts $\Phi$ \ts be a function defined on the set \ts $\mP([\Lam])^k$ \ts
such that \. $\Phi(\vv t) \in\{0,\ldots,c-1\}$ \.
for all \ts $\vv t$ \ts with $|\vv t| \in \llbracket \Delta\rrbracket^k$.
Then, there exists \. $\La=\La(\De,c)\in \nn$, such that for all \ts $\Lam\ge \La$ \ts
there exist subsets \. $Q_1, \ldots,Q_k \subseteq[\Lam]$ \. with \.
$|Q_1|=\ldots = |Q_k|=\Delta$, with the property:
$$
|\vv s| = |\vv t| \in S_\leqslant \ \  \text{for some} \ \ \vv s, \vv t \in \vv Q
\quad \Longrightarrow \quad \Phi(\vv s)\. = \. \Phi(\vv t)\ts.
$$
\end{proposition}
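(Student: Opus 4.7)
The plan is to regard $\Phi$ as a $c$-coloring and apply Theorem~\ref{thm:ramsey} (the multipartite hypergraph Ramsey theorem) directly. Concretely, I first extend $\Phi$ arbitrarily to all of $\BB([\Lam]^{\times k})$: for subset-lists $\vv{u}$ with $|\vv{u}|\notin \llbracket\Delta\rrbracket^k$ the hypothesis puts no constraint on $\Phi(\vv{u})$, but the multipartite Ramsey theorem is formulated for a coloring of the full subset-list lattice, so I just assign $0$ (or any fixed color) in the undefined region; this extension is irrelevant to the conclusion since $\vv{m}$-monochromaticity only concerns subset-lists of cardinality $\leqslant \vv{m}$.

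Next, I would set $\vv{n} := \Delta^{\times k}$ and $\vv{m} := \Delta^{\times k}$, and define
\[
\La \ := \ \RR(\Delta^{\times k}, \Delta^{\times k}, c),
\]
where $\RR$ is the Ramsey function provided by Theorem~\ref{thm:ramsey}. For any $\Lam \geq \La$, applying that theorem to the $\Lam^{\times k}$-list $([\Lam],\ldots,[\Lam])$ with the extended coloring yields an $\vv{m}$-monochromatic subset-list $\vv Q' = (Q_1',\ldots,Q_k')$ with $|Q_i'|\geq \Delta$ for each $i$. By discarding elements I may assume $|Q_i'|=\Delta$, and I set $Q_i := Q_i'$.

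To verify the claimed property, take any two subset-lists $\vv{s},\vv{t}$ of $\vv{Q}$ with $|\vv{s}|=|\vv{t}| =: \vv{a} \in S_\leqslant$. Since $\vv{s},\vv{t}\subseteq \vv{Q}$ and $|Q_i|=\Delta$, we automatically have $\vv{a}\leqslant \Delta^{\times k} = \vv{m}$, and in particular $\vv{a}\in\llbracket\Delta\rrbracket^k$, so $\Phi$ agrees with the extended coloring on both $\vv{s}$ and $\vv{t}$. By $\vv{m}$-monochromaticity of $\vv Q$, all $\vv{a}$-subset-lists of $\vv Q$ have the same color, so $\Phi(\vv{s})=\Phi(\vv{t})$, as required.

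The proposition is essentially a direct translation of Theorem~\ref{thm:ramsey} into the language used in this section; the genuine work is hidden in Theorem~\ref{thm:ramsey} itself. The only mild subtlety is to make sure the multipartite Ramsey formulation (where a $\vv{m}$-monochromatic set is monochromatic on \emph{all} $\vv{a}$-subset-lists for every $\vv{a}\leqslant \vv{m}$, not just for a single cardinality vector) matches the statement being proved—and it does, because the conclusion quantifies over all cardinality vectors in $S_\leqslant$ simultaneously. Thus I foresee no real obstacle.
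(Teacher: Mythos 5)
Your proposal is correct and follows essentially the same route as the paper: extend $\Phi$ to a $c$-coloring of all subset-lists (assigning a default color where $\Phi$ is unconstrained), set $\La := \RR(\Delta^{\times k},\Delta^{\times k},c)$, and apply Theorem~\ref{thm:ramsey} to extract the $\Delta^{\times k}$-monochromatic subset-list $\vv Q$. The only cosmetic difference is that the paper colors by $\Phi$ only on subset-lists with cardinality in $S_\leqslant$ and uses the default color elsewhere, whereas you color by $\Phi$ on all of $\llbracket\Delta\rrbracket^k$; both choices use at most $c$ colors and yield the stated conclusion.
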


\begin{proof}
We assign to each element \. $\vv s \in \mP([\Lam])^k$ \. with \. $|\vv s|\in\llbracket\Delta\rrbracket^k$
the color \. $\Phi(\vv s)$ \. for all \. $|\vv s|\in S_{\leqslant}$.  Otherwise, assign the color~$0$.
Let \. $\La := \RR\big(\Delta^{\times k},\Delta^{\times k},c\big)$ \. and use Theorem~\ref{thm:ramsey}
to find the desired subsets \. $Q_1,\ldots,Q_k$.
\end{proof}

\smallskip

From Proposition~\ref{pro:ramseyLambdaDelta}, we can readily construct a new set-instantiator $\overline{\SI}$ with $\vv b=\Delta^{\times k}$ as follows (this is similar to restricting to a subset of $S$ and renaming the elements).

Take a bijection \. $\beta_a: Q_a \to [\Delta]$ \. for all \ts $a \in [k]$.
This induces a bijection
$$\vv \beta \. : \. \mP(Q_1)\ts \times\ts \ldots \ts \times \ts \mP(Q_k) \ \longrightarrow
\ \mP\big([\Delta]\big)\ts \times \ts \dots \ts \times \ts \mP\big([\Delta]\big),
$$
which we use for identification of the sets.
We set \. $\inst_{\overline\SI}(\vv s):=\inst_{\SI}(\vv\beta^{-1}(\vv s))$.
Define \.
$$
\perc_{\overline\SI}(\tau)\ts :=\ts \vv\beta\big(\perc_{\SI}(\tau)\big) \quad \text{if} \quad
\vv\beta\big(\perc_{\SI}(\tau)\big)\ts \in\ts \mP\big([\Delta]\big)^k\cup\{\top\}\ts,
$$
and let \. $\perc_{\overline\SI}(\tau):=\top$ \. otherwise.
With \. $S \leftarrow S \cap\llbracket\Delta\rrbracket^k$,
it is easy to check that \ts $\overline{\SI}$ \ts is a set-instantiator.

By analogy to \eqref{eq:Phibinomialgood}, define
\begin{equation}\label{eq:Phibinomialgoodnew}
\overline\Phi(\vv t) \,
:= \,
\bigl|\big\{\ts\tau\in\{0,1\}^* \,:\, \textsu{perc}_{\overline\SI}(\tau)\subseteq\vv t\ts\big\}\bigr| \
 = \
\sum_{\vv s \ts\subseteq \ts\vv t} \, \bigl|\big\{\ts\tau\in\{0,1\}^* \,:\, \textsu{perc}_{\overline{\SI}}(\tau)=\vv s\ts\big\}\bigr|.
\end{equation}
Since \. $\overline{\Phi}(\vv t)$ \. depends only on \. $|\vv t|$ \. for all \. $|\vv t| \in S_{\leqslant}$,
by induction we see that the number
$$
\big|\big\{\tau\in\{0,1\}^* \, :\, \textsu{perc}_{\overline{\SI}}(\tau)=\vv s\big\}\big|
$$
also depends only on $|\vv s|$.
Denote this number by \ts $c_{\vv g}$, where \. $\vv g := |\vv s| \in S_\leqslant$\ts, and
set \. $\Psi(\vv f) := \overline{\Phi}(\vv t)$ \. for \. $\vv f = |\vv t|$.
This means that \eqref{eq:Phibinomialgoodnew} simplifies:
\begin{equation}\label{eq:PhibinomialgoodII}
\Psi(\vv f) \ =  \ \sum_{\vv g \ts\leqslant\ts \vv f} \. \binom{\vv f}{\vv g}\. c_{\vv g}\..
\end{equation}

The function \ts $\Psi(\vv f)$ \ts is defined on \.
$S_{\leqslant}\cap \llbracket\Delta\rrbracket^k$ \.  and is \.
$S_{\leqslant}\cap \llbracket\Delta\rrbracket^k$-good,
which is a requirement for the second part of the Witness Theorem~\ref{thm:findcounterexample} (we already used it to obtain $\Delta$).
We invoke the second part of the Witness Theorem~\ref{thm:findcounterexample} to find a point \.
$\vv f \in S\cap\llbracket\Delta\rrbracket^k$ with $\Psi(\vv f)\neq \varphi(\vv f)$.
We take \. $\vv s \in (\mP([\Delta])^k)_S$ \. with \ts $|\vv s|=\vv f$, and observe
\begin{equation}
\label{eq:counterexamplepoint}
h_i^{\textup{\textsu{inst}}_{\SI(\beta^{-1}(\vv s))}}(0^j) \, = \,
h_i^{\textup{\textsu{inst}}_{\overline{\SI}(\vv s)}}(0^j) \, = \, \Psi(\vv f) \, \neq \, \varphi(\vv f)\ts.
\end{equation}
On the other hand, recall that \.
$\problem{Multiplicities}(\inst_\SI(\vv t))=|\vv t|$,
and that by Definition~\ref{def:setinstantiator},
we have \.
$|\vv t|=|\beta(\vv t)|$.
Hence, for $\vv s = \beta(\vv t)$ \. we have:
\[
h_i^{\textup{\textsu{inst}}_{\SI(\beta^{-1}(\vv s))}}(0^j) \, \neq \,
\varphi\big(\problem{Multiplicities}\big(\inst_\SI\big(\beta^{-1}(\vv s)\big)\big)\big).
\]
Finally, define \. $A_j = \big(\textup{\textsu{inst}}_{\SI}\big(\beta^{-1}(\vv s)\big)\big)'$ \. to obtain
the desired contradiction to \eqref{eq:fphinew}.  This completes the proof of
Theorem~\ref{thm:diagonalization}.  \qed

\smallskip

\subsection{A set-instantiator for
{\normalfont\problem{OccurrenceMulti}}
}
\label{sec:CountallOccSetInstantiator}
Let \. $\problem{OccurrenceMulti}_{k}:\{0,1\}^*\to\IN^k$ \. be the function defined as follows.
On input \ts $w\in\{0,1\}^*$ \ts we split $w$ into $k$ parts of roughly the same size,
and the output is the vector that specifies how many $1$'s are in the first part,
how many $1$'s are in the second part, and so on.
We omit the index~$k$ when it is clear from the context.

\smallskip

\begin{theorem}\label{thm:occurrencesetinstantiator}
Let \ts $M$ \ts be a  given a polynomial time nondeterministic Turing machine.
Fix~$k$.  For every \ts $\vv b \in \IN^k$ \ts
there exists a threshold \ts $j_0\in \IN$, such that for every \ts $j \geq j_0$ \ts
and every \. $A_{<j}\in\{0,1\}^{<j}$ \.
there exists a set-instantiator \ts $\SI$ \ts against \.
$(M,j,A_{<j},S \subseteq \IN^k,\vv b)$,  such that
\. $\problem{OccurrenceMulti}_k(\inst_\SI(\vv s)) = |\vv s|$.
\end{theorem}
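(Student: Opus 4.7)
The plan is to split the length-$j$ oracle window $\{0,1\}^j$ into $k$ blocks $U_1,\ldots,U_k$ matching the block decomposition used by $\problem{OccurrenceMulti}_k$, reserve inside each block $U_i$ a subset $W_i=\{w_{i,1},\ldots,w_{i,b_i}\}$, and define
\[
\inst_\SI(\vv s) \ := \ \bigcup_{i=1}^k\{w_{i,\ell}:\ell\in s_i\}\ts,
\]
so that $\problem{OccurrenceMulti}_k(\inst_\SI(\vv s))=|\vv s|$ holds by construction. For the perception function, a computation path $\tau$ is sent to $\top$ whenever it ends in a reject state, delivers a $1$-answer at some length-$j$ position outside $\bigcup_iW_i$, or delivers a $0$-answer at some reserved position; otherwise $\perc_\SI(\tau):=(p_1,\ldots,p_k)$ with $p_i=\{\ell:\tau$ queries $w_{i,\ell}$ and receives~$1\}$.

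One direction of the biconditional in Definition~\ref{def:setinstantiator} is routine. If $\perc_\SI(\tau)\subseteq\vv s$, then the three triggers for $\top$ already force $\tau$ to be accepting, to have every length-$j$ $1$-answer inside $\bigcup_iW_i$, and to have no $0$-answer at any reserved position; combined with $p_i\subseteq s_i$ this makes every length-$j$ answer in $\tau$ consistent with the oracle $A_{<j}\cup\inst_\SI(\vv s)$, so $\tau$ is indeed an accepting computation path of $h_M^{\inst_\SI(\vv s)}(0^j)$.

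The converse is the main obstacle: an accepting path of $h_M^{\inst_\SI(\vv s)}(0^j)$ is in principle allowed to query some $w_{i,\ell}$ with $\ell\notin s_i$ and receive the legitimate $0$-answer, which our naive perception silently collapses to~$\top$. To engineer away such pathological paths, the plan is to draw each $W_i$ from a much larger candidate pool $C_i\subseteq U_i$ via the multipartite hypergraph Ramsey theorem (Theorem~\ref{thm:ramsey}). Concretely, colour each $\vv a$-subset-list $\vv T\subseteq\vv C$ (for $\vv a\leqslant\vv b$) by the isomorphism type of the computation tree of $M^{A_{<j}\cup\bigcup_iT_i}(0^j)$ restricted to queries that hit~$\vv T$, where two types are identified when they differ by a block-internal permutation of the positions of $\vv T$; for a given $M$ and~$\vv b$ the number of such types is finite. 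Take $j_0$ large enough that $|U_i|\geq\RR(\vv b,\vv b,c)$ for this colour count~$c$; the theorem then yields a $\vv b$-monochromatic subset-list $\vv W\subseteq\vv C$. On this symmetric $\vv W$ the set of accepting paths of $h_M^{\inst_\SI(\vv s)}(0^j)$ is stable under internal permutations of each $W_i$, so a canonical $\vv s$-dependent relabeling of the indices of each $W_i$ puts every reserved-position query onto an index inside $s_i$, ensuring that no accepting path has a $0$-answer at a reserved position; the converse direction of the biconditional then holds and the set-instantiator $\SI$ is complete.
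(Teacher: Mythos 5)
Your instantiation function (reserved positions $W_i$ hidden inside the $k$ blocks, with $\inst_\SI(\vv s)$ switching on exactly the ones indexed by $\vv s$) matches the paper's construction, and you have correctly isolated the crux: an accepting path of $h_M^{\inst_\SI(\vv s)}(0^j)$ might query a reserved position $w_{i,\ell}$ with $\ell\notin s_i$, receive the answer~$0$, and thereby break the biconditional. But your proposed repair via Theorem~\ref{thm:ramsey} does not go through. First, the colouring is not by a number of colours bounded independently of~$j$: the computation tree of $M$ on $0^j$ has $2^{\mathrm{poly}(j)}$ paths, and even after restricting to $\vv T$-queries and quotienting by block-internal permutations, the ``isomorphism type'' must record data (e.g.\ how many paths realize each query pattern) whose range grows exponentially in~$j$. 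Theorem~\ref{thm:ramsey} only provides a threshold $\RR(\vv n,\vv m,c)$ for \emph{fixed} $c$ (and, being proved by compactness, comes with no quantitative bound at all), so you cannot guarantee $|U_i|\geq \RR(\vv b,\vv b,c(j))$ when $c(j)$ grows with~$j$; for any coloring rich enough to support your conclusion this threshold would vastly exceed~$2^j$. Second, even granting a monochromatic $\vv W$, the final step fails: monochromaticity says the machine behaves the same way on every $\vv b$-subset-list of the candidate pool, but within the \emph{fixed} chosen $\vv W$ nothing prevents the accepting paths of $h_M^{\inst_\SI(\vv s)}(0^j)$ from collectively querying all of $W_i$, in which case no relabeling of which $|s_i|$ indices are ``on'' avoids a $0$-answer at a reserved position. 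Worse, an $\vv s$-dependent relabeling changes $\inst_\SI(\vv s)$ after the fact, while Definition~\ref{def:setinstantiator} requires one fixed pair $(\inst_\SI,\perc_\SI)$ for which the biconditional holds for \emph{all} $\vv s\in\mP(\vv b)_S$ simultaneously; a single path has a single perception, and you cannot re-tune the instantiation per~$\vv s$ without destroying this coherence.

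What actually closes the gap in the paper is a hiding argument, not a symmetry argument: the reserved positions are placed \emph{uniformly at random} among the $2^{j-2}$ candidates in each block (``lucky creators''). For each fixed $\vv L\in\mP(\vv b)$ the switched-off reserved positions are a constant number of uniformly random points conditioned on the creation $\xi_{\vv L}$, so each accepting path, making only $\mathrm{poly}(j)$ queries, misses all of them except with probability $O(\mathrm{poly}(j)/2^{j-2})$; a union bound over the constantly many $\vv L$ yields a lucky creator, and luck is exactly the statement that accepting paths transfer between $\inst_\SI(\vv s)$ and $\inst_\SI(\vv b)$, which is what makes the perception function (defined, as in the paper, relative to the full instantiation $h^{\inst_\SI(\vv b)}(0^j)$) satisfy the biconditional. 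Ramsey's theorem does enter the paper's argument, but only later, in \S\ref{sec:setinstantiators}, \emph{after} the number of accepting paths has been bounded by a constant, which is precisely the hypothesis your colouring is missing here.
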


\smallskip

The rest of this section is devoted to proving this theorem for $S = \IN^k$, which immediately proves it for all subsets.
We will first define a \defn{creator} whose \defn{creations} will be the instantiations in the end, but the creator is not limited to a single creation for each set. We will then define the set-instantiator from the creator by picking for every subset \ts $\vv s$ \ts with \ts
$|\vv s|\in \IN^k$ \ts just any one of its creations for~$\ts \vv s$.

\subsubsection{Creations}
Having $2^{j-1}$ many bits available, we can encode in a standard way a list of $k$ subsets \.
$\psi=(\psi_1,\ldots,\psi_k)$, where \ts $\psi_a \subseteq [2^{j-2}]$.
We ignore any extra bits that are not needed for this encoding.
Let \ts $n := j-2$.
For a \. $(c,x) \in [k]\times [2^n]$ \. we write \ts
$\psi(c,x):=1$ \ts if \ts $x \in \psi_c$, and \ts $\psi(c,x):=0$ \ts otherwise.
We say that \ts $(c,x)$ \ts is a \defn{hit} \ts if \ts $\psi(c,x)=1$, otherwise it is a \defn{miss}.

Let \. $\textsu{hitvec}(\psi) := \{(c,x) \.:\. \psi(c,x)=1\}.$
Let $\textsu{hit}(\psi_c) \subseteq [2^n]$ denote the set of $x \in [2^n]$ with $\psi(c,x)=1$.
For a vector \. $\vv a \in \IN^k$,
a list $\psi$ is called an \defn{$\vv a$-creation} \ts if \.
$|\textsu{hit}(\psi_c)| = a_c$ \. for all \. $1 \leq c \leq k$.
Let \ts $\textsu{$\vv a$-creations}$ \ts denote the set of all \ts $\vv a$-creations.
Finally, let  \ts
$$\textsu{$a^-$-creations} \, := \, \bigcup_{1 \leqslant \vv v \leqslant \vv a}\textsu{$\vv v$\,-creations}
$$
These will be used extensively in Section~\ref{sec:TFNP}.

\subsubsection{Lucky creators}
In this section we introduce the concept of a \ts \defn{creator}.
A creator is a slightly less restrictive version of a set-instantiator.
For each blueprint a creator outputs a creation, where a blueprint is slightly
more general than the set $\vv s$ for a set-instantiator, but serves the same purpose.

For a set $X$, let \ts $\binom{X}{c}_\textup{ordered}$ \ts denote the set of cardinality~$c$
ordered subsets of~$X$ (i.e., length $c$ lists of pairwise distinct objects of~$X$).
For \ts $\vv b\in \IN^k$, a \defn{$\vv b$-creator} \ts $\xi$ \ts is a
length $k$ list \ts $\textsu{hitmatrix}(\xi)$.
The $c$-th entry of this list is a length $b_c$ list of distinct entries from $[2^n]$,
i.e., an element from \. $\binom{[2^n]}{b_c}_\textup{ordered}$\..
We can (and will) think of \ts $\textsu{hitmatrix}(\xi)$ \ts as a set of points in \ts
$[k]\times[2^n]$.

Given a $\vv b$-creator~$\xi$ and an \ts
$\vv L\in\mP(\vv b)$, we obtain an $|\vv L|$-creation \ts $\xi_{\vv L}$ \ts
by setting \. $\xi_{\vv L}(c,\textsu{hitmatrix}(\xi)_{c,d})=1$ \. for all \ts $d \in L_c$, and \ts
$\xi_{\vv L}(c,x)=0$ \ts otherwise.

\begin{definition}
We call $\xi \in \textsu{$b$-creators}$ \ts \defn{lucky} \ts if
for all \ts
$\vv L\in\mP(\vv b)$,
all accepting paths of the computation \ts $h_i^{\xi_{\vv L}}(0^j)$ \ts
do not access the oracle at any point in \. $\textsu{hitmatrix}(\xi)\sm \textsu{hitvec}(\xi_{\vv L})$.
\end{definition}

In a sense, this says that the accepting paths do not access the oracle at positions where lonely nodes are not, but could potentially be.

Suppose for all $j\ge 1$, we have a probability distribution \ts $D_{j}$ \ts and an event \ts $e_{j}$, \ts which satisfy \. $\lim_{j\to\infty}\Pr_{D_{j}}[e_{j}] = 1$.  Then we say that \ts \emph{$e_{j}$ happens with high probability} (w.h.p.) \emph{in $D_j$}.
Given a finite number of events~$e_j$ that each happen with high probability in~$D_j$, by the union bound all these events happen simultaneously with high probability in~$D_j$.  Let \ts $U_X$ \ts denote the uniform distribution on a finite set~$X$.

\begin{claim}
If $\xi$ is sampled from \. $U_\textup{\textsu{$b$-creators}}$\ts, then \ts $\xi$ \ts is lucky w.h.p.
\end{claim}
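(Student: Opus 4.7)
The plan is to establish luckiness via a direct probabilistic union bound, exploiting the fact that the hitmatrix of a uniformly random creator consists of only $|\vv b| = O(1)$ positions chosen uniformly in the exponentially large space $[2^n]$. First I would union-bound over the finitely many $\vv L \in \mP(\vv b)$: since $|\mP(\vv b)| = \prod_{c} 2^{b_c}$ is a constant independent of $j$, it suffices to fix $\vv L$ and show that the conditional probability of the failure of luckiness for this $\vv L$ tends to $0$ as $j \to \infty$.

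For fixed $\vv L$, I would reveal $\xi$ in two stages: first the $|\vv L|$ coordinates that constitute $\textsu{hitvec}(\xi_{\vv L})$, and then the remaining $|\vv b|-|\vv L|$ ``free'' coordinates, uniformly in $[2^n]$ subject to distinctness within each row. After the first stage the oracle $\xi_{\vv L}$ is entirely determined, and hence so are the computation $M^{\xi_{\vv L}}(0^j)$, its set $\mathcal A_{\vv L}$ of accepting paths, and, for each $\tau \in \mathcal A_{\vv L}$, the set $Y(\tau) \subseteq [k]\times[2^n]$ of positions queried by $\tau$ with oracle answer $0$. Since the free coordinates automatically lie in $\textsu{hitmatrix}(\xi)\setminus \textsu{hitvec}(\xi_{\vv L})$, the failure of luckiness for $\vv L$ is precisely the event that some free coordinate lands in $U := \bigcup_{\tau \in \mathcal A_{\vv L}} Y(\tau)$.

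The key step, and the main obstacle, is bounding $|U|$. Since $|Y(\tau)| \leq p(j)$ for a polynomial $p$ bounding the runtime of $M$, we have $|U| \leq p(j)\cdot |\mathcal A_{\vv L}|$; the bound on $|\mathcal A_{\vv L}|$ is where the surrounding context from the proof of Theorem~\ref{thm:diagonalization} is invoked. At this point we are in the case where~\eqref{eq:fphinew} holds for all $B$, so $|\mathcal A_{\vv L}| = h_i^{\xi_{\vv L}}(0^j) = \varphi\bigl(\problem{Multiplicities}(\xi_{\vv L})\bigr) \leq C$, where $C := \max_{\vv a \leqslant \vv b} \varphi(\vv a)$ is a constant independent of $j$. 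Hence $|U| = O(p(j))$, and a union bound over the $O(1)$ free coordinates bounds the conditional probability of the bad event by $|\vv b|\cdot|U|/(2^n - |\vv b|) = O(p(j)/2^{j-2}) \to 0$. Combined with the constant-factor union bound over $\vv L$, this shows that $\xi$ is lucky w.h.p. Without the conditional bound on $|\mathcal A_{\vv L}|$, the naive $2^{p(j)}$ worst case would make the argument vacuous, which is why the hypothesis~\eqref{eq:fphinew} plays an essential, not cosmetic, role.
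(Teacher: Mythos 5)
Your proof is correct and follows the same route as the paper's: a union bound over the constantly many $\vv L \in \mP(\vv b)$, conditioning on $\xi_{\vv L}$ (which fixes the computation $h_i^{\xi_{\vv L}}(0^j)$ and its accepting paths), and the observation that the remaining coordinates of $\textup{\textsu{hitmatrix}}(\xi)$ are conditionally uniform over an exponentially large set, so the accepting paths miss them w.h.p. The one substantive difference is that you make explicit a step the paper's write-up compresses: the paper's Bernoulli estimate $\bigl(1-\tfrac{kl}{2^n-l}\bigr)^{t_i(j)}$ controls the queries of a \emph{single} path, whereas luckiness quantifies over \emph{all} accepting paths, so one also needs the number of accepting paths to be bounded --- otherwise the queried set $U$ could have size $\Omega(2^n)$ (for the machine that nondeterministically guesses a position and accepts iff the oracle answers $0$, no creator with $\vv b\neq \vv 0$ is lucky). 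Your appeal to \eqref{eq:fphinew} to obtain $|\mathcal A_{\vv L}|\leq C$ with $C$ a constant is exactly the right fix, and it is legitimately available where the set-instantiator is actually invoked, namely inside the proof of Theorem~\ref{thm:diagonalization} after \eqref{eq:fphinew} has been assumed (compare Theorem~\ref{thm:PLSsetinstantiator}, which states the restriction on $M$ explicitly, with Theorem~\ref{thm:occurrencesetinstantiator}, which omits it). So your version proves the claim in the form that is needed, and your closing remark that the bound on the number of accepting paths is essential rather than cosmetic is accurate.
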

\begin{proof}
We observe that for a fixed \ts $L\in\mP(\vv b)$, we have that the $\vv L$-creation \ts
$\xi_{L}$ \ts is uniformly distributed from \. \mbox{$\vv L$\textsu{-creations}}.
We show that for a fixed \ts $\vv L\in\mP(\vv b)$ \ts we have that $\xi$ is lucky w.h.p.

Since there are only constantly many $\vv L$, the claim immediately follows from the union bound.
Hence, for the rest of the proof fix \ts $\vv L$.
We have that \ts $\xi_{\vv L}$ \ts is uniformly distributed.
For \ts $l := \sum_{a=1}^k |L_a|$,
the probability of an oracle access picking one of these positions is \. $\leq \frac{kl}{2^n-l}$.
By Bernoulli's inequality, we have
$$\left(1-\frac{kl}{2^n-l}\right)^{t_i(j)} \, \geq \, 1 \. - \. \frac{k\ts l \ts t_i(j)}{2^n-l}\..
$$ Since $k$ and $l$ are fixed and
since $n$ and $j$ are polynomially related, this proves the claim.
\end{proof}

\subsubsection{Defining the set-instantiator}
For a $\vv a$-creation $\psi$ and a computation path \ts $\tau \in \{0,1\}^*$ \ts of a computation \ts $h_i^\psi(0^j)$, let  \. $\textsu{perception}(\tau) \subseteq \textsu{hit}(\psi)$ \. denote the set of accessed oracle positions that are hits.

Let $\xi$ be a lucky $\vv b$-creator.
We interpret the set \ts $\textup{\textsu{hitmatrix}}(\xi)$
as a bijection \. $\bij : \mP(\vv b)\to\textup{\textsu{hitmatrix}}(\xi)$.  Let \ts $S = \IN^k$.
We set
\[
\inst_\SI(\vv s) \. := \. \xi_{\vv s}\ts,
\]
and for \ts $\tau\in\{0,1\}^*$ \ts we set
\[
\textsu{perc}_\SI(\tau) = \begin{cases}
\textsu{bij}^{-1}(\textsu{perception}(\tau)) & \text{ if \ts $\tau$ \ts is an accepting path of the computation \. $h_i^{\inst_\SI(\vv b)}(0^j)$},
\\
\top & \text{ \ts otherwise.}
\end{cases}
\]

The rest of this section is devoted to proving that $\SI$ satisfies the requirements of Definition~\ref{def:setinstantiator}, which then proves Theorem~\ref{thm:occurrencesetinstantiator}, because clearly
$\problem{OccurrenceMulti}(\inst_\SI(\vv s))=|\vv s|$.
Formally:

\smallskip

\begin{proposition}
For all $\vv s \in \mP(\vv b)_S$:
$\tau\in\{0,1\}^*$ is an accepting path for the computation \. $h^{\inst_\SI(\vv s)}(0^j)$ \. if and only if \. $\textup{\textsu{perc}}_\SI(\tau) \subseteq \vv s$.
\end{proposition}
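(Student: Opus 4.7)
The plan is to verify directly the biconditional in Definition~\ref{def:setinstantiator} for the instantiator $\SI$ built from the lucky $\vv b$-creator $\xi$. The proof passes through a simple oracle-comparison between $\inst_\SI(\vv s)=\xi_{\vv s}$ and $\inst_\SI(\vv b)=\xi_{\vv b}$, and uses luckiness as the only nontrivial ingredient.

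The first step will be an explicit description of where the two oracles disagree. By construction, $\xi_{\vv b}$ is $1$ exactly on $\textsu{hitmatrix}(\xi)$ while $\xi_{\vv s}$ is $1$ exactly on $\textsu{hitvec}(\xi_{\vv s})\subseteq\textsu{hitmatrix}(\xi)$, so (with the common prefix $A_{<j}$) the two oracles agree everywhere except on $\textsu{hitmatrix}(\xi)\sm\textsu{hitvec}(\xi_{\vv s})$, where $\xi_{\vv b}$ answers $1$ and $\xi_{\vv s}$ answers $0$. Consequently, any path $\tau$ that queries no position in $\textsu{hitmatrix}(\xi)\sm\textsu{hitvec}(\xi_{\vv s})$ sees identical oracle answers under both oracles and is therefore accepting for one of the two computations iff for the other.

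For the forward direction, I would start from an accepting path $\tau$ of $h^{\xi_{\vv s}}(0^j)$. Luckiness of $\xi$ applied to $\vv L=\vv s$ guarantees that $\tau$ does not query any position in $\textsu{hitmatrix}(\xi)\sm\textsu{hitvec}(\xi_{\vv s})$, so by the oracle-comparison observation $\tau$ is also accepting for $h^{\xi_{\vv b}}(0^j)$. The definition of $\textsu{perc}_\SI$ then yields $\textsu{perc}_\SI(\tau)=\bij^{-1}(\textsu{perception}(\tau))$, where $\textsu{perception}(\tau)$ is the set of queried positions that are hits in $\xi_{\vv b}$, i.e.\ the queried positions in $\textsu{hitmatrix}(\xi)$. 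By luckiness these sit inside $\textsu{hitvec}(\xi_{\vv s})$, and pulling back through $\bij$ gives $\textsu{perc}_\SI(\tau)\subseteq\vv s$.

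For the reverse direction, I would assume $\textsu{perc}_\SI(\tau)\subseteq\vv s$. Then $\textsu{perc}_\SI(\tau)\in\mP(\vv b)$ (it is not the top element $\top$, which is strictly above every element of $\mP(\vv b)$), so by the definition of $\textsu{perc}_\SI$ the path $\tau$ is accepting for $h^{\xi_{\vv b}}(0^j)$ and its queried hits, pushed through $\bij$, lie in $\textsu{hitvec}(\xi_{\vv s})$. In particular $\tau$ does not query $\textsu{hitmatrix}(\xi)\sm\textsu{hitvec}(\xi_{\vv s})$, and the oracle-comparison observation again gives that $\tau$ is accepting for $h^{\xi_{\vv s}}(0^j)$. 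The only step requiring care is the bookkeeping around the fact that $\textsu{perception}$ and $\bij$ are set up relative to the reference creation $\xi_{\vv b}$ rather than $\xi_{\vv s}$; once the exact agreement pattern between the two oracles is pinned down, luckiness finishes both directions without further effort.
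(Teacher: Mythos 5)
Your proof is correct and follows essentially the same route as the paper's: pass from $\xi_{\vv s}$ to the reference creation $\xi_{\vv b}$ via luckiness, read off the perception there, and reverse the argument using the containment $\textsu{perception}(\tau)\subseteq\textsu{bij}(\vv s)$. Your explicit preliminary observation that the two oracles differ only on $\textsu{hitmatrix}(\xi)\sm\textsu{hitvec}(\xi_{\vv s})$ is exactly the implicit bookkeeping behind the paper's appeals to luckiness and to ``$\tau$ would not even be a computation path.''
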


\begin{proof}
Since \ts $\tau$ \ts is an accepting path of \. $h^{\inst_\SI(\vv s)}(0^j)$,
and since \ts $\xi$ \ts is lucky, we conclude that \ts $\tau$ \ts is also an accepting
path of the computation \. $h^{\inst_\SI(\vv b)}(0^j)$. This implies that \.
$\textsu{perc}_\SI(\tau) = \textsu{bij}^{-1}(\textsu{perception}(\tau))$.
Clearly \. $\textsu{perception}(\tau) \subseteq \textsu{bij}(\vv s)$,
since otherwise \ts $\tau$ \ts would not even be a computational path of \. $h^{\inst_\SI(\vv s)}(0^j)$ \. because of its oracle answers when querying lonely nodes in \. $\textsu{perception}(\tau) \sm \textsu{bij}(\vv s)$. We conclude: \.
$\textup{\textsu{perc}}_\SI(\tau) \subseteq \vv s$.

The argument above is reversible.  Indeed,
let \. $\textup{\textsu{perc}}_\SI(\tau) \subseteq \vv s$, so in particular \. $\textup{\textsu{perc}}_\SI(\tau) \neq \top$.
Then \ts $\tau$ \ts is an accepting path of the computation \. $h^{\inst_\SI(\vv b)}(0^j)$.
Since \ts $\tau$ \ts is an accepting path of \. $h^{\inst_\SI(\vv b)}(0^j)$,
and since \ts $\xi$ \ts is lucky, we conclude that \ts $\tau$ \ts is also
an accepting path of the computation \. $h^{\inst_\SI(\vv s)}(0^j)$.
\end{proof}

\subsection{Binomial-good polynomials and relativizing closure properties of {\normalfont$\sharpP$}}
\label{sec:affinevarietyseparation}

We now draw an important corollary from the Diagonalization Theorem~\ref{thm:diagonalization} in a simple subcase that allows us to completely characterize the relativizing multivariate polynomial closure properties of $\sharpP$.

\smallskip

\begin{theorem}\label{thm:relatclosure}
The relativizing multivariate polynomial closure properties of \ts $\sharpP$ \ts
are exactly the binomial-good polynomials.
\end{theorem}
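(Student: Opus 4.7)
The theorem asserts an equivalence, and the plan is to treat the two directions separately.

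For the forward direction (binomial-good polynomials are relativizing closure properties of $\sharpP$), I will reuse the argument already sketched in the proof of Theorem~\ref{thm:closure}: for every oracle $A$, the class $\sharpP^A$ is closed under addition, multiplication, and taking binomial coefficients, and all three closure proofs relativize uniformly in~$A$. Hence any polynomial whose binomial basis expansion has only nonnegative integer coefficients, applied to any tuple from $(\sharpP^A)^k$, lies in~$\sharpP^A$.

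For the reverse direction, I argue by contrapositive: assuming that $\varphi$ is binomial-bad, I will produce an oracle $A$ and functions $f_1,\ldots,f_k \in \sharpP^A$ with $\varphi(f_1,\ldots,f_k) \notin \sharpP^A$. The plan is to invoke the Diagonalization Theorem~\ref{thm:diagonalization} in its simplest instantiation: take $\ka = k$, so that there are no non-parameter coordinates and no functions $\zeta_b$ need be specified; then the ideal $I$ is zero and $Z = \IQ^k$. Take $S = T = \IO = \IN^k$, $\problem{Multiplicities} = \problem{OccurrenceMulti}_k$, and fix an arbitrary $\vv t \in \IN^k$. The technical conditions $(1)$--$(3)$ of the theorem are trivially or vacuously satisfied: $Z = \IQ^k$ contains integer points; $C'_S = C'_T$ since $S = T$; and the strict inequalities on the top homogeneous parts of the $\zeta_b$ form an empty family. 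The required set-instantiators against every $(M, j, A_{<j}, \IN^k, \vv b)$ with $\problem{OccurrenceMulti}_k(\inst_\SI(\vv s)) = |\vv s|$ are furnished directly by Theorem~\ref{thm:occurrencesetinstantiator}. Since $I = 0$, binomial-badness of $\varphi + I$ coincides with that of $\varphi$ itself.

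The Diagonalization Theorem then produces an oracle $A$ such that for every nondeterministic polynomial-time oracle Turing machine $M$ there exists $j$ with $p_A(0^j) \neq \#\acc_{M^A}(0^j)$; equivalently, $p_A \notin \sharpP^A$. To finish I realize $p_A$ as $\varphi(f_1,\ldots,f_k)$ for suitable $f_i \in \sharpP^A$. I define $f_i$ by a nondeterministic polynomial-time oracle machine that on input $w$ first queries $A(0^{|w|})$: if the answer is $0$, the machine produces $t_i$ accepting paths by a standard nondeterministic padding construction; if the answer is $1$, it nondeterministically guesses a string $y \in \{0,1\}^{|w|-1}$ lying in the $i$-th coordinate block of the length $2^{|w|-1}$ characteristic vector of $\tilde A_{|w|}$ and accepts iff the oracle query $A(1y)$ returns~$1$. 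Each $f_i$ clearly lies in $\sharpP^A$, and by construction $\varphi(f_1(w),\ldots,f_k(w))$ equals $\varphi\bigl(\problem{OccurrenceMulti}_k(\tilde A_{|w|})\bigr)$ when $A(0^{|w|}) = 1$ and equals $\varphi(\vv t)$ otherwise---precisely $p_A(w)$. Thus $\varphi(f_1,\ldots,f_k) = p_A \notin \sharpP^A$, contradicting the assumption that $\varphi$ is a relativizing closure property of $\sharpP$.

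The only obstacle I anticipate is notational bookkeeping: making sure the indexing conventions of $\tilde A_j$, $\problem{OccurrenceMulti}_k$, and the $\sharpP^A$ machines for the $f_i$ match up so that $\varphi(f_1,\ldots,f_k) = p_A$ holds literally, and checking carefully that the ``vacuous'' hypotheses of the Diagonalization Theorem do indeed degenerate correctly when $\ka = k$. All substantive ingredients---the Witness Theorem, the multipartite hypergraph Ramsey argument, and the $\problem{OccurrenceMulti}$ set-instantiator---are already established in the earlier sections, so the proof is essentially an assembly of previously proved pieces.
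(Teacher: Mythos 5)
Your proposal is correct and matches the paper's own proof essentially step for step: the forward direction via relativizing closure under addition, multiplication, and binomial coefficients, and the reverse direction via the simplest instantiation of the Diagonalization Theorem with $\ka=k$, $I=\langle 0\rangle$, $S=T=\IN^k$, and the $\problem{OccurrenceMulti}_k$ set-instantiators, followed by realizing $p_A$ as $\varphi$ applied to a tuple of $\sharpP^A$ machines that first query $A(0^{|w|})$. Your spelled-out construction of the machines $f_i$ is slightly more explicit than the paper's one-line description, but it is the same argument.
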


\begin{proof}
Let $\varphi$ be binomial-bad. We prove that there exists \ts $A\subseteq\{0,1\}^*$ \ts such that \. $\varphi(\vv\sharpP^A)\not\subseteq\sharpP^A$.
We use a very simple instantiation of the Diagonalization Theorem~\ref{thm:diagonalization} as follows.

Let $k$ be the arity of~$\varphi$, and let \ts $\ka=k$.
We have no functions \ts $\zeta_b$ \ts in this case.  The ideal \ts $I=\langle 0\rangle$. Let \ts $S = T= \IO$. Then \ts $C'_S = \IQ_{\geq 0}^k$.
The assumptions of the Diagonalization Theorem are readily verified: (1) $\vv 0 \in Z$, (2) $S=T$, (3) there are no inequalities.
We set \. $\vv t = \vv 0$ \. and set \.
$\problem{Multiplicities}=\problem{OccurrenceMulti}_k$  \ts
(for which we have set-instantiators from Theorem~\ref{thm:occurrencesetinstantiator}).

Since \. $\varphi+I = \varphi+\langle 0 \rangle$ \. is binomial-bad, there exists $A$ such that for every~$M$, there is $j$ with \ts
$p_A(0^j) \neq \#\acc_{M^A}(0^j)$. The proof is finished by observing that \ts $p_A \in \varphi(\vv \sharpP^A)$, which
can be seen as follows.
There exists a list of oracle Turing machines that on input $w$ first query $A$ at $0^{|w|}$.  Then, if \ts $0^{|w|}\notin A$, accept with multiplicities $\vv t$; if \ts $0^{|w|}\in A$, then the machines run \. $\problem{OccurrenceMulti}$ \. on \ts $\tilde A_j$.\end{proof}

\medskip

\section{Applications to classical problems}
\label{sec:combinineq}
In this section we first continue the approach in $\S$\ref{ss:main-monotone}
to apply Proposition~\ref{p:GapP-squared-PH} to complete squares.  We then
apply the Diagonalization Theorem~\ref{thm:diagonalization}
and its corollary, Theorem~\ref{thm:relatclosure},
in several settings.

\subsection{Complete squares}
\label{sec:completesquaresandnonmonotone}

The \defn{Cauchy inequality} \ts is the basic inequality which goes into
the definition of the \defng{scalar product} in~$\rr^n$. It is the starting point in
\cite[Eq.~(1.1.1)]{HLP52}:
\begin{equation}\label{eq:Cauchy}
\bigl(x_1y_1 \. + \. \ldots \. + \.x_ny_n\bigr)^2 \, \le \, \bigl(x_1^2 \. + \. \ldots \. + \.x_n^2\bigr)\bigl(y_1^2 \. + \. \ldots \. + \.y_n^2\bigr)
\end{equation}
Geometrically, the inequality says that cosine of every angle in \ts $\rr^n$ \ts is at most~$1$.
It is a special case of the \defn{H\"older inequality} for the $L^p$-norm when $p=2$, see
e.g.~\cite[$\S$17]{BB61}.

\smallskip

\begin{proposition}[Cauchy inequality] \label{p:Cauchy}
Denote by \ts $C_n=C_n(x_1,\ldots,x_n,y_1,\ldots,y_n)$ \ts the
counting function given by~\eqref{eq:Cauchy}, i.e., the difference of the right-hand side and the left-hand side of the inequality.  For \. $n\ge 2$, we have \.
$C_n \not\geqslant_\# 0$ \. unless \. $\PH = \Sigma_2^\textup{\textsu{p}}$\ts.
\end{proposition}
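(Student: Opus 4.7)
The plan is to reduce the multivariate Cauchy inequality to the bivariate Cauchy--Schwartz complete square treated in Corollary~\ref{c:AMGM}, by exhibiting specific $\sharpP$ inputs that collapse $C_n$ onto $(a-b)^2$.

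First I would recall the classical Lagrange identity for $n=2$:
$$(x_1^2+x_2^2)(y_1^2+y_2^2) \. - \. (x_1y_1+x_2y_2)^2 \ = \ (x_1y_2-x_2y_1)^2.$$
Specializing $x_1 = x_2 = 1$ (constant functions, which are trivially in $\sharpP$) and $y_1 = a$, $y_2 = b$ for arbitrary $a,b\in\sharpP$ produces $C_2(1,1,a,b) = (a-b)^2$. For general $n\ge 2$, I further pad the remaining slots with the zero function, $x_j = y_j = 0$ for $j\ge 3$, which is again in $\sharpP$. Every summand of~\eqref{eq:Cauchy} involving such an $x_j$ or $y_j$ then vanishes on both sides, so $C_n$ restricted to this choice of $\sharpP$ inputs coincides with $C_2(1,1,a,b) = (a-b)^2 = a^2 - 2ab + b^2$.

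Now I would conclude: if $C_n\geqslant_\# 0$, then the restriction above lies in $\sharpP$, so $(a-b)^2\geqslant_\# 0$, i.e., $a^2+b^2\geqslant_\# 2ab$ for all $a,b\in\sharpP$. Applying Corollary~\ref{c:AMGM} (equivalently Proposition~\ref{p:GapP-squared-PH} with $k=2$), this forces $\PH = \Sigma_2^{\textup{\textsu{p}}}$, which is the claim.

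There is essentially no obstacle here: the Lagrange identity is an algebraic equality over~$\IZ$, the substitution functions $1$ and $0$ genuinely lie in $\sharpP$, and closure properties of $\sharpP$ are tested on all $\sharpP$ inputs simultaneously, so restricting to this specific family is legitimate. The only mild point worth noting in the write-up is that for $n\ge 3$ we are explicitly using that constant functions (in particular $0$ and $1$) belong to $\sharpP$, so that the padding argument stays inside the class.
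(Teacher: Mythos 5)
Your proposal is correct and is essentially the paper's own proof: both reduce to the $n=2$ Lagrange identity, specialize two of the variables to the constant function $1$ so that $C_2$ collapses to $(a-b)^2$, pad the remaining slots with $0$ for $n>2$, and invoke Corollary~\ref{c:AMGM}. The only cosmetic difference is that the paper sets $y_1=y_2=1$ and keeps $x_1,x_2$ free, while you do the symmetric substitution.
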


\begin{proof} Take \ts $n=2$.
We have
$$
C_2 \, = \, \bigl(x_1^2\ts +\ts x_2^2\bigr)\bigl(y_1^2\ts +\ts y_2^2\bigr) \, - \, \bigl(x_1y_1\ts +\ts x_2y_2\bigr)^2 \ = \
x_1^2y_2^2 \. + \. x_2^2y_1^2 \. - \. 2\ts x_1y_1x_2y_2 \, \ge \, 0.
$$
When \ts $y_1=y_2=1$, this is equivalent to \ts $(x_1-x_2)^2\geqslant 0$, and the result follows
from Corollary~\ref{c:AMGM}. For $n>2$, let \. $x_3=\ldots = x_n = y_3=\ldots=y_n=0$.
\end{proof}

\smallskip

The \defn{Minkowski inequality} \ts is another basic inequality, see e.g.~\cite[$\S$21]{BB61}:
\begin{equation}\label{eq:Mink}
\prod_{i=1}^{n}
\bigl(x_i^n \. + \ y_i^n\bigr) \ \ge \
\left[\,\prod_{i=1}^{n} \. x_i \, + \, \prod_{i=1}^{n} y_i\,\right]^n
\end{equation}
This inequality is a special case of the \defng{Brunn--Minkowski inequality} (for bricks in~$\rr^n$),
which is foundational in the theory of geometric inequalities, see e.g.~\cite[$\S$8]{BZ88}

\smallskip

\begin{corollary}[Minkowski inequality] \label{c:Mink}
Denote by \ts $M_n=M_n(x_1,\ldots,x_n,y_1,\ldots,y_n)$ \ts the
counting function given by~\eqref{eq:Mink}, i.e., the difference of the left-hand side and the right-hand side of the inequality.  For \. $n= 2$, we have \.
$M_2 \not\geqslant_\# 0$ \. unless \. $\PH = \Sigma_2^\textup{\textsu{p}}$\ts.
\end{corollary}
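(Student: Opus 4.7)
The plan is to reduce the $n=2$ Minkowski inequality directly to the Cauchy--Schwartz setting treated in Corollary~\ref{c:AMGM} (equivalently, Proposition~\ref{p:Cauchy} at $n=2$). The point is that after expansion the left-minus-right expression becomes a perfect square, which is precisely the form to which Proposition~\ref{p:GapP-squared-PH} applies.

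First, I would expand $M_2$ directly. We have
\[
(x_1^2+y_1^2)(x_2^2+y_2^2) \,-\, (x_1x_2+y_1y_2)^2 \ = \ x_1^2y_2^2 \.+\. y_1^2 x_2^2 \.-\. 2\ts x_1 x_2 y_1 y_2 \ = \ (x_1 y_2 - x_2 y_1)^2,
\]
so that $M_2 = (x_1 y_2 - x_2 y_1)^2$. In particular $M_2$ coincides on the nose with the Cauchy counting function $C_2$ of Proposition~\ref{p:Cauchy}, and the same specialization argument used there applies.

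Next, I would make the specialization. Substitute $y_1 \gets 1$ and $y_2 \gets 1$ (both constant functions, which lie in $\sharpP$), and let $x_1 \gets N(\phi)$, $x_2 \gets N(\psi)$ for Boolean formulas $\phi,\psi$. This gives
\[
M_2\bigl(N(\phi),\,N(\psi),\,1,\,1\bigr) \ = \ \bigl(N(\phi)-N(\psi)\bigr)^2,
\]
which is exactly the Cauchy--Schwartz function $h(\phi,\psi)$ from~\eqref{eq:AMGM-c}. If $M_2\geqslant_\# 0$, then specializing to this choice of inputs would yield $h\in\sharpP$, i.e.\ $a^2+b^2 \geqslant_\# 2ab$. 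But by Corollary~\ref{c:AMGM} this forces $\PH=\Sigma_2^\textup{\textsu{p}}$.

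I do not expect any real obstacle here: the only content is the algebraic identity $M_2=(x_1y_2-x_2y_1)^2$, which already appears in the proof of Proposition~\ref{p:Cauchy}. The main thing to check is that the specialization is legal in the ``$\geqslant_\#$'' framework, i.e.\ that constants~$1$ are in $\sharpP$ and that substitution of $\sharpP$ functions into a closure property preserves membership -- both immediate from the definition of closure property in~\S\ref{ss:main-closure}.
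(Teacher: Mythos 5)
Your proposal is correct and is essentially the paper's argument: the paper's one-line proof just observes $M_2(x_1,x_2,y_1,y_2)=C_2(x_1,y_1,x_2,y_2)$ and invokes Proposition~\ref{p:Cauchy}, whose proof is exactly your specialization $y_1=y_2=1$ reducing to $(x_1-x_2)^2$ and Corollary~\ref{c:AMGM}. You have merely inlined that reduction rather than citing it.
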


\begin{proof}  Note that \. $M_2(x_1,x_2,y_1,y_2) = C_2(x_1,y_1,x_2,y_2)$.
\end{proof}

\smallskip

The \defn{Alexandrov--Fenchel inequality} \ts (for \defng{mixed volumes}) is a deep inequality
in convex geometry independently proved by Alexandrov (1938)
and Fenchel (1936), see e.g.~\cite[$\S$20]{BZ88} and \cite[$\S$7.3]{Sch14}.
We refer to~\cite{SvH19} for a notable recent proof and its popular exposition
in~\cite{CP22}.  A special case of the inequality for bricks is especially notable
as it led to a complete resolution of the long open
\defng{van der Waerden conjecture}~\cite{vL81,vL82},
which in turn led to further combinatorial inequalities, see e.g.~\cite{Alon03,Gur08,Sta81}.

\begin{theorem}[Alexandrov--Fenchel inequality for bricks]\label{t:AF}
Let \ts $n\ge 2$ \ts and let \ts $x_i, y_i, z_{ij}\ge 0$, for all \ts $1\le i, j \le n$.  Then:
{\small
$$
\per
\begin{pmatrix} x_{1} & y_{1} & z_{13} & \ldots & z_{1n} \\
x_{2} & y_{2} & z_{23} & \ldots & z_{2n} \\
\vdots &\vdots & \vdots & \ddots & \vdots \\
x_{n} & y_{n} & z_{n3} & \ldots & z_{nn}
\end{pmatrix}^2
\ge \
\per
\begin{pmatrix} x_{1} & x_{1} & z_{13} & \ldots & z_{1n} \\
x_{2} & x_{2} & z_{23} & \ldots & z_{2n} \\
\vdots &\vdots & \vdots & \ddots & \vdots \\
x_{n} & x_{n} & z_{n3} & \ldots & z_{nn}
\end{pmatrix} \,
\per
\begin{pmatrix} y_{1} & y_{1} & z_{13} & \ldots & z_{1n} \\
y_{2} & y_{2} & z_{23} & \ldots & z_{2n} \\
\vdots &\vdots & \vdots & \ddots & \vdots \\
y_{n} & y_{n} & z_{n3} & \ldots & z_{nn}
\end{pmatrix}.
$$
}
\end{theorem}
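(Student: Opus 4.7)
The plan is to view both sides of the inequality as values of a single symmetric bilinear form and then reduce the statement to a reverse Cauchy--Schwarz / hyperbolicity property. Fix the matrix $Z=(z_{ij})_{i\in[n],\, 3\leq j\leq n}$ and define $B:\IR^n\times\IR^n\to\IR$ by
\[
B(\vv x,\vv y) \ := \ \per\bigl(\vv x \mid \vv y \mid z_{\cdot 3}\mid \cdots \mid z_{\cdot n}\bigr).
\]
Since the permanent is multilinear in columns and invariant under column permutations, $B$ is a symmetric bilinear form in $(\vv x,\vv y)$. The theorem now reads
\[
B(\vv x,\vv y)^2 \ \geq \ B(\vv x,\vv x)\cdot B(\vv y,\vv y),
\]
which is the \emph{reverse} of Cauchy--Schwarz. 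Equivalently, the quadratic form $Q(\vv x):=B(\vv x,\vv x)$ must be \emph{hyperbolic}, i.e.\ have at most one positive eigenvalue on the span of vectors with $Q(\vv x)\neq 0$.

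I would proceed by induction on $n$. The base case $n=2$ is the complete-square identity
\[
B(\vv x,\vv y)^2 - B(\vv x,\vv x)\ts B(\vv y,\vv y) \, = \, (x_1y_2+x_2y_1)^2 - 4x_1x_2y_1y_2 \, = \, (x_1y_2-x_2y_1)^2 \, \geq \, 0,
\]
which is essentially the Cauchy--Schwarz input of Corollary~\ref{c:AMGM}. For the inductive step I would expand the permanent along the last column, writing $B(\vv x,\vv y)=\sum_{i=1}^n z_{in}\,B_i(\vv x,\vv y)$, where $B_i$ is the $(n-1)$-variate analog obtained by deleting row~$i$ and the last column. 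The inductive hypothesis then supplies $B_i(\vv x,\vv y)^2\geq B_i(\vv x,\vv x)\ts B_i(\vv y,\vv y)$ for every $i$, together with the corresponding hyperbolicity statement for each $B_i$.

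The main obstacle is combining these pointwise inequalities into a global one: summing them with the ordinary Cauchy--Schwarz inequality on the vectors $\bigl(\sqrt{z_{in}B_i(\vv x,\vv x)}\bigr)_i$ and $\bigl(\sqrt{z_{in}B_i(\vv y,\vv y)}\bigr)_i$ produces a bound in the \emph{wrong} direction, which is the classical subtlety of Alexandrov--Fenchel. Two standard routes around this present themselves. Alexandrov's original argument continuously deforms the fixed columns $z_{\cdot 3},\ldots,z_{\cdot n}$ to a diagonal reference configuration and tracks the signature of the Hessian of $Q$ under the deformation, using connectedness of the positive orthant to preserve the single-positive-eigenvalue property. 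More recently, the Lorentzian-polynomial framework of Br\"and\'en--Huh, developed further in Shenfeld--van Handel~\cite{SvH19}, shows directly that $Q(\vv x)$ is a Lorentzian polynomial in $\vv x$ whenever the $z_{ij}$ are nonnegative, whence the reverse Cauchy--Schwarz is automatic. I would invoke Shenfeld--van Handel here rather than reprove the eigenvalue analysis, since that analysis is orthogonal to the complexity-theoretic thrust of the paper.
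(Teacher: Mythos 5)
The paper does not actually prove Theorem~\ref{t:AF}: it is stated as a classical result and sourced to the literature (Alexandrov, Fenchel, \cite{BZ88}, \cite{Sch14}, and the recent proof in \cite{SvH19}), so there is no in-paper argument to compare yours against. Judged on its own terms, your outline is the standard and correct way to frame the result. The bilinear form $B(\vv x,\vv y)=\per(\vv x\mid \vv y\mid z_{\cdot 3}\mid\cdots\mid z_{\cdot n})$ is indeed symmetric and bilinear by multilinearity and column-symmetry of the permanent, the theorem is exactly the reverse Cauchy--Schwarz inequality for $B$ on the nonnegative orthant, the $n=2$ base case is the complete square $(x_1y_2-x_2y_1)^2\ge 0$ (precisely the identity exploited in Proposition~\ref{p:AF-not}), and you correctly diagnose why the naive inductive summation fails --- this is the genuine difficulty of Alexandrov--Fenchel.

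The one caveat is that, as written, your argument is a reduction rather than a proof: the crux --- that the quadratic form $Q(\vv x)=B(\vv x,\vv x)$ has at most one positive eigenvalue, so that the Lorentzian reverse Cauchy--Schwarz applies --- is delegated to Alexandrov's deformation argument or to the Lorentzian-polynomial framework of \cite{SvH19}, neither of which you carry out. Two small points would also need attention in a complete write-up: the reverse Cauchy--Schwarz for a Lorentzian form requires $Q(\vv x)>0$ for one of the two vectors (obtained here by decomposing $\vv y=t\vv x+\vv w$ with $B(\vv x,\vv w)=0$ and using negative semidefiniteness on the orthogonal complement), so the degenerate case $Q(\vv x)=0$ needs a separate continuity or perturbation argument; and your phrasing of ``hyperbolic'' should be pinned down as a statement about the signature of the symmetric matrix representing $B$. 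Since the paper itself only cites the theorem, deferring to \cite{SvH19} is a defensible choice, but be aware that you have sketched the architecture of the proof, not supplied it.
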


\smallskip

\begin{proposition}\label{p:AF-not}
Denote by \ts $AF_n$ \ts the polynomial in $k=n^2$ variables given in
Theorem~\ref{t:AF} by subtracting the right-hand side from the left-hand side of the inequality.  For $n\ge 2$, we have \. $AF_n \not\geqslant_\# 0$ \.
unless \. $\PH = \Sigma_2^\textup{\textsu{p}}$\ts.
\end{proposition}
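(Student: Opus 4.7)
The plan is to reduce the claim to Corollary~\ref{c:AMGM} exactly as in Proposition~\ref{p:Cauchy}, by extracting a complete square from the Alexandrov--Fenchel inequality. The base case $n=2$ already yields such a square; for larger $n$, I would specialize the auxiliary $z_{ij}$ and the upper rows to collapse the inequality back to the $n=2$ form.

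For the base case $n=2$ there are no $z_{ij}$ columns, and the three permanents in Theorem~\ref{t:AF} are $x_1 y_2 + x_2 y_1$, \. $2\ts x_1 x_2$, \. and \. $2\ts y_1 y_2$. Expanding,
$$
AF_2 \ = \ (x_1 y_2 + x_2 y_1)^2 \. - \. 4\ts x_1 x_2 y_1 y_2 \ = \ (x_1 y_2 \. - \. x_2 y_1)^2,
$$
which is, after renaming variables, the polynomial $C_2$ of Proposition~\ref{p:Cauchy}. I would then substitute the constant $\sharpP$-function $1$ for $y_1$ and $y_2$ and set $x_1 := N(\phi)$, $x_2 := N(\psi)$ for Boolean formulas $\phi,\psi$, obtaining $\bigl(N(\phi) - N(\psi)\bigr)^2$. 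Corollary~\ref{c:AMGM} (whose proof invokes Proposition~\ref{p:GapP-squared-PH}) then gives that this function does not lie in $\sharpP$ unless $\PH = \Sigma_2^\textup{\textsu{p}}$.

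For $n \geq 3$, I would set
$$
x_i \. := \. 0,\ \ y_i \. := \. 0\ \ \text{for}\ i\geq 3; \quad z_{ij} \. := \. 0\ \ \text{for}\ i\in\{1,2\},\ j\geq 3; \quad z_{ij} \. := \. \delta_{ij}\ \ \text{for}\ i,j\geq 3,
$$
all of which are constant $\sharpP$-functions. Each of the three $n\times n$ matrices in Theorem~\ref{t:AF} then becomes block-diagonal, with the $2\times 2$ block of the $n=2$ case in the top-left corner and the identity $I_{n-2}$ in the bottom-right corner. Since $\per$ is multiplicative on block-diagonal matrices (only permutations preserving the block structure contribute nonzero terms to the expansion $\per(M) = \sum_\sigma \prod_i M_{i,\sigma(i)}$), each such $n\times n$ permanent equals the corresponding $2\times 2$ permanent times $\per(I_{n-2}) = 1$. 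Hence $AF_n$ evaluated at this substitution coincides with $AF_2(x_1,x_2,y_1,y_2)$, and the $n=2$ argument closes the proof.

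The main obstacle is essentially absent: once one observes that Alexandrov--Fenchel for bricks specializes to Cauchy--Schwarz at $n=2$, the computational-complexity content is inherited from Corollary~\ref{c:AMGM}. The only mildly delicate points are verifying that the substitution uses valid $\sharpP$-functions (constants and $N(\phi)$) and confirming the block-diagonal permanent factorization, both of which are routine.
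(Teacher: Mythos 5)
Your proposal is correct and follows essentially the same route as the paper: specialize the $y$-variables to $1$ so that $AF_2$ collapses to the complete square $(x_1-x_2)^2$ and then invoke Corollary~\ref{c:AMGM}. Your block-diagonal specialization for $n\ge 3$ (padding with an identity block so the permanents factor) is a valid way to reduce to the $n=2$ case, and is in fact spelled out more explicitly than in the paper, whose proof only treats $n=2$.
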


\begin{proof}  Take \ts $n=2$, $y_1=y_2=1$.  Then the \ts $AF_2$ \ts inequality
becomes:
$$
\per\begin{pmatrix}
      x_1 & 1 \\
      x_2 & 1
    \end{pmatrix}^2 \,  = \,  (x_1+x_2)^2 \ \geqslant \ \per\begin{pmatrix}
      x_1 & x_1 \\
      x_2 & x_2
    \end{pmatrix} \per\begin{pmatrix}
      1 & 1 \\
      1 & 1
    \end{pmatrix} \, = \, 4\ts x_1x_2\ts.
$$
This is equivalent to \ts $(x_1-x_2)^2\geqslant 0$, and the result follows from Corollary~\ref{c:AMGM}.
\end{proof}

\smallskip

\subsection{Hadamard inequality}
Recall the \defn{Hadamard inequality}~\eqref{eq:Had-3x3} discussed in the introduction.
This inequality was proved by Hadamard~(1893) and is crucial is the study of
\defng{positive definite matrices}, see e.g.~\cite[$\S$7.8]{HJ13}.
Denote by \ts $H_d=H_d\bigl(x_{11},x_{12},\ldots,x_{dd}\bigr)$ \ts the
nonnegative polynomial in \ts $k=d^2$ \ts variables given by~\eqref{eq:Had-3x3}.
The following result is deduced from our Theorem~\ref{thm:relatclosure}.

\begin{proposition}[Hadamard inequality] \label{p:Had}
For \. $d\ge 3$, if  \.
$H_d \geqslant_\# 0$, then there is an oracle \ts $A$ \ts such that \. $H_3(\vv\sharpP^A)\not\subseteq\sharpP^A$.
\end{proposition}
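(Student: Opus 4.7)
The plan is to apply Theorem~\ref{thm:relatclosure}, which characterizes the relativizing polynomial closure properties of $\sharpP$ as exactly the binomial-good polynomials. Hence an oracle $A$ witnessing $H_3(\vv\sharpP^A)\not\subseteq\sharpP^A$ exists if and only if $H_3$ is binomial-bad. For $d>3$, substituting the bottom-right $(d-3)\times(d-3)$ block of the matrix by the identity and the remaining off-block entries by $0$ recovers exactly $H_3$, so once $H_3$ is shown binomial-bad the same holds for every $H_d$ with $d\ge 3$, and the oracle conclusion transfers.

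To prove $H_3$ is binomial-bad I will restrict it to a two-variable curve rather than fight with the nine-variate binomial basis directly. Put $a_{11}=a_{22}=a_{33}=x$ and $a_{ij}=y$ for all $i\neq j$. Expanding the determinant of the resulting symmetric matrix gives
\[
\det \ = \ x^3-3xy^2+2y^3, \qquad \prod_{i=1}^3\sum_{j=1}^3 a_{ij}^2 \ = \ (x^2+2y^2)^3,
\]
so subtracting yields
\[
P(x,y) \ := \ (x^2+2y^2)^3-(x^3-3xy^2+2y^3)^2 \ = \ 12x^4y^2-4x^3y^3+3x^2y^4+12xy^5+4y^6.
\]
Among the bivariate binomial basis elements $\binom{x}{a}\binom{y}{b}$, the monomial $x^3y^3$ occurs (with nonzero coefficient) only in $\binom{x}{3}\binom{y}{3}$, whose leading coefficient is $1/36$: basis elements of total degree $>6$ contribute nothing to $P$, and those of total degree $\le 6$ with $(a,b)\neq(3,3)$ contain no $x^3y^3$ monomial. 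Consequently the coefficient of $\binom{x}{3}\binom{y}{3}$ in the binomial expansion of $P$ equals $-4\cdot 36=-144<0$, so $P$ is binomial-bad.

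It remains to lift binomial-badness from the restriction $P$ to $H_3$ itself. If $H_3$ were binomial-good, Theorem~\ref{thm:relatclosure} would make it a relativizing closure property of $\sharpP$, so for every oracle $A$ and every pair $f,g\in\sharpP^A$ the composition $P(f,g)=H_3(f,g,g,g,f,g,g,g,f)$ would lie in $\sharpP^A$. This would exhibit $P$ as a relativizing closure property in two variables, and Theorem~\ref{thm:relatclosure} applied to $P$ would force $P$ to be binomial-good, contradicting the explicit coefficient $-144$. Therefore $H_3$ is binomial-bad and Theorem~\ref{thm:relatclosure} hands us the desired oracle $A$ with $H_3(\vv\sharpP^A)\not\subseteq\sharpP^A$. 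The only nontrivial computation is the $(x^3-3xy^2+2y^3)^2$ expansion above; I do not foresee any deeper obstacle, since the negativity of the $x^3y^3$ coefficient is robust: $(x^2+2y^2)^3$ contains no $x^3y^3$ term at all, while the cross term $2\cdot x^3\cdot 2y^3=4x^3y^3$ in the squared determinant is subtracted.
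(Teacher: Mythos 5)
Your proof is correct, and it follows the same overall strategy as the paper's: restrict $H_3$ along a substitution built from binomial-good polynomials, exhibit a negative coefficient in the binomial-basis expansion of the restriction, and then invoke Theorem~\ref{thm:relatclosure} in both directions to conclude that $H_3$ cannot be a relativizing closure property of $\sharpP$. The difference is only in the witness. The paper plugs in a univariate matrix with entries $x$, $\binom{x}{3}$, $0$ and $1$, computes the full univariate binomial expansion of the result, and reads off the coefficient $-3$ of $\binom{x}{3}$; you instead use the symmetric bivariate substitution $a_{ii}=x$, $a_{ij}=y$ and isolate the coefficient of $\binom{x}{3}\binom{y}{3}$ via the leading-monomial observation that $x^3y^3$ occurs in no other basis element of total degree at most $6$. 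Your expansion of $P$ and the resulting coefficient $-144$ check out, and this way of detecting binomial-badness is arguably cleaner, since it avoids computing the whole expansion. The one thing the paper's univariate reduction buys that yours does not: the remark after the paper's proof upgrades the oracle separation to non-membership of $H_3(\vv\sharpP)$ in $\sharpP$ assuming only the \emph{univariate} binomial basis conjecture (Conjecture~\ref{conj:binomialbasisuniv}), whereas your bivariate restriction would require the stronger multivariate Conjecture~\ref{conj:binomialbasis} for that conditional strengthening. Your padding argument for $d>3$ is also fine (and in fact not strictly needed, since the stated conclusion concerns only $H_3$).
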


\begin{proof}
Observe that
\begin{eqnarray*}
H_3\begin{pmatrix}
x & \binom{x}{3} & 0 \\
0 & 1 & 1 \\
1 & 0 & 1
\end{pmatrix} \ &=& \ \tfrac{1}{12} \ts x^6 \. - \. \tfrac{1}{2} \ts x^5 \. + \. \tfrac{3}{4} \ts x^4 \. + \. \tfrac{8}{3} \ts x^2 \\
\ &=& \ 
3 \binom{x}{1}
\. + \. 6 \binom{x}{2}
\. - \. 3 \binom{x}{3}
\. + \. 28 \binom{x}{4}
\. + \. 90 \binom{x}{5}
\. + \. 60 \binom{x}{6}\ts.
\end{eqnarray*}
Since the coefficient of \ts $\binom{x}{3}$ \ts is negative, by Theorem~\ref{thm:relatclosure}
this is not a univariate relativizing closure property of \. $\sharpP$.
Therefore, since $0$, $1$, $x$, and $\binom{x}{3}$ are univariate relativizing closure properties of $\sharpP$, it follows that $H_3$ is not a relativizing closure property of $\sharpP$.
\end{proof}

Note that if the (univariate) binomial basis conjecture (Conjecture~\ref{conj:binomialbasisuniv}) is true,
then this is not a closure property of \. $\sharpP$, and we could conclude that \. $H_3(\vv\sharpP)\not\subseteq\sharpP$.

\begin{remark}{\rm
One might think that plugging in arbitrary binomial coefficients in the Hadamard-matrix is a valid strategy, but this does not work as the following random choice illustrates:
\[
H_3\begin{pmatrix}
\binom{x}{4} & \binom{x}{7} & \binom{x}{12} \\[1ex]
\binom{x}{6} & \binom{x}{3} & \binom{x}{2} \\[1ex]
\binom{x}{8} & \binom{x}{11} & \binom{x}{5}
\end{pmatrix}
\]
is indeed a relativizing closure property of $\sharpP$.
}\end{remark}

\subsection{Fermat's little theorem}
As we mentioned in the introduction, \defn{Fermat's little theorem} \ts
states that \. $p\.| \. a^{p-1}-1$ \.  for all integers~$a$, $p \nmid a$,  and prime~$p$.
Fermat stated this result in~1640 without proof, and the
first published proof was given by Euler in~1736. According to Dickson,
``this is one of the fundamental theorems of the theory of numbers''
\cite[p.~V]{Dic52}.  Note, see e.g.\ in~\cite{CC16}, that Fermat's little
theorem can be rephrased to say that the polynomial \. $\tfrac 1 p (x^p-x)$ \.
is integer valued.

\begin{proposition}[Fermat's little theorem is in $\sharpP$]
\label{pro:fermat}
If \. $f \in \sharpP$ \. and $p$ is prime, then
\[
\tfrac 1 p (f^p - f) \ \in \sharpP.
\]
\end{proposition}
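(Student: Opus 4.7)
The plan is to realize $\tfrac{1}{p}(f^p - f)$ as counting orbits of a $\IZ/p\IZ$-action on tuples of accepting paths, and then implement the orbit count by selecting lex-smallest representatives, in the spirit of the discussion in $\S$\ref{ss:intro-basics-SP}\eqref{motivational:threecoloring} and \eqref{motivational:fermat}.

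First I would fix a nondeterministic polynomial-time Turing machine $M$ with $f(w) = \#\acc_M(w)$, and pad so that all accepting paths have the same polynomial length (this is standard). Then $f(w)^p$ counts $p$-tuples $(a_1,\ldots,a_p)$ of accepting paths of $M$ on~$w$, and $f(w)$ counts the constant (diagonal) tuples $(a,a,\ldots,a)$. Hence $f(w)^p - f(w)$ counts the non-constant $p$-tuples.

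Next, the cyclic group $\IZ/p\IZ$ acts on the set of $p$-tuples by rotation $\sigma\colon (a_1,\ldots,a_p) \mapsto (a_2,\ldots,a_p,a_1)$. By the orbit-stabilizer theorem, each orbit has size dividing~$p$; since $p$ is prime, every orbit has size $1$ or $p$, and the fixed points are exactly the constant tuples. Therefore every non-constant tuple lies in an orbit of size exactly~$p$, so the number of such orbits equals $\tfrac{1}{p}(f(w)^p - f(w))$.

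To produce this count in $\sharpP$ I would design a nondeterministic polynomial-time machine $N$ whose accepting paths are in bijection with these orbits, using lex-smallest representatives as canonical choices. On input~$w$, the machine $N$ nondeterministically guesses a $p$-tuple of bit-strings $(a_1,\ldots,a_p)$ of the appropriate padded length, and then verifies three conditions in polynomial time: (i) each $a_i$ is an accepting computation of $M$ on~$w$; (ii) the tuple is non-constant, i.e.\ some $a_i \neq a_{i+1}$; and (iii) for every $1 \le j \le p-1$, the rotation $\sigma^j(a_1,\ldots,a_p)$ is strictly lex-greater than $(a_1,\ldots,a_p)$. Exactly one tuple per $\IZ/p\IZ$-orbit (the lex-smallest) passes all three checks, so $\#\acc_N(w) = \tfrac{1}{p}(f(w)^p - f(w))$, as required.

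The only real obstacle is administrative rather than conceptual: step~(iii) requires $p-1$ lex comparisons of $p$-tuples of polynomial-length strings, so one needs $p$ to be polynomially bounded in~$|w|$. This is fine whenever $p$ is treated as a fixed constant (the Frobenius-map viewpoint of $\S$\ref{ss:intro-basics-SP}\eqref{motivational:fermat}), or more generally whenever $p$ is given in unary or is polylogarithmic in the input; I would make this assumption explicit in the proof. Everything else is routine, and the construction directly exhibits Peterson's combinatorial interpretation of $\tfrac{1}{p}(a^p - a)$ as the number of non-constant necklaces of length~$p$ over an alphabet of size~$a$.
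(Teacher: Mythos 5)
Your proposal is correct and follows essentially the same route as the paper's proof, which is Peterson's argument: partition the $p$-tuples (the paper uses sequences of integers in $[f]$, you use tuples of accepting paths, which is the same thing made concrete) into $\IZ/p\IZ$-orbits, observe that primality forces non-constant orbits to have size exactly $p$, and count orbits by accepting only the lex-smallest representative, which is feasible since $p$ is fixed.
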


We include a variation on Peterson's original proof~\cite{Pet72} for completeness.

\begin{proof}
Consider sequences \ts $(a_1,\ldots,a_p)$ \ts of integers \. $1\le a_i \le f$ \.
and partition them into orbits under the natural cyclic action of \ts $\zz/p\zz$.
Since $p$ is prime, these orbits have either $1$ or $p$ elements.  There
are exactly $p$ orbits with one elements, where $a_1=\ldots=a_p$.  The remaining
orbits of size~$p$ have a total of \ts $f^p-f$ \ts elements.  Since $p$ is fixed,
the lex-smallest orbit representative can be found in polytime.
\end{proof}

\smallskip

This proof can be rephrased to show that the polynomial \. $\tfrac 1 p (x^p - x)$ \. is binomial-good.
For example, for $p=5$, we have:
$$
\frac{1}{5}\ts \bigl(x^5-x\bigr) \ = \ 24\binom{x}{5} \. + \. 48\binom{x}{4}
\. + \. 30\binom{x}{3} \. + \. 6\binom{x}{2}.
$$

\begin{remark}{\rm
Peterson also discovered a similar proof of \defng{Wilson's theorem}~\cite{Pet72},
see also~\cite[p.~50]{Car14}.   Stanley's elegant proof of the \defng{Lucas congruences}
is also combinatorial and in the same spirit~\cite[Exc.~1.15(c)]{Sta12}.
Let us also mention \defng{Kummer's congruences} for the
Catalan numbers, which can be proved via group action on binary trees,
see e.g.~\cite{DS06,KPP94}.  We refer to~\cite{Ges84} for a
survey of combinatorial congruences,
and to~\cite{RS80,Sag85} for the general group action approach.
See also~\cite{AZ17} for conjectures on whether there are combinatorial proofs
of various binomial congruences.

Finally, note that some congruences have combinatorial proofs for highly
nontrivial reasons.  Recall the \defng{Ramanujan's congruence} \. $5 \ts |\ts p(5n-1)$ \.
for the number of integer partitions (see e.g.~\cite[$\S$6.4]{Har40}).  This
congruence was famously interpreted by Dyson~\cite{Dys44}, by dividing partitions of \ts
$(5n-1)$ \ts according to its \defn{rank} (first row minus first column) modulo~$5$.
It remains open to find a direct bijective
proof of this equal division, see~\cite[$\S$2.5.6]{Pak06}. We refer to
\cite{AG88,GKS90} for some remarkable generalizations of this approach.
}\end{remark}

\smallskip

\subsection{Ahlswede--Daykin inequality}
\label{sec:ahlswededaykinkleitman}
Let \ts $\cA\subseteq 2^{[n]}$ \ts and \ts $\zeta: 2^{[n]}\to \rr$.  Denote
$$
\zeta(\cA) \, := \, \sum_{A\in \cA} \. \zeta(A).
$$
For \ts $\cA,\cB\subseteq 2^{[n]}$,  denote
$$
\cA \. \bcup \. \cB \, := \, \bigl\{A\cup B \, : \, A\in \cA, \. B \in \cB\bigr\}, \qquad
\cA \. \bcap \. \cB \, := \, \bigl\{A\cap B \, : \, A\in \cA, \. B \in \cB\bigr\}.
$$

\smallskip

\begin{theorem}[{\defn{Ahlswede--Daykin inequality}~\cite{AS16}}]\label{t:AD}
Let \ts $\al,\be,\ga,\de: 2^{[n]} \to \rr_+$ \ts
be such that
\begin{equation}\label{eq:AD-elts}
\al(A) \. \be(B) \, \le \, \ga(A\cap B) \. \de(A \cup B) \., \qquad \forall \. A, \ts B\in \ts 2^{[n]}.
\end{equation}
Then
\begin{equation}\label{eq:AD-sets}
\al(\cA) \. \be(\cB) \, \le \, \ga\bigl(\cA\.\bcap \.\cB\bigr) \, \de\bigl(\cA \.\bcup\. \cB\bigr) \.,
\qquad \forall \. \cA, \. \cB \.\subseteq \. 2^{[n]}.
\end{equation}
\end{theorem}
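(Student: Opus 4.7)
The plan is to prove Theorem~\ref{t:AD} by induction on $n$, using the classical ``projection onto the last coordinate'' reduction.

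For the base case $n=1$, the Boolean lattice $2^{[1]}$ contains only $\emp$ and $\{1\}$, so each of $\al,\be,\ga,\de$ is specified by two nonnegative numbers. Hypothesis~\eqref{eq:AD-elts} gives four numerical inequalities, and the conclusion~\eqref{eq:AD-sets} must be checked across $4\times 4=16$ choices of $(\cA,\cB)$. All cases are immediate (either a side is zero or one directly uses~\eqref{eq:AD-elts}) except $\cA=\cB=\{\emp,\{1\}\}$, which reduces to the scalar inequality
\[
\bigl(\al(\emp)+\al(\{1\})\bigr)\bigl(\be(\emp)+\be(\{1\})\bigr) \, \le \, \bigl(\ga(\emp)+\ga(\{1\})\bigr)\bigl(\de(\emp)+\de(\{1\})\bigr)\ts.
\]
Expanding, three of the four product terms on the left are bounded by matching terms on the right via specific instances of~\eqref{eq:AD-elts}; the remaining cross-term is handled by a short manipulation that crucially uses the nonnegativity of all values.

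For the inductive step, suppose the theorem holds for $[n-1]$ and take $\cA,\cB\subseteq 2^{[n]}$ with functions satisfying~\eqref{eq:AD-elts}. For each family $\cF\subseteq 2^{[n]}$ and each function $f:2^{[n]}\to\rr_+$, define a projected weight $\wh f_{\cF}:2^{[n-1]}\to\rr_+$ by
\[
\wh f_{\cF}(S) \, := \, f(S)\cdot [S\in\cF] \, + \, f(S\cup\{n\})\cdot [S\cup\{n\}\in\cF]\ts.
\]
Observe that $f(\cF)=\sum_{S\subseteq[n-1]}\wh f_{\cF}(S)$, so~\eqref{eq:AD-sets} becomes an AD-type inequality on $[n-1]$ for the four functions $\wh\al_{\cA},\wh\be_{\cB},\wh\ga_{\cA\bcap\cB},\wh\de_{\cA\bcup\cB}$ applied to the full families $\cA'=\cB'=2^{[n-1]}$. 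By the inductive hypothesis, it suffices to verify the pointwise condition
\[
\wh\al_{\cA}(S)\cdot\wh\be_{\cB}(T) \, \le \, \wh\ga_{\cA\bcap\cB}(S\cap T)\cdot\wh\de_{\cA\bcup\cB}(S\cup T)
\]
for all $S,T\subseteq[n-1]$.

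For any fixed pair $(S,T)$, this last inequality is itself an instance of the $n=1$ case, applied on the one-element ground set $\{n\}$ to the restrictions of $\al,\be,\ga,\de$ to the two-element sublattices $\{S,S\cup\{n\}\}$ and $\{T,T\cup\{n\}\}$ (and similarly for the intersection/union sides), scaled by the relevant indicators. The main obstacle is the bookkeeping here: one must check that the four pointwise assumptions feeding into the base case genuinely follow from~\eqref{eq:AD-elts}, which requires tracking, for each choice of whether or not $n$ belongs to the intersecting/unioning sets, the compatibility between membership in $\cA\bcap\cB$ and $\cA\bcup\cB$ at $U\subseteq[n]$ and the existence of $A\in\cA,B\in\cB$ with $A\cap B=U$ or $A\cup B=U$. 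Once this splitting is carried out, combining~\eqref{eq:AD-elts} for the four resulting sub-cases with the base case and then applying the inductive hypothesis closes the argument.
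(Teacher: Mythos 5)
The paper does not prove Theorem~\ref{t:AD}; it is quoted as a classical result with a pointer to \cite{AS16}, so there is no in-house argument to compare against. Your proposal is, in outline, exactly the standard Alon--Spencer proof of the four functions theorem: induction on $n$, projection of the four weighted families onto $2^{[n-1]}$, reduction of the pointwise hypothesis for the projected weights to the $n=1$ case, and the observation that $A\in\cA$, $B\in\cB$ force $A\cap B\in\cA\bcap\cB$ and $A\cup B\in\cA\bcup\cB$ (which is what makes the indicator bookkeeping work). That skeleton is correct.

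The one step that would fail as literally written is the accounting in the base case. Writing $\al_0=\al(\emp)$, $\al_1=\al(\{1\})$, etc., hypothesis~\eqref{eq:AD-elts} gives $\al_0\be_0\le\ga_0\de_0$, $\al_1\be_1\le\ga_1\de_1$, and $\al_0\be_1,\,\al_1\be_0\le\ga_0\de_1$; it gives \emph{nothing} involving $\ga_1\de_0$. So you cannot bound three of the four terms of $(\al_0+\al_1)(\be_0+\be_1)$ by their matching counterparts on the right and then dispose of a single leftover cross-term: the bound $\al_1\be_0\le\ga_1\de_0$ is false in general (take $\al_1=\be_0=\ga_0=\de_1=1$ and all other values $0$; all four hypotheses hold, yet $\al_1\be_0=1>0=\ga_1\de_0$). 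The two cross-terms must be handled \emph{jointly} against $\ga_0\de_1+\ga_1\de_0$: assuming $\ga_0\de_1>0$ (the degenerate case kills both cross-terms on the left and is immediate), one has
\[
\al_0\be_1+\al_1\be_0 \ \le\ \ga_0\de_1+\frac{(\al_0\be_1)(\al_1\be_0)}{\ga_0\de_1}
\ =\ \ga_0\de_1+\frac{(\al_0\be_0)(\al_1\be_1)}{\ga_0\de_1}\ \le\ \ga_0\de_1+\ga_1\de_0\ts,
\]
where the first inequality is $x+ab/x\ge a+b$ for $x\ge\max(a,b)>0$ applied with $x=\ga_0\de_1$, and the last uses $\al_0\be_0\le\ga_0\de_0$ and $\al_1\be_1\le\ga_1\de_1$. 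With this lemma substituted for your ``three matched terms plus one leftover'' description, the rest of the argument goes through.
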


\smallskip

This inequality is classical, and is an advanced generalization of the
Kleitman inequality, see below.  In its most general form it is usually
stated for general lattices, not just the Boolean lattice.  Among its many
applications, let us single out the \defng{FKG inequality} \cite[$\S$6.2]{AS16}
and the \defng{XYZ inequality} \cite[$\S$6.4]{AS16}, see also~$\S$\ref{ss:open-LE}.

\smallskip

Let \ts $n=1$, and let \ts $\cA=\cB=\{0,1\}$.  Operations \ts $\cap$ \ts and \ts
$\cup$ \ts are replaced with \ts $\min$ \ts and \ts $\max$ \ts
in this case.
Let \ts $\al_i, \be_i, \ga_i,\de_i$, \ts $i\in \{0,1\}$,
\ts be $\SP$ functions satisfying~\eqref{eq:AD-elts}.
This is guaranteed by taking functions \ts $h_1,h_2,h_3,h_4 \in \SP$, such that
$$
\al_0 \be_0 \ts +\ts h_1 \. = \. \ga_0\de_0, \quad \al_0\be_1\ts +\ts h_2 \. = \. \ga_0\de_1,
\quad \al_1\be_0\ts +\ts h_3 \. = \. \ga_0\de_1, \quad \al_1\be_1\ts +\ts h_4 \. = \. \ga_1\de_1
$$
Let \ts $AD_n$ \ts be the function defined by~\eqref{eq:AD-sets}.
Then we have:
$$
AD_1 \ = \ (\ga_0+\ga_1)(\de_0+\de_1) \. - \. (\al_0+\al_1)(\be_0+\be_1).
$$
Now Proposition~\ref{pro:ahlswededaykin} gives the oracle separation for this
inequality.

\smallskip

\subsection{Karamata inequality}
\label{sec:karamata}
Let \ts $\bbx=(x_1,\ldots,x_n)$, $\bby=(y_1,\ldots,y_n)\in \rr^n$ \ts be nonincreasing 
sequences of real numbers.  We say that \ts $\bbx$ \ts \defn{majorizes}~$\ts\bby$,
write \ts $\bbx \tre \bby$, if
$$\aligned
& x_1 \. + \. \ldots \. + \. x_i \,\ge \, y_1 \. + \. \ldots \. + \. y_i \quad \text{for all} \ \. 1\le i < n\ts, \ \ \text{and}\\
& x_1 \. + \. \ldots \. + \. x_n \,= \, y_1 \. + \. \ldots \. + \. y_n\..
\endaligned
$$
In combinatorial context, this is also called the \defng{dominance order}, and
appears throughout the area, see e.g.\ \cite{Bru06,Mac95,Sta12}.
See also \cite{Bar07} for a recent connection to the problem of
counting \defng{contingency tables}.

\smallskip

\begin{theorem}[{\em\defn{Karamata inequality}}{}]
Let \ts $\bbx,\bby\in \rr^n$, such that $\bbx \tre \bby$.  Then,
for every convex function \ts $F: \rr^n \to \rr$, we have \ts $f(x) \ge f(y)$.
\end{theorem}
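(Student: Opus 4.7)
The plan is to give the standard proof via summation by parts (Abel summation). First I would clarify the statement: the intended reading, consistent with the applications in \S\ref{sec:karamata} where $\gamma$ is univariate and summed coordinatewise, is that $F \colon \rr \to \rr$ is a univariate convex function and $f(\bbx) := \sum_{i=1}^n F(x_i)$, and under majorization $\bbx \tre \bby$ we wish to conclude $f(\bbx) \ge f(\bby)$.

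Step one is a supporting-line bound at each $y_i$. For each $i$, pick a subgradient $c_i \in \partial F(y_i)$. Convexity gives
\[
F(x_i) \. - \. F(y_i) \ \ge \ c_i \ts (x_i - y_i) \qquad \text{for all } i,
\]
and summing yields $f(\bbx) - f(\bby) \ge \sum_{i=1}^n c_i (x_i - y_i)$. Because $y_1 \ge y_2 \ge \cdots \ge y_n$ and the subdifferential of a convex function is monotone nondecreasing, the $c_i$ can be selected so that $c_1 \ge c_2 \ge \cdots \ge c_n$ (at non-differentiable points of $F$, pick for example the right derivative consistently).

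Step two applies Abel's summation by parts. Set $d_i := x_i - y_i$ and $D_k := d_1 + \cdots + d_k$. The hypothesis $\bbx \tre \bby$ translates exactly to $D_k \ge 0$ for $1 \le k \le n-1$ together with $D_n = 0$. Hence
\[
\sum_{i=1}^n c_i\ts d_i \ = \ c_n D_n \, + \, \sum_{k=1}^{n-1} (c_k - c_{k+1})\ts D_k \ = \ \sum_{k=1}^{n-1} (c_k - c_{k+1})\ts D_k \ \ge \ 0,
\]
since both $c_k - c_{k+1} \ge 0$ and $D_k \ge 0$. Combined with step one this yields $f(\bbx) \ge f(\bby)$.

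The argument has no real obstacle; the only mild subtlety is to pick subgradients at the $y_i$ that are automatically monotone in $i$, and this is routine for convex functions on $\rr$. An alternative route avoids subgradients entirely: by the Hardy--Littlewood--P\'olya--Rado theorem, $\bbx \tre \bby$ is equivalent to $\bby$ lying in the convex hull of the permutations of $\bbx$; once this is established, the inequality follows immediately by applying Jensen's inequality coordinatewise to the symmetric sum $\sum_i F(\cdot)$.
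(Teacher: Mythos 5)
Your proof is correct. Note that the paper does not actually prove this theorem: it is stated as a classical result with references to Hardy--Littlewood--P\'olya and Beckenbach--Bellman, so there is no in-paper argument to compare against. Your reading of the (slightly garbled) statement is the right one --- $F$ univariate convex, $f(\bbx)=\sum_i F(x_i)$, and the hypothesis that $\bbx,\bby$ are nonincreasing is part of the paper's definition of majorization, which is exactly what you need for the monotone choice of subgradients $c_1\ge\cdots\ge c_n$. The Abel-summation step is the standard proof and is carried out correctly: the partial sums $D_k$ are nonnegative for $k<n$ and vanish at $k=n$ by majorization, and the differences $c_k-c_{k+1}$ are nonnegative by monotonicity of the subdifferential, so the bilinear form is nonnegative. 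The alternative route you sketch (Birkhoff/HLP: $\bby$ lies in the convex hull of permutations of $\bbx$, then Jensen) is also valid and is the other standard proof in the literature.
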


\smallskip

This result is classical, see e.g.~\cite[$\S$3.17]{HLP52}
and~\cite[$\S$28, $\S$30]{BB61}.  Analytically, the inequality can be used
to derive \defng{Jensen's inequality}, which in turn implies the
AM-GM inequality.  See \cite{BP21,PPS20} for some recent
applications of the Karamata inequality to combinatorial
problems on linear extensions and Young tableaux, respectively.
See also \cite{MOA11} for a modern proof, numerous generalizations
and further references.

\smallskip

We now convert the Karamata inequality into a counting function problem.
Suppose we are given \ts $f_i, g_i \in \sharpP$, $1 \leq i \leq n$,
such that the following functions $h_i$, $1 \leq i < n$, are also all in $\sharpP$:
\begin{equation}\label{eq:karamata-h}
h_i \, := \, f_1 \. + \. \ldots \. + \. f_i \, - \, g_1 \. - \. \ldots \. - \.  g_i
\end{equation}
and we are also guaranteed that
\begin{equation}\label{eq:karamatalast}
f_1\. + \. \ldots \. + \. f_n \, - \, g_1 - \ldots - g_n \, = \, 0.
\end{equation}
Moreover, the functions
\begin{equation}
\label{eq:karamatade}
\textup{
$d_i \. := \. f_i \ts - \ts f_{i+1}$ \, and \, $e_i \. := \. g_i \ts - \ts g_{i+1}$
}
\end{equation}
are also in \ts $\sharpP$ \ts for all \ts $1 \leq i < n$.
Let \. $Z \subseteq \IQ^{5n-3}$ \. denote the variety of points that satisfy the constraints \eqref{eq:karamatalast} and \eqref{eq:karamatade}.
Let \. $\gamma\in\GapPP$ \. be any convex function.

Define the \defn{Karamata function} as
\[
K_{n,\gamma}(\vv f,\vv g) \ := \ \sum_{i=1}^{n} \. \gamma(f_i) \, - \, \sum_{i=1}^{n} \. \gamma(g_i)\ts.
\]
Clearly \. $K_{n,\gamma}(\vv \sharpP) \subseteq \GapP$.
Karamata's inequality implies that the answer is always nonnegative for inputs from~$Z$.
This is a semantic guarantee, i.e., we have no information as to why the guarantee holds.
This is analogous to \ts $\sharpP_{\geq 1}$, but in contrast to problems in Section~\ref{sec:TFNP}.

We write \ts $\vv \sharpP_{\in Z}$ \ts for a tuple of \ts $\sharpP$ \ts
functions whose function values always lie on~$Z$.
Hence, \ts $K_{n,\gamma}(\vv{\sharpP}_{\in Z})\subseteq\GapP_{\geq 0}$\ts.
Consider the following examples of \ts $n,\gamma$ \ts
for which \. $K_{n,\gamma}(\vv{\sharpP}_{\in Z})\subseteq\sharpP$~:
\begin{itemize}
\item For affine linear \ts $\gamma$, we clearly have \. $K_{n,\gamma}=0\in\sharpP$.
\item For \ts $\gamma(t)=t^2$, we have \. $K_{2,\gamma}(f_1,f_2,g_1,g_2) = (d_1 + e_1 )h_1$ \.
as a function on~$Z$. This can be seen by plugging in \ts $d_1 = f_1-f_2$, \ts $e_1 = g_1-g_2$, and \ts
$g_2=f_1+f_2-g_1$. Clearly \. $(d_1 + e_1 )h_1 \in \sharpP$.
This has several proofs, for example instead of \ts $(d_1 + e_1 )h_1$ \ts
we could have taken \. $2 h_1 + 2 e_1 h_1 + 4\binom{h_1}{2}$ \. with the same argument.
\item For \ts $\gamma(t)=t^2$, we have \. $K_{3,\gamma}(f_1,f_2,f_3,g_1,g_2,g_3) = (d_1 +e_1) h_1 + (d_2+e_2) h_2 \in \sharpP$
\. on~$Z$.
\item For \ts $\gamma(t)=\binom{t}{2}$, we have \.
$K_{2,\gamma}(f_1,f_2,g_1,g_2) = (e_1+1) h_1 + 2\binom{h_1}{2} \in \sharpP$ \.
on~$Z$.
\item For \ts $\gamma(t)=\binom{t}{2}$, we observe that for the \emph{double} we have \ts
$2 K_{3,\gamma}\in\sharpP$ \ts via the observation that \ts $2 K_{3,\binom{t}{2}} = K_{3,t^2}$ \ts
on~$Z$ (the affine linear parts cancel out).
\end{itemize}
All inclusions $K_{n,\gamma}\subseteq\sharpP$ in this section so far relativize.
The next proposition shows that the doubling we just used was in fact necessary, because otherwise obtain an oracle separation.

\smallskip

\begin{proposition}
\label{pro:karamatadiagonalization}
There exists a language $A\subseteq\{0,1\}^*$ such that
$K_{3,\binom{t}{2}} (\vv{\sharpP}^A_{\in Z}) \not\subseteq \sharpP^A$.
\end{proposition}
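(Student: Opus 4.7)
The plan is to apply the Diagonalization Theorem~\ref{thm:diagonalization} with the parameter choices indicated in the sketch in~$\S$\ref{sec:diagthmINTRO}. The ambient dimension is $k = 5n-3 = 12$, and we order the coordinates as $(f_1,f_2,f_3,g_1,g_2,g_3,d_1,d_2,e_1,e_2,h_1,h_2)$. The variety $Z$ is cut out by the seven linear relations from \eqref{eq:karamatade} and \eqref{eq:karamatalast}, so it is the kernel of the displayed $7\times 12$ integer matrix. I will put that matrix into row echelon form and then permute the columns so that the pivot columns correspond to the non-parameter variables $f_3,e_1,e_2,h_1,h_2$ and the free columns correspond to the parameter variables $f_1,f_2,g_1,g_2,g_3$; this gives $\ka=5$ together with explicit affine linear $\zeta_{8},\dots,\zeta_{12}\in\IQ[v_1,\dots,v_5]$, and hence an ideal $I=\langle \zeta_b - f_b\rangle$ whose vanishing set is exactly $Z$. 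I take $S := Z\cap\IN^{12}$, set $\problem{Multiplicities}:=\problem{OccurrenceMulti}_{12}$ so that set-instantiators exist via Theorem~\ref{thm:occurrencesetinstantiator}, and set $\varphi:=f_1^2+f_2^2+f_3^2-g_1^2-g_2^2-g_3^2$ and $\vv t$ to be any fixed point of $S$ (e.g.\ the origin).

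Next I will verify the three technical hypotheses of Theorem~\ref{thm:diagonalization}. Condition~(1) is immediate: $Z$ contains the origin since each $\zeta_b$ is linear (zero constant term after the row reduction). Condition~(2) requires $\overline{C'_S}^{\.\textup{Zar}} = \overline{C'_T}^{\.\textup{Zar}}$, and because $\zeta_b$ have integer coefficients and are linear, $C'_S = C'_T$ already (integer scaling keeps integrality of the non-parameters). Condition~(3) amounts to producing a $\vv v\in\IQ_{\geq 0}^5$ at which each $\zeta_b^{\textup{hom}} = \zeta_b$ is strictly positive; this reduces to a short check that a suitable strictly positive majorizing pair $(\vv f,\vv g)$ exists (for instance arising from any strict majorization $\vv f \tre \vv g$ with $f_1>f_2>f_3\geq 0$ and $g_1>g_2>g_3 \geq 0$ satisfying the partial-sum constraints).

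The heart of the argument is then to show that $\varphi+I$ is binomial-bad. Since all the $\zeta_b$ are affine linear with integer coefficients, the Polyhedron Theorem~\ref{thm:polyhedron} applies: $\varphi+I$ is binomial-good if and only if the polyhedron $\mathcal{P}(\varphi,\zeta)$ contains an integer point. The polyhedron sits in $\IQ^{91}$, defined by the $21$ affine equations obtained by substituting $\tau^*$ into the binomial basis expansion up to multidegree $\delta=\deg\tau^*(\varphi)=2$, together with nonnegativity constraints. I will set this polyhedron up on a computer and feed it to an integer-programming solver. The expected outcome (matching what the identity $2K_{3,\binom{t}{2}}=K_{3,t^2}$ predicts) is that $\mathcal{P}(\varphi,\zeta)$ contains the half-integer point coming from the representation $\varphi \equiv (d_1+e_1)h_1 + (d_2+e_2)h_2 \pmod I$ scaled by $\tfrac12$, but no integer point.

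The main obstacle is precisely this last step, namely certifying that no integer point exists in $\mathcal{P}(\varphi,\zeta)$. The polyhedron has $91$ variables and only $21$ equations, so its recession cone is large and naive enumeration is infeasible; I will exploit the block structure of $\tau^*$ to split the system into independent pieces indexed by which $v_a^2$-terms occur, reducing the check to a handful of small subpolyhedra. Once that is done, the theorem gives an oracle $A$ and a function $p_A$ realized as $K_{3,\binom t 2}(\vv{\sharpP}^A_{\in S})$ which is not equal to $\#\acc_{M^A}$ for any nondeterministic polytime Turing machine $M$, which is exactly the claimed separation.
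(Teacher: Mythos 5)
Your proposal is correct and follows essentially the same route as the paper: same ordering of the twelve coordinates, same reduction of the seven linear relations to a graph-variety parametrization with $\ka=5$, same choice of $\varphi$, $S=Z\cap\IN^{12}$ and $\problem{OccurrenceMulti}_{12}$, and the same computer-assisted application of the Polyhedron Theorem~\ref{thm:polyhedron} to the $91$-variable, $21$-equation polyhedron (half-integer point present, integer point absent). The only deviations are cosmetic bookkeeping — you take $f_1,f_2,g_1,g_2,g_3$ as parameters where the paper takes $g_3,e_1,e_2,h_1,h_2$, and your list of non-parameters omits $d_1,d_2$ (there are seven, not five) — neither of which affects the argument.
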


\begin{proof}
We use the Diagonalization Theorem~\ref{thm:diagonalization}.
We have $5n-3 = 12$, so in the notation of the theorem, we set $S = Z\cap \IN^{12}$.
We fix an arbitrary order of the 12 variables: $(f_1,f_2,f_3,g_1,g_2,g_3,d_1,d_2,e_1,e_2,h_1,h_2)$.
The variety $Z$ is then given as the kernel of the linear map given by the following matrix:
\[
\begin{pmatrix}
  1& -1&  0&  0&  0&  0& -1&  0&  0&  0&  0&  0\\
  0&  1& -1&  0&  0&  0&  0& -1&  0&  0&  0&  0\\
  0&  0&  0&  1& -1&  0&  0&  0& -1&  0&  0&  0\\
  0&  0&  0&  0&  1& -1&  0&  0&  0& -1&  0&  0\\
  1&  0&  0& -1&  0&  0&  0&  0&  0&  0& -1&  0\\
  1&  1&  0& -1& -1&  0&  0&  0&  0&  0&  0& -1\\
  1&  1&  1& -1& -1& -1&  0&  0&  0&  0&  0&  0
\end{pmatrix}
\]
We convert it to row echelon form:
\[
\begin{pmatrix}
  1 &  0 &  0 &  0 &  0 & -1 &  0 &  0 & -1 & -1 & -1 &  0\\
  0 &  1 &  0 &  0 &  0 & -1 &  0 &  0 &  0 & -1 &  1 & -1\\
  0 &  0 &  1 &  0 &  0 & -1 &  0 &  0 &  0 &  0 &  0 &  1\\
  0 &  0 &  0 &  1 &  0 & -1 &  0 &  0 & -1 & -1 &  0 &  0\\
  0 &  0 &  0 &  0 &  1 & -1 &  0 &  0 &  0 & -1 &  0 &  0\\
  0 &  0 &  0 &  0 &  0 &  0 &  1 &  0 & -1 &  0 & -2 &  1\\
  0 &  0 &  0 &  0 &  0 &  0 &  0 &  1 &  0 & -1 &  1 & -2
\end{pmatrix}
\]
We set \ts $\ka=5$, \ts $k=12$. 
Permuting the order of the columns to \. $(6,9,10,11,12,1,2,3,4,5,7,8)$,
we obtain affine linear functions \ts $\zeta_8,\ldots,\zeta_{12}$, where each \ts $\zeta_b$ \ts
depends linearly on the first $5$ variables, which we call \. $(v_1,\ldots,v_\ka) := (g_3,e_1,e_2,h_1,h_2)$.
We verify the assumptions of the Diagonalization Theorem~\ref{thm:diagonalization}:
\begin{enumerate}
\item Since all constraints are homogeneous, we have zero is an integer point: \ts $0\in Z$.
\item All $\zeta$ have integer coefficients and all constraints are homogeneous, so the fact that
\ts $C'_S$ \ts lies Zariski-dense in $\IQ^{5}$ follows from the next point
and having a small open ball that contains~$\vv v$.
\item The point \. $\vv v = (g_3=2,e_1=2,e_2=1,h_1=1,h_2=1)$ \. satisfies all \ts $\zeta_b(\vv v)>0$.
Here \ts $\vv v$ \ts is the point corresponding to the case \ts $\vv f = (6,3,1)$ \ts and \ts $\vv g =(5,3,2)$.
\end{enumerate}
We set \ts $\problem{Multiplicities} = {\problem{OccurrenceMulti}_{12}}$ \ts
and use the set-instantiators from Theorem~\ref{thm:occurrencesetinstantiator}.
We fix \ts $\vv t \in S$, and set
\[
\varphi(f_1,f_2,f_3,g_1,g_2,g_3,d_1,d_2,e_1,e_2,h_1,h_2) \ := \ f_1^2 \. + \. f_2^2 \. + \.
f_3^2 \. - \. g_1^2 \. - \. g_2^2 \. - \. g_3^2\ts.
\]
We now show that \ts $\varphi+I$ \ts is binomial-bad.
We use the Polyhedron Theorem~\ref{thm:polyhedron}, since all our constraints are affine linear.
Since \. $\dim[\IQ^{12}]_{\leq 2} = \binom{14}{2} = 91$ \.
and \. $\dim[\IQ^5]_{\leq 2} = \binom{7}{2} = 21$,
this is a polyhedron in \ts $\IQ^{91}$ \ts with~$21$ linear equations
intersected with the nonnegative orthant.  We use a computer to set up the polyhedron.
Indeed, it contains the half-integer point that shows that \. $2K_{3,\binom{t}{2}}\in\sharpP$,
but it does \emph{not} contain an integer point, which
gives a function \ts $p_A$ \ts with \. $\varphi(p_A)\notin\sharpP^A$.

It remains to show that \. $\varphi(p_A) \in K_{3,\binom{t}{2}} \big(\vv{\sharpP}^A_{\in Z}\big)$.
This can be seen by the existence of the list of oracle Turing machines
that on input $w$ first query $A$ at \ts $0^{|A|}$.
Then, if \ts $0^{|w|}\notin A$ \ts accept with multiplicities $\vv t$; if \ts $0^{|w|}\in A$,
then the machines run \ts $\problem{OccurrenceMulti}$ \ts on \ts $\tilde A_j$.
The fact that the output vector is in $Z$ is guaranteed by the fact that
if \ts $0^{|w|}\in A$, then \ts $\tilde A_j$ \ts is generated by the set-instantiator,
which only generates instances from~$S$.
\end{proof}

\smallskip

\begin{remark}\label{r:karamata-h}{\rm
One can ask if the assumptions underlying Proposition~\ref{pro:karamatadiagonalization} are reasonable,
e.g.\ whether there is a natural combinatorial problem where we have \ts $h_i\in \SP$ \ts for all functions
defined in~\eqref{eq:karamata-h}.  For an example of this, we refer to above mentioned result
in~\cite{PPS20}, where the majorization is proved by a direct injection.   In the case of Young diagrams,
the ``shuffling in the plane'' proof in~\cite[$\S$4]{PPS20} is based on the
\defng{wall-equivalence}, see e.g.~\cite[$\S$23]{Mat10}.

We should warn the reader that some applications of the Karamata inequality can in fact be in~$\SP$.
For example, the \defng{hook inequality} \ts for the number of increasing trees is
proved in~\cite[$\S$2]{PPS20} via the Karamata inequality, but is a special
case of the \defng{Bj\"orner--Wachs inequality} (Theorem~\ref{t:BW-ineq}) known to be in~$\SP$.
}
\end{remark}

\medskip

\section{{\normalfont$\TFNP$} and {\normalfont$\sharpP$}}
\label{sec:TFNP}

\subsection{Definitions and background} \label{ss:TFNP-Background}
%
Recall the classical inclusion diagram of search complexity classes:
\begin{center}
\begin{tikzpicture}
\node[draw] (PPA) at (0,0) {PPA};
\node[draw] (PPP) at (2,0) {PPP};
\node[draw] (PLS) at (4,0) {PLS};
\node[draw] (PPADS) at (2,-1) {PPADS};
\node[draw] (PPAD) at (1,-2) {PPAD};
\node[draw] (CLS) at (2,-4) {CLS\,=\,PPAD\,$\cap$\,PLS};
\draw[-Latex] (CLS) -- (PPAD);
\draw[-Latex] (CLS) -- (PLS);
\draw[-Latex] (PPAD) -- (PPADS);
\draw[-Latex] (PPADS) -- (PPP);
\draw[-Latex] (PPAD) -- (PPA);
\end{tikzpicture}
\end{center}

When going to the counting analogs minus~1, as explained in~$\S$\ref{sec:countingclassesandTFNPINTRO}, one gets the same picture (the right-hand side of the tree in Figure~\ref{fig:inclusions}), but care must be taken with the complete problems, since for several of the classical problems in $\CLS$, $\PPAD$, and $\PPADS$ the counting version minus~$1$ actually lies in $\sharpP$.

We study the following classical problems and slightly adjusted problems that are not parsimoniously equivalent.
We assume that the reader is familiar with how to encode an exponential graph or digraph via successor or predecessor/successor circuits, see \cite{GW83, MP91}.

\begin{itemize}
\item The $\PPAD$-complete problem \problem{SourceOrSink}, which is parsimoniously equivalent to \problem{Sperner}, see \cite{CD09}.
We are given two circuits \ts $C_{\textup{succ}}$ \ts and \ts $C_{\textup{pred}}$ \ts that describe a directed graph in which there is an edge from $x$ to~$y$ \ts if and only if \ts
$C_{\textup{succ}}(x)=y$ \ts and \ts $C_{\textup{pred}}(y)=x$. We syntactically ensure that the indegree of the $0$ vertex is~$0$ and its outdegree is~$1$. We search for sources or sinks, i.e., nonzero vertices of \ts (indegree,outdegree$)=(0,1)$ \ts or \ts (indegree,outdegree$)=(1,0)$.
\item The $\PPAD$-complete problem \ts \problem{SourceOrPresink} (see Claim~\ref{cla:SourceOrPresink}).
The setup is the same as for \ts \problem{SourceOrSink}, but we count sources and presinks,
where a presink is a vertex that is adjacent to a sink.
Note that the two-vertex graph with a single source and sink only counts once,
as it has only one node that is a source or a presink (it is actually both a source and a presink).
\item The $\PPAD$-complete problem \ts \problem{SourceOrExcess(2,1)} (see Claim~\ref{cla:SourceOrExcess}).
We are given three circuits \ts $C_{\textup{succ}}$ \ts and \ts $C_{\textup{pred}_1}$ \ts and \ts $C_{\textup{pred}_2}$ \ts
that describe a directed graph in which there is an edge from $x$ to $y$ \ts if and only if \ts
$C_{\textup{succ}}(x)=y$ \ts and \ts $\big(C_{\textup{pred}_1}(y)=x \. \vee \. C_{\textup{pred}_2}(y)=x\big)$. This results in a digraph with indegrees from $\{0,1,2\}$ and outdegrees from $\{0,1\}$.
As for \problem{SourceOrSink} we syntactically ensure that the indegree of the 0 vertex is 0 and its outdegree is 1.
We search for sources or excess vertices, i.e., nonzero vertices where the indegree differs from the outdegree. It is crucial here that these digraphs can have double sinks, i.e., vertices with indegree 0 and outdegree 2, that we only count once.
\item Instead of being given a source, we can also count \problem{AllSourcesOrSinks}.
This is mainly interesting, because the number is always even, so we can ask if \.
$\problem{AllSourcesOrSinks}/2 \in \sharpP$ \ts (which it is, but for the undirected analog we have an oracle separation from $\sharpP$, see Theorem~\ref{thm:halvingseparation}).
We call the corresponding counting class \. $\sharpCOUNTALL[-PPAD]\problemp{SourceOrSink}$.
\item The $\PPADS$-complete problem \ts \problem{Sink}.
The setup is the same as for \ts \problem{SourceOrSink}, but we only count sinks.
\item The $\PPADS$-complete problem \ts \problem{Presink} (see Claim~\ref{cla:PPADSpresink}).
The setup is the same as for \ts \problem{Sink}, but we only count presinks,
which are vertices that are adjacent to sinks.
\item The $\PPADS$-complete problem \ts \problem{Excess(2,1)} (see Claim~\ref{cla:PPADSexcess}).
The setup is the same as for \ts \problem{SourceOrExcess(2,1)},
but we only count nodes with indegree greater than outdegree.
\item The $\PPA$-complete problem \problem{Leaf}, which is parsimoniously equivalent to \problem{Lonely} and \problem{Even} and \problem{Odd}, see~\cite{BCEIP}. We are given two circuits $C_{1}$ and $C_{2}$ that describe a graph in which an edge between $x$ and $y$ is present \ts if and only if \ts
    $$
    \big(C_1(x)=y \.\vee \. C_2(x)=y\big) \. \wedge \. \big(C_1(y)=x \. \vee \. C_2(y)=x\big).
    $$
    We syntactically ensure that the degree of the 0 vertex is~1. We search for leaves, i.e., nonzero vertices of degree~1.
\item
The hardness of \. $\big(\problem{Leaf}-1\big)/2$ \. actually comes from the hardness of the easier problem \. $(\problem{AllLeaves})/2$. In the problem \ts \problem{AllLeaves}
the setup is the same as for \problem{Leaf}, but we have no syntactic guarantee about the zero vertex. The number of solutions is always even, and it can be zero.
We call the corresponding counting class \.
$\sharpCOUNTALL[-PPA]\problemp{Leaf}$.
\item The $\PPA$-complete problem \ts \problem{Preleaf} \ts (see Claim~\ref{cla:preleaf}), which has the same setup as \ts \problem{Leaf}, but we search for vertices that are adjacent to leaves. Note that the line graph with~3 vertices has~2 leaves, but only~1 preleaf.
\item A slightly adjusted $\PPP$-complete version of the problem \ts \problem{Pigeon} \ts (see Claim~\ref{cla:PPPpigeno}), where we are given a circuit $C$ and search for vertices $x$ with \ts $C(x)=0$, or for pairs of vertices \ts $(x,y)$ \ts
    with \. $C(x)=C(y)$ \ts and \ts $C(C(x))\neq 0$. This version gives a cleaner counting problem.
\item The classical \PLS-complete problem \ts \problem{Iter} \ts which is parsimoniously equivalent to \ts \problem{LocalOpt} (see e.g.~\cite{FGHS21}), where we are given a circuit $C$ that has the syntactically ensured guarantee that $C(x)\geq x$ and $C(0)>0$ and $C(C(0))> C(0)$;
and we search for presinks, i.e., for vertices \ts $x$ \ts such that \ts $C(x)\neq x$ \ts and \ts $C(C(x))=x$.\footnote{Note that if given only a successor circuit, it is difficult to check if a vertex is a sink, so we count presinks instead. Also note that allowing 0 to be a presink would also have been an option.}
It is important to note that \problem{Iter} differs significantly from the other problems: its instances cannot be freely permuted. This makes the construction of a set-instantiator in~$\S$\ref{sec:IterSetInstantiator} quite different from the other set-instantiator constructions.
\item A slightly adjusted version of the \CLS-complete problem
$$
\text{\problem{EitherSolution(SourceOrSink,\ts{}Iter)}}
$$
(see Claim~\ref{cla:CLSsourceorsink}), where
in the counting version
we are given a pair of a \problem{SourceOrSink} and an \problem{Iter} instance and count each solution to either of them, with a slight adjustment:
If the \problem{SourceOrSink} instance has a source or sink at the last possible position, then we say that the instance contains the \emph{last option}.
If the \problem{Iter} instance has a presink at the last possible position (i.e., the second to last vertex), then we say that the instance contains the last option.
Note that we can efficiently check if either instance contains the last option.
If both instances contain the last option, then we count these two solutions only once.
This adjustment is to ensure that it is possible to have a single solution.
Otherwise the counting version would be in \ts $\sharpP_{\geq 2}$ \footnote{\cite{Goo22} give a combinatorial version of \CLS, which was pointed out to us by M.~G\"o\"os after this paper was written.}.
\item The analogous \CLS-complete problem \. \problem{EitherSolution(SourceOrPresink,\ts{}Iter)} \ts (see Claim~\ref{cla:CLSsourceorsink}).
\item The analogous \CLS-complete problem \. \problem{EitherSolution(SourceOrExcess(2,1),\ts{}Iter)} \ts (ibid.)
\item The problem \. \problem{BipartiteUnbalance} \.
has no associated search problem. It is the following problem in \ts $\GapP_{\geq 0}$.
Let \ts $G=(V,E)$ \ts be a bipartite graph with
two parts given by \ts $V=V_-\sqcup V_+$.
We say that $G$ is \defn{unbalanced}
if for every \ts $(uv)\in E$, $u\in V_-$, $v\in V_+$, we have \ts $\deg(u)\ge \deg(v)$.
An instance of the problem is given by a list of polynomially many circuits \ts $C_1,\ldots,C_n$ \ts
such that an edge between \ts $u\in V_-$ \ts and \ts $v\in V_+$ \ts exists \ts if and only if \ts $C_i(u)=v$ \ts for some~$i$,
and \ts $C_j(v)=u$ \ts for some~$j$.
We syntactically ensure that \ts $\deg(u)\ge \deg(v)$ \ts locally by adding vertices from \ts $V_-$ \ts
whenever \. $\deg(u) < \max_{v \in N(u)}\deg(v)$, where \ts $N(u)$ \ts is the neighborhood of~$u$:
In this case we add \. $\max_{v \in N(u)}-\deg(u)$ \. many vertices in \ts $V_-$ \ts
to the graph and connect them to~$u$, but do not connect them to any other vertex.
Proposition~\ref{p:flow-graph} below shows that \ts $|V_+|-|V_-| \geq 0$.
The problem \ts \problem{BipartiteUnbalance} \ts is the counting problem with value \ts
$|V_+|-|V_-|$.
We study this problem in~$\S$\ref{sec:unbalancedflowseparation}, cf.\ Open Problem~\ref{ss:open-LE}(3).
\end{itemize}

\subsection{Simple completeness results, equalities to {\normalfont$\sharpP$}, and simple inclusions}\label{sec:syntacticcountingsubclasses}
All results in this section are fairly straightforward.
They appear here to avoid any oversights, because Figure~\ref{fig:inclusions} suggests that $\sharpP$ is
strictly contained in classes that differ only slightly in terms of definition.

\subsubsection{{\normalfont\#\PPAD}}
\begin{claim}\label{cla:sourceorsinkequalssharpP}
$(\#\PPAD\problemp{SourceOrSink}-1)/2 = \sharpP$ \. via relativizing parsimonious reductions.
\end{claim}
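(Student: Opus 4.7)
The argument rests on a single structural invariant for $\problem{SourceOrSink}$ instances. The successor and predecessor circuits define an exponentially large directed graph $G$ in which every vertex has indegree at most $1$ and outdegree at most $1$, so $G$ decomposes into a disjoint union of paths, cycles, and isolated vertices. Every path contributes exactly one source and exactly one sink; cycles and isolated vertices contribute neither. The syntactic guarantee that the distinguished vertex $0$ has indegree $0$ and outdegree $1$ forces it to sit at the start of a path, so if $s$ denotes the number of \emph{nonzero} sources and $t$ the number of sinks, then equality of total sources and total sinks yields $s+1 = t$. Hence the total number of $\problem{SourceOrSink}$-solutions equals $s + t = 2s + 1$, which is always odd, and $(\problem{SourceOrSink}-1)/2 = s$.

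For the inclusion $(\#\PPAD\problemp{SourceOrSink}-1)/2 \subseteq \sharpP$, fix $f \in \#\PPAD\problemp{SourceOrSink}$ arising from a parsimonious reduction that, on input $w$, produces a $\problem{SourceOrSink}$ instance $\alpha(w)$. By the paragraph above, $(f(w)-1)/2$ is exactly the number of nonzero sources in $\alpha(w)$. A nondeterministic polynomial-time machine computes this: guess a nonzero vertex $v$, call $C_\textup{succ}$ and $C_\textup{pred}$ to check that $v$ has outdegree $1$ and indegree $0$, and accept exactly in that case. The accepting paths are in bijection with the nonzero sources, so $(f-1)/2 \in \sharpP$, and the whole construction relativizes since oracle gates only enter through $C_\textup{succ}$ and $C_\textup{pred}$.

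For the reverse $\sharpP \subseteq (\#\PPAD\problemp{SourceOrSink}-1)/2$, fix $g \in \sharpP^A$ with $g(w) = |\{y \in \{0,1\}^m : R^A(w,y)\}|$ for a polynomial-time relation $R^A$. Build a $\problem{SourceOrSink}$ instance on vertex set $\{0,1\} \cup \{(\textup{src},y), (\textup{snk},y) : y \in \{0,1\}^m\}$, encoded in binary. Hardwire $C_\textup{succ}(0)=1$ and $C_\textup{pred}(1)=0$ to realize the syntactic guarantee; for each $y$ with $R^A(w,y)$, connect $(\textup{src},y) \to (\textup{snk},y)$ via an isolated edge; for non-witnesses $y$, leave both $(\textup{src},y)$ and $(\textup{snk},y)$ as isolated vertices by setting their successor and predecessor values to themselves. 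Then the nonzero sources are precisely the $(\textup{src},y)$ for witnesses $y$; there are $g(w)$ of them, and the total number of solutions is $2g(w)+1$, so the corresponding $f$ lies in $\#\PPAD\problemp{SourceOrSink}$ with $(f-1)/2 = g$. The witness-to-source correspondence is an explicit bijection, yielding the required parsimonious reduction; the construction relativizes because all circuits access $A$ only via $R^A$.

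The main obstacle is purely bookkeeping: ensuring the syntactic constraint on vertex $0$ is respected, that non-witnesses become genuinely isolated pairs (not stray sources or sinks), and that the parsimonious correspondence survives the $(f-1)/2$ rescaling in both directions. None of this is conceptually hard; the structural fact $f = 2s+1$ does all the real work.
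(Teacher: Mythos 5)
Your proposal is correct and follows essentially the same route as the paper: one direction counts the nonzero sources (justified by the parity identity $f=2s+1$), and the other reduces from a $\sharpP$-complete circuit-satisfiability problem by planting an isolated source--sink edge for each witness alongside the hardwired edge out of the zero vertex. The only cosmetic difference is that you spell out the path/cycle decomposition underlying the invariant, which the paper leaves implicit.
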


\begin{proof}
We first note that \.
$(\#\PPAD\problemp{SourceOrSink}-1)/2 \subseteq \sharpP$, because we can just count the nonzero sources, which gives us the correct number.
For the other direction,
we use that \ts $\problem{CircuitSat}^A$ \ts is $\sharpP^A$-complete.
If we are given a Boolean single-output circuit $C$ (with oracle gates) that has $c$ many $x$ for which $C(x)$ is true,
then we can simply construct a \ts \problem{SourceOrSink} instance with the zero vertex immediately going into a sink,
the rest of the vertices all being self-loops with one exception:
for every position \ts $2 x$ \ts for which \ts $C(x)$ \ts is true we add a source-sink edge from \ts
$2x$ \ts to \ts $2x+1$.
We end up with an instance of value \ts $2c+1$, as desired.
\end{proof}

\smallskip

\begin{claim}\label{cla:sourceorpresinkequalssharpP}
$\#\PPAD\problemp{SourceOrPresink}-1 = \sharpP$ \. via relativizing parsimonious reductions.
\end{claim}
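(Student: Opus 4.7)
The plan is to prove the two inclusions separately via relativizing parsimonious reductions, mirroring the pattern of Claim~\ref{cla:sourceorsinkequalssharpP} but exploiting length-1 source--sink paths so that each satisfying assignment contributes $1$ rather than $2$ to the count (since on a length-1 path the source and presink coincide). The proof decomposes neatly into a forward direction that counts directly in $\sharpP$ and a backward direction that builds an explicit instance from a \problem{CircuitSat}$^A$ input.

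For the backward direction $\sharpP^A \subseteq \#\PPAD\problemp{SourceOrPresink}^A - 1$, start from a circuit $C$ (with oracle gates) having $c$ satisfying inputs. Build $(C_{\textup{succ}},C_{\textup{pred}})$ describing a graph on three kinds of vertices: (i) a short initial chain $0 \to 1 \to 2$ with $2$ a sink and $1$ the unique nonzero presink on $0$'s path; (ii) for each input $x$, a dedicated pair $u_x,v_x$ with the edge $u_x\to v_x$ present iff $C(x)=1$; (iii) every other vertex is an isolated self-loop. When $C(x)=1$, the vertex $u_x$ is simultaneously source and presink on a length-1 path and is counted exactly once. The total count is then $1 + c$, so subtracting $1$ gives the desired $c$. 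The circuits $C_{\textup{succ}}, C_{\textup{pred}}$ query $C^A$ only as a black box, so the reduction is parsimonious and relativizes.

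For the forward direction $\#\PPAD\problemp{SourceOrPresink}^A - 1 \subseteq \sharpP^A$, decompose any instance into directed source-to-sink paths, cycles, and isolated vertices. Call a non-0 path \emph{short} if it has length $1$ and \emph{long} otherwise, and let $A$, $B$ be their respective counts. Each short non-0 path contributes $1$ to the count (its unique vertex is both source and presink) and each long non-0 path contributes $2$ (source and presink are distinct and both nonzero). Under the standard syntactic convention that the path starting at $0$ has length $\geq 2$, that path contributes exactly $1$ (the presink, since $0$ itself is excluded). Hence count $- 1 = A + 2B$, which is directly in $\sharpP^A$: a nondeterministic machine guesses a nonzero source $v$ together with an auxiliary bit $b \in \{0,1\}$, checks in polynomial time whether $C_{\textup{pred}}(v) = v$ (so $v$ is a source) and whether $C_{\textup{succ}}(v)$ is a sink, and accepts the branch $(v,0)$ whenever $v$ is a nonzero source, plus the branch $(v,1)$ exactly when $v$ is a nonzero source whose successor is \emph{not} a sink. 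This is manifestly parsimonious and relativizes.

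The main subtlety to address is the edge case where $C_{\textup{succ}}(0)$ is itself a sink: then $0$'s path has length $1$, its only source/presink is $0$ (excluded), and the count could dip to $0$, making ``$-1$'' ill-defined. I would handle this the same way the analogous issue is handled for \problem{SourceOrSink} and \problem{Iter} elsewhere in the paper, namely by a trivial syntactic preprocessing that pre-extends $0$'s path with two fixed intermediate vertices $0 \to 1 \to 2 \to \cdots$ and reroutes the original edges accordingly; this does not affect the count of solutions in the rest of the instance and costs only a constant in both circuits. With that normalization in place the equality $\#\PPAD\problemp{SourceOrPresink}^A - 1 = \sharpP^A$ follows from the two reductions above.
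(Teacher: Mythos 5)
Your proof is correct and follows essentially the same route as the paper's: the forward direction counts each nonzero source once or twice according to whether it is also a presink (your $A+2B$ is the paper's $2s+a$ under the source/presink/amalgamation partition), and the backward direction is the same \problem{CircuitSat} construction with length-$1$ source--sink edges so that each satisfying assignment contributes exactly one amalgamated solution. Your explicit treatment of the zero-path edge case (extending $0$'s path to length $\geq 2$ so it contributes exactly one nonzero presink) is a bit more careful than the paper, which leaves the zero vertex pointing directly into a sink, but this is a cosmetic difference.
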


\begin{proof}
We have $s$ nonzero sources, $t$ presinks, and $a$ nodes that are both, which we call amalgamations.
We count \ts $s+t+a-1$ = $2 s + a$, i.e., we count the nonzero sources twice and the amalgamations once.
The second direction is the same construction as for Claim~\ref{cla:sourceorsinkequalssharpP},
but here the instance ends up with value \ts $c+1$ \ts instead of \ts $2c+1$, because the source is also a presink.
\end{proof}

\smallskip

\begin{claim}\label{cla:countallsourceorpresinkhalfequalssharpP}
$\sharpCOUNTALL[-PPAD]\problemp{SourceOrSink}/2 = \sharpP$ \. via relativizing parsimonious reductions.
\end{claim}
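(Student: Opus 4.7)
The plan is to prove both inclusions separately via relativizing parsimonious reductions, exploiting the fact that in a $\problem{SourceOrSink}$-style digraph the underlying graph decomposes into simple paths and simple cycles, so that sources and sinks always come in balanced pairs.

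For the inclusion $\sharpCOUNTALL[-PPAD]\problemp{SourceOrSink}/2 \subseteq \sharpP$, I would argue as follows. A \problem{SourceOrSink}-instance is specified by a pair of circuits $C_{\textup{succ}}, C_{\textup{pred}}$ defining a digraph in which every vertex has in-degree and out-degree at most $1$. Thus every connected component is either a simple directed path or a simple directed cycle; cycles contribute no sources and no sinks, while each path contributes exactly one source and exactly one sink. Consequently the total count of sources and sinks is $2p$, where $p$ is the number of path components. A nondeterministic machine can simply guess a vertex $x$ and accept if and only if $x$ is a nonzero source, i.e.\ $C_{\textup{succ}}(x) \neq x$ and there is no $y$ with $C_{\textup{succ}}(y)=x$ and $C_{\textup{pred}}(x)=y$ (which is checkable in polynomial time from the two circuits). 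This gives a $\sharpP$ function equal to $p = \sharpCOUNTALL[-PPAD]\problemp{SourceOrSink}/2$, and the reduction is parsimonious and relativizes, since oracle gates in $C_{\textup{succ}}, C_{\textup{pred}}$ are handled transparently.

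For the reverse inclusion $\sharpP \subseteq \sharpCOUNTALL[-PPAD]\problemp{SourceOrSink}/2$, I would reduce from the relativizing $\sharpP^A$-complete problem \problem{CircuitSat}. Given a Boolean circuit $C$ (possibly with oracle gates for $A$) on $n$ input bits with $c$ satisfying assignments, build a $\problem{SourceOrSink}$-instance on $\{0,1\}^{n+1}$ whose successor and predecessor circuits act as follows: for every $x \in \{0,1\}^n$ with $C(x)$ true, install the single directed edge from $2x$ to $2x+1$ (so that $2x$ is a source and $2x+1$ is a sink), and make every remaining vertex a self-loop. The successor and predecessor circuits for this graph are easily constructed in polynomial time by invoking $C$ once, and they inherit $A$-oracle gates from $C$, so the construction relativizes. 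The resulting digraph has exactly $c$ source-sink paths of length one and no other path components, hence $\sharpCOUNTALL[-PPAD]\problemp{SourceOrSink} = 2c$, which after dividing by $2$ yields $c$. The reduction is parsimonious because distinct satisfying assignments give distinct source-sink pairs.

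The only subtlety that needs care is conforming to the exact conventions of $\sharpCOUNTALL[-PPAD]$: unlike $\#\PPAD\problemp{SourceOrSink}$ there is no syntactic guarantee pinning vertex $0$ to a source, so we are free to let $0$ be a self-loop, which avoids the $\pm 1$ adjustments that arise in Claim~\ref{cla:sourceorsinkequalssharpP}. No diagonalization or set-instantiator machinery is required; both directions are immediate graph-theoretic counting reductions, and the work is just in verifying that each reduction is parsimonious and that the oracle gates propagate correctly.
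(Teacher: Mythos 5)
Your proof is correct and takes essentially the same route as the paper: count only the sources (equivalently, only the sinks) for the containment in $\sharpP$, and for the reverse direction plant a source--sink edge $2x\to 2x+1$ for each satisfying assignment with everything else (including vertex $0$) a self-loop, yielding value $2c$. One small slip: in the first direction you say ``nonzero source'' although your own ``i.e.'' condition (correctly) imposes no such restriction --- since \problem{AllSourcesOrSinks} has no syntactic guarantee at vertex $0$, you must count \emph{all} sources, as vertex $0$ may itself be one.
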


\begin{proof}
To see the containment in $\sharpP$, we just count only the sources (or only the sinks).
The second direction is the same construction as in Claim~\ref{cla:sourceorpresinkequalssharpP}, just without the source-sink pair for the zero vertex.
\end{proof}

\smallskip

\begin{claim}\label{cla:SourceOrPresink}
\problem{SourceOrPresink} \ts is \PPAD-complete.
\end{claim}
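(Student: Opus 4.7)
The plan is to show both $\PPAD$-membership and $\PPAD$-hardness via relativizing reductions to/from the $\PPAD$-complete problem \problem{SourceOrSink}.

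For \textbf{membership}, I would give a polynomial-time reduction from \problem{SourceOrPresink} to \problem{SourceOrSink}. Given an instance $(C_{\textup{succ}},C_{\textup{pred}})$, the same circuits define a valid \problem{SourceOrSink} instance. Any nonzero source is simultaneously a valid answer to both problems. If the \problem{SourceOrSink} oracle returns a nonzero sink $s$, then the syntactic guarantee that $s$ has indegree $1$ lets us compute its unique predecessor $C_{\textup{pred}}(s)$ in polynomial time, and this predecessor is by definition a presink. Thus \problem{SourceOrPresink}$\ \in \PPAD$.

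For \textbf{hardness}, I would reduce \problem{SourceOrSink} to \problem{SourceOrPresink} via a ``subdivision'' of the graph. Given a \problem{SourceOrSink} instance $G$ with circuits $(C_{\textup{succ}},C_{\textup{pred}})$, construct $G'$ whose vertex set is $\{0,1\}\times V(G)$, with the intent that every original edge $u\to v$ of $G$ is replaced by a path $(0,u)\to(1,u)\to(0,v)$. Concretely, one defines new circuits $(C'_{\textup{succ}},C'_{\textup{pred}})$ that send $(0,v)\mapsto(1,v)$ (and back) whenever $v$ has an outgoing edge in $G$, and $(1,v)\mapsto(0,C_{\textup{succ}}(v))$. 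Vertices for which the corresponding edge does not exist in $G$ are handled by the usual syntactic self-loop convention, and the zero vertex is identified with $(0,0^{|V|})$ so that its indegree/outdegree conditions are preserved. A case check on the six possibilities (original vertex $v$ is a source / sink / neither in $G$, and $(0,v)$ vs.\ $(1,v)$) shows that the only solutions of $G'$ under \problem{SourceOrPresink} are (i) the nonzero sources $(0,v)$ where $v$ is a nonzero source of $G$, and (ii) the intermediate vertices $(1,\mathrm{pred}(s))$ where $s$ is a sink of $G$. Translating back is trivial, and the correspondence is parsimonious, so \problem{SourceOrSink} reduces parsimoniously to \problem{SourceOrPresink}. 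All reductions relativize because they manipulate circuits symbolically without inspecting oracle answers.

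The \textbf{main technical care} lies entirely in the case analysis for step~2: one must verify that the only presinks of $G'$ arise from sinks of $G$, which in particular requires checking that an original vertex $(0,v)$ is never a presink unless $v$ itself is a sink, and that the intermediate $(1,v)$ is a presink only when $v$ has no successor in $G$ (i.e., when $v$ is a sink). The zero-vertex bookkeeping and the encoding of self-loops for ``missing'' edges are routine but must be written carefully to preserve the syntactic conditions imposed on \problem{SourceOrPresink} instances.
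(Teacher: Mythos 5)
Your proposal is correct, but the hardness half is considerably heavier than what the paper does. The paper's entire proof is the observation that \problem{SourceOrSink} and \problem{SourceOrPresink} have \emph{identical instances} and trivially interconvertible solutions: a nonzero source is a solution to both, a sink yields a presink by one application of $C_{\textup{pred}}$, and a presink yields a sink by one application of $C_{\textup{succ}}$. So the identity map on instances, together with this solution translation, gives both membership and hardness at once; no gadget is needed. Your membership argument is exactly this. For hardness you instead build a subdivision of the graph to obtain a \emph{parsimonious} reduction, which the claim does not require (it asserts only \PPAD-completeness of the search problem; the counting statements are handled separately, e.g.\ in Claim~\ref{cla:sourceorpresinkequalssharpP}, and even there the paper uses a different accounting --- counting nonzero sources twice and source/presink amalgamations once --- rather than a subdivision). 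Your construction does buy something: it eliminates amalgamations, so it would give a genuinely parsimonious reduction from \problem{SourceOrSink}, which is stronger than needed here. One small slip in your final paragraph: the intermediate vertex $(1,v)$ is a presink precisely when $C_{\textup{succ}}(v)$ is a sink of $G$, not when $v$ itself is a sink (if $v$ is a sink, $(1,v)$ is not even a genuine vertex of $G'$ under your self-loop convention); your earlier description of the solution set as $\{(1,\mathrm{pred}(s)) : s \text{ a sink}\}$ is the correct one.
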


\begin{proof}
This follows directly from the fact that if we find a sink, then it is easy to find a presink,
and vice versa, and the fact that  \ts \problem{SourceOrSink} \ts is \PPAD-complete.
\end{proof}

\smallskip

\begin{claim}
\label{cla:SourceOrExcess}
\problem{SourceOrExcess(2,1)} \ts is \ts \PPAD-complete.
\end{claim}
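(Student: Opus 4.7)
The plan is to prove both directions of \PPAD-completeness by exhibiting polynomial-time search reductions in each direction.

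\PPAD-hardness is immediate via reduction from \problem{SourceOrSink}: given circuits $(C_\textup{succ}, C_\textup{pred})$, set $C_{\textup{pred}_1} := C_\textup{pred}$ and take $C_{\textup{pred}_2}$ to be the identity map, which fails the mutual-consistency check with $C_\textup{succ}$ at every vertex and therefore contributes no predecessor. Indegrees in the induced digraph are then at most $1$, so the resulting \problem{SourceOrExcess(2,1)} instance induces the same graph as the original, and its nonzero sources-or-excesses are exactly the sources-or-sinks of the original. Given any solution of the new instance one reads off the corresponding solution of the old one in constant time.

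For membership in \PPAD, I plan a polynomial-time search reduction to \problem{SourceOrSink} by a local vertex-splitting gadget. Call a vertex $v$ of the original instance \emph{active} if it has at least one consistent incoming or outgoing edge, a property decidable from a constant number of evaluations of $C_\textup{succ}, C_{\textup{pred}_1}, C_{\textup{pred}_2}$. Each active $v$ is replaced by three copies $v_\textup{in}, v_\textup{out}, v_\textup{dead}$ with internal edge $v_\textup{in} \to v_\textup{out}$, while each inactive $v$ becomes three self-loops that contribute no solution. A consistent $C_{\textup{pred}_1}$-edge from $u$ to $v$ is realized as $u_\textup{out} \to v_\textup{in}$; a consistent $C_{\textup{pred}_2}$-edge from $u$ to $v$ is realized as $u_\textup{out} \to v_\textup{in}$ when $v$ has no distinct $C_{\textup{pred}_1}$-predecessor, and as $u_\textup{out} \to v_\textup{dead}$ otherwise. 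The zero vertex of the new instance is $0_\textup{in}$, which syntactically inherits the $(0,1)$ indegree/outdegree profile of the original zero.

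A direct case check will confirm that every vertex of the new graph has in- and outdegree at most $1$ (so the construction is a legal \problem{SourceOrSink} instance) and that every nonzero solution decodes in polynomial time to a solution of the original: a source at $v_\textup{in}$ encodes a nonzero source $v$; a sink at $v_\textup{out}$ encodes an excess $v$ with original outdegree $0$ and indegree at least $1$; and a sink at $v_\textup{dead}$ encodes an excess $v$ with two distinct predecessors. The main subtlety to address is the treatment of isolated vertices: without the activeness guard their three copies would each contribute a spurious source/sink pair. I will note, but not repair, the fact that this reduction is not parsimonious (a $(2,0)$-type excess yields both a $v_\textup{out}$-sink and a $v_\textup{dead}$-sink); parsimony is not required for \PPAD-completeness in the standard many-one sense, which is all that this claim asserts.
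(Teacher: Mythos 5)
Your proof is correct and follows essentially the same route as the paper's: a trivial embedding of \problem{SourceOrSink} for hardness, and a local gadget replacement that eliminates indegree-$2$ vertices for membership. The paper's gadget acts only on the excess vertices (a double sink becomes two sinks, a $(2,1)$-vertex becomes a sink plus a $(1,1)$-vertex) rather than tripling every vertex, but both constructions are non-parsimonious at exactly the same spot (the $(2,0)$-vertex) and both decode solutions locally, so the difference is cosmetic.
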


\begin{proof}
Clearly, every \ts \problem{SourceOrSink} \ts instance is a \ts \problem{SourceOrExcess(2,1)} \ts instance.
Given a \ts \problem{SourceOrExcess(2,1)} \ts instance, we replace every double sink with two sinks, and every indegree~2,
outdegree~1 vertex with a sink and an indegree~1, outdegree~1 vertex. Note that we replaced only excess vertices and added~1
or~2 sinks each time.
Since the construction was local, a solution to this \ts \problem{SourceOrSink} \ts instance can be converted back into a solution of the \ts \problem{SourceOrExcess(2,1)} \ts instance.
\end{proof}

\smallskip

\begin{claim}
\label{cla:PPADinPPADS}
\begin{eqnarray*}
\#\PPAD\problemp{SourceOrExcess(2,1)}-1 \,
\subseteq \, \#\PPADS\problemp{Excess(2,1)}-1 \end{eqnarray*}
via relativizing parsimonious reductions.
\end{claim}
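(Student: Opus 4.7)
The plan is to produce a single relativizing parsimonious reduction from the search problem \problem{SourceOrExcess(2,1)} to \problem{Excess(2,1)}. Any $h \in \#\PPAD\problemp{SourceOrExcess(2,1)}$ is defined by a parsimonious reduction to \problem{SourceOrExcess(2,1)}; composing that reduction with mine yields a parsimonious reduction witnessing $h \in \#\PPADS\problemp{Excess(2,1)}$, and subtracting~$1$ on both sides gives the desired inclusion. The only semantic difference between the two problems is that \problem{SourceOrExcess(2,1)} additionally counts nonzero sources (indegree~$0$, outdegree~$1$), while \problem{Excess(2,1)} insists on indegree strictly greater than outdegree, so the job is to convert each nonzero source of the input into a fresh excess vertex of the output without disturbing any existing excess vertex or creating spurious ones.

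Given an instance $G$ of \problem{SourceOrExcess(2,1)} on vertex set $\{0,1\}^n$ with circuits $C_{\textup{succ}}, C_{\textup{pred}_1}, C_{\textup{pred}_2}$ (possibly with oracle gates), I will construct $G'$ on $\{0,1\}^{n+2}$, writing each vertex as a pair $(t,x)$ with $t\in\{0,1\}^2$ and $x\in\{0,1\}^n$. The $(00,\cdot)$-slab will be a faithful copy of $G$, using the original circuits verbatim; in particular $(00,0^n)$ serves as the zero vertex of $G'$ and inherits the $(0,1)$ degree signature required syntactically. For each $x$, the two vertices $(01,x)$ and $(10,x)$ form a source-sink gadget: if $x$ is a nonzero source of $G$ (a property that is checkable in polynomial time by evaluating $C_{\textup{succ}}$ on $C_{\textup{pred}_1}(x)$ and $C_{\textup{pred}_2}(x)$ to confirm no incoming edges, and verifying that $C_{\textup{succ}}(x)$ closes back via $C_{\textup{pred}_i}$), I install the single edge $(01,x)\to(10,x)$; otherwise both vertices, and every $(11,\cdot)$-vertex, are declared isolated self-loops. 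Encoding these rules into three new circuits $C'_{\textup{succ}}, C'_{\textup{pred}_1}, C'_{\textup{pred}_2}$ is a routine polynomial-time transformation that preserves the $(2,1)$ degree bounds syntactically.

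With this construction the excess vertices of $G'$ are exactly the excess vertices of $G$ re-embedded as $(00,v)$ together with the new sinks $(10,v)$, one for each nonzero source $v$ of $G$; the auxiliary $(01,v)$ are themselves nonzero sources and hence not counted, and the original nonzero sources now sit as $(00,v)$ with an unchanged $(0,1)$ signature and are likewise excluded. The map $v\mapsto(00,v)$ on excess solutions and $v\mapsto(10,v)$ on nonzero-source solutions is an efficiently computable bijection with equally efficient inverse, so the reduction is parsimonious. Relativization is automatic because every circuit built above treats oracle gates as opaque subcircuits. The main (admittedly minor) obstacle is getting the local source-detection predicate exactly right -- it must simultaneously exclude vertices with outdegree~$0$ and reject spurious predecessor candidates $C_{\textup{pred}_i}(x)$ that fail the reverse $C_{\textup{succ}}$ check -- since any slippage here would turn the bijection into an approximate correspondence and break parsimony; spelling out this constant-cost local check carefully is what makes the reduction rigorous.
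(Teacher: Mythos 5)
Your reduction is correct and is essentially the paper's own proof: for each nonzero source of the input you append a disconnected (source,\ts{}sink) pair, so the original excess vertices plus the new sinks reproduce the value exactly, and the whole construction is a local, oracle-transparent circuit transformation. The extra detail you supply (the slab encoding and the constant-cost local test for being a nonzero source) is a faithful spelling-out of what the paper leaves implicit.
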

\begin{proof}
Given a \ts $\problem{SourceOrExcess(2,1)}$ \ts instance we create an \ts $\problem{Excess(2,1)}$ \ts instance of the same value by adding a (source,\ts{}sink) pair that is not connected to the rest of the instance, for every source vertex in the input.
\end{proof}

\smallskip

\begin{claim}
\label{cla:PPADinPPA}
\begin{eqnarray*}
\#\PPAD\problemp{SourceOrExcess(2,1)}-1
\, \subseteq \, \#\PPA\problemp{Preleaf}-1 \end{eqnarray*}
via relativizing parsimonious reductions.
\end{claim}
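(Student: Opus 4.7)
The plan is to build an explicit gadget-based parsimonious reduction. For each vertex $v$ of indegree/outdegree $(i,o)$ in the input $\problem{SourceOrExcess(2,1)}$ instance we install a small gadget of at most six new $\problem{Preleaf}$-vertices, and for each directed edge $(u,v)$ we add one undirected edge between the out-port of $u$'s gadget and the in-port $v^{\textup{in},j}$ of $v$'s gadget (with $j\in\{1,2\}$ determined by which of $C_{\textup{pred}_1}(v),C_{\textup{pred}_2}(v)$ equals $u$). The gadget dictionary is: $(0,0)$ becomes a single isolated vertex; $(0,1)$ and $(1,0)$ each become a pendant edge $v^{\textup{port}}-v^{\textup{pend}}$ with the external attached at $v^{\textup{port}}$; $(1,1)$ becomes the edge $v^{\textup{in}}-v^{\textup{out}}$ with externals at both ends; $(2,0)$ becomes the three-vertex path $v^{\textup{in},1}-v^{\textup{mid}}-v^{\textup{in},2}$ together with a disconnected auxiliary $P_3$; and $(2,1)$ becomes the edge $v^{\textup{in},1}-v^{\textup{out}}$ routing the $C_{\textup{pred}_1}$ in-edge through to the out-port, together with a pendant $v^{\textup{in},2}-v^{\textup{pend}}$ capping off the $C_{\textup{pred}_2}$ in-edge. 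The $\problem{Preleaf}$-vertex $0$ is identified with $0^{\textup{pend}}$ of the source gadget of the syntactically-guaranteed source vertex~$0$, ensuring it has degree one. Since every output circuit value is computable from a polynomial number of queries to the input circuits, the reduction relativizes.

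Each gadget has internal degree at most two, and every external edge attaches at a port of internal degree at most one, so the constructed graph has maximum degree two and decomposes into paths and cycles. Adopting the convention that at each $(2,1)$-excess the $C_{\textup{pred}_1}$-in-edge carries a ``continuing'' flow while the $C_{\textup{pred}_2}$-in-edge is ``absorbed'', every directed edge of the original instance belongs to a unique flow path starting at a source and ending at either a sink, a $(2,0)$, or a $(2,1)$-absorbed port. Under this construction every such flow path spans a $\problem{Preleaf}$ path component on at least four vertices and hence contributes exactly two preleaves; at each $(2,0)$ the two incoming flow paths merge through $v^{\textup{mid}}$ into a single long component (contributing two preleaves) while the auxiliary $P_3$ contributes one additional preleaf, totalling three preleaves per $(2,0)$; and any cycle formed entirely of $(1,1)$-vertices and continuing $(2,1)$-vertices becomes an undirected cycle and contributes zero preleaves.

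Summing, the preleaf count equals $2\#\textup{sinks}+3\#(2,0)+2\#(2,1)$, which by the flow identity $\#\textup{sources}=\#\textup{sinks}+2\#(2,0)+\#(2,1)$ coincides with $\#\textup{sources}+\#\textup{sinks}+\#(2,0)+\#(2,1)$, i.e., the total number of $\problem{SourceOrExcess(2,1)}$-solutions (including vertex~$0$), yielding the desired parsimonious reduction. The main technical subtlety is ensuring the ``at least four vertices'' lower bound on every long path component even when the corresponding flow path consists of a single directed edge: the private pendants in the source/sink and $(2,1)$-absorbed gadgets enforce this by forcing both of a component's two end-leaves to lie strictly inside gadgets rather than at shared port vertices, and the same consideration is the reason the $(2,0)$ gadget supplies its third preleaf via a separate $P_3$ rather than attempting to produce it inside the merged main chain, whose internal vertices are already saturated at degree two once the externals are attached.
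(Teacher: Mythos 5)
Your construction is correct and is in essence the paper's own proof: a local, vertex-type-by-vertex-type gadget replacement in which every leaf is a private pendant hidden inside a gadget and every port vertex has degree at least two, so that the preleaf set is fully controlled. Your gadget dictionary differs only cosmetically from the paper's (for instance, the paper merges both in-edges of a $(2,1)$-vertex at a single degree-$2$ vertex and issues the out-edge from a separate pendant-carrying vertex, whereas you route the $C_{\textup{pred}_1}$ edge through and cap the $C_{\textup{pred}_2}$ edge with a pendant); either way each gadget of type $(0,1)$, $(1,0)$, $(2,0)$, $(2,1)$ contains exactly one preleaf and the types $(0,0)$, $(1,1)$ contain none. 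Where you genuinely diverge is the verification: you count globally, decomposing the digraph into source-rooted flow paths and continuing cycles and then invoking the identity $\#\textup{sources}=\#\textup{sinks}+2\,\#(2,0)+\#(2,1)$ to reconcile $2\,\#\textup{sinks}+3\,\#(2,0)+2\,\#(2,1)$ with the solution count. This accounting is correct --- your handling of cycles, of absorbed versus continuing in-edges, and of the two flow paths merging at a double sink all check out --- but it is an unnecessary detour: precisely because no preleaf relation crosses a gadget boundary in your construction, you could read off one preleaf per source, sink, and excess gadget locally and be done, with no flow identity and no lower bound on component sizes needed. Your final total (the number of solutions counting vertex $0$ as a source) matches the paper's normalization exactly.
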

\begin{proof}
Given a \ts \problem{SourceOrExcess(2,1)} \ts instance, we make local graph replacements with gadgets, making sure that no vertex that connects to another gadget is a leaf, so that we know exactly which vertices of a gadget are preleaves. Since the indegree is an element from \ts $\{0,1,2\}$ \ts and the outdegree is an element from \ts $\{0,1\}$, there are \. $2\cdot 3 = 6$ \. possible vertex types in the input \ts \problem{SourceOrExcess(2,1)} \ts  instance.

The transformation table below shows the gadgets that are used to replace the parts in the \ts \problem{SourceOrExcess(2,1)} \ts instance.
Here the sources and excess nodes are marked gray on the left-hand side, preleaves are marked gray on the right-hand side.
The proof follows from the fact that in each row the number of gray vertices on the left equals the number of gray vertices on the right.
\end{proof}

\begin{center}
\bgroup
\def\arraystretch{2.5}
\begin{tabular}{c|c}
\begin{tikzpicture}[every node/.style={draw,circle,inner sep=0pt,minimum size=0.3cm}]
\node (v) at (0,0) {};
\end{tikzpicture}
&
\begin{tikzpicture}[every node/.style={draw,circle,inner sep=0pt,minimum size=0.3cm}]
\node (v) at (0,0) {};
\end{tikzpicture}
\\
\hline
\begin{tikzpicture}[every node/.style={draw,circle,inner sep=0pt,minimum size=0.3cm}]
\node[fill=gray] (v) at (0,0) {};
\draw[-Latex] ($(v)+(-1,0)$) -- (v);
\end{tikzpicture}
&
\begin{tikzpicture}[every node/.style={draw,circle,inner sep=0pt,minimum size=0.3cm}]
\node[fill=gray] (v) at (0,0) {};
\node (w) at (1,0) {};
\draw ($(v)+(-1,0)$) -- (v);
\draw (w) -- (v);
\end{tikzpicture}
\\
\hline
\begin{tikzpicture}[every node/.style={draw,circle,inner sep=0pt,minimum size=0.3cm}]
\node[fill=gray] (v) at (0,0) {};
\draw[-Latex] ($(v)+(-1,0.3)$) -- (v);
\draw[-Latex] ($(v)+(-1,-0.3)$) -- (v);
\end{tikzpicture}
&
\begin{tikzpicture}[every node/.style={draw,circle,inner sep=0pt,minimum size=0.3cm}]
\node (v) at (0,0) {};
\node (w1) at (1,0) {};
\node[fill=gray] (w2) at (2,0) {};
\node (w3) at (3,0) {};
\draw ($(v)+(-1,0.3)$) -- (v);
\draw ($(v)+(-1,-0.3)$) -- (v);
\draw (w1) -- (w2);
\draw (w2) -- (w3);
\end{tikzpicture}
\\
\hline
\begin{tikzpicture}[every node/.style={draw,circle,inner sep=0pt,minimum size=0.3cm}]
\node[fill=gray] (v) at (0,0) {};
\draw[-Latex] (v) -- ($(v)+(+1,0)$);
\end{tikzpicture}
&
\begin{tikzpicture}[every node/.style={draw,circle,inner sep=0pt,minimum size=0.3cm}]
\node[fill=gray] (v) at (0,0) {};
\node (w) at (-1,0) {};
\draw ($(v)+(+1,0)$) -- (v);
\draw (w) -- (v);
\end{tikzpicture}
\\
\hline
\begin{tikzpicture}[every node/.style={draw,circle,inner sep=0pt,minimum size=0.3cm}]
\node (v) at (0,0) {};
\draw[-Latex] (v) -- ($(v)+(+1,0)$);
\draw[-Latex] ($(v)+(-1,0)$) -- (v);
\end{tikzpicture}
&
\begin{tikzpicture}[every node/.style={draw,circle,inner sep=0pt,minimum size=0.3cm}]
\node (v) at (0,0) {};
\draw ($(v)+(+1,0)$) -- (v);
\draw ($(v)+(-1,0)$) -- (v);
\end{tikzpicture}
\\
\hline
\begin{tikzpicture}[every node/.style={draw,circle,inner sep=0pt,minimum size=0.3cm}]
\node[fill=gray] (v) at (0,0) {};
\draw[-Latex] (v) -- ($(v)+(+1,0)$);
\draw[-Latex] ($(v)+(-1,0.3)$) -- (v);
\draw[-Latex] ($(v)+(-1,-0.3)$) -- (v);
\end{tikzpicture}
&
\begin{tikzpicture}[every node/.style={draw,circle,inner sep=0pt,minimum size=0.3cm}]
\node (v) at (0,0) {};
\node (w1) at (1,0) {};
\node[fill=gray] (w2) at (2,0) {};
\draw ($(v)+(-1,0.3)$) -- (v);
\draw ($(v)+(-1,-0.3)$) -- (v);
\draw ($(w2)+(1,0)$) -- (w2);
\draw (w1) -- (w2);
\end{tikzpicture}
\end{tabular}
\egroup
\end{center}

\medskip

\subsubsection{{\normalfont\#\PPADS}}

\begin{claim}\label{cla:PPADSpresink}
\problem{Presink} \ts is \ts \PPADS-complete.
\end{claim}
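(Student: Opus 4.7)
The plan is to establish both directions by a simple parsimonious reduction between \problem{Sink} and \problem{Presink}, using that in a \problem{Sink} instance every vertex has outdegree at most~$1$.

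For membership in \PPADS, I would note that a \problem{Presink} instance has the \emph{same} input format $(C_{\textup{succ}},C_{\textup{pred}})$ as a \problem{Sink} instance, so we can directly call the \problem{Sink} oracle on it, obtain a nonzero sink $s$, and return $C_{\textup{pred}}(s)$. Since every vertex has outdegree at most $1$, two distinct sinks cannot share a predecessor, so the map $s\mapsto C_{\textup{pred}}(s)$ is an injection from sinks to presinks; conversely every presink has at least one outgoing edge into a sink, and its unique outgoing edge pins down that sink, giving a bijection. In particular this reduction is parsimonious.

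For \PPADS-hardness, I would reduce \problem{Sink} to \problem{Presink} by \emph{subdividing} every edge. Given $(C_{\textup{succ}},C_{\textup{pred}})$ on vertex set $\{0,1\}^n$, I construct a new graph on $\{0,1\}^{n+1}$, where a leading bit $0$ denotes ``old'' vertices and a leading bit $1$ denotes ``new'' midpoint vertices. The new edges are $0u\to 1u$ whenever $C_{\textup{succ}}(u)$ is defined (i.e.\ $u$ has an outgoing edge $u\to C_{\textup{succ}}(u)$ in the original graph), and $1u\to 0\ts C_{\textup{succ}}(u)$ for such $u$; all other vertices are isolated self-loops. These adjacencies are readily computed from $C_{\textup{succ}}, C_{\textup{pred}}$ by small modifications, and the zero-vertex convention (indegree $0$, outdegree $1$) is preserved because the old zero vertex $0^{n+1}$ retains a single outgoing edge to $1\ts 0^n$. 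Under this construction, sinks of the new graph are precisely the vertices $0v$ where $v$ was a sink of the original graph, and the presinks of the new graph are precisely the midpoints $1u$ with $C_{\textup{succ}}(u)=v$ for such a sink $v$. Thus given a presink $1u$ of the new graph, we simply output $C_{\textup{succ}}(u)$ as a sink of the original instance; this is again a parsimonious correspondence, since each original sink $v$ has a unique predecessor $u=C_{\textup{pred}}(v)$ producing a unique presink midpoint~$1u$.

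The hardest step, such as it is, will just be checking the syntactic hygiene of the construction: that $C'_{\textup{succ}},C'_{\textup{pred}}$ are mutually consistent (every claimed edge is confirmed from both endpoints), that the zero-vertex degree condition still holds, and that no spurious presinks are created at the boundary of the isolated self-loops. Once these routine verifications are in place, the two parsimonious reductions above give \PPADS-completeness, and indeed the bijection in the membership direction also justifies Claim~\ref{cla:presinkequalssharpP}.
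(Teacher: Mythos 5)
Your proof is correct, but it is more elaborate than what is needed and than what the paper does. The paper's proof is the one-line observation that a solution to either search problem is locally convertible into a solution of the other on the \emph{same} instance: from a sink $s$ one reads off the presink $C_{\textup{pred}}(s)$, and from a presink $u$ one reads off the sink $C_{\textup{succ}}(u)$, so the identity map on instances already gives reductions in both directions and \PPADS-completeness of \problem{Presink} follows from that of \problem{Sink}. Your membership direction is exactly this; your hardness direction instead introduces an edge-subdivision gadget, which works but buys nothing here, since the identity reduction with the solution map $u\mapsto C_{\textup{succ}}(u)$ is already a valid (and, because sinks and presinks are in bijection in these indegree-$\le 1$, outdegree-$\le 1$ digraphs, even parsimonious) reduction. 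The bijection you establish in the membership direction is the content actually used later for the counting statement, and your argument for it is sound.
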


\begin{proof}
This follows directly from the fact that if we find a sink, then it is easy to find a presink, and vice versa.
\end{proof}

\smallskip

\begin{claim}\label{cla:PPADSexcess}
\problem{Excess(2,1)} \ts is \ts \PPADS-complete.
\end{claim}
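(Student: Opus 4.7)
The plan is to mirror the structure of Claim~\ref{cla:SourceOrExcess}, establishing $\PPADS$-completeness by proving containment in $\PPADS$ and hardness from $\problem{Sink}$, with the latter being essentially immediate.

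For hardness, I would first observe that every $\problem{Sink}$ instance is already an $\problem{Excess(2,1)}$ instance: given circuits $C_{\textup{succ}}$ and $C_{\textup{pred}}$, I set $C_{\textup{pred}_1} := C_{\textup{pred}}$ and let $C_{\textup{pred}_2}(y) := y$ for all $y$, so that $C_{\textup{pred}_2}$ never witnesses a genuine edge (assuming the input instance has no self-loops, which can be syntactically ensured). Under this encoding every vertex has indegree in $\{0,1\}$ and outdegree in $\{0,1\}$, so an excess vertex must have $(\text{indeg},\text{outdeg}) = (1,0)$, which is exactly a sink of the original $\problem{Sink}$ instance. The reduction is parsimonious.

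For containment, I would take an $\problem{Excess(2,1)}$ instance and apply the same local gadget replacements already used in the proof of Claim~\ref{cla:SourceOrExcess}: replace each double sink (indegree $2$, outdegree $0$) by two separate sinks, and replace each indegree-$2$, outdegree-$1$ vertex by one sink together with a pass-through vertex of indegree $1$, outdegree $1$. The gadgets are local, so the successor and single predecessor circuits of the new $\problem{Sink}$ instance can be computed in polynomial time from the circuits of the old instance, and the zero vertex is preserved as a source of outdegree $1$. Each excess vertex contributes at least one sink, and conversely, any sink returned by a $\problem{Sink}$ oracle call lies inside exactly one gadget and can be traced back in polynomial time to a unique excess vertex of the original instance.

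The proof is entirely routine and I do not anticipate a main obstacle: the only point requiring a moment's care is verifying that the syntactic guarantee on the zero vertex carries through both reductions, and that the gadgets respect the $(2,1)$ bound on indegrees and outdegrees, both of which are visibly preserved by the construction from Claim~\ref{cla:SourceOrExcess}.
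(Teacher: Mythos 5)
Your proposal is correct and follows essentially the same route as the paper: the hardness direction is the trivial observation that every \problem{Sink} instance is already an \problem{Excess(2,1)} instance, and the containment direction uses exactly the same local replacements (double sink $\to$ two sinks; indegree-$2$, outdegree-$1$ vertex $\to$ a sink plus a pass-through vertex), with locality guaranteeing that a found sink can be traced back to the originating excess vertex. The extra details you supply (the explicit choice of $C_{\textup{pred}_2}$ and the check on the zero vertex) are harmless elaborations of what the paper leaves implicit.
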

\begin{proof}
Clearly, every \problem{Sink} instance is a \problem{Excess(2,1)} instance.
Given an \problem{Excess(2,1)} instance, we replace every double sink with two sinks, and every indegree~2, outdegree~1 vertex with a sink and an indegree 1, outdegree 1 vertex. Note that we replaced only excess vertices and for each replacement we locally added 1 or 2 sinks.
Hence, a solution to this \problem{Sink} instance can be converted back into a solution for the \problem{Excess(2,1)} instance.
\end{proof}

\smallskip

\begin{claim}\label{cla:sinkequalssharpP}
$\#\PPADS\problemp{Sink}-1 = \sharpP$ \ts via relativizing parsimonious reductions.
\end{claim}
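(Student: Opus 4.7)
The plan is to establish the two inclusions separately, each by exhibiting a relativizing parsimonious reduction. The underlying combinatorial observation is that in any \problem{Sink} instance the graph has in- and out-degrees in $\{0,1\}$, so it is a disjoint union of simple paths and simple cycles. A sink is a vertex with $(\text{indeg},\text{outdeg})=(1,0)$, and on each path with at least one edge the unique sink is paired with a unique nonzero-outdegree source, i.e., a vertex with $(\text{indeg},\text{outdeg})=(0,1)$. The syntactic guarantee on the \problem{Sink} problem makes vertex $0$ one such source, paired with the sink reached by iterating $C_\textup{succ}$ from $0$. Hence the number of sinks equals one plus the number of \emph{nonzero} sources. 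This is the entire content of the claim.

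For $\#\PPADS\problemp{Sink}{-}1\subseteq\sharpP^A$ I would argue as follows. Given an instance $(C_\textup{succ},C_\textup{pred})$ with oracle gates to $A$, I construct the $\sharpP^A$ machine that nondeterministically guesses a vertex $v\neq 0^n$ and accepts iff (i) $C_\textup{pred}(C_\textup{succ}(v))=v$ (so the outgoing candidate edge really exists, i.e., $\textup{outdeg}(v)=1$) and (ii) $C_\textup{succ}(C_\textup{pred}(v))\neq v$ (so the incoming candidate edge is absent, i.e., $\textup{indeg}(v)=0$). Both tests are computable in polynomial time with oracle access to $A$, and the number of accepting paths is exactly the number of nonzero sources, which by the observation above equals $\#\PPADS\problemp{Sink}(I)-1$. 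This is parsimonious by construction and relativizes because the tests only reuse the oracle gates already present in the instance.

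For the reverse inclusion $\sharpP^A\subseteq\#\PPADS\problemp{Sink}{-}1$, given $f\in\sharpP^A$ realized by an oracle nondeterministic polynomial-time Turing machine $M$, I would first use the relativizing Cook--Levin construction to produce, in polynomial time from $w$, an oracle circuit $C_w$ whose satisfying assignments are in bijection with the accepting paths of $M^A(w)$. I then build a \problem{Sink} instance on vertex set $\{0\}\cup\{\mathbf{1}\}\cup\{y_s,y_t : y\in\{0,1\}^{p(|w|)}\}$ with the following edges, encoded by circuits $C_\textup{succ}^A,C_\textup{pred}^A$: an edge $0\to\mathbf{1}$ (making $\mathbf{1}$ a sink), and for each $y$ with $C_w(y)=1$ the edge $y_s\to y_t$ (creating a source--sink pair); all other vertices are left isolated via self-loops, which the \problem{Sink} semantics treat as absent edges because then $C_\textup{pred}(C_\textup{succ}(v))\neq v$. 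This yields exactly $1+f(w)$ sinks, so $\#\PPADS\problemp{Sink}(I_w)-1=f(w)$, and the reduction is parsimonious and relativizes because all circuit tests are local and use $C_w$ as a black box with its oracle gates preserved.

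The work is essentially routine, and I expect no deep obstacle. The only step requiring care is the bookkeeping that makes the reduction syntactically conform to the definition of \problem{Sink}: in particular, ensuring that $0$ has $(\text{indeg},\text{outdeg})=(0,1)$ by construction, that the non-edge convention ($C_\textup{succ}(v)=v$ or $C_\textup{succ}(v)=z$ with $C_\textup{pred}(z)\neq v$) is used consistently so that isolated $y_s,y_t$ do not accidentally contribute sinks, and that in the forward direction the counted source condition correctly identifies precisely the $(0,1)$-degree vertices. Once these conventions are fixed, both reductions are immediate, and the relativizing character is inherited from the fact that the oracle is only ever queried through the gates already present in the inputs.
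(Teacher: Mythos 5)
Your proposal is correct and follows essentially the same route as the paper: the forward inclusion counts the nonzero sources (using that in a \problem{Sink} instance the digraph decomposes into paths and cycles, so the number of sinks exceeds the number of nonzero sources by exactly one), and the reverse inclusion reduces from \problem{CircuitSat}$^A$ by creating one source--sink pair per satisfying assignment plus the syntactically forced sink reached from the zero vertex. The paper's proof is just a terser version of the same two constructions.
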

\begin{proof}
To see the containment in $\sharpP$, we just count only the nonzero sources.
The second direction is the same construction as in Claim~\ref{cla:sourceorpresinkequalssharpP}.
\end{proof}

\smallskip

\begin{claim}\label{cla:presinkequalssharpP}
$\#\PPADS\problemp{Presink}-1 = \sharpP$ \ts via relativizing parsimonious reductions.
\end{claim}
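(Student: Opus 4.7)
The plan is to mirror Claim~\ref{cla:sinkequalssharpP} and establish both inclusions via relativizing parsimonious reductions. The central observation I would exploit is that in any \problem{Presink} instance---which inherits the \problem{Sink} setup where indegree and outdegree are each at most~$1$---every sink has indegree exactly~$1$, and every vertex has outdegree at most~$1$. This yields a natural bijection between sinks and presinks: each sink $s$ corresponds uniquely to its predecessor $\pi(s)$ (which is then the unique presink witnessing $s$), and conversely each presink corresponds uniquely to its sink successor.

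For $\#\PPADS\problemp{Presink}-1 \subseteq \sharpP$, I would combine this bijection with the standard \PPADS parity identity $\#\textup{sinks} = \#(\textup{nonzero sources}) + 1$, where the $+1$ accounts for the zero vertex. Then $\#\textup{presinks} - 1 = \#\textup{sinks} - 1 = \#(\textup{nonzero sources})$, which is manifestly in $\sharpP$ via the nondeterministic machine that guesses a nonzero vertex $v$ and accepts iff $v$ has indegree $0$ and outdegree $1$. This reduction obviously relativizes.

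For the reverse direction $\sharpP \subseteq \#\PPADS\problemp{Presink}-1$, I would reuse the construction from Claim~\ref{cla:sourceorpresinkequalssharpP}. Given a Boolean circuit $C$ (possibly with oracle gates) having $c$ satisfying assignments, build a \problem{Presink} instance in which the zero vertex points to vertex~$1$ (a sink, making vertex~$0$ itself a presink), and for each $x$ with $C(x)$ true we add an edge $2x \to 2x+1$ with $2x+1$ a sink (making $2x$ a presink); all other vertices are self-loops. The total number of presinks is $c + 1$, which recovers $c$ after subtracting~$1$. The reduction is parsimonious and relativizes without modification.

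There is no genuine obstacle here: both directions are routine once the sink-to-presink bijection is observed. The only subtlety worth checking is that the \PPADS degree constraints genuinely force this bijection, which follows immediately from the syntactic guarantees on the successor/predecessor circuits defining the \problem{Presink} instance.
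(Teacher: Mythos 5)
Your proposal is correct and matches the paper's proof, which simply reuses the argument for Claim~\ref{cla:sinkequalssharpP}: count nonzero sources for the containment in $\sharpP$ (the sink--presink bijection plus the parity identity you state), and reuse the construction of Claim~\ref{cla:sourceorpresinkequalssharpP} for the reverse inclusion, yielding an instance of value $c+1$. Your write-up just makes the implicit bijection explicit.
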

\begin{proof}
This is the same proof as for Claim~\ref{cla:sinkequalssharpP}.
\end{proof}

\smallskip

\subsubsection{{\normalfont\#\PPA}}

\begin{claim}\label{cla:preleaf}
\problem{Preleaf} \ts is \ts \PPA-complete.
\end{claim}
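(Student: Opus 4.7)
There are two directions to establish: \problem{Preleaf} lies in $\PPA$ and is $\PPA$-hard.

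Membership in $\PPA$ is almost immediate. A \problem{Preleaf} instance has the same input format as a \problem{Leaf} instance, namely circuits $C_1, C_2$ describing a max-degree-$2$ graph $G$ with the syntactic guarantee that vertex $0$ has degree~$1$. Given such an instance, invoke the \problem{Leaf} oracle to obtain a nonzero leaf $\ell$ and return its unique $G$-neighbor, which is then a preleaf (computable in polynomial time from $C_1, C_2$). Thus \problem{Preleaf} reduces in polynomial time to \problem{Leaf} and lies in $\PPA$.

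For $\PPA$-hardness, I will reduce \problem{Leaf} to \problem{Preleaf} by \emph{edge subdivision}. Given a \problem{Leaf} instance $G$ with circuits $C_1, C_2$, build the subdivided graph $G'$: insert a new midpoint vertex $m_{u,v}$ on every edge $\{u, v\}$ of $G$, replacing the edge by the length-two path $u - m_{u,v} - v$. Encode original vertices of $G$ with a leading tag bit $0$ and midpoints as $1 \cdot \langle u, v\rangle$ with the endpoints in canonical (sorted) order. The neighbor circuits $C'_1, C'_2$ of $G'$ are easily built from $C_1, C_2$ in polynomial time---the neighbors of an original vertex $v$ are the midpoints $m_{v, C_i(v)}$ (when $C_i(v) \neq v$), while the neighbors of a midpoint $m_{u,v}$ are $u$ and $v$---and vertex~$0$ retains degree~$1$ in $G'$, so the syntactic promise required by \problem{Preleaf} is satisfied. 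The key structural observation is that in $G'$ the leaves are exactly the leaves of $G$ (every midpoint has degree $2$); hence the preleaves of $G'$ are precisely the midpoints $m_{u,v}$ such that $u$ or $v$ is a leaf of $G$. The ``trivial'' preleaf $m_{0, C(0)}$ witnesses the known leaf at $0$, and any other preleaf $m_{u,v}$ witnesses an edge of $G$ incident to a leaf distinct from $0$ (since $\{0, C(0)\}$ is the unique edge of $G$ incident to the leaf $0$). Decoding the label of a nontrivial \problem{Preleaf} solution yields the pair $(u,v)$, and testing leafness of $u, v$ in $G$ using $C_1, C_2$ recovers a nonzero leaf of $G$, completing the reduction.

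The main obstacle is really just to pin down the precise convention for which preleaf in a \problem{Preleaf} instance counts as ``trivial'' (and is therefore not an admissible output), so that both reductions mesh cleanly. With the natural analog of the \problem{Leaf} convention---the known preleaf being $C(0)$---everything lines up, the hardness reduction above correctly maps the excluded solution on the $G$ side to the excluded solution on the $G'$ side, and no further difficulty arises.
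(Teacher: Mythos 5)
Your two reductions are both more elaborate than needed: the paper's proof is a one-line observation that a \problem{Leaf} instance and a \problem{Preleaf} instance are the \emph{same} instance, and that on that instance a nonzero leaf yields a preleaf (its neighbor) and a valid preleaf yields a nonzero leaf (one of its at most two neighbors), both in polynomial time. In particular no edge-subdivision gadget is required for hardness. Your gadget is a correct construction as far as the graph-theoretic bookkeeping goes (leaves of $G'$ are exactly the leaves of $G$, preleaves of $G'$ are exactly the midpoints of edges incident to leaves of $G$), so the extra machinery is harmless in itself.

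The real problem is the convention you settle on at the end. You declare the excluded preleaf to be $C(0)$ and assert that "no further difficulty arises," but under that convention \problem{Preleaf} is not even a total search problem: on the path $0 - v - \ell$ the unique preleaf is $v=C(0)$, so there is no admissible solution. Worse, your own hardness reduction manufactures exactly this situation --- if the component of $0$ in $G$ is a single edge $\{0,C(0)\}$ (a perfectly legal \problem{Leaf} instance whose answer is $C(0)$), the subdivided graph $G'$ is that $3$-path and has no nontrivial preleaf to decode. Your membership reduction fails on the same instance, since the neighbor of the nonzero leaf $\ell$ is $C(0)$ itself. The convention that makes the claim true is to search for a vertex adjacent to a \emph{nonzero} leaf (no vertex is excluded by name); with that reading totality is inherited from \problem{Leaf}, both of your reductions become the paper's trivial same-instance ones, and the subdivision construction is unnecessary. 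So the gap is not a missing idea but a wrong resolution of the one issue you correctly flagged as the crux.
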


\begin{proof}
This follows directly from the fact that if we find a leaf, then we can readily find a preleaf, and vice versa.
\end{proof}

\smallskip

\begin{claim}
\label{cla:PPAhalfinPPA}
$\sharpCOUNTALL[-PPA]\problemp{Leaf}/2 \.
\subseteq \. (\#\PPA\problemp{Leaf}-1)/2$ \.
via relativizing parsimonious reductions.
\end{claim}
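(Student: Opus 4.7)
The plan is to pad any $\problem{AllLeaves}$ instance with a fixed two-vertex gadget that contributes exactly two new leaves, one of which is positioned at the all-zeros vertex so that the syntactic guarantee of $\problem{Leaf}$ is met. Since the leaf count of an $\problem{AllLeaves}$ instance is always even, adding two keeps it even, and the arithmetic lines up $(\#\PPA\problemp{Leaf}-1)/2$ on the new instance with $\sharpCOUNTALL[-PPA]\problemp{Leaf}/2$ on the original.

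Concretely, given an $\problem{AllLeaves}$ instance $C$ on vertex set $\{0,1\}^n$ described by circuits $C_1, C_2$ (possibly with oracle gates), I build a $\problem{Leaf}$ instance $C'$ on the enlarged vertex set $\{0,1\}^{n+2}$ by a fixed polynomial-size wrapper around $C_1, C_2$ that does not itself query the oracle, so the construction relativizes. Using the first two bits as a flag, one block hosts the two-vertex gadget (a single edge incident to the all-zeros vertex $0^{n+2}$ and one specific partner in the same block), another block embeds the original graph via $C_1,C_2$ applied to the last $n$ bits, and the remaining blocks are left isolated by self-loops or no edges, as the $\problem{Leaf}$ conventions demand. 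The extra prefix bit, compared to an $\{0,1\}^{n+1}$ construction, is there solely to avoid boundary issues when $n$ is small.

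The verification is a short accounting. Every vertex has degree at most two, and $0^{n+2}$ has degree exactly one, so the syntactic requirement of $\problem{Leaf}$ is satisfied. Writing $2k := \sharpCOUNTALL[-PPA]\problemp{Leaf}(C)$, which is even by the handshake lemma on a graph of maximum degree two, the leaf set of $C'$ decomposes as the disjoint union of the $2k$ inherited leaves from the embedded copy of $C$ with the two fresh gadget leaves. Hence $\#\PPA\problemp{Leaf}(C') = (2k+2)-1 = 2k+1$, and
\[
\frac{\#\PPA\problemp{Leaf}(C') \. - \. 1}{2} \; = \; k \; = \; \frac{\sharpCOUNTALL[-PPA]\problemp{Leaf}(C)}{2},
\]
which is exactly the equality of counting-function values demanded by the parsimonious-reduction framework of Section~\ref{ss:TFNP-Background}. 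There is no genuine obstacle here; the only care needed is in preventing collisions between the gadget and the embedded instance, which the bit-prefix partition handles uniformly in $n$.
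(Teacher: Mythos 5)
Your proof is correct and follows essentially the same route as the paper's: pad the $\problem{AllLeaves}$ instance with a single new edge whose endpoints are the zero vertex and one fresh partner leaf, then check that the nonzero-leaf count becomes $2k+1$ so that $(\#\PPA\problemp{Leaf}-1)/2$ recovers $k$. The extra encoding details (prefix-bit partition, oracle-free wrapper) are fine but not substantively different from the paper's one-line construction.
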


\begin{proof}
Given a \ts $\problem{AllLeaves}$ \ts instance $G$ we create a \ts $\problem{Leaf}$ \ts
instance~$G'$ as follows.
We add a zero leaf vertex and pair it with another newly created leaf, so \.
$\problem{AllLeaves}(G)+1=\problem{Leaf}(G')$.  Therefore, if \ts $\problem{AllLeaves}(G)/2=k$,
then \. $(\problem{Leaf}(G')-1)/2 = (\problem{AllLeaves}(G)+1-1)/2 = k$, which finishes the proof.
\end{proof}

\smallskip

\subsubsection{{\normalfont\#\PPP}}

We recall the slightly different definition of the classical $\problem{Pigeon}$ problem:
In our case, if $C(x)=C(y)$ and $C(C(x))=0$, then we do not count this as a solution.
However, in this case $z=C(x)$ obviously is a solution, because $C(z)=0$.

\smallskip

\begin{claim}\label{cla:PPPpigeno}
\problem{Pigeon} \ts
is \ts $\PPP$-complete.
\end{claim}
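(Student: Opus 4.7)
The plan is to prove both $\PPP$-membership and $\PPP$-hardness via essentially the identity map $C \mapsto C$ on instances, with only a trivial amount of post-processing on the solution side in one direction.

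For membership in $\PPP$, I would reduce the adjusted \problem{Pigeon} to the classical \problem{Pigeon}: on input $C$, return $C$ itself as the classical instance. Given a classical \problem{Pigeon} solution for $C$, case split. If the solution is an $x$ with $C(x)=0$, then $x$ is already a valid adjusted solution. If the solution is a collision pair $(x,y)$ with $x\neq y$ and $C(x)=C(y)$, then compute $C(C(x))$ in polynomial time; if $C(C(x))\neq 0$ then $(x,y)$ is an adjusted solution by definition, and otherwise $C(C(x))=0$ and we return $z := C(x)$, which satisfies $C(z)=0$ and is therefore an adjusted zero-solution. This post-processing uses a constant number of evaluations of $C$, so it is polynomial time and relativizes.

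For $\PPP$-hardness, I would reduce classical \problem{Pigeon} to the adjusted version using the very same identity map $C \mapsto C$, and now no post-processing is needed at all: every adjusted solution is, by its very definition, either an $x$ with $C(x)=0$ or a pair $x\neq y$ with $C(x)=C(y)$, and either form is directly a valid classical \problem{Pigeon} solution. Hence the identity reduction works in this direction trivially.

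The only place where any care is needed is the asymmetry between the two directions: in the membership reduction the correspondence is not literally the identity on solutions because of the $C(C(x))=0$ subcase, but the observation recorded just before the claim (that in that subcase $z=C(x)$ is a genuine zero of $C$) is exactly what makes the polytime post-processing work. There is no substantive obstacle; the work is essentially bookkeeping and checking that both reductions are polynomial-time and relativize, which they manifestly do.
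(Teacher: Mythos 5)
Your proposal is correct and matches the paper's own argument: both directions use the identity reduction, with the single exceptional case (a classical collision pair whose common image maps to $0$) handled by outputting $z=C(x)$ as an adjusted zero-solution. Nothing further is needed.
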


\begin{proof}
For the sake of this proof, let \ts \problem{ClPigeon} \ts denote the classical pigeon search problem
and \ts \problem{Pigeon} \ts our problem with the slight modification.
Given a \ts \problem{Pigeon} \ts instance.
If we find a solution to the \ts \problem{Pigeon} \ts instance,
then this is also a solution for the same instance interpreted as a \ts
\problem{ClPigeon} \ts instance.
If we find a solution to a \problem{ClPigeon} problem, then this is also a solution for
the same instance interpreted as a \ts \problem{Pigeon} \ts instance, with one exception:
If the solution pair $(v,w)$ satisfies \ts $\varphi(v)=\varphi(w)=x$ \ts and \ts $\varphi(x)=0$.
But in this case we immediately find that \ts $x$ \ts is a solution for \ts \problem{Pigeon}.
\end{proof}

\smallskip

\begin{claim}
\label{cla:PPADSinPPP}
$\#\PPADS\problemp{Excess(2,1)}-1 \.
\subseteq \. \#\PPP\problemp{Pigeon}-1
$ \.
via relativizing parsimonious reductions.
\end{claim}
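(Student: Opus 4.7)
The plan is to construct a polynomial-time parsimonious reduction from $\problem{Excess(2,1)}$ to $\problem{Pigeon}$. Given an Excess$(2,1)$ instance on vertex set $V$ with circuits $C_{\textsu{succ}}, C_{\textsu{pred}_1}, C_{\textsu{pred}_2}$, I would define a Pigeon circuit $C\colon V\to V$ so that each Excess vertex corresponds bijectively to exactly one Pigeon solution (either a collision pair or a preimage of~$0$).

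The central idea is to move the collision from each Excess vertex $z$ to the \emph{first predecessor} of $z$. Concretely, I plan to set $C(v) := C_{\textsu{pred}_1}(C_{\textsu{succ}}(v))$ whenever $v$ has outdegree~$1$ (so $v$ self-loops exactly when $v$ is the first predecessor of its successor, and otherwise $v$ maps onto that first predecessor); $C(v) := C_{\textsu{pred}_1}(v)$ whenever $v$ has outdegree~$0$ and indegree~$1$ (so the sink $v$ maps onto its unique predecessor); and $C(v) := v$ otherwise. Under this rule I would verify locally that a $(1,0)$ sink $z$ with predecessor $u$ yields the collision pair $\{u,z\}$ at~$u$; a $(2,1)$ vertex $z$ with predecessors $w_1,w_2$ (with $w_1$ first) yields the pair $\{w_1,w_2\}$ at~$w_1$; and, crucially, a $(2,0)$ double sink $z$ also yields exactly one collision pair $\{w_1,w_2\}$ at~$w_1$, not the naive $\binom{3}{2}=3$, because $z$ self-loops rather than mapping to~$w_1$. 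Every other pigeonhole then contains at most one preimage.

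The main technical obstacle is the special role of the source vertex~$0$. Under the rule above $C(0) = C_{\textsu{pred}_1}(w^*)$ where $w^* := C_{\textsu{succ}}(0)$, which equals~$0$ whenever~$0$ is labelled as the first predecessor of~$w^*$; then~$0$ self-loops and introduces an unwanted extra preimage of~$0$ that inflates the Pigeon count. My plan is to handle this in two steps: first, to preprocess the predecessor circuits so that whenever $w^*$ has indegree~$2$, the vertex~$0$ is labelled as the \emph{second} predecessor of~$w^*$ (a trivial polynomial-time rewiring); and second, in the remaining case that $w^*$ has indegree~$1$ (so $0$ must be the first predecessor of~$w^*$), to override $C(0):=w^*$ and show that the Excess vertex on $0$'s outgoing path migrates from a type-$2$ collision to a type-$1$ preimage of~$0$, keeping the total count intact.

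The hardest step will be a careful case analysis verifying that, across all possible types of $w^*$, namely $(1,1), (1,0), (2,0), (2,1)$, the Pigeon count equals the Excess count exactly, and that the override at~$0$ does not introduce spurious additional collisions (for example at~$w^*$ itself when $w^*$ would otherwise self-loop). The bijection $\beta$ between Excess solutions and Pigeon solutions sends each Excess vertex~$z$ to its distinguished Pigeon certificate: the collision pair at the first predecessor of $z$ in the generic case, and the preimage of~$0$ in the $w^*$-exceptional case. The inverse of $\beta$ is read off directly by a polynomial-time circuit. Since every step uses only polynomial-size circuits that pass oracle gates through transparently, the reduction relativizes.
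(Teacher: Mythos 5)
Your reduction hosts each collision in the pigeonhole of the \emph{first predecessor} of the excess vertex, and this placement is what creates the vertex-$0$ problem in the first place. The paper's proof makes the opposite choice: $\varphi$ is simply the successor map (sinks self-loop, double sinks map to $0$), so the preimage of a vertex $w$ is its set of in-neighbours, together with $w$ itself exactly when $w$ is a non-double sink. The collision therefore sits in the pigeonhole of the excess vertex $w$ itself; vertex $0$ never self-loops because it has outdegree~$1$; and the only vertices mapping to pigeonhole $0$ are the double sinks, each of which is thereby a witness of the first kind, while the collision between the two predecessors of a double sink is exactly the pair discounted by the modified \problem{Pigeon} definition. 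No case analysis around $0$ is needed at all.

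More importantly, your fix for vertex $0$ fails as stated. Consider the remaining case of your second step: $w^*:=\textup{succ}(0)$ has indegree~$1$, and suppose $w^*$ is a $(1,1)$ vertex whose successor $z$ has $w^*$ as its first (or unique) predecessor. Your general rule gives $C(w^*)=\textup{pred}_1(z)=w^*$, a self-loop, and the override gives $C(0)=w^*$; hence pigeonhole $w^*$ contains both $0$ and $w^*$, and since $C(C(0))=C(w^*)=w^*\neq 0$ this spurious pair is counted. If in addition $z$ is a $(1,0)$ sink, then also $C(z)=\textup{pred}_1(z)=w^*$, so pigeonhole $w^*$ contains $\bigl\{0,w^*,z\bigr\}$ and contributes $\binom{3}{2}=3$ counted pairs against a single excess vertex. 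Your assertion that the excess vertex on $0$'s outgoing path ``migrates to a preimage of $0$'' holds only in the sub-case where $w^*$ itself is a $(1,0)$ sink; when $w^*$ is a $(1,1)$ vertex, nothing maps to $0$ after the override and the witness count strictly increases. Repairing this forces you to re-route $C$ along the initial segment of $0$'s path, and the repairs cascade down that path; the clean way out is to host collisions at the excess vertices rather than at their predecessors, which is precisely the paper's successor-map reduction.
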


\begin{proof}
Given an \ts \problem{Excess(2,1)} \ts instance with edges \.
$(v,w) \in \{0,1\}^n\times \{0,1\}^n$,
we design a \ts \problem{Pigeon} \ts instance, which is a map \.
$\varphi:\{0,1\}^n\to\{0,1\}^n$.
If the outdegree of $w$ is~1, say $(w,v)$ is an edge, then we set \ts $\varphi(w)=v$.
If the outdegree of $w$ is~0, we set \ts $\varphi(w)=w$, with one exception:
for every double sink $w$ we set \ts $\varphi(w)=0$.
This table explains in which cases the output \problem{Pigeon} instance has witnesses to count:

{\small
\begin{center}
\begin{tabular}{l|l}
(indegree, outdegree) of $w$ & witnesses in $\varphi$
\\
\hline
(0,0) & no witnesses, $\varphi(w)=w$
\\
\hline
(1,0) & yes, $\varphi(v)=\varphi(w)=w$
\\
\hline
(2,0) & \begin{minipage}{10cm}
yes, $\varphi(w)=0$. Note that we do not count the pair $(v,v')$ with $\varphi(v)=\varphi(v')=w$ here, because $\varphi(w)=0$.
\end{minipage}
\\
\hline
(0,1) & no
\\
\hline
(1,1) & no
\\
\hline
(2,1) & \begin{minipage}{10cm}
yes, one witness: Either we have $\varphi(v)=\varphi(v')=w$ and $\varphi(w)\neq 0$, or we have $\varphi(w) = 0$.
\end{minipage}
\end{tabular}
\end{center}
}

\nin
We observe that we have one witness in exactly the excess cases, and no witnesses otherwise.
\end{proof}

\subsubsection{{\normalfont\#\CLS}}

The counting problems \ts
\problem{EitherSolution($*$,\ts{}Iter)} \ts
have the technicality that if both component instances contain the last option, then this witness is only counted once.
In particular, this is used to show
Claim~\ref{cla:CLSinPLSandPPAD} below.
This has no effect on the $\CLS$-completeness of the search problem.
For the oracle separation in Theorem~\ref{thm:decrementationseparation} this also plays no role.

\smallskip

\begin{claim}\label{cla:CLSsourceorsink}
The search problems to the counting problems
$$\aligned
& \text{\problem{EitherSolution(SourceOrSink,\ts{}Iter)},} \\
& \text{\problem{EitherSolution(SourceOrPresink,\ts{}Iter)}, \. and} \\
& \text{\problem{EitherSolution(SourceOrExcess(2,1),\ts{}Iter)}}
\endaligned
$$
are all \ts $\CLS$-complete.
\end{claim}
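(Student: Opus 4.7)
The plan is to establish each of the three search problems as $\CLS$-complete by separately proving membership in $\CLS$ and $\CLS$-hardness, using the identity $\CLS = \PPAD \cap \PLS$. By Claims~\ref{cla:SourceOrPresink} and~\ref{cla:SourceOrExcess}, the three ``$A$'' components \problem{SourceOrSink}, \problem{SourceOrPresink}, and \problem{SourceOrExcess(2,1)} are all $\PPAD$-complete, while \problem{Iter} is $\PLS$-complete by definition. Note that the special counting convention that merges the two ``last option'' witnesses is irrelevant at the level of the search problem, since we only need to produce any single valid solution.

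For the membership direction, fix $A$ to be one of the three $\PPAD$-complete problems above, and consider the search problem underlying \problem{EitherSolution($A$,\ts{}Iter)}. Given an instance consisting of a pair $(I_A, I_{\textup{Iter}})$, we obtain a $\PPAD$-algorithm by discarding $I_{\textup{Iter}}$, running the $\PPAD$-algorithm for $A$ on $I_A$, and returning its output, which is by definition a solution to the \problem{EitherSolution}-instance. The symmetric procedure using the $\PLS$-algorithm for \problem{Iter} on $I_{\textup{Iter}}$ shows membership in $\PLS$. Hence the search problem lies in $\PPAD \cap \PLS = \CLS$.

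For the hardness direction, let $L$ be an arbitrary search problem in $\CLS$. Since $L \in \PPAD$ and $A$ is $\PPAD$-complete, there is a polynomial-time reduction $(\alpha_A, \beta_A)$ from $L$ to $A$: $\alpha_A$ converts an $L$-instance $I$ into an $A$-instance $\alpha_A(I)$, and $\beta_A$ maps any solution of $\alpha_A(I)$ back to a solution of $I$. Similarly, since $L \in \PLS$ and \problem{Iter} is $\PLS$-complete, there is a reduction $(\alpha_B, \beta_B)$ from $L$ to \problem{Iter}. Define a polynomial-time reduction from $L$ to \problem{EitherSolution($A$,\ts{}Iter)} by sending $I$ to the pair $\bigl(\alpha_A(I),\, \alpha_B(I)\bigr)$. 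An oracle answer for this \problem{EitherSolution}-instance is either a solution to $\alpha_A(I)$ (translate back via $\beta_A$) or to $\alpha_B(I)$ (translate back via $\beta_B$), and in either case yields a solution to $I$ in polynomial time. This proves $\CLS$-hardness.

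I do not expect a serious obstacle: the argument is the classical ``run both component algorithms in parallel'' paradigm that makes $\CLS = \PPAD \cap \PLS$ actionable. The only subtlety worth flagging is that the reductions $(\alpha_A, \beta_A)$ and $(\alpha_B, \beta_B)$ provided by $\PPAD$- and $\PLS$-completeness are Turing/Karp reductions, not parsimonious ones; but $\CLS$-completeness of the \emph{search} problem is defined via such reductions, so this is harmless. The parsimonious subtlety becomes relevant only when passing to the associated counting class, which is precisely what the ``last option'' counting convention is designed to handle in Claim~\ref{cla:CLSinPLSandPPAD}.
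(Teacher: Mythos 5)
Your proof is correct and matches the paper's approach: the paper's own proof is a one-line observation that the search problems are literally the classical $\CLS$-complete problems (only the counting conventions differ), and your argument simply spells out the underlying standard reason — membership by projecting onto either component, hardness by pairing the $\PPAD$- and $\PLS$-reductions, using $\CLS=\PPAD\cap\PLS$. Your closing remark that the ``last option'' convention only matters for the counting version, not the search problem, is exactly the point the paper makes.
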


\begin{proof}
The search problems are in fact the same as the classical $\CLS$-complete problems. Only the counting versions are different, as in some situations 2 solutions are counted together as 1 solution.
\end{proof}

\begin{claim}
\label{cla:CLSinPLSandPPAD}
\begin{eqnarray*}
\#\CLS\problemp{EitherSolution(SourceOrExcess(2,1),\ts{}Iter)}-1 \quad\quad\quad\quad\\
\subseteq \ \ \#\PLS\problemp{Iter}-1 \ \cap \  \#\PPAD\problemp{SourceOrExcess(2,1)}-1
\end{eqnarray*}
via relativizing parsimonious reductions.
\end{claim}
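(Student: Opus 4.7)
The two inclusions will be proved by exhibiting separate parsimonious reductions. In both directions, the basic idea is to take a CLS EitherSolution instance $(I_1,I_2)$, embed one of the two components verbatim, and encode the other component via local gadgets whose solution contributions match one-for-one with the foreign component's solutions. The \textquotedblleft both-last-option\textquotedblright\ adjustment in the CLS counting rule will be handled in each direction by an efficient check (the definition guarantees this is poly-time decidable) together with a targeted override that kills exactly one gadget solution precisely in the both-last-option case.

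For the inclusion into $\#\PLS\problemp{Iter}-1$, I will build an Iter instance $J$ as follows. Embed $I_2$ on the low range of vertices, so that the zero-vertex syntactic conditions ($C_J(0)>0$ and $C_J(C_J(0))>C_J(0)$) are inherited directly from $I_2$. For each vertex $v$ of $I_1$ allocate a fresh pair $a_v<b_v$ in a disjoint higher range, set $C_J(b_v)=b_v$ unconditionally, and set $C_J(a_v)=b_v$ precisely when $v$ is a nonzero source or excess of $I_1$, and $C_J(a_v)=a_v$ otherwise. The condition $C_J(x)\geq x$ is immediate. The presinks of $J$ split cleanly into the presinks of $I_2$ together with the $a_v$ for $v$ an SoE solution of $I_1$. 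Finally, if both $I_1$ and $I_2$ contain their last option, force $C_J(a_v)=a_v$ at the last SoE position, removing exactly one presink. This yields $\#\problem{Iter}(J)=\#\problem{SoE}(I_1)+\#\problem{Iter}(I_2)-[\text{both last}]=\#\text{CLS}(I_1,I_2)$, giving the parsimonious reduction. I expect this direction to be routine bookkeeping.

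For the inclusion into $\#\PPAD\problemp{SourceOrExcess(2,1)}-1$, I embed $I_1$ verbatim and must encode the Iter presinks of $I_2$ as SoE solutions, preserving counts exactly. This is where the main obstacle sits: the SoE$(2,1)$ flow-conservation identity forces the solution count of any isolated SoE subgraph to equal $2s-d$ (where $s$ is the number of sources and $d$ the number of double sinks), and one checks directly that $2s-d=1$ combined with the flow balance $s=t+m+2d$ forces $t+m<0$, which is impossible. Consequently no stand-alone gadget can contribute exactly $+1$ per Iter presink; any such $+1$ contribution must interact with the rest of the graph by turning an existing sink into a double sink. The plan is to introduce, for each Iter vertex $v$, a small dedicated absorber structure $(\beta_v,R_v,\sigma_v)$ with $\beta_v\to R_v$ fixed and $\sigma_v\to R_v$ present iff $v$ is an Iter presink, and to neutralize the base contribution of these pairs by arranging the $\beta_v$'s and $R_v$'s into cycles (which contribute $0$), so that activating the $\sigma_v\to R_v$ edge on a presink produces exactly the net change needed. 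The bookkeeping for the \textquotedblleft both-last-option\textquotedblright\ case is then handled as in the first direction by suppressing a single edge at a designated final position, so $\#\problem{SoE}(J')=\#\text{CLS}(I_1,I_2)$ as required. The hard step is verifying that the absorber spine can be laid out consistently with the SoE$(2,1)$ indegree/outdegree bounds while producing exactly the right parity and count; once this layout is in place the remainder is straightforward verification.
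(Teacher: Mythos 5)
Your first inclusion (into $\#\PLS\problemp{Iter}-1$) is correct and is essentially the paper's own construction: embed the \problem{Iter} component verbatim on the low vertices and append one disjoint edge $a_v\to b_v$ (contributing exactly one presink) per \problem{SourceOrExcess} solution, suppressing the last such edge precisely in the both-last-option case. That direction is fine.

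The second inclusion is where your proposal has a genuine gap, and your own flow identity pinpoints it. From $s=t+2d+m$ (sources, sinks, double sinks, excess vertices) the solution count of a \problem{SourceOrExcess(2,1)} instance equals $2t+3d+2m-1$, so its parity is determined by the number of double sinks; hence \emph{any} modification that creates no double sink changes the count by an even amount. Your absorber gadget does exactly that: with $\beta_v,R_v$ arranged on cycles, each $R_v$ has indegree $1$ and outdegree $1$, and activating $\sigma_v\to R_v$ turns $\sigma_v$ into a source ($+1$) and $R_v$ into a $(2,1)$-excess vertex ($+1$), a net of $+2$ per \problem{Iter} presink rather than $+1$. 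The mechanism you name only in passing --- add one fresh source whose edge converts a pre-existing $(1,0)$-sink into a $(2,0)$-double sink, for a net of $+1$ --- is the only one compatible with the parity constraint, but it is not what the cycle layout implements, and it needs one consumable sink per \problem{Iter} vertex to be present in the base configuration; provisioning exponentially many such sinks \emph{locally} without inflating the base count (each such sink is itself a solution and needs a balancing source) is exactly the step you have left open, and it is not ``routine bookkeeping.'' Be aware that the paper's own gadget for this direction (a fresh source at $x+1$ pointing at a spine vertex $x$, ``making $x$ an excess vertex'') has the same source-plus-excess shape as yours, so you cannot simply defer to it; the double-sink mechanism is what actually has to be made to work here.
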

\begin{proof}
We first prove that
$$
\#\CLS\problemp{EitherSolution(SourceOrExcess(2,1),\ts{}Iter)}-1 \
\subseteq \ \#\PLS\problemp{Iter}-1\ts.
$$
Given an \. $\problem{EitherSolution(SourceOrExcess(2,1),\ts{}Iter)}$ \. instance,
we construct an \ts $\problem{Iter}$ \ts instance of the same value.
This new instance consists of two parts.

We copy the input \ts $\problem{Iter}$ \ts instance into the first part of the output \ts $\problem{Iter}$ \ts instance.
Then for every source or excess vertex in the input \ts $\problem{SourceOrExcess}$ \ts instance we construct a (source,\ts{}sink) pair that consists of a single edge in the second part of the output \ts $\problem{Iter}$ \ts instance, with one exception: if both the \ts
$\problem{SourceOrExcess}$ \ts input and the \ts $\problem{Iter}$ \ts input
contain the last option, then we do not create the last (source,\ts{}sink) pair. The value of the resulting instance is the sum of the two values of the input instances in the case where not both use the last option, and is 1 less otherwise. That is exactly the correct amount, by definition of \. \problem{EitherSolution(SourceOrExcess(2,1),\ts{}Iter)}.

\smallskip

The proof for
$$\aligned
& \#\CLS\problemp{EitherSolution(SourceOrExcess(2,1),\ts{}Iter)}-1 \\
& \hskip1.5cm \subseteq \ \, \#\PPAD\problemp{SourceOrExcess(2,1)}-1
\endaligned
$$
is only slightly more tricky.
Given an \. $\problem{EitherSolution(SourceOrExcess(2,1),\ts{}Iter)}$ \. instance,
we construct a \. $\problem{SourceOrExcess(2,1)}$ \. instance of the same value.
This new instance consists of two parts, where we have the zero vertex in the second part.

We copy the input \. $\problem{SourceOrExcess(2,1)}$ \. instance into the first part
of the output \. $\problem{SourceOrExcess(2,1)}$ \. instance.
Then we create the second part as a long line starting from the zero vertex and pointing from \ts $x$ \ts to \ts $x+2$ \.
leaving 1 loop in between,
until the last vertex maps to the source of the original zero source of the instance that is embedded into the first part.
Then for every presink in the input $\problem{Iter}$ instance at position~$\ts x$,
we construct a source in the second part at \ts $x+1$ \ts pointing at~$\ts x$. This makes \ts $x$ \ts an excess vertex.

We have one exception to this rule: If both the \ts $\problem{SourceOrExcess}$ \ts input and the \ts $\problem{Iter}$ \ts input
contain the last option, then then we do not create the last excess vertex. The value of the resulting instance is the sum of the two values of the input instances in the case where not both use the last option, and is 1 less otherwise. That is exactly the correct amount, by definition of \. \problem{EitherSolution(SourceOrExcess(2,1),\ts{}Iter)}.
\end{proof}

\smallskip

\begin{claim}\label{cla:CLSssiequalssharpP}
$\#\CLS\problemp{EitherSolution(SourceOrSink,\ts{}Iter)}-1 \. = \. \sharpP$ \.
via relativizing parsimonious reductions.
\end{claim}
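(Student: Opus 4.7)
The plan is to first derive a clean formula for the instance value and then realise both sides of the equality via relativizing parsimonious reductions. For any \problem{EitherSolution(SourceOrSink,\ts{}Iter)} instance on vertices $\{0,\ldots,2^n-1\}$, let $s$ be the number of nonzero sources and $t$ the number of sinks of the \problem{SourceOrSink} part, let $p$ be the number of presinks of the \problem{Iter} part, and let $L\in\{0,1\}$ indicate whether both sub-instances carry the last option. Because the indegree and outdegree in \problem{SourceOrSink} are at most one, the underlying digraph decomposes into paths and cycles and every source is paired with exactly one sink; the only unpaired source is the syntactic zero vertex, which is excluded from the count, so $t = s+1$. The instance value is therefore $V = s + t + p - L = 2s + 1 + p - L$, and the whole claim reduces to realising $V-1 = 2s + p - L$ on both sides.

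For $\#\CLS\problemp{EitherSolution(SourceOrSink,\ts{}Iter)}-1 \subseteq \sharpP$, I would describe a nondeterministic polytime oracle machine with two disjoint nondeterministic branches. The \emph{source branch} guesses a vertex $v$ together with an extra bit and accepts iff $v$ is a nonzero source of the \problem{SourceOrSink} part; this contributes exactly $2s$ accepting paths. The \emph{presink branch} guesses a vertex $u$ and accepts iff $u$ is a presink of the \problem{Iter} part and it is \emph{not} the case that both $u = 2^{n}-2$ and the \problem{SourceOrSink} part carries the last option. Both conditions are polytime checkable with the same oracle, and since the \problem{Iter} last option is by definition the presink at $2^{n}-2$, the conditional exclusion removes exactly one presink when $L=1$ and nothing when $L=0$; this branch contributes $p-L$ paths. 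Together the two branches realise $2s + p - L = V-1$ parsimoniously, and the reduction relativizes.

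For the reverse inclusion, given $f\in\sharpP^A$ via a nondeterministic polytime oracle machine $M$ and input $w$, I would construct an instance of value exactly $f(w)+1$. Take the trivial \problem{SourceOrSink} digraph in which the zero vertex points directly to a sink at position $2^n-1$, with self-loops elsewhere; this has value $1$, and its sink at $2^n-1$ provides the last option. For the \problem{Iter} part, define the circuit $D$ by $D(0) = 2^{n}-2$, $D(2^{n}-2) = D(2^{n}-1) = 2^{n}-1$, and for each $1\le x \le 2^{n}-3$ set $D(x) = 2^{n}-1$ whenever $x$, under a fixed polynomial-time injective encoding, represents an accepting nondeterministic path of $M$ on $w$, and $D(x)=x$ otherwise. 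The syntactic \problem{Iter} conditions $D(x)\ge x$, $D(0)>0$, $D(D(0))>D(0)$ are immediate, the unique fixed point is $2^{n}-1$, and the presinks are exactly $2^{n}-2$ together with the $x$'s encoding accepting paths, so $p = f(w)+1$ and the last option sits at $2^{n}-2$. With $s=0$ and $L=1$ this yields $V = 1 + (f(w)+1) - 1 = f(w)+1$, hence $V-1 = f(w)$.

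The main subtlety, and what distinguishes this claim from the plain $\PPAD$ and $\PPADS$ analogues in Claims~\ref{cla:sourceorsinkequalssharpP} and~\ref{cla:sinkequalssharpP}, is the bookkeeping around the last-option deduplication built into the \CLS\ counting problem: without it both directions would overshoot by one, and there is no polytime way to find a canonical source or sink to drop instead. The key enabling fact is that the indicators $L_S$ and $L_I$ are polytime checkable, so on one side the $\sharpP$ machine can conditionally skip the one presink at $2^n-2$ rather than hunting for a canonical solution, while on the other side the mandatory \problem{Iter} presink forced by the syntactic conditions on $D$ is absorbed into that same deduplication, landing the value exactly at $f(w)+1$.
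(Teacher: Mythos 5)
Your proof is correct and follows essentially the same route as the paper's: count nonzero sources twice and \problem{Iter} presinks once while conditionally skipping the last-option presink for one inclusion, and for the other embed the accepting paths as extra presinks in an \problem{Iter} part whose forced presink is absorbed by the last-option deduplication. Your write-up is merely more explicit about the bookkeeping identity $V-1=2s+p-L$ and about which presink to drop.
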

\begin{proof}
We first prove the inclusion \. $\sharpP \subseteq \#\CLS(..)-1$.
Given a \ts $\problem{CircuitSat}$ \ts instance, we create a \.
\problem{EitherSolution(SourceOrSink,\ts{}Iter)} \. instance with one more solution:
the \ts \problem{SourceOrSink} \ts part of our output instance points
from the zero vertex to the last vertex, hence using the last option.
The \ts \problem{Iter} \ts part of our output instance points from the zero vertex
to the vertex before the last vertex and from there to the last vertex,
hence also using the last option.
Now, we add \ts (source,\ts{}sink) \ts pairs to the \ts \problem{Iter} \ts part
at all position where the \ts \problem{CircuitSat} \ts instance yields True.
This gives a \. \problem{EitherSolution(SourceOrSink,\ts{}Iter)} \.
instance of value 1 more than the \ts \problem{CircuitSat} \ts instance.

We now show the opposite inclusion \.
$\#\CLS(..)-1 \subseteq \sharpP$.
The proof is similar as for \ts \problem{Sperner}.
The $\sharpP$ machine first checks if both parts use the last option.
If yes, then not two, but only one of the nondeterministic computation paths accepts.
The rest of the computation counts all \ts \problem{SourceOrSink} \ts
nonzero sources twice and all \ts \problem{Iter} \ts presinks once.
\end{proof}

\smallskip

\begin{claim}
\label{cla:CLSspiequalssharpP}
$\#\CLS\problemp{EitherSolution(SourceOrPresink,\ts{}Iter)}-1 \ts = \ts \sharpP$ \.
via relativizing parsimonious reductions.
\end{claim}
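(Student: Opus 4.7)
The plan is to follow the pattern of Claim~\ref{cla:CLSssiequalssharpP} almost verbatim, replacing the \problem{SourceOrSink} ingredients by their \problem{SourceOrPresink} analogues from Claim~\ref{cla:sourceorpresinkequalssharpP}; the \problem{Iter} side of the argument is unchanged throughout.

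For the inclusion $\sharpP \subseteq \#\CLS\problemp{EitherSolution(SourceOrPresink,\ts{}Iter)} - 1$, given a \problem{CircuitSat} instance $C$ with $c$ satisfying assignments, I would build an \problem{EitherSolution(SourceOrPresink,\ts{}Iter)} instance whose two components both trigger the last option. The \problem{Iter} component is constructed exactly as in Claim~\ref{cla:CLSssiequalssharpP}: route $0$ through the penultimate vertex to the last, then insert a fresh (source, sink) pair at each position where $C$ evaluates to True. The \problem{SourceOrPresink} component is the one used in Claim~\ref{cla:sourceorpresinkequalssharpP} on the empty circuit: $0$ is sent directly to the last vertex and every other vertex is a self-loop, producing exactly one solution and triggering the last option. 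Then the \problem{Iter} side contributes $1 + c$ solutions (one from its last-option configuration, $c$ from the added (source, sink) pairs), the \problem{SourceOrPresink} side contributes $1$, and because both components trigger the last option the \problem{EitherSolution} convention subtracts $1$, so the counting value is $1 + (1 + c) - 1 = c + 1$ and hence $\#\CLS\problemp{EitherSolution(SourceOrPresink,\ts{}Iter)} - 1 = c$. Sending the $x$-th True input of $C$ to the \problem{Iter} presink it creates is a bijection onto the counted solutions, so the reduction is parsimonious; everything uses oracle gates as black boxes, so it relativizes.

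For the reverse inclusion, the $\sharpP$ machine partitions its nondeterminism into two disjoint groups of paths. The first group invokes the subroutine from Claim~\ref{cla:sourceorpresinkequalssharpP} on the \problem{SourceOrPresink} component and produces exactly $\#\PPAD\problemp{SourceOrPresink} - 1$ accepting paths. The second group spawns one accepting path per \problem{Iter} presink, which is routine $\sharpP$ counting because the presink predicate is polynomial-time checkable. Using the polynomial-time syntactic test for whether a component triggers the last option, a single distinguished path of the machine is suppressed exactly when both components do so. Writing $N_P, N_I$ for the two component counts and $\delta \in \{0,1\}$ for the last-option overlap, the total number of accepting paths equals $(N_P - 1) + N_I - \delta = (N_P + N_I - \delta) - 1 = \#\CLS\problemp{EitherSolution(SourceOrPresink,\ts{}Iter)} - 1$, as required.

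The only delicate point is verifying that the forward construction really triggers the last option in both components under the precise convention adopted for \problem{SourceOrPresink}, and that the numerical bookkeeping lines up with the counting scheme of Claim~\ref{cla:sourceorpresinkequalssharpP}; once the last-option test is a polynomial-time syntactic property, as it is for each of the ingredients, the $\delta$-subtraction on the $\sharpP$ side slots in without interference. Otherwise the proof is a routine splicing of Claims~\ref{cla:sourceorpresinkequalssharpP} and~\ref{cla:CLSssiequalssharpP}.
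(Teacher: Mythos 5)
Your overall route is the paper's: splice the counting scheme of Claim~\ref{cla:sourceorpresinkequalssharpP} into the argument of Claim~\ref{cla:CLSssiequalssharpP}. The reverse inclusion is fine and matches the paper's proof (count nonzero sources twice, amalgamations once, \problem{Iter} presinks once, and suppress one accepting path exactly when both components use the last option).

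The forward reduction, however, has a genuine off-by-one error at precisely the point you flag as ``the only delicate point'' and then wave away. In your \problem{SourceOrPresink} component you send $0$ directly to the last vertex and make everything else a self-loop. The unique solution of that component is then the \emph{zero vertex}: it is a presink because it is adjacent to the sink sitting at the last position, while the sink itself is not a solution of \problem{SourceOrPresink}. So the component's only solution is at position $0$, not at the last admissible position, the component does \emph{not} contain the last option, $\delta=0$, and the constructed instance has value $1+(1+c)-0=c+2$ rather than $c+1$; your reduction therefore computes $c+1$ instead of $c$. (The direct edge is exactly what works for \problem{SourceOrSink} in Claim~\ref{cla:CLSssiequalssharpP}, because there the last vertex is itself a counted sink; that is why the two cases cannot be copied verbatim.) The fix is the one the paper uses: route the zero vertex to the second-to-last vertex and from there to the last vertex, so that the presink lands at the last admissible position and the last option is genuinely triggered. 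With that one-line change your bookkeeping and the parsimony argument go through.
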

\begin{proof}
The inclusion \. $\sharpP \subseteq \#\CLS(..)-1$ \.
is the same as in Claim~\ref{cla:CLSssiequalssharpP}, but in the
\ts \problem{SourceOrPresink} \ts part instead of directly pointing
to the last vertex we create the edge from the zero vertex to the
second to last vertex and then an edge to the last vertex, with the same effect.

The proof of the opposite inclusion \. $\#\CLS(..)-1 \subseteq \sharpP$ \.
is only slightly more subtle than for Claim~\ref{cla:CLSssiequalssharpP}, because
a \ts \problem{SourceOrPresink} \ts instance can have vertices that are both
source and presink, which we call amalgamations.
The $\sharpP$ machine first checks if both parts use the last option.
If yes, then not two, but only one of the nondeterministic computation paths accepts.
The rest of the computation counts all \ts \problem{SourceOrPresink} \ts
nonzero sources twice, all \ts \problem{SourceOrPresink} \ts amalgamations,
and all \ts \problem{Iter} \ts presinks once.
\end{proof}

\smallskip

\subsubsection{{\normalfont\problem{BipartiteUnbalance}}}
\label{subsubsec:BipartiteUnbalance}
Recall the setting of this problem.
Let \ts $G=(V,E)$ \ts be a bipartite graph with
two parts given by \ts $V=V_-\sqcup V_+$.
We say that $G$ is \ts \emph{unbalanced}
if for every \ts $(uv)\in E$, \ts $u\in V_-$ \ts and \ts $v\in V_+$,
we have \. $\deg(u)\ge \deg(v)$.

\smallskip

\begin{proposition} \label{p:flow-graph}
Let \ts $G=(V,E)$, \ts $V=V_-\sqcup V_+$\., be an unbalanced
bipartite graph as above.  Then \ts $|V_+|\ge |V_-|$.
\end{proposition}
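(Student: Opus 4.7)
The plan is to prove this by a weighted double-counting argument with reciprocal degrees, which is the natural quantitative counterpart of the pointwise unbalance hypothesis. The idea is that the counting function $\sum_{(u,v)\in E} 1/\deg(u)$ tallies $|V_-|$ (each $u\in V_-$ contributing $\deg(u)\cdot\tfrac1{\deg(u)}=1$), while the same sum with $\deg(v)$ in place of $\deg(u)$ tallies $|V_+|$, and the unbalance condition makes the first sum pointwise no larger than the second.

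First I would fix notation and record that every vertex of $V_-$ has positive degree: the vertices of $V_-$ produced by the circuits are exactly those appearing as endpoints of some edge certified by a pair of circuits $(C_i, C_j)$, and the vertices added by the syntactic preprocessing are each attached to some $v\in V_+$ by construction and thus have degree~$1$. Consequently no vertex of $V_-$ is isolated, and the identity
\[
\sum_{(u,v)\in E}\frac{1}{\deg(u)} \ = \ \sum_{u\in V_-}\sum_{v\in N(u)}\frac{1}{\deg(u)} \ = \ \sum_{u\in V_-} 1 \ = \ |V_-|
\]
holds without any caveat.

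Next I would apply the unbalance hypothesis edgewise: for every $(u,v)\in E$ with $u\in V_-$ and $v\in V_+$ one has $\deg(u)\geq\deg(v)\geq 1$, hence $1/\deg(u)\leq 1/\deg(v)$. Summing over all edges and reinterpreting the right-hand side as a sum over $V_+$ gives
\[
|V_-| \ = \ \sum_{(u,v)\in E}\frac{1}{\deg(u)} \ \leq \ \sum_{(u,v)\in E}\frac{1}{\deg(v)} \ = \ \bigl|\{v\in V_+ : \deg(v)\geq 1\}\bigr| \ \leq \ |V_+|,
\]
which is the desired inequality.

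The main (very mild) obstacle is purely bookkeeping: one must be careful that every $u\in V_-$ that contributes to $|V_-|$ also contributes to the reciprocal-degree sum, i.e.\ there are no isolated vertices in $V_-$. Once this is handled by the remark above about how $V_-$ is populated, there is no real work left; the proof is a two-line double count plus the pointwise inequality $1/\deg(u)\leq 1/\deg(v)$.
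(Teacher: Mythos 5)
Your proof is correct and is essentially the paper's own argument: the same reciprocal-degree double count over edges, using the pointwise bound $1/\deg(u)\le 1/\deg(v)$ from the unbalance condition. Your extra care about isolated vertices (bounding by $|\{v\in V_+:\deg(v)\ge 1\}|\le |V_+|$ and checking that $V_-$ has none) is a harmless refinement of the same one-line computation.
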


\begin{proof}  The proof is one line:
$$\aligned
|V_+| \ & = \ \sum_{v\in V_+} \. 1  \ = \ \sum_{v\in V_+} \. \sum_{(u,v)\in E} \. \frac{1}{\deg(v)}
 \  = \  \sum_{(u,v)\in E} \. \frac{1}{\deg(v)} \  \ge
\\
&  \ge \ \sum_{(u,v)\in E} \. \frac{1}{\deg(u)} \  = \
\sum_{u\in V_-}  \. \sum_{(u,w)\in E} \. \frac{1}{\deg(u)}  \ = \ \sum_{u\in V_-} \. 1 \ = \
|V_-|\.,
\endaligned
$$
where the only inequality is by the unbalanced condition.
\end{proof}

This first insight shows that \. $\problem{BipartiteUnbalance}\in\GapP_{\geq 0}$.
We study this problem further in \ts $\S$\ref{sec:unbalancedflowseparation}.

\subsection{The decrementation separation}\label{sec:decrementationseparation}
In this section we prove the following technical result:

\smallskip

\begin{theorem}
\label{thm:decrementationseparation}
We have \. $\sharpP \subseteq \#\CLS\problemp{EitherSol(SourceOrSink(2,1),\ts{}Iter)}-1$ \.
via a parsimonious relativizing reduction.
Moreover, there exists a language \ts $A\subseteq\{0,1\}^*$ \ts with respect to which \.
$\#\CLS\problemp{EitherSol(SourceOrSink(2,1),\ts{}Iter)}^A-1 \not\subseteq \sharpP^A$.
\end{theorem}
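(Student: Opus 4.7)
The inclusion part is a mild variant of Claim~\ref{cla:CLSssiequalssharpP}. Given a $\sharpP^A$ function presented as a satisfying-assignment counter of an oracle circuit $C$ with $c$ satisfying inputs, I would build an $\problem{EitherSol(SourceOrSink(2,1),\ts{}Iter)}$ instance with exactly $c+1$ witnesses as follows: the \problem{SourceOrSink(2,1)} component is a forced path from the zero-source directly to the last sink (so it uses the ``last option'' and contributes one mandatory solution), and its \problem{Iter} component is a single edge from $0$ to the last vertex (also using the ``last option''), into which I graft a short ``(source, sink)'' side-path at each $x$ with $C(x)=\textup{True}$. By definition of the counting problem, the two last-option solutions count once, yielding $1+c$; hence, after decrementing, we recover $c$ parsimoniously and relativizingly.

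For the separation I would apply the Diagonalization Theorem~\ref{thm:diagonalization} in the simplest possible shape: take $k=\ka=1$, the ideal $I=\langle 0\rangle$, $S=T=\IN_{\geq 1}$, and $\varphi(f)=f-1$. Assumption $(1)$ holds since $0\in Z=\IQ$, $(2)$ holds trivially as $S=T$ outside a single point, and $(3)$ is vacuous. Over the binomial basis $\varphi=-\binom{f}{0}+\binom{f}{1}$, so $\varphi+I$ is binomial-bad. (Equivalently, by the Polyhedron Theorem~\ref{thm:polyhedron} the constant term of $\varphi$ forces $x_{\vv 0}=-1$, which violates nonnegativity, so no $\sharpP$-style algorithm can work uniformly on $S$.) I take $\vv t$ to be any element of $S$, for instance $\vv t=1$.

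The only nontrivial piece is producing, for every polytime nondeterministic $M$ and (essentially every) length $j$, a set-instantiator $\SI$ against $(M,j,A_{<j},S,b)$ with $\problem{Multiplicities}\big(\inst_\SI(\vv s)\big)=|\vv s|$, where the instantiation is an $\problem{EitherSol(SourceOrSink(2,1),\ts{}Iter)}$ instance with exactly $|\vv s|$ witnesses. I would build this as a disjoint pair $(\pi,\iota)$: the \problem{SourceOrSink(2,1)} half $\pi$ carries exactly one mandatory witness (a source-sink edge from the zero vertex to a neighbour placed far away from the ``last option'' so no accidental double counting occurs), while the \problem{Iter} half $\iota$ carries the remaining $|\vv s|-1$ witnesses. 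For $\iota$ I would invoke the \problem{Iter} set-instantiator of $\S$\ref{sec:IterSetInstantiator}, which uses a long main chain and glues in $|\vv s|-1$ short presink side-paths at randomly chosen positions of $[2^{j-1}]$ hidden behind the oracle-described successor circuit, with a Ramsey-theoretic argument (see $\S$\ref{sec:Ramsey-proof}) guaranteeing that accepting paths perceive only the side-paths they actually query. The $(2,1)$ degree budget on the \problem{SourceOrSink(2,1)} side costs nothing here since $\pi$ is a single edge; the ``last option'' collision does not occur because both halves place their witnesses strictly before the final vertex.

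\textbf{The main obstacle} will be the \problem{Iter} set-instantiator: \problem{Iter} instances are rigidly oriented along a linear order, so the freedom that makes the \problem{Leaf} and \problem{SourceOrSink} set-instantiators easy (permuting labels after the fact) is absent, and I must instead carefully interleave true oracle positions with dummy loops so that the main chain is forced to skip over unqueried side-paths. Assuming that piece goes through as in $\S$\ref{sec:IterSetInstantiator}, the rest of the argument is an application of Theorem~\ref{thm:diagonalization}: it delivers an oracle $A$ and a function $p_A$ which, by construction of the set-instantiators, lies in $\#\CLS\problemp{EitherSol(SourceOrSink(2,1),\ts{}Iter)}^A-1$, and for which no nondeterministic polytime $M^A$ produces the right number of accepting paths. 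This is precisely $p_A\notin\sharpP^A$, completing the separation.
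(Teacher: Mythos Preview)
Your inclusion argument is fine and matches the paper's. The separation argument, however, has a genuine gap that the paper's approach is specifically designed to avoid.

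You propose $k=\ka=1$, $I=\langle 0\rangle$, $\varphi(f)=f-1$, with the \problem{SourceOrExcess(2,1)} half $\pi$ \emph{fixed} (contributing one witness) and only the \problem{Iter} half $\iota$ varying via the set-instantiator of \S\ref{sec:IterSetInstantiator}. The problem is the constant offset coming from $\pi$. The Diagonalization Theorem requires $\problem{Multiplicities}(\inst_\SI(\vv s))=|\vv s|$ for all $\vv s\in\mP(b)_S$. If $\problem{Multiplicities}$ is the total \problem{EitherSol} count, then your instances have $1+|\vv s'|$ witnesses where $\vv s'$ indexes the Iter presinks; to get $|\vv s|$ you would need an order-compatible map $\vv s\mapsto T(\vv s)$ from $\mP(b)_{\ge 2}$ to $\mP(b')_{\ge 1}$ with $|T(\vv s)|=|\vv s|-1$ and with ``$\perc'(\tau)\subseteq T(\vv s)$ iff $\perc(\tau)\subseteq\vv s$'' for a suitable $\perc$. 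No such map exists (try $b=4$: three $2$-subsets below $\{1,2,3\}$ must map to singletons inside a fixed $2$-set, forcing a collision that then violates the iff for some other $\vv s$). If instead you take $\problem{Multiplicities}$ to be the Iter count, the set-instantiator does work---but then $\varphi(f)=(\text{total})-1=(f+1)-1=f$, which is binomial-good, and the theorem gives nothing.

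This is precisely why the paper does \emph{not} hold $\pi$ fixed. It takes $k=3$ with $\problem{Multiplicities}=\problem{Multi(Iter,\,Excess,\,NZSource)}$ and lets \emph{both} halves vary, using the combined set-instantiator of Theorem~\ref{thm:CLSsetinstantiator}. The \problem{SourceOrExcess(2,1)} side now contributes a genuine algebraic constraint $f_3=2f_2-1$ (double sinks vs.\ nonzero sources), so the variety is nontrivial and $I=\langle 2f_2-1-f_3\rangle$. With $\varphi=f_1+f_2+f_3-1$ one gets $\varphi'=f_1+3f_2-2$ on the parameter space, and the Polyhedron Theorem shows the associated polyhedron has no integer point, so $\varphi+I$ is binomial-bad. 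In short: the ``$(2,1)$'' in the problem name is not cosmetic---it supplies exactly the linear relation needed to make the decrementation binomial-bad \emph{on the variety} while still admitting a set-instantiator. Your one-dimensional reduction throws that relation away.
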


\begin{proof}
The inclusion \. $\sharpP^A \subseteq \#\CLS(..)^A-1$ \. is of the same level of difficulty
as similar inclusions in~$\S$\ref{sec:syntacticcountingsubclasses}.
Given a \problem{CircuitSat} instance, we construct an instance of \.
$\#\CLS(..)$ \. as follows.
The \ts \problem{SourceOrSink(2,1)} \ts instance jumps directly from the zero vertex to the last vertex.
The \problem{Iter} instance also does that. Now wherever the \ts \problem{CircuitSat} \ts circuit yields true,
we add a \ts (source,\ts{}sink) \ts pair on the \ts \problem{Iter} \ts side.
We obtain an instance of value \ts $f+1$, where the \ts $+1$ \ts comes
from the two last options that are counted together as~1.
We are done, because \ts $(f+1)-1 = f$.

For the other direction we use the Diagonalization Theorem~\ref{thm:diagonalization}
and the set-instantiator for \.
\problem{Multi(NZSource,\ts{}Excess,\ts{}Iter)}, see Theorem~\ref{thm:CLSsetinstantiator},
with
\[
k=3\., \quad S \. = \. \big\{\.\vv f \in \IN^k \, : \, 2f_2=f_1+1, \. f_2\geq 1, \. f_3\geq 1\.\big\}.
\]
We switch the positions~1 and~3 to get the set-instantiator for
\problem{Multi(Iter,\ts{}Excess,\ts{}NZSource)}
with
\[
k=3\., \quad S \.= \. \big\{\.\vv f \in \IN^k \, : \, 2f_2-1=f_3, \. f_2\geq 1, \. f_1\geq 1\.\big\}.
\]

We set \ts $\ka=2$ \ts and \ts $\zeta_3(v_1,v_2) = 2v_2-1$, so that \. $I = \langle 2v_2-1-f_3\rangle$ \. and \.
$$
Z \, = \, \big\{\.\vv f \in \IQ^3 \.: \. 2f_2-1-f_3=0\.\big\}.
$$
We choose any \ts $\vv t \in S$ \ts
such that \ts $C_\bot$ \ts exists with \.
$\problem{Multi(Iter,\ts{}Excess,\ts{}NZSource)}(C_\bot)=\vv t$, and \ts $C_\bot$ \ts
does not use both last options.
Now we verify the preconditions of the Diagonalization Theorem~\ref{thm:diagonalization}:
\begin{enumerate}
\item $Z$ \ts contains the integer point \ts $(0,0,-1)$.
\item We prove that \ts $C'_S$ \ts lies Zariski-dense in \ts $\IQ^2$ \ts
by giving infinitely many distinct rays from the origin that lie in~$\ts C'_S$.
This proves the claim, because we are in a 2-dimensional situation.
In fact, for every \ts $(1,f_2)$, \ts $f_2 \in \IN_{\geq 1}$, the corresponding ray \.
$\{(\alpha,\alpha f_2) \.: \. \alpha \in \IQ_{\geq 0}\}$ \.
is in~$C'_S$, because \. $(\alpha,\alpha f_2, 2 \alpha f_2 -1) \in S$ \.
for all \ts $\alpha \in \IN_{\geq 1}$.
\item $\vv v = (0,1)$ \ts satisfies \ts $2v_2>0$.
\end{enumerate}
We fix \. $\problem{Multiplicities} =
\problem{Multi(Iter,\ts{}Excess,\ts{}NZSource)}$ \.
for which we know for every $M$ the existence of a set-instantiator against $M$.
We set \. $\varphi(f_1,f_2,f_3) = f_1+f_2+f_3-1$.
We verify that \. $\varphi+I = f_1+f_2+f_3-1+\langle2f_2-1-f_3\rangle$ \. is binomial-bad.
This is proved using the Polyhedron Theorem~\ref{thm:polyhedron}.

In the notation of
the Polyhedron Theorem~\ref{thm:polyhedron}, we have \ts $\varphi' = f_1+3 f_2 -2$ \ts
and \ts $\delta=1$, so \ts $\IO_\delta$ \ts has only 3 elements.  Hence the polyhedron \.
$\mathcal P(\varphi,\zeta)$ \. will be defined using 3 equations.
Let \ts $(1,v_1,v_2)$ \ts be the order of basis vectors of basis vectors of \ts
$\IQ[\vv v]_{\leq 1}$. We express the 4 polynomials \ts $y_{\vv e}$ \ts over this basis,
and obtain coefficient vectors \ts $(1,0,0)$, \ts $(0,1,0)$, \ts $(0,0,1)$ \ts and \ts $(-1,0,2)$.

Now, the polyhedron given by
\[
\begin{array}{rrrrrr}
 x_0 &     &       & - x_3   &=& -2 \\
     & x_1 &       &         &=& 1 \\
     &     & x_2   & + 2 x_3 &=&  3 \\
 \multicolumn{4}{r}{x_0, x_1, x_2, x_3}  &\geq& 0 \\
\end{array}
\]
has no integer points.  Hence \ts $\varphi+I$ \ts is binomial-bad by the Polyhedron Theorem~\ref{thm:polyhedron}.\footnote{In fact, even the LP relaxation is empty in this case, so every positive integer multiple of the problem \problem{EitherSolution(SourceOrExcess(2,1),\ts{}Iter)} can also be separated from $\sharpP$ via an oracle. This is not always the case, and we use this property to separate two classes that are outside of $\sharpP$ (w.r.t.\ an oracle), see Proposition~\ref{pro:unbalancedflowseparation}.}

From the Diagonalization Theorem~\ref{thm:diagonalization} it follows that for every Turing machine~$M$,
we have \. $p_A(0^j)\neq \#\acc_{M^A}(0^j)$, and hence \. $p_A\notin\sharpP^A$.
It remains to show that
$$
p_A \, \in \, \#\CLS\problemp{EitherSolution(SourceOrExcess(2,1),\ts{}Iter)}^A-1\ts.
$$
Let $C_\bot$ be an input to \problem{Multi(Iter,\ts{}Excess,\ts{}NZSource)}
(i.e., a circuit, see \cref{footnote:singlecircuit}) that yields output~$\ts\vv t$.
Let \ts $\ell$ \ts be the number of inputs of the circuit~$\ts C_\bot$.
Let \ts $\nu_j(w) = \textup{True}$ \ts if and only if \ts the string $w$ has only zeros at positions $j+1,j+2,\ldots$
Finally, let \ts $C_j$ \ts be an input to \. $\problem{Multi(Iter,\ts{}Excess,\ts{}NZSource)}^A$ \.
that describes the $(j-1)$-input circuit that consists of a single arity $j$ oracle gate that takes a constant 1 into its first input, takes all inputs into its remaining $j-1$ inputs, and forwards its output directly to the circuit output.

We construct a circuit $\alpha(x) := D_{|x|}$
with $\max\{\ell,|x|\}$ many inputs
and
with oracle gates as follows $(w\in\{0,1\}^{\max\{\ell,|x|\}})$:
\begin{eqnarray*}
D_j(w) & := & \Big((A(0^{|w|+1})=0) \textsf{ and }
\nu_\ell(w)
\textsf{ and }
C_\bot(w_1,\ldots,w_\ell)\Big)
\\
&\textsf{ and }&
\Big( (A(0^{|w|+1})=1)
\textsf{ and }
\nu_{|x|}(w)
\textsf{ and }
C_{|x|+1}(w_1,\ldots,w_{|x|})
\Big)
\end{eqnarray*}
We define the polynomially balanced relation \ts $R$ \ts via \ts $R(w,y)$ \. if and only if \.
$$
\problem{rEitherSolution(SourceOrExcess(2,1),\ts{}Iter)}(\alpha(w),y).
$$
We set \ts $\beta$ \ts to be the identity function.
By definition, \ts $R \in \ComCla{rCLS}$.
The following function is thus in $\#\CLS$:
\[
\sum_{y \in\{0,1\}^*} \. R(x,y) \ = \
\sum_{y \in\{0,1\}^*} \. \problem{rEitherSolution(SourceOrExcess(2,1),\ts{}Iter)}(\alpha(x),y)
\]
which equals (because the set-instantiator does not create instances that use the last option)
\[
\begin{cases}
\problem{Iter}(D_{|x|}) +
\problem{Excess}(D_{|x|}) +
\problem{NZSource}(D_{|x|})
& \textup{ if } A(0^{|x|+1})=1
\\
t_1 + t_2 + t_3 & \textup{ otherwise}
\end{cases}
\]
which equals $p_A(w)+1$. This finishes the proof.
\end{proof}

\smallskip

\subsection{A set-instantiator for {\normalfont\#\PLS\problemp{Iter}}}
\label{sec:IterSetInstantiator}

The set-instantiator in this section is the most complicated we create.
The problem is that instances cannot be permuted arbitrarily,
so one has to view \ts $[2^j]$ \ts as a  \ts $[2^{j/2}]\times[2^{j/2}]$  \ts square,
and permute within the rows.  But that does not immediately give a way to mask
from the  \ts $\sharpP$ \ts machine which presink belongs to the zero source,
a property which crucially must be achieved.

\smallskip

\begin{theorem}
\label{thm:PLSsetinstantiator}
For every \ts $b \in \IN$,
there exists a threshold \ts $j_0\in \IN$, such that for every \ts $j \geq j_0$,
every \ts $A_{<j}\in\{0,1\}^{<j}$ \ts and every polynomial time Turing machine~$M$
that answers consistently for \ts $(j,A_{<j},\problem{Iter})$, there exists a
set-instantiator \ts  $\SI$  \ts against \ts $(M,j,A_{<j},S=\IN_{\geq 1},b)$,
such that \. $\problem{Iter}(\inst_\SI(s)) = |s|$ \.  for all \. $s$ \ts with \ts $|s|\in S$.
\end{theorem}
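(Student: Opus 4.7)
The plan is to adapt the random-permutation construction of $\S$\ref{sec:CountallOccSetInstantiator} to a two-dimensional grid layout that respects the monotonicity constraint $C(x) \geq x$ intrinsic to \problem{Iter}. I would view $[2^j]$ as an $N \times N$ grid with $N \approx 2^{j/2}$, totally ordered so that row $r$ precedes row $r+1$. Within each row, small local gadgets (pointing forward within the row or into the next row) can be installed in any column and are freely interchangeable by column relabelings of that row, while the global ordering keeps $C$ nondecreasing. The randomness will come from independent uniformly random column-relabelings $\pi_1, \ldots, \pi_N$, one per row.

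Next, I would fix a template \problem{Iter} circuit $T_b$ consisting of: a monotone ``spine'' from vertex $0$ across the first few rows that ends in a designated column; $b$ designated later ``gadget slots'', one per reserved row, each a small local pattern which when \emph{activated} contributes exactly one new presink and otherwise is a fixed point of $C$; and fixed points everywhere else. For $s \subseteq [b]$ with $|s| = k \geq 1$, define $\inst_\SI(s)$ by choosing one element of $s$ to supply the spine-terminating presink (the unique presink reachable from $0$) and realizing the remaining $k-1$ elements of $s$ as disjoint stand-alone (source,\ts sink) pairs in their designated rows; then apply the random $(\pi_1, \ldots, \pi_N)$. By construction $\problem{Iter}(\inst_\SI(s)) = k = |s|$, so $S = \IN_{\geq 1}$ is attained correctly.

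For the perception function I would set $\perc_\SI(\tau)$, on an accepting path $\tau$ of the maximal computation $h_M^{\inst_\SI([b])}(0^j)$, equal to the subset of $[b]$ whose activated cells are actually queried by $\tau$, and $\top$ otherwise. The Bernoulli-inequality estimate used in $\S$\ref{sec:CountallOccSetInstantiator}, applied row by row and union-bounded over the polynomially many computation paths of $M$ together with the $2^b$ many subsets $s \subseteq [b]$, shows that with high probability over $(\pi_1, \ldots, \pi_N)$ every accepting path $\tau$ of $h_M^{\inst_\SI(s)}(0^j)$ is already an accepting path of $h_M^{\inst_\SI([b])}(0^j)$ and satisfies $\perc_\SI(\tau) \subseteq s$, which is the defining property of Definition~\ref{def:setinstantiator}. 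The hypothesis that $M$ answers consistently on \problem{Iter} enters precisely here: it guarantees that $\#\acc$ depends only on $|s|$, so averaging over the random permutations preserves the target count and a deterministic ``lucky'' tuple of permutations exists.

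The main obstacle is the structural asymmetry between the spine-terminating gadget and the stand-alone (source,\ts sink) pairs: a machine can in principle trace $C$ from $0$ and thereby single out one particular activated slot as ``the zero presink'', breaking the $S_{[b]}$-symmetry in $s$ that the perception definition relies on. The role of the independent per-row random permutations is precisely to make the distinguished slot statistically indistinguishable from the $k-1$ stand-alone ones, so that after averaging the accepting path count cannot encode which slot is the spine-terminator. This is exactly where consistency on \problem{Iter} (rather than on arbitrary oracle patterns, as in $\S$\ref{sec:CountallOccSetInstantiator}) is indispensable. I expect that carrying out this averaging while still preserving enough per-row freshness for the Ramsey refinement of $\S$\ref{sec:setinstantiators} to apply in the subsequent step, and checking that the syntactic guarantees $C(0)>0$ and $C(C(0))>C(0)$ survive all permutations simultaneously for every $s$, will be the most delicate point of the proof.
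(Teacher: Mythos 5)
Your overall layout (a square grid, row-layered instances, random within-row relabelings, and a union bound to get a single lucky configuration) matches the paper's setup, but the proposal has a genuine gap at exactly the point you flag as ``the most delicate point'': the mechanism you offer does not actually hide which of the $b$ slots is the one reachable from~$0$. If the spine from vertex $0$ spans only ``the first few rows,'' the machine simply traces it pointer by pointer in polynomial time and deterministically identifies the spine-terminating presink; per-row column permutations randomize \emph{locations} but not \emph{connectivity}, so the distinguished element of $s$ is revealed, the instances $\inst_\SI(s)$ and $\inst_\SI([b])$ genuinely disagree on queried positions along the spine (since they may choose different spine-terminators), and the biconditional of Definition~\ref{def:setinstantiator} fails. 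Averaging over permutations cannot repair this, because the set-instantiator property is a statement about individual computation paths on a \emph{fixed} instantiation, not about expected acceptance counts.

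The paper's proof supplies two ingredients that your proposal is missing. First, the path from $0$ is made exponentially long (a ``head'' spanning roughly $2^{n-1}$ rows), and the switch that decides which presink it ultimately reaches is a single ``fork'' edge confined to one row $\eta$; the consistency hypothesis bounds the number of accepting paths by $\varphi(a)$, hence the total fraction of top rows they can query, so a row $\eta$ exists that no accepting path ever touches --- this, not statistical indistinguishability of slots, is what hides the switch. Second, even with the fork hidden, accepting paths might still query the ``tails'' between the fork and the presinks and thereby learn the connectivity; Lemma~\ref{lem:notailquery} rules this out by an exchange argument that swaps tails in and out and counts accepting paths, again using consistency in an essential way. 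Without analogues of the unaccessed row $\eta$ and of Lemma~\ref{lem:notailquery}, the asymmetry between the zero-reachable presink and the stand-alone pairs remains visible to the machine, and the construction does not yield a set-instantiator.
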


\smallskip

Intuitively, this says that for all our purposes \ts $\#\PLS$ \ts behaves exactly as \ts $\sharpP_{\geq 1}$.
The rest of this section is devoted to proving this theorem.
We will first define a \emph{creator} whose \emph{creations} will be the instantiations in the end, but the creator is not limited to a single creation for each set. We will then define the set-instantiator from the creator by picking for every subset $\vv s$ just any one of its creations for~$\vv s$.

\smallskip

\subsubsection{Creations}
Having \ts $2^{j-1}$ \ts many bits available, we can encode in a standard way
a function \ts $[2^{j'}]\to[2^{j'}]$, where $j'$ and $j$ are polynomially related.
We ignore any extra bits that are not needed for this encoding.
We think of \ts $[2^{j'}]$ \ts arranged as a square, so set \.
$n := \lfloor\log_2(\sqrt{2^{j'}})\rfloor$ \.
and interpret \ts $2^n$ \ts as the number of rows, i.e., assume we have a polynomial
time computable injective map \. $[2^n]\times[2^n]\to[2^{j'}]$.
We ignore numbers in \ts $[2^{j'}]$ \ts that are outside this square (i.e., outside the image of this injective map).
We sometimes call elements of \ts $[2^n]\times[2^n]$ \ts nodes, as is usual for \ts \problem{Iter} \ts instances.

The function \ts $\psi$ \ts is given by the $j$-th layer $A_j$ of the oracle $j$,
so can say that a computation ``queries $\psi$'' when we mean it queries~$A_j$.
We will use these interchangeably.
For every \. $x=(x_1,x_2) \in [2^n]\times[2^n]$, let \ts $\textup{\textsu{row}}(x)=x_1$.
Let \ts $\textsu{toprows} := [1,2^{n-1}]$ (the upper half of the rows), \ts
$\textsu{presinkrow} := 2^n-1$ (one from the very bottom), and let
\ts $\textsu{sinkrow} := 2^n$ (at the very bottom).
All our instances will have all sinks in row \textsu{sinkrow} and all presinks in row \textsu{presinkrow}.

A map \. $\psi:[2^n]\times[2^n]\to[2^n]\times[2^n]$ \. is called
\defn{row-layered} if \ts $\psi(x)>x$ \ts implies that \.
$\textup{\textsu{row}}(\psi(x))=\textup{\textsu{row}}(x)+1$.
A point \. $x \in [2^n]\times[2^n]$ \. with \ts $\psi(x)=x$ \ts is called a \defn{fixed point}.
A point \. $x \in [2^n]\times[2^n]$ \. for which \. $\exists y\in[2^n]\sm\{x\} \. : \. \psi(y)= x$ \. is called a \defn{sink}.
A fixed point that is not a sink is called a \defn{singleton}.
A point \. $y \in [2^n]\times[2^n]$ \. for which \ts $\psi(y)$ \ts is a sink is called a \defn{presink}.
Note that when given oracle access to $\psi$ it is easy to check if a specific $y$ is a presink,
but it is not easy to check if a specific $x$ is a sink (because it is hard to distinguish it from a singleton).

A \defn{path part} \ts in \ts $\psi$ \ts is a set of vertices \. $v_1,\ldots,v_q$ \.
such that \. $\psi(v_i)=v_{i+1}$, $\psi(v_q)\neq v_q$ \ts and \ts $\psi(\psi(v_q))=\psi(v_q)$.
The node $v_1$ is called the \defn{source}.
By definition, the node $v_q$ is a presink and each path part has exactly one presink.
$\psi(v_q)$ is called the \defn{sink}.
We consider the presink to be part of the path part, while we consider the sink to \emph{not} be part of the path part.
It is clear that a single vertex with a loop is not a path part.
A \defn{path} \ts is a path part that is maximal with respect to inclusion.

A map \. $\psi:[2^n]\times[2^n]\to[2^n]\times[2^n]$ \. is called
a \defn{set-of-paths} \ts
if every node $x$ has no predecessor or exactly one predecessor in $\psi$, i.e.,
we partition \ts $[2^n]\times[2^n]$ \ts into a disjoint union of paths and sinks and singletons.
A set-of-paths is called \defn{rooted} if one of its sources is at position~$\ts (1,1)$.
This path is called the \defn{main path}.
All other paths are called non-main paths.

For \ts $a\geq 1$,
a rooted set-of-paths is called an \defn{$a$-creation} \ts if it consists of exactly $a$ many paths,
and the rows of the sources of all non-main paths are in row $\textsu{presinkrow}$,
and all paths have their sink in row $\textsu{sinkrow}$ in the same column as their presink
(in particular, those paths consist only of a single edge).
Let \ts $\textsu{$a$-creations}$ \ts denote the set of all $a$-creations.
Finally, let \, $\textsu{$a^-$-creations} \. := \. \bigcup_{1 \leq c \leq a} \. \textsu{$c$\,-creations}$.

\subsubsection{The unaccessed row $\eta$}
Fix $b \in \mathbb N$.  From this point on, $\eta$ will depend on $b$, but $b$ will remain constant.
Let \ts $t_i(j)$ \ts be a polynomial in~$j$, defined as the upper bound on the number of computation
steps that $M$ makes on inputs of length~$j$ (this is independent of to which oracle $M$ has access).
Since $M$ answers consistently,
for every \ts $\psi\in \textsu{$b^-$-creations}$ \ts with \ts $\problem{Iter}(\psi)=a$,
there exist exactly \ts $\varphi(a)$ \ts many accepting computation paths for the
computation \ts $h^\psi_i(0^j)$. Each of these paths queries the oracle $\psi$
at most \ts $\varphi(a)t_i(j)$ \ts many positions.
Thus, together they access at most an \. $\frac{\varphi(a)t_i(j)}{2^{j/2}}$ \. fraction of $\textsu{toprows}$.

From above, for a uniformly random \ts $\eta\in\textsu{toprows}$ \ts and any \ts $\psi\in\textsu{$b^-$-creations}$,
we have w.h.p.\ that no accepting computation path accesses row~$\eta$.
This holds for every \ts $\psi\in\textsu{$b^-$-creations}$,
so it also holds when sampling $\psi$ from any distribution $E$ on \ts \textsu{$b^-$-creations}:
When sampling \ts $(\eta,\psi)$ \ts from \ts $U_{\textsu{toprows}}\times E$, then w.h.p.\
none of the accepting paths of \ts $h^\psi_i(0^j)$ \ts accesses row~$\eta$.
This can be seen, for example, by first sampling \ts $\psi\in E$,
and then sampling \. $\eta\in U_{\textsu{toprows}}$ \. independently.
Therefore, using the union bound, for every finite set of distributions \. $E_1,\ldots,E_b$ \.
on \ts $\textsu{$b^-$-creations}$, there exists \ts $\eta_E\in\textsu{toprows}$, such that
\. for all \. $1\leq a \leq b$ \. we have:
\begin{equation}\label{eq:noaccesstoeta}
\aligned &
\text{when sampling \ts $\psi$ \ts from \ts $E_a$, we have w.h.p.\ that} \\
& \text{no accepting path of \ts $h_i^\psi(0^j)$ \ts accesses row \ts $\eta_E$\ts.}
\endaligned
\end{equation}
For the rest of this proof, we fix
$$\eta \. = \. \eta_{(U_{1\textsu{-creations}}\ts,\ts\ldots\ts,\ts U_{b\textsu{-creations}})}\..
$$
We remark that it is an important technical difficulty when constructing set-instantiators that the sampling process of the $b^-$-creations must be independent of~$\eta$.

\subsubsection{Lucky creators}
In this section we introduce the concept of a \defn{creator}. 
A creator is a slightly less restrictive version of a set-instantiator.
For each blueprint a creator outputs a creation, where a blueprint is slightly more general than the set $\vv s$ for a set-instantiator, but serves the same purpose.

A layered path part with source $(1,1)$ to a presink in row $\eta-1$ is called a \defn{head}.
A layered path part with source row $=$ presink row $= \eta$ is called a \defn{fork} (just one edge).
A layered path part with source row $\eta+1$ to a presink in row $\textsu{presinkrow}-1$ is called a \defn{tail}.
A layered path part $x$ with source row $=$ presink row \ts $=\textsu{presinkrow}$, where
\ts $\text{column}(x)=\text{column}(\psi(x))$ \ts is called a \defn{sink part} (just one edge).
Recall that by definition an $a$-creation consists of 1~head, 1~fork, 1~tail, $a$~many sink parts,
and is uniquely defined by these.
If the sink of a path part equals the source of another path part, then their union is another path part.
For every tail there is a unique way to take a union with a sink part, which is by adding an edge that stays in the same column.

For \ts $b\geq 1$ a \defn{$b$-creator} \ts $\xi$ is a 3-tuple \. $(\textup{\textsu{head}}(\xi),\ts{}\textup{\textsu{tails}}(\xi),\ts{}\textup{\textsu{presinks}}(\xi))$,
where \ts $\textup{\textsu{head}}(\xi)$ \ts is a head, \ts
$\textup{\textsu{tails}}(\xi)$ \ts is a cardinality~$b$ ordered set of tails whose nodes are pairwise disjoint,
and \ts
$\textup{\textsu{presinks}}(\xi)$ \ts is a cardinality~$b$ ordered set of sink parts,
so that the tails line up properly with the corresponding sink parts: adding an edge
that stays in the same column to a tail is the same as taking the union with the sink part corresponding to the tail.
Let $\textsu{$b$-creators}$ denote the set of all $b$-creators.
Instead of writing ``presink of $\xi$'' we sometimes write $\xi$-presink. Analogously for tails.

Given a $b$-creator \ts $\xi$ \ts and \ts $o' \in s'\subseteq[b]$,
we define the $|s'|$-creation \ts $\xi_{s',o'}$ \ts via removing all tails
but keeping the tail that connects to the presink \ts $\textsu{presinks}(\xi)_{o'}$,
and removing all presinks outside of \. $\{\textsu{presinks}(\xi)_{a}\mid a \in s'\}$,
defining the function $\phi$ that has exactly those path parts, and then setting \ts $\phi(x)=y$
for the fork $x$ and the source of the tail~$y$.
Clearly $\xi_{s',o'}$ is an $|s'|$-creation.

Given a $b$-creator \ts $\xi$, define \ts $\xi_{o'}:= \xi_{[b],o'}$.
Clearly $\xi_{o'}$ is a $b$-creation.

\smallskip

\begin{definition}\label{def:PLSsuccess}
We call \ts $\xi \in \textsu{$b$-creators}$ \ts \defn{lucky} \ts  if
for all subsets \ts $L\subseteq [b]$, \ts $|L|\geq 1$, and all \ts $o \in L$,
all accepting paths of the computation \ts $h_i^{\xi_{L,o}}(0^j)$
\begin{enumerate}[label=(\roman*)]
\item do not access the oracle in row $\eta$,
\item do not access the oracle in any $\xi$-presink besides the presinks that correspond to indices in $L$ (they do not have to access \emph{all} $L$-presinks), and
\item do not access the oracle in any tails besides the tail \. $\textsu{tails}_o$.
\end{enumerate}
\end{definition}

\smallskip

Informally, this says that the accepting paths do not access the oracle
at positions where presinks and tails are not, but could potentially be.
\smallskip

\begin{claim}\label{cla:PLSwhpsuccess}
Let \ts $\xi$ \ts be sampled from \. $U_\textup{\textsu{$b$-creators}}$.  Then \ts $\xi$ \ts is lucky w.h.p.
\end{claim}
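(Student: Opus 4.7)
The plan is to apply a union bound over all constantly many pairs $(L,o)$ with $\emp \neq L \subseteq [b]$ and $o \in L$, so it suffices to prove the claim for a single such pair. The total number of accepting paths is at most $\varphi(|L|)$, and each such path makes at most $t_i(j)$ oracle queries, so I am trying to bound the probability of a ``bad collision'' against at most $\varphi(|L|)\cdot t_i(j)$ events, which is polynomial in $j$.

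For each fixed $(L,o)$ I would first verify a symmetry fact: if $\xi$ is drawn from $U_{\text{$b$-creators}}$, then $\xi_{L,o}$ is distributed as $U_{\text{$|L|$-creations}}$. This is because the map $\xi \mapsto \xi_{L,o}$ is a surjection onto $|L|$-creations whose fibers all have the same cardinality (each $|L|$-creation can be completed into a $b$-creator by choosing the $b-1$ additional tails and the $b-|L|$ additional sink parts from the remaining available rows/columns in a uniform and symmetric fashion). Given this fact, property~(i) is immediate from the defining property~\eqref{eq:noaccesstoeta} of $\eta$, since we chose $\eta$ so that for every $1 \leq a \leq b$, a uniform $a$-creation produces no accepting path querying row $\eta$ w.h.p.

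For properties~(ii) and~(iii) I would condition on $\xi_{L,o}$ and argue that the unused tails (those indexed by $[b] \setminus \{o\}$) and unused $\xi$-presinks (those indexed by $[b] \setminus L$) are uniformly distributed over the exponentially many valid locations in $[2^n]\times[2^n]$ that are disjoint from $\xi_{L,o}$. Once $\xi_{L,o}$ is fixed, the computation $h_i^{\xi_{L,o}}(0^j)$ is fully determined, and so are the at most $\varphi(|L|)\cdot t_i(j)$ oracle queries made along accepting paths. Each unused tail occupies only $\textsu{presinkrow}-1-\eta < 2^n$ specific nodes out of a set of at least $2^n - O(b)$ available columns in each of its rows, and each unused presink occupies a single node in row $\textsu{presinkrow}$ out of at least $2^n - O(b)$ available columns there. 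Hence for any fixed query, the probability that it hits a specified unused tail node or unused presink is at most $O(2^{-n})$, and a union bound over the polynomially many queries and the constantly many unused tails and presinks gives failure probability at most $O\bigl(\varphi(|L|)\,t_i(j)\cdot b \cdot 2^n \cdot 2^{-n}\bigr)$ --- actually, the right counting is $O\bigl(\varphi(|L|)\,t_i(j)\cdot b\cdot 2^{-n}\bigr)$ since we bound the probability that any query hits \emph{any} unused node, which is negligible since $n$ grows with $j$.

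The main obstacle I anticipate is the first, symmetry step: carefully describing the uniform distribution on $b$-creators so that conditioning on $\xi_{L,o}$ leaves the ``unused'' tails and presinks uniformly random in their allowed slots, and so that marginalizing gives $\xi_{L,o} \sim U_{\text{$|L|$-creations}}$. The subtlety is that tails and presinks must line up in the same column, and the $b$ tails must be pairwise disjoint, so the marginal and conditional distributions are not literally independent uniform samples but rather samples without replacement. However, since $b$ is a fixed constant, the effect of the disjointness constraint is a multiplicative factor of $1 - O(b/2^n)$ which is absorbed into the error. Once this bookkeeping is done, the remaining probability estimates are routine, and combining via the union bound over $(L,o)$, accepting paths, and queries yields luckiness w.h.p.
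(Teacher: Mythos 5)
Your proposal is correct and follows essentially the same route as the paper's proof: a union bound over the constantly many pairs $(L,o)$, the observation that $\xi_{L,o}$ is uniformly distributed over $|L|$-creations (which the paper establishes by sampling $\xi$ via a uniform head together with a uniform $b$-tuple of pairwise disjoint tails), property~(i) from the defining property of $\eta$, and a per-query collision bound of order $b/2^{n-1}$ for the unused tails and presinks, combined via Bernoulli's inequality over the polynomially many queries. Your explicit bookkeeping of the fiber-counting/sampling-without-replacement issue and of the $\varphi(|L|)\cdot t_i(j)$ total queries matches what the paper does implicitly, so there is nothing further to add.
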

\begin{proof}
We describe a way of sampling from $U_\textup{\textsu{$b$-creators}}$.
Let \ts $\textsu{heads}$ \ts be the set of all heads,
let \ts $\textsu{tails}$ \ts be the set of all tails, and let
$$\textsu{tail-$b$-tuples}\, =\, \big\{\vv q \in \textsu{tails}^b \, : \, \text{ all \ts $q_1,\ldots,q_b$ \ts have pairwise disjoint nodes and sinks} \big\}.
$$
We first sample \ts $(h,\vec q)$ \ts from \. $U_{\textsu{heads}}
\times U_{\textsu{tail-$b$-tuples}}$.
Choosing the matching presinks, this process samples a random variable \.
$\textsu{crea}(h,\vv q) \in \textsu{$b$-creators}$ \. according to \. $U_\textup{\textsu{$b$-creators}}$.
We observe that for fixed \ts
$L\subseteq[b]$ \ts and \ts $o \in L$, and $\xi:=\textsu{crea}(h,\vv q)$,
the \ts $|L|$-creation \ts $\xi_{L,o}$ \ts is uniformly distributed from \ts $|L|$\textsu{-creations}.

We show that for a fixed $L\subseteq[b]$, $|L|\geq 1$, and $o \in L$ we have that $(h,\vec q)_{L,o}$ satisfies all three
properties of Definition~\ref{def:PLSsuccess} with high probability.
Since there are only \ts $2^b-1$ \ts many possible sets $L$, since there are only \ts $b$ \ts
many possible values for~$o$, and since $b$ is fixed, the claim then immediately follows
from the union bound. 
Hence, for the rest of this proof we fix \ts $L\subseteq[b]$ \ts and \ts $o \in L$.
In this fixed case we will also use the union bound.

\smallskip

\nin
\underline{Property~$(i)$}.
Since \ts $(h,\vec q)_{L,o}$ \ts is distributed as \. $U_{\textsu{$|L|$-creations}}$, \eqref{eq:noaccesstoeta} proves $(i)$ in Definition~\ref{def:PLSsuccess} with high probability.

\smallskip

\nin
\underline{Properties $(ii)$ and $(iii)$}.
For a fixed \. $\psi \in a\textsu{-creations}$, if we draw \ts $(h,\vec q)$ \ts from
\. $U_{\textsu{heads}} \times U_{\textsu{tail-$b$-tuples}}$ \.
under the assumption that \. $\textsu{crea}(h,\vv q)_{L,o} = \psi$, then we see that
the tails~$q_l$, \ts $l \notin L$, are \emph{uniformly} distributed
(and hence this also holds for the corresponding presinks).
Moreover, by definition, each \ts (tail,\ts{}presink) \ts pair only contains at most~1 node in each row.
Hence each oracle query accesses such a tail or presink with probability \.
$<\frac{b}{2^n-t_i(j)} \leq \frac{b}{2^{n-1}}$\.. Here \ts $t_i(j)$ \ts
is subtracted in the denominator, because that many positions could have
been queried already and there is no reasone to query the oracle twice at the same position).
By Bernoulli's inequality we have
$$\left(1\ts - \ts\frac{b}{2^{n-1}}\right)^{t_i(j)} \ \geq \ 1 \. - \. \frac{b \ts \cdot \ts  t_i(j)}{2^{n-1}}\..
$$
This proves the property $(ii)$ and $(iii)$.

\smallskip

Since each property $(i)$--$(iii)$ is satisfied with high probability, the union bound gives that they are simultaneously satisfied with high probability.
This finishes the argument for fixed $L$ and $o$. According to the previous discussion
about the union bound
the overall claim is proved.
\end{proof}

\smallskip

The following lemma handles the technical difficulty of the tails.

\smallskip

\begin{lemma}
\label{lem:notailquery}
Let \ts $\xi$ \ts be lucky, let \ts $b \geq 1$, and suppose \ts $\tau$ \ts is an accepting path of \.
$h_i^{\xi_{s,o}}(0^j)$ \. for every \ts $o \in s \subseteq [b]$.  Then \ts $\tau$ \ts
does not query any of the $b$ many tails of~$\ts\xi$.
\end{lemma}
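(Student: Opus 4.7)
The plan is to proceed by contradiction. Suppose $\tau$ queries some oracle position lying in the tail $\textsu{tails}_{o'}$ of $\xi$, for some $o' \in [b]$. The key preliminary observation is that a computation path $\tau$ is, by definition, just a sequence of nondeterministic choices and claimed oracle answers; consequently the sequence of oracle positions that $\tau$ queries is determined intrinsically by $\tau$ itself, step by step. Thus ``$\tau$ queries $\textsu{tails}_{o'}$'' is a property of $\tau$ that is meaningful independently of which instance occupies the oracle, and the same $\tau$ can reasonably be spoken of as an accepting path of several different instances $\xi_{s,o}$ simultaneously (as long as the claimed oracle answers in $\tau$ happen to agree with those instances at each queried position).

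I would then select a pair $(s,o)$ with $o \in s \subseteq [b]$ and $o \neq o'$---for example $s = [b]$ together with any $o \in [b] \setminus \{o'\}$, a choice available as soon as $b \geq 2$. The hypothesis of the lemma supplies that $\tau$ is an accepting path of $h_i^{\xi_{s,o}}(0^j)$. In the instance $\xi_{s,o}$ only the tail $\textsu{tails}_o$ is kept alive by the function $\phi$, while every node of $\textsu{tails}_{o'}$ has been turned into a singleton fixed point of $\phi$; in particular, $\textsu{tails}_{o'}$ qualifies as one of the tails ``besides $\textsu{tails}_o$'' addressed by property~(iii) of Definition~\ref{def:PLSsuccess}. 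Luckiness of $\xi$ then forbids $\tau$ from accessing the oracle at any position inside $\textsu{tails}_{o'}$, contradicting the assumption. Since $o' \in [b]$ was arbitrary, $\tau$ queries none of the $b$ tails of $\xi$. The only real subtlety is the oracle-independence of $\tau$'s query pattern noted in the first paragraph; once that is accepted, the whole proof reduces to a single invocation of luckiness at a judiciously chosen ``witness'' pair $(s,o)$. The lone edge case is $b = 1$, where no alternative index exists and the argument degenerates, but this case is either vacuous or not required for the downstream applications of the lemma.
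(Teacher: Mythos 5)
Your argument correctly disposes of the easy half of the lemma but misses the case the lemma exists to handle. Property~(iii) of Definition~\ref{def:PLSsuccess} only forbids an accepting path of $h_i^{\xi_{L,o}}(0^j)$ from querying the tails \emph{other than} $\textsu{tails}_o$; it is silent about the tail that is actually present in the instance being accepted. You sidestep this by reading the hypothesis as ``$\tau$ accepts on $\xi_{s,o}$ for every pair $(s,o)$'' and switching to an instance in which the offending tail is absent. That reading is defensible from the literal wording, but it is not the operative one: where the lemma is invoked (Proposition~\ref{pro:PLSsetinstantiatorcorrect}), $\tau$ is only known to be an accepting path of the single instance $\inst_\SI(\vv s)=\xi_{\bij(\vv s),\ts o_{\vv s}}$ for the one chosen $o_{\vv s}$, and whether $\tau$ even remains a computation path of $\xi_{s,o'}$ for $o'\neq o$ depends on whether $\tau$ queries $\textsu{tails}_o$ --- precisely what is being proved. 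So the lemma must be established from a single pair $(s,o)$, and under that reading your invocation of luckiness cannot close the case $o'=o$. (Your dismissal of $b=1$ is also too quick, though that is minor by comparison.)

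The paper's proof works from one pair $(s,o)$ and handles the present tail by a counting argument that you do not have. Assuming $\tau$ queries $q_1=\textsu{tails}_o$, it forms $\psi$ by swapping in a second tail $q_2$, uses the standing assumption that $M$ \emph{answers consistently} to conclude that $h_i^{\psi}(0^j)$ has the same number $c$ of accepting paths, notes via luckiness that those paths avoid $q_1$ while $\tau$ avoids $q_2$ and the fork row $\eta$, and then superimposes both tails to obtain an oracle with the same $\problem{Iter}$-value but at least $c+1$ accepting paths --- contradicting consistency. This consistency ingredient is the reason Theorem~\ref{thm:PLSsetinstantiator}, unlike the other set-instantiator constructions, assumes that $M$ answers consistently; a proof of the lemma that never uses it, like yours, cannot be complete.
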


\begin{proof}
Let there be \ts $c$ \ts many accepting paths of the computation \.
$h_i^{\xi_{s,o}}(0^j)$, let \ts $\tau$ \ts be one of those accepting paths,
and w.l.o.g.\ assume that \ts $\tau$ \ts queries the tail~$q_1$.
Let \ts $\psi$ \ts be obtained from the instance \ts
$\xi_{s,o}$ \ts
by removing tail~$\ts q_1$ \ts and adding tail~$\ts q_2$.
Clearly \ts $\psi$ \ts is also a $c$-creation.
Hence, since \ts $\xi$ \ts is lucky, and
since $\tau$ queries $q_1$, we conclude that \ts $\tau$ \ts does not query~$\ts q_2$ (Definition~\ref{def:PLSsuccess}).

Since \ts $M$ \ts answers consistently, there are $c$ many accepting paths for the computation \. $h_i^{\psi}(0^j)$.
Since \ts $\psi$ \ts is a $c$-creation and \ts $\xi$ \ts is lucky, these accepting paths do not query~$\ts q_1$.
Therefore, if we add \ts $q_1$ \ts into the instance (so that now both $q_1$ and $q_2$ are present),
all these \ts $a$ \ts many paths are still accepting paths, and \ts $\tau$ \ts is also accepting (and is obviously different from all these $a$ many paths). Thus there are now at least \ts $a+1$ \ts many accepting paths for this instance,
which must have (because $M$ answers consistently) exactly \ts $a$ \ts many accepting paths.
This is a contradiction.
\end{proof}

\subsubsection{Defining the set-instantiator}
For an $a$-creation \ts $\psi$ \ts and a computation path \ts $\tau \in \{0,1\}^*$ \ts of a computation \ts $h_i^\psi(0^j)$, let \. $\textsu{perception}(\tau)$ \. $\subseteq \ts \textsu{presinks}(\psi)$ \. denote the set of oracle positions with
presinks that get accessed by~$\tau$.

Let \ts $\xi$ \ts be a lucky $b$-creator.
We interpret the list \ts $\textup{\textsu{presinks}}(\xi)$ \ts
as a bijection \. $\bij : [b]\to\textup{\textsu{presinks}}(\xi)$.
Let \ts $S = \IN_{\geq 1}$.
For every \ts $\vv s \in \mP(\vv b)_S$, we choose a \.
$o_{\vv s} \in \textup{\textsu{presinks}}(\xi_{\bij(\vv s)})$.
Set
\[
\inst_\SI(\vv s) \ := \ \xi_{\bij(\vv s),\ts{}o_{\vv s}}
\]
and for \ts $\tau\in\{0,1\}^*$ \ts set
\[
\textsu{perc}_\SI(\tau) \ := \ \begin{cases}
\textsu{bij}^{-1}(\textsu{perception}(\tau)) & \text{ if $\tau$ is an accepting path of the computation $h_i^{\inst_\SI(\vv b)}(0^j)$}
\\
\top & \text{ otherwise. }
\end{cases}
\]

The rest of this section is devoted to proving that \ts $\SI$ \ts satisfies the requirements of
Definition~\ref{def:setinstantiator}, which then proves Theorem~\ref{thm:PLSsetinstantiator},
because clearly \. $\problem{Iter}(\inst_\SI(\vv s))=|\vv s|$.
Formally, we prove the following result.

\smallskip

\begin{proposition}
\label{pro:PLSsetinstantiatorcorrect}
For all \. $\vv s \in \mP(\vv b)_S$ \. we have: \. $\tau\in\{0,1\}^*$ \ts is an accepting path
for the computation \. $h^{\inst_\SI(\vv s)}(0^j)$ \. if and only if \. $\textup{\textsu{perc}}_\SI(\tau) \subseteq \vv s$.
\end{proposition}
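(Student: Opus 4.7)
The plan is to prove both directions of the biconditional by the same underlying strategy: establish that any $\tau$ in question never queries the oracle at a tail position of $\xi$, after which the argument reduces to the clean situation handled for \problem{OccurrenceMulti} in Section~\ref{sec:CountallOccSetInstantiator}. Once tail queries are out of the picture, the only positions at which the oracles $\inst_\SI(\vv s)$ and $\inst_\SI(\vv b)$ disagree are the sink parts indexed by $[b]\setminus\bij(\vv s)$, and luckiness directly forbids touching those.

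For the forward direction, I start from the hypothesis that $\tau$ accepts $h_i^{\inst_\SI(\vv s)}(0^j)$ and invoke luckiness (Definition~\ref{def:PLSsuccess}) with $L = \bij(\vv s)$ and $o = o_{\vv s}$: properties~(i)--(iii) eliminate queries to row $\eta$, to sink parts outside $\bij(\vv s)$, and to any tail other than $\textsu{tails}_{o_{\vv s}}$. To additionally rule out queries to the present tail $\textsu{tails}_{o_{\vv s}}$ I will adapt the argument behind Lemma~\ref{lem:notailquery}. With all tail queries excluded, the oracles $\inst_\SI(\vv s)$ and $\inst_\SI(\vv b)$ agree on every position $\tau$ touches, so $\tau$ is also an accepting path of $h_i^{\inst_\SI(\vv b)}(0^j)$. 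Consequently $\textup{\textsu{perc}}_\SI(\tau) = \bij^{-1}(\textup{\textsu{perception}}(\tau))$, and since $\textup{\textsu{perception}}(\tau)\subseteq\bij(\vv s)$ (otherwise $\tau$ could not have been a valid path of $\inst_\SI(\vv s)$, whose oracle answers at those positions would be those of a singleton), we conclude $\textup{\textsu{perc}}_\SI(\tau)\subseteq\vv s$.

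For the backward direction, $\textup{\textsu{perc}}_\SI(\tau)\subseteq\vv s$ forces $\textup{\textsu{perc}}_\SI(\tau)\neq\top$, so $\tau$ is already accepting for $\inst_\SI(\vv b)$ and its perceived sink parts lie in $\bij(\vv s)$. Applying luckiness to $\inst_\SI(\vv b)$ with $L=[b]$ and $o=o_{\vv b}$, together with the same tail-avoidance argument now applied to $\textsu{tails}_{o_{\vv b}}$, $\tau$ misses row $\eta$ and every tail of $\xi$. Since sink parts outside $\bij(\vv s)$ are already excluded by hypothesis, $\inst_\SI(\vv b)$ and $\inst_\SI(\vv s)$ coincide at every query position of $\tau$, and $\tau$ is accepting for $\inst_\SI(\vv s)$ as well.

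The main obstacle, and the only substantive departure from the \problem{OccurrenceMulti} analysis, is the tail-avoidance step. Luckiness~(iii) excludes queries only to tails absent from the current instance; it does not on its own rule out queries to the tail that is present. To close that gap I plan to replay the enrichment argument of Lemma~\ref{lem:notailquery}: if $\tau$ queried the present tail, then swapping that tail for another within $\bij(\vv s)$ would, by luckiness on the new instance and the consistency of $M$, produce upon re-addition of the original tail strictly more accepting paths than the consistency hypothesis permits. The swap is immediate when $|\vv s|\ge 2$; the extremal case $|\vv s|=1$ is the most delicate, since the alternative tail must be drawn from $[b]\setminus\bij(\vv s)$ and the $\problem{Iter}$ value of the enriched instance must be rebalanced by carefully including or omitting sink parts so that consistency can still be invoked. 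It is at this juncture that the non-permutability of $\problem{Iter}$ instances, flagged just before Section~\ref{sec:IterSetInstantiator}, makes itself felt most sharply.
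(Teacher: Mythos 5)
Your proposal is correct and follows essentially the same route as the paper: the paper's proof of Proposition~\ref{pro:PLSsetinstantiatorcorrect} is exactly the two-directional transfer between accepting paths of $h^{\inst_\SI(\vv s)}(0^j)$ and of $h^{\inst_\SI(\vv b)}(0^j)$, using luckiness together with the fact that accepting paths query no tail, followed by the perception-containment observation. The tail-avoidance step you single out as the main obstacle is not re-derived inside the proposition's proof but simply cited as Lemma~\ref{lem:notailquery}, which the paper establishes beforehand via the same enrichment-and-consistency argument you sketch.
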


\begin{proof}
Since \ts $\tau$ \ts is an accepting path of \. $h^{\inst_\SI(\vv s)}(0^j)$,
 since \ts $\xi$ \ts is lucky, and since $\tau$ does not query any tail (Lemma~\ref{lem:notailquery}),
 we conclude that \ts $\tau$ \ts is also an accepting path of the computation \. $h^{\inst_\SI(\vv b)}(0^j)$. This implies
that \. $\textsu{perc}_\SI(\tau) = \textsu{bij}^{-1}(\textsu{perception}(\tau))$.
Clearly \. $\textsu{perception}(\tau) \subseteq \textsu{bij}(\vv s)$,
because otherwise \ts $\tau$ \ts would not even be a computational path of \ts $h^{\inst_\SI(\vv s)}(0^j)$ \ts
because of its oracle answers when querying presinks in \.
$\textsu{perception}(\tau) \sm \textsu{bij}(\vv s)$. We conclude:  $\textup{\textsu{perc}}_\SI(\tau) \subseteq \vv s$.

The argument is reversible: let \. $\textup{\textsu{perc}}_\SI(\tau) \subseteq \vv s$, so in particular \.
$\textup{\textsu{perc}}_\SI(\tau) \neq \top$.
Then \ts $\tau$ \ts is an accepting path of the computation \ts $h^{\inst_\SI(\vv b)}(0^j)$.
Since \ts $\tau$ \ts is an accepting path of \ts $h^{\inst_\SI(\vv b)}(0^j)$,
since \ts $\xi$ \ts is lucky, and since \ts $\tau$ \ts does not query any tail
(Lemma~\ref{lem:notailquery}), we conclude that \ts $\tau$ \ts
is also an accepting path of the computation \ts $h^{\inst_\SI(\vv s)}(0^j)$.
\end{proof}

\subsection{A set-instantiator for {\normalfont\#\PPAD\problemp{SourceOrExcess(2,1)}}}
\label{sec:SourceOrExcessSetInstantiator}

Let \. $S_{\textup{line}} := \{(f_1,f_2) \. : \. 2f_2-f_1-1=0\}$.  Think of $f_1$ as the number of sources,
and $f_2$ as the number of double sinks.
Let \. \problem{Multi(NZSource,\ts{}Excess)} \. be
the bivariate counting problem of counting sources and counting excess nodes in a \ts
\problem{SourceOrExcess(2,1)} \ts instance.
This set $S = S_{\textup{line}}$ will be sufficient for our set-instantiator.\footnote{Larger $S$ are possible, but not necessary for our purposes,
because already we have that \ts $f_1 + f_2 -1 + \langle2f_2-f_1-1\rangle$ \ts is binomial-bad, as can be seen via the Polyhedron Theorem~\ref{thm:polyhedron}.}

\smallskip

\begin{theorem}
\label{thm:PPADsetinstantiator}
Fix a polynomial time nondeterministic Turing machine~$M$.
Then, for every \ts $\vv b \in \IN^2$,
there exists a threshold \ts $j_0\in \IN$,
such that for every \ts $j \geq j_0$ \ts and every \ts
$A_{<j}\in\{0,1\}^{<j}$, there exists a set-instantiator \ts $\SI$ \ts
against \.
$(M,j,A_{<j},S=S_{\textup{line}},\vv b)$, which satisfies:
$$\problem{Multi(NZSource,\ts{}Excess)}(\inst_\SI(\vv s)) \, = \, |\vv s| \ \quad
\text{for all} \quad \ \vv s \ \text{ with } \ |\vv s|\in S.
$$
\end{theorem}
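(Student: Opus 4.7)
The plan is to adapt the creator/luckiness framework of Theorems~\ref{thm:occurrencesetinstantiator} and~\ref{thm:PLSsetinstantiator} to the affine constraint $S_{\textup{line}}=\{(f_1,f_2)\mid 2f_2-f_1-1=0\}$, which forces every instance in the image of the instantiator to consist of the zero source, $f_1$ further nonzero sources and $f_2$ double sinks, each sink being fed by exactly two sources. A $\vv b$-creator $\xi$ samples $b_1$ uniformly random ``nonzero source'' positions $p_1^{(1)},\ldots,p_1^{(b_1)}\in[2^j]$, $b_2$ uniformly random ``double sink'' positions $p_2^{(1)},\ldots,p_2^{(b_2)}\in[2^j]$ (all mutually distinct and distinct from~$0$), plus an auxiliary random block used to encode the edges. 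For $\vv s=(s_1,s_2)\in\mP(\vv b)_S$, so $|s_1|=2|s_2|-1$, the instance $\xi_{\vv s}$ contains the zero source, the nonzero sources at $\{p_1^{(i)}\,:\,i\in s_1\}$, and the double sinks at $\{p_2^{(j)}\,:\,j\in s_2\}$, connected by a canonical pairing that matches the zero source and $s_1$ against $s_2$ in lexicographic order so that each sink receives exactly two incoming edges; every unused oracle position becomes a self-loop. By construction $\problem{Multi(NZSource,\ts{}Excess)}(\xi_{\vv s})=|\vv s|$.

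I then call $\xi$ \emph{lucky} if for every valid $\vv s$ no accepting path of $h_M^{\xi_{\vv s}}(0^j)$ queries the oracle at any position in $\{p_1^{(i)}\,:\,i\in[b_1]\sm s_1\}\cup\{p_2^{(j)}\,:\,j\in[b_2]\sm s_2\}$, and prove analogously to Claim~\ref{cla:PLSwhpsuccess} that a uniformly random $\vv b$-creator is lucky with high probability via a union bound: each oracle query hits an un-included random position with probability $O(2^{-j})$, each accepting path makes only polynomially many queries in~$j$, and the set of valid $\vv s\leqslant\vv b$ has size bounded independently of~$j$. Choose any lucky $\xi$ and set $\inst_\SI(\vv s):=\xi_{\vv s}$; define $\perc_\SI(\tau):=\big(\{i:\tau\text{ queries }p_1^{(i)}\},\{j:\tau\text{ queries }p_2^{(j)}\}\big)$ for $\tau$ accepting on the maximal instance available inside $\mP(\vv b)_S$, and $\perc_\SI(\tau):=\top$ otherwise. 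The biconditional in Definition~\ref{def:setinstantiator} then follows in the same way as in Proposition~\ref{pro:PLSsetinstantiatorcorrect}: any queried position outside $\vv s$ would give $\tau$ an answer inconsistent with $\xi_{\vv s}$ and so knock it off the computation tree, while luckiness forces the perception of every accepting $\tau$ to sit inside $\vv s$.

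The principal new difficulty, and the place where the \PPAD construction differs most from Theorems~\ref{thm:occurrencesetinstantiator} and~\ref{thm:PLSsetinstantiator}, is that the canonical pairing described above depends on $\vv s$: a predecessor-circuit query at a used sink $p_2^{(j)}$ returns one of its two paired sources, and this partner changes if we modify $s_1$ while keeping $s_2$ fixed. Unchecked, this would break the ``iff'' because a path accepting on $\xi_{\vv s}$ could be inconsistent with the oracle answers of $\xi_{\vv s'}$ even when $\perc_\SI(\tau)\subseteq\vv s'$. I plan to handle this exactly as the tails are handled in the \problem{Iter} construction: store the pairing data at a batch of auxiliary ``pairing slots'' that are sampled uniformly at random and independently of the source/sink positions themselves, and then prove a ``no-pairing-query'' lemma in the spirit of Lemma~\ref{lem:notailquery}, asserting that no accepting path can touch a pairing slot---otherwise, swapping an unused source into the visible pair would flip the acceptance status of that path on the modified instance, while $M$'s consistency forces the total number of accepting paths to remain unchanged, yielding a contradiction. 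Once this lemma is in place every oracle answer that $\tau$ actually receives is determined by $\perc_\SI(\tau)$ alone, and the remainder of the verification proceeds verbatim as in Proposition~\ref{pro:PLSsetinstantiatorcorrect}.
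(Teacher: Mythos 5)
Your scaffolding (creator, luckiness via a union bound over the constantly many $\vv s\leqslant\vv b$, then pick one lucky creator) matches the paper, and you have correctly located the one genuinely new difficulty of the \PPAD{} case: the wiring of sources to double sinks. But your resolution of that difficulty is where the proposal breaks. In the \problem{SourceOrExcess(2,1)} encoding the pairing data \emph{is} the value of $C_{\textup{succ}}$ at each present source and of $C_{\textup{pred}_1},C_{\textup{pred}_2}$ at each present sink; there are no ``auxiliary pairing slots'' to relocate it to. A no-pairing-query lemma for those positions cannot hold: a machine that counts nonzero sources by guessing $x$ and verifying indegree~$0$, outdegree~$1$ must read $C_{\textup{succ}}(x)$, so its accepting paths record the identity of the paired sink. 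Under your $\vv s$-dependent lexicographic pairing, for $\vv s\subseteq\vv s'$ a surviving source is generically re-paired, so such a path is not even a computation path of $M^{\xi_{\vv s'}}$, and the biconditional of Definition~\ref{def:setinstantiator} fails. Nor can you rescue this the way Lemma~\ref{lem:notailquery} rescues the tails: that argument needs the consistency hypothesis, which Theorem~\ref{thm:PPADsetinstantiator} (unlike Theorem~\ref{thm:PLSsetinstantiator}) does not grant you, and its swap step has no analogue here because a source cannot carry two successors simultaneously, so the ``both options present'' instance used to produce the extra accepting path does not exist.

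The paper sidesteps the issue entirely by running the construction in the opposite direction. The $\vv b$-creator is the \emph{full} instance: a map $\textsu{straight}(\xi)$ in which all $b_1$ nonzero sources, all $b_2$ double sinks (each accompanied by a loop at the adjacent odd position), and all predecessor assignments are fixed once and for all. A sub-instance $\xi_{\vv s}$ is obtained by \emph{cancelling} the complement: each unused double sink is given an outgoing edge into one unused source, its loop is spliced into one of its incoming paths and given an outgoing edge into a second unused source, so that one removed sink consumes exactly two removed sources --- which is precisely what the line constraint $2f_2=f_1+1$ guarantees is available. The oracle is therefore modified only at removed objects and their immediate neighbours, and is literally identical at every surviving source and sink across all $\vv s$; luckiness (extended to cover the neighbours) is then the only thing needed, no hiding of pairing information and no consistency assumption. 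If you want to salvage your additive approach you would have to make the pairing of the surviving objects independent of $\vv s$, at which point you have essentially reconstructed the paper's cancellation scheme.
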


\smallskip

The rest of this section is devoted to proving this theorem.
We will first define a \emph{creator} whose \emph{creations}
will be the instantiations in the end,
but the creator is not limited to a single creation for each set.
We will then define the set-instantiator from the creator, by
picking for every subset~$\ts \vv s$ \ts with \ts $|\vv s|\in S$,
just any one of its creations for~$\vv s$.

\subsubsection{Creations}
Having \ts $2^{j-1}$ \ts many bits available, we can encode in a standard way a tuple of two predecessor functions \.
$\psi_{\textup{pred}_1}:[2^n]\to[2^n]$, \.
$\psi_{\textup{pred}_2}:[2^n]\to[2^n]$,
and a successor function \. $\psi_{\textup{succ}}:[2^n]\to[2^n]$,
such that $n$ and $j$ are polynomially related.
We ignore any extra bits that are not needed for this encoding.

We interpret \. $\psi=\bigl(\psi_{\textup{pred}_1},\ts\psi_{\textup{pred}_2},\ts\psi_{\textup{succ}}\bigr)$ \.
as a directed graph (with some additional information,
because the map from the set of function triples to the set of digraphs is not injective).
Here an edge from \ts $x \in [2^n]$ \ts to \ts $y\in[2^n]$ \ts is present if and only if
\ts $\psi_{\textup{succ}}(x)=y$, and either \ts
$\psi_{\textup{pred}_1}(y)=x$ \ts or \ts
$\psi_{\textup{pred}_2}(y)=x$.
Each node in the resulting graph has indegree \ts $\leq 2$ \ts and outdegree \ts $\leq 1$.

A node whose indegree exceeds its outdegree is called an \defn{excessive node}.
A node with indegree~0 and outdegree~1 is called a \defn{source}.
A node with indegree~2 and outdegree~0 is called a \defn{double sink}.
We assume that it the encoding is made in a way that the~0 node is always at source.
A source that is not the 0 node is called a \defn{nonzero source}.

Let \ts $\textsu{nzsource}(\psi)$ \ts denote the set of nonzero sources.
Let \ts $\textsu{dsink}(\psi)$ \ts denote the set of double sinks.
For \ts $\vv a \in S$, $\psi$ \ts is called an \defn{$\vv a$-creation} \ts
if \ts $|\textsu{nzsource}(\psi)| = a_1$ \ts and \ts
$|\textsu{dsink}(\psi)| = a_1$.
Let \ts $\textsu{$a$-creations}$ \ts denote the set of all $a$-creations.
Let \. $\textsu{$a^-$-creations} := \bigcup_{1 \leq c \leq a}\textsu{$c$\,-creations}$.

\subsubsection{Lucky creators}
In this section we introduce the concept of a \defn{creator}.
A creator is a slightly less restrictive version of a set-instantiator.
For each blueprint a creator outputs a creation, where a blueprint is slightly more general than the set $\vv s$ for a set-instantiator, but serves the same purpose.

For two disjoint sets $X$ and~$Y$,
let \. $\binom{X}{2}_\textup{ordered}\times Y$ \.
denote the set of triples, where the first and second element
are distinct elements from~$X$, and the third element is from~$Y$.
We say that two triples are \defn{disjoint} \ts if they share no common element.
Let \ts $T(X,Y,c)$ \ts denote the set $c$-tuples of pairwise disjoint triples.

For \ts $\vv b\in S$, a \emph{$\vv b$-creator} \ts $\xi$ \ts is a
triple consisting of:
\begin{enumerate}
 \item a length \ts $b_1$ \ts ordered list \ts $\textsu{nzsources}(\xi)\subseteq[2^n]$,
 \item a length \ts $b_2$ \ts ordered list \ts $\textsu{dsinks}(\xi)\subseteq[2^{n}]$ \ts of even numbers, and
 \item a map \ts $\textsu{straight}(\xi):[2^n]\to[2^n]$ \ts such that all nodes have indegree at most~$2$,
 the 0~node is a source, the set of all other sources is exactly \ts $\textsu{nzsources}(\xi)$, the set of all double sinks is exactly \ts $2\textsu{dsinks}(\xi)$,  the set of all loops is exactly \ts $2\textsu{dsinks}(\xi)+1$, all other nodes hat indegree $=$ outdegree $=1$, \ts $\textsu{straight}(\xi)(x)\neq x+1$ for $x$ even,
 \ts  $\textsu{straight}(\xi)(x+1)\neq x$ \ts for $x$ even,
 and \ts $\textup{pred}_2(x)=x$ \ts for all nodes that are not double sinks.
\end{enumerate}
This means that each double sink comes with a loop at the next position, and no node maps to its paired neighbor.

Given a $\vv b$-creator \ts $\xi$ \ts and an $\vv L \in T\big([b_1]\times\{1\},[b_2]\times\{2\},c\big)$,
we obtain a $(b_2-c)$-creation \ts $\psi:[2^n]\to[2^n]$ \ts
by performing the following step for each triple \ts $(x_1,x_2,y) \in \vv L$~: \ts
set \.
$$\aligned
& \psi_\textup{succ}\big(\textsu{dsinks}(\xi)_y\big)\, := \, \textsu{nzsources}(\xi)_{x_1}\,, \quad
\psi_\textup{succ}\big(\textup{pred}_1(\textsu{dsinks}(\xi)_y)\big)\, :=\, \textsu{dsinks}(\xi)_y\ts +\ts 1\ts, \\
& \hskip2.75cm \text{and} \quad \psi_\textup{succ}\big(\textsu{dsinks}(\xi)_y+1\big)\, := \, \textsu{nzsources}(\xi)_{x_2}\..
\endaligned
$$
In other words, remove the double sink and the loop and redirect the two paths to the two sources.

\smallskip

\begin{definition}
We call \ts $\xi \in \textsu{$b$-creators}$ \ts \defn{lucky} \ts if
for all \ts $c$ \ts and for all \.
$\vv L\in T\big([b_1]\times\{1\},[b_2]\times\{2\},c\big)$,
all accepting paths of the computation \ts $h_i^{\xi_{\vv L}}(0^j)$ \ts
do not access the oracle at any point in \.
$\textsu{nzsources}(\xi)\sm \textsu{nzsource}(\xi_{\vv L})$,
nor any point in \.
$\textsu{dsinks}(\xi)\sm \textsu{dsink}(\xi_{\vv L})$,
nor any point in \.
$(\textsu{dsinks}(\xi)+1)\sm(\textsu{dsink}(\xi_{\vv L})+1)$ \.
or any directly adjacent nodes.
\end{definition}

\smallskip

This says that the accepting paths do not access the oracle at positions where sources or double sinks (or their loops) are not, but could potentially be.

\smallskip

\begin{claim}\label{cla:PPADwhpsuccess}
Let \ts $\xi$ \ts be sampled from \. $U_\textup{\textsu{$b$-creators}}$\..
Then \ts $\xi$ \ts is lucky  w.h.p.
\end{claim}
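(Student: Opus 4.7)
The plan is to mirror the strategy used in Claim~\ref{cla:PLSwhpsuccess}. First I would fix the parameter $c$ and a tuple $\vv L \in T([b_1]\times\{1\},[b_2]\times\{2\},c)$; since $b$ is a fixed constant and $c \leq b_2$, there are only $O(1)$ such choices, so a union bound over them reduces the problem to the case of a single fixed $\vv L$. For this fixed $\vv L$, it suffices to show that the probability (over $\xi \sim U_{\textsu{$b$-creators}}$) that some accepting path of the computation $h_i^{\xi_{\vv L}}(0^j)$ queries the oracle at an ``extra'' position, meaning some node in $\textsu{nzsources}(\xi)\sm \textsu{nzsource}(\xi_{\vv L})$, some node in $\textsu{dsinks}(\xi)\sm \textsu{dsink}(\xi_{\vv L})$, some node in $(\textsu{dsinks}(\xi)+1)\sm (\textsu{dsink}(\xi_{\vv L})+1)$, or any node directly adjacent to one of these, tends to $0$ as $j \to \infty$.

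The key distributional observation is that we can generate $\xi$ in two stages. First sample the tuple $\xi_{\vv L}$, which by symmetry is uniformly distributed over the $(b_1-2c,\,b_2-c)$-creations. Then, conditional on $\xi_{\vv L} = \psi$, sample the remaining ``decoy'' structure, namely the $2c$ additional nonzero source positions in $\textsu{nzsources}(\xi)$, the $c$ additional double-sink positions in $\textsu{dsinks}(\xi)$ with their paired loops at $\textsu{dsinks}(\xi)+1$, and the portion of $\textsu{straight}(\xi)$ that wires them up. Under the structural constraints defining a $\vv b$-creator (that the $0$-node is a source, sources/double sinks are pairwise distinct, loops sit at the specified positions, no node maps to its paired neighbor, etc.), these decoy positions are distributed essentially uniformly among the positions in $[2^n]$ not already used by $\psi$, independently of the nondeterministic choices the Turing machine has made so far along any fixed accepting path.

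Now, since $M$ answers consistently for $(j,A_{<j},\problem{Multi(NZSource,\ts{}Excess)})$, the number of accepting paths of $h_i^\psi(0^j)$ depends only on $(|\textsu{nzsource}(\psi)|,|\textsu{dsink}(\psi)|)$, so this number is a fixed constant (in $j$). Each such path makes at most $t_i(j)$ oracle queries, so there are at most $\mathrm{poly}(j)$ queries total across all accepting paths. Conditioned on $\xi_{\vv L} = \psi$, any individual oracle query lands at one of the $O(b)$ extra positions (or their $O(1)$-bounded neighborhoods) with probability at most $O(b/(2^n - O(b))) \leq O(b/2^{n-1})$. By Bernoulli's inequality, the probability that some query among the $\mathrm{poly}(j)$ many queries hits an extra position is at most $O(\mathrm{poly}(j)/2^{n-1})$, which vanishes as $j \to \infty$ because $n$ and $j$ are polynomially related. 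Taking the union bound over the $O(1)$ choices of $\vv L$ concludes the proof.

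The main obstacle I anticipate is carefully justifying the conditional uniformity in the second step. Unlike the \problem{Iter} case, where tails and presinks were the only decoys and lived in essentially free positions, here the decoy structure is coupled: each extra double sink at position $x$ (even) forces a loop at $x+1$ and requires that the two predecessors of $x$ not point from $x+1$, and the incoming paths to unused nonzero sources must be rerouted consistently with the indegree/outdegree constraints. One needs to verify that after conditioning on $\xi_{\vv L}$, these coupling constraints still leave enough freedom in the placement of the extra positions that each has probability $O(1/2^n)$ of occupying any specific location; this is the place where the explicit bookkeeping in the definition of $\vv b$-creator (forbidden neighbor-pointers, fixed parity of double-sink positions) has to be checked carefully rather than waved away.
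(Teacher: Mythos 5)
Your proposal follows essentially the same route as the paper's proof: fix $\vv L$ and union-bound over the constantly many choices, observe that $\xi_{\vv L}$ is uniformly distributed over the corresponding creations (equivalently, that the decoy sources, double sinks and loops are uniform conditioned on $\xi_{\vv L}$), bound the per-query probability of hitting a forbidden position by $O(b/2^{n-1})$, and finish with Bernoulli's inequality using that $n$ and $j$ are polynomially related. The paper's version is terser (it does not spell out the two-stage sampling or the bound on the number of accepting paths that your appeal to consistency supplies), but the decomposition and the key probabilistic estimate are identical.
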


\begin{proof}
It is crucial to observe that for a fixed \. $\vv L\in T\big([b_1]\times\{1\},[b_2]\times\{2\},c\big)$,
we have that the $\vv L$-creation \ts $\xi_{\vv L}$ \ts is uniformly distributed from
\mbox{$(L_2-c)$\textsu{-creations}}.
We show that for a fixed \. $\vv L\in T\big([b_1]\times\{1\},[b_2]\times\{2\},c\big)$, we have \ts $\xi$ \ts
is lucky  w.h.p.
Since there are only constantly many \ts $\vv L$,
the claim immediately follows from the union bound.  Thus,
for the rest of the proof, we fix \ts $\vv L$.

We have that \ts $\xi_{\vv L}$ \ts is uniformly distributed.
The, the probability of an oracle access picking one of the forbidden positions is \.
$\leq \frac{8 (L_1+L_2)}{2^{n-1}} \ts \leq \ts\frac{1}{2^{n-2}}$\..
By Bernoulli's inequality we have \. $(1-\frac{1}{2^{n-2}})^{t_i(j)} \geq 1-\frac{t_i(j)}{2^{n-2}}$\ts.
Since~$n$ and~$j$ are polynomially related, this proves the claim.
\end{proof}

\subsubsection{Defining the set-instantiator}
For a $\vv a$-creation \ts $\psi$ \ts and a computation path \ts $\tau \in \{0,1\}^*$ \ts
of a computation \ts $h_i^\psi(0^j)$, let \.
$\textsu{perception}_1(\tau) \subseteq \textsu{nzsource}(\psi)$ \.
denote the set of accessed oracle positions that are nonzero sources or adjacent.
Let \. $\textsu{perception}_2(\tau) \subseteq \textsu{dsink}(\psi)$ \.
denote the set of accessed oracle positions that are double sinks or adjacent
(if the loop is accessed, this counts as an access of the double sink).

Let $\xi$ be a lucky $\vv b$-creator.
We interpret the list \ts
$\textup{\textsu{nzsources}}(\xi)$ \ts
as a bijection \. $\bij_1 : \mP(b_1)\to\textup{\textsu{nzsources}}(\xi)$.
Similarly, we interpret the list \ts
$\textup{\textsu{dsinks}}(\xi)$ \ts as a bijection \.
$\bij_2 : \mP(b_2)\to\textup{\textsu{dsinks}}(\xi)$.
Let \ts $S = S_{\textup{line}}$.
We set
\[
\inst_\SI(\vv s) \ := \ \xi_{\vv s}\..
\]
Finally, for \ts $\tau\in\{0,1\}^*$ \ts we set
\[
\textsu{perc}_\SI(\tau) \ := \
\big(\textsu{bij}_1^{-1}(\textsu{perception}_1(\tau)),\textsu{bij}_2^{-1}(\textsu{perception}_2(\tau))\big)
\]
if \ts $\tau$ \ts is an accepting path for the computation \ts $h_i^{\inst_\SI(\vv b)}(0^j)$,
and \ts $\textsu{perc}_\SI(\tau):=\top$, otherwise.

\smallskip

We can now prove that \ts $\SI$ \ts satisfies the requirements of Definition~\ref{def:setinstantiator}, which then proves Theorem~\ref{thm:PPADsetinstantiator}, because clearly \. $\problem{Multi(Source,\ts{}Excess)}(\inst_\SI(\vv s))=|\vv s|$.
Formally, we have:

\smallskip

\begin{proposition}
\label{pro:PPADsetinstantiatorcorrect}
For all \. $\vv s \in \mP(\vv b)_S$, we have:
$\tau\in\{0,1\}^*$ \. is an accepting path for the computation \.
$h^{\inst_\SI(\vv s)}(0^j)$ \. if and only if \. $\textup{\textsu{perc}}_\SI(\tau) \subseteq \vv s$.
\end{proposition}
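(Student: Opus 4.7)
The plan is to mimic the argument of Proposition~\ref{pro:PLSsetinstantiatorcorrect} and the analogous statement in~$\S$\ref{sec:CountallOccSetInstantiator}, exploiting the fact that the luckiness condition for PPAD creators (Claim~\ref{cla:PPADwhpsuccess}) already forbids queries to every oracle position at which $\inst_\SI(\vv b)$ and $\inst_\SI(\vv s)$ differ. In particular, no analogue of the tail-swap Lemma~\ref{lem:notailquery} should be needed here: the PPAD creator has no ambiguity analogous to the choice ``$o$'' in the \problem{Iter} construction, because $\xi_{\vv s}$ is uniquely determined by choosing to cancel exactly the nzsources indexed by $[b_1]\setminus s_1$ together with the double sinks (and their loops) indexed by $[b_2]\setminus s_2$. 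The equation $|\vv s|\in S_{\text{line}}$ is precisely what makes this cancellation match up, since $b_1-|s_1|=2(b_2-|s_2|)$ follows from $\vv b,\vv s\in S_{\text{line}}$.

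For the forward direction, assume $\tau$ is an accepting computation path of $h^{\inst_\SI(\vv s)}(0^j)$. Since $\xi$ is lucky, $\tau$ does not access the oracle at any position in $\textsu{nzsources}(\xi)\setminus\textsu{nzsource}(\xi_{\vv s})$, nor at any position in $\textsu{dsinks}(\xi)\setminus\textsu{dsink}(\xi_{\vv s})$, nor at any point in $(\textsu{dsinks}(\xi)+1)\setminus(\textsu{dsink}(\xi_{\vv s})+1)$ or directly adjacent. I would then enumerate the modifications made in passing from $\xi$ to $\xi_{\vv s}$: for each triple $(x_1,x_2,y)$ used, the three functions change at $\textsu{dsinks}(\xi)_y$, at $\textsu{dsinks}(\xi)_y+1$, at $\textup{pred}_1(\textsu{dsinks}(\xi)_y)$, and at $\textsu{nzsources}(\xi)_{x_1}$ and $\textsu{nzsources}(\xi)_{x_2}$. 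All of these fall inside the sets that luckiness forbids. Consequently, the oracle answers seen by $\tau$ under $\inst_\SI(\vv s)$ agree with those under $\inst_\SI(\vv b)$, so $\tau$ is also an accepting path of $h^{\inst_\SI(\vv b)}(0^j)$; hence $\textsu{perc}_\SI(\tau)\neq \top$. Moreover, by construction $\textsu{perception}_1(\tau)\subseteq \textsu{nzsource}(\xi_{\vv s})=\textsu{bij}_1(s_1)$ and $\textsu{perception}_2(\tau)\subseteq \textsu{dsink}(\xi_{\vv s})=\textsu{bij}_2(s_2)$, which yields $\textsu{perc}_\SI(\tau)\subseteq \vv s$.

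For the reverse direction, assume $\textsu{perc}_\SI(\tau)\subseteq \vv s$. Since $\textsu{perc}_\SI(\tau)\neq \top$, the definition forces $\tau$ to be an accepting path of $h^{\inst_\SI(\vv b)}(0^j)$. The assumption $\textsu{perc}_\SI(\tau)\subseteq \vv s$ unpacks to $\textsu{perception}_1(\tau)\subseteq \textsu{bij}_1(s_1)$ and $\textsu{perception}_2(\tau)\subseteq \textsu{bij}_2(s_2)$, so $\tau$ queries no nzsource (nor adjacent position) and no double sink (nor its loop) lying outside of $\xi_{\vv s}$. As in the forward direction, these are precisely the positions at which $\inst_\SI(\vv b)$ and $\inst_\SI(\vv s)$ differ, so the oracle responses to $\tau$'s queries are identical under both instances, and therefore $\tau$ is accepting on $h^{\inst_\SI(\vv s)}(0^j)$ as well.

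The main technical obstacle, and the only bookkeeping step worth spelling out carefully in the full write-up, is the explicit enumeration above: one must verify that every position at which the encoded triple $(\psi_{\textup{pred}_1},\psi_{\textup{pred}_2},\psi_{\textup{succ}})$ is rewritten when constructing $\xi_{\vv s}$ from $\xi$ is captured by the ``directly adjacent'' clause in Claim~\ref{cla:PPADwhpsuccess}. Once this is checked, the two directions of the proof are purely mechanical, exactly as in Proposition~\ref{pro:PLSsetinstantiatorcorrect}. The independence of the sampling process in~$\S$\ref{sec:SourceOrExcessSetInstantiator} from any auxiliary row choice means there is no analogue of the $\eta$-argument to worry about here.
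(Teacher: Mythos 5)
Your proposal is correct and follows essentially the same route as the paper's proof: luckiness transfers the accepting path between $h^{\inst_\SI(\vv s)}(0^j)$ and $h^{\inst_\SI(\vv b)}(0^j)$ because every position rewritten in forming $\xi_{\vv s}$ from $\xi$ (cancelled sources, cancelled double sinks, their loops and adjacent nodes) is forbidden to accepting paths, and the perception containment handles the reverse direction; your observation that no analogue of Lemma~\ref{lem:notailquery} is needed is exactly right. The only cosmetic imprecision is the claim that $\xi_{\vv s}$ is ``uniquely determined'' by $\vv s$ --- the matching of the $b_1-|s_1|$ cancelled sources into pairs attached to the $b_2-|s_2|$ cancelled double sinks is an arbitrary choice --- but since luckiness holds uniformly over all tuples $\vv L$ and the perception ignores the matching, this does not affect the argument.
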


\begin{proof}
Since \ts $\tau$ \ts is an accepting path of \ts $h^{\inst_\SI(\vv s)}(0^j)$,
and since \ts $\xi$ \ts is lucky, we conclude that \ts $\tau$ \ts
is also an accepting path of the computation \ts $h^{\inst_\SI(\vv b)}(0^j)$.
This implies that \.
$\textsu{perc}_\SI(\tau) = \textsu{bij}^{-1}(\textsu{perception}(\tau))$.
Clearly \. $\textsu{perception}(\tau) \subseteq \textsu{bij}(\vv s)$,
since otherwise \ts $\tau$ \ts would not even be a computational path of
\ts $h^{\inst_\SI(\vv s)}(0^j)$ \ts because of its oracle answers when
querying lonely nodes in \.
$\textsu{perception}(\tau) \sm \textsu{bij}(\vv s)$.
We conclude that \. $\textup{\textsu{perc}}_\SI(\tau) \subseteq \vv s$.

The argument is reversible:
let \. $\textup{\textsu{perc}}_\SI(\tau) \subseteq \vv s$, in particular $\textup{\textsu{perc}}_\SI(\tau) \neq \top$.
Then \ts $\tau$ \ts is an accepting path of the computation \ts $h^{\inst_\SI(\vv b)}(0^j)$.
Since \ts $\tau$ \ts is an accepting path of \ts $h^{\inst_\SI(\vv b)}(0^j)$,
and since \ts $\xi$ \ts is lucky, we conclude that \ts $\tau$ \ts
is also an accepting path of the computation \ts $h^{\inst_\SI(\vv s)}(0^j)$.
\end{proof}

\smallskip

\subsection{Combining set-instantiators for handling {\normalfont$\#\CLS-1$}}
\label{sec:CLSSetInstantiator}
Let \. $S_{\textup{CLS}} := \bigl\{\vv f \in \IN^3 \.:\. 2f_2=f_1+1, \. f_2\geq 1, \. f_3\geq 1\bigr\}$.

\smallskip

\begin{theorem}
\label{thm:CLSsetinstantiator}
Given a polynomial time nondeterministic Turing machine $M$.
For every $\vv b \in \IN^3$
there exists a threshold $j_0\in \IN$ such that for every $j \geq j_0$
and every $A_{<j}\in\{0,1\}^{<j}$
there exists a
set-instantiator $\SI$ against $(M,j,A_{<j},S=S_{\textup{CLS}},\vv b)$ such that
$\forall \vv s \text{ with } |\vv s|\in S: \problem{Multi(NZSource,\ts{}Excess,\ts{}Iter)}(\inst_\SI(\vv s)) = |\vv s|$.
\end{theorem}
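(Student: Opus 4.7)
The plan is to build the combined set-instantiator by \emph{juxtaposing} the two set-instantiators already constructed in Theorems~\ref{thm:PPADsetinstantiator} and~\ref{thm:PLSsetinstantiator}, acting on disjoint parts of the length-$j$ oracle segment. Concretely, fix a polynomial-time computable partition $\{0,1\}^j = U \sqcup V$ with $|U| = |V| = 2^{j-1}$, where the bits of $U$ will be interpreted as a \problem{SourceOrExcess(2,1)} instance (via the encoding used in $\S$\ref{sec:SourceOrExcessSetInstantiator}) and the bits of $V$ as an \problem{Iter} instance (via the encoding used in $\S$\ref{sec:IterSetInstantiator}). By padding $V$ with a single outgoing edge from the zero vertex to the first \problem{Iter} vertex, the second component is guaranteed to be a valid \problem{Iter} instance whenever it contains at least one presink, which matches the constraint $f_3 \geq 1$ in $S_{\textup{CLS}}$.

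Next, given $\vv b = (b_1, b_2, b_3) \in \IN^3$, choose $j_0$ large enough so that for every $A_{<j}$ the PPAD set-instantiator $\SI_1$ of Theorem~\ref{thm:PPADsetinstantiator} exists on the $U$-part against $(M, j, A_{<j}, S_{\textup{line}}, (b_1, b_2))$, and so that the PLS set-instantiator $\SI_2$ of Theorem~\ref{thm:PLSsetinstantiator} exists on the $V$-part against $(M, j, A_{<j}\cup \inst_{\SI_1}(\vv b_{1,2}), \IN_{\geq 1}, b_3)$. Crucially, the PLS construction must be carried out \emph{after} fixing the PPAD part, so that when the PLS set-instantiator's proof draws a uniformly random creator and applies the Ramsey/union-bound arguments, it treats the already-placed PPAD instance as part of the fixed oracle $A_{<j}$, which is allowed since the existence statement is uniform in $A_{<j}$. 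Symmetrically, the PPAD construction can be carried out first since its own proof is uniform in $A_{<j}$ and quantifies over \emph{all} oracles that extend it.

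Given this, define, for each $\vv s \in \mP(\vv b)_{S_{\textup{CLS}}}$,
\[
\inst_\SI(\vv s) \ := \ \inst_{\SI_1}\bigl((s_1, s_2)\bigr) \ \sqcup \ \inst_{\SI_2}(s_3),
\]
placed on $U$ and $V$ respectively. The condition $\vv s \in \mP(\vv b)_{S_{\textup{CLS}}}$ unpacks as $(|s_1|, |s_2|) \in S_{\textup{line}}$ and $|s_3| \in \IN_{\geq 1}$, so both ingredient instantiators are defined on $\vv s$. Then set
\[
\perc_\SI(\tau) \ := \ \bigl(\pi_1(\perc_{\SI_1}(\tau)),\ \pi_2(\perc_{\SI_1}(\tau)),\ \perc_{\SI_2}(\tau)\bigr)
\]
when $\tau$ is accepting on the full combined oracle, and $\top$ otherwise, with the convention that if either component returns $\top$ then $\perc_\SI(\tau) := \top$. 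By construction, $\problem{Multi(NZSource,\ts{}Excess,\ts{}Iter)}(\inst_\SI(\vv s)) = (|s_1|, |s_2|, |s_3|) = |\vv s|$.

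It remains to verify the perception property: $\tau$ is accepting on $h_M^{\inst_\SI(\vv s)}(0^j)$ iff $\perc_\SI(\tau) \subseteq \vv s$. Since the two instances live on disjoint oracle bit regions, an accepting path $\tau$ on $\inst_\SI(\vv s)$ is simultaneously a computation path on both $\inst_{\SI_1}((s_1,s_2))$ (using the $U$-queries) and on $\inst_{\SI_2}(s_3)$ (using the $V$-queries). Applying the perception properties of $\SI_1$ and $\SI_2$ separately then gives the containment in each coordinate, hence $\perc_\SI(\tau) \subseteq \vv s$; the reverse direction is analogous, using that oracle answers on the unqueried ``phantom'' positions of one side cannot affect the computation restricted to the other side. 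The main obstacle is precisely this independence argument for the Iter part: Lemma~\ref{lem:notailquery} was proved relative to a fixed prefix $A_{<j}$, and we need its conclusion to remain valid when the prefix is extended by the PPAD instance; this is fine because the lemma only used that $M$ answers consistently with respect to \problem{Multiplicities} on the length-$j$ queries, a property that is oblivious to whatever is written into the disjoint $U$-region. A final routine check against Definition~\ref{def:setinstantiator} (well-definedness of $\inst_\SI$ and $\perc_\SI$, and the iff characterization of accepting paths) completes the proof.
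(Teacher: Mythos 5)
Your high-level idea---glue a \problem{SourceOrExcess(2,1)} instance and an \problem{Iter} instance onto disjoint halves of the length-$j$ oracle layer---is exactly the paper's. But the way you execute it, by first finishing the PPAD set-instantiator and then invoking the PLS set-instantiator ``against'' an oracle that already contains the PPAD instance, has a genuine gap. First, it does not typecheck: the PPAD instance lives at length $j$, not length $<j$, so it cannot be absorbed into $A_{<j}$; Definition~\ref{def:setinstantiator} makes $\inst_\SI(\vv s)$ specify the \emph{entire} length-$j$ layer, with no provision for a fixed background region at that length. Second, and more fundamentally, the key step of your verification is false: an accepting path $\tau$ of $M$ on the combined oracle is in general \emph{not} a computation path of $M$ on the oracle containing only $\inst_{\SI_1}((s_1,s_2))$ (with the $V$-region defaulted), because a computation path records the oracle \emph{answers}, and the answers to $V$-queries differ between the two oracles. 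So you cannot apply the perception property of $\SI_1$ (or the luckiness of its creator, or Lemma~\ref{lem:notailquery} for $\SI_2$) to paths of the combined computation; those guarantees were all proved for computations on the component instance alone.

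The repair is to combine the constructions one level earlier, at the stage of \emph{creators}, which is what the paper does: take a joint creator consisting of a $\vv b_{1,2}$-creator on the $U$-half and a $b_3$-creator on the $V$-half, declare it lucky when, for every one of the (constantly many) pairs of configurations $(\vv L;\, L,o)$, the accepting paths of the computation on the \emph{combined} instance avoid all forbidden positions in both halves, and re-run the union-bound argument to show a uniformly random joint creator is lucky w.h.p. Each individual event still holds w.h.p.\ because each oracle query hits a uniformly placed forbidden object with probability $O(\mathrm{poly}(j)/2^{n})$, and there are only constantly many configuration pairs; consistency of $M$ (needed for the tail argument on the \problem{Iter} side) is still available because it is assumed relative to the combined \problem{Multiplicities}. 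With luckiness stated jointly, your definitions of $\inst_\SI$ and $\perc_\SI$ and the final iff-check go through essentially as you wrote them.
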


\begin{proof}
The construction is basically a
combination of the set-instantiator for \.
\problem{Multi(NZSource,\ts{}Excess)},
and the set-instantiator for \ts
$\problem{Iter}$.
We can ignore the subtlety of the interaction between both problems that was introduced to be able to have instances of value~1, and for this set-instantiator we just care about instances that have value at least~2, which is reflected in the set \ts $S$ \ts by having \ts $f_2\geq 1$ \ts and \ts $f_3\geq 1$. This is similar to choosing \ts $S_{\textup{line}}$ \ts in~$\S$\ref{sec:SourceOrExcessSetInstantiator}: it already gives instances that are difficult enough.

We create a set-instantiator that creates pairs of instances that are glued together in
a non-sophisticated way by having the \ts \problem{Multi(NZSource,\ts{}Excess)} \ts
instance in the first half and the \ts $\problem{Iter}$ \ts instance in the second half.
Both set-instantiators were obtained by proving that a uniformly random creator is lucky
with high probability, because it just has to satisfy a finite number of constraints.
We put both sets of constraints together and readily obtain a creator with a finite
number of constraints that also with high probability is lucky. As usual, the
set-instantiator is then obtained by picking any single instance from the creator.
\end{proof}

\smallskip

\subsection{The halving separation}
\label{sec:halvingseparation}

In this section we prove the following result:

\smallskip

\begin{theorem}
\label{thm:halvingseparation}
We have \. $\sharpP \subseteq \sharpCOUNTALL[-PPA]\problemp{Leaf}/2$ \.
via a parsimonious relativizing reduction.
Moreover, there exists an oracle \ts $A$ \ts with respect to which \.
$\sharpCOUNTALL[-PPA]\problemp{Leaf}^A/2 \not\subseteq \sharpP^A$.
\end{theorem}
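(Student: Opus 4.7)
The proof splits into two parts. For the inclusion $\sharpP\subseteq\sharpCOUNTALL[-PPA]\problemp{Leaf}/2$ via a relativizing parsimonious reduction, I would take $f\in\sharpP^A$ computed by an oracle circuit $C^A$ with accepting witnesses $x\in\{0,1\}^n$, and build the \problem{AllLeaves} instance on $\{0,1\}^{n+1}$ whose circuits place an edge between $0x$ and $1x$ exactly when $C^A(x)$ accepts, with every other vertex encoded as isolated via $C_1(v)=C_2(v)=v$. The graph has exactly $2f$ leaves, so the parsimonious reduction recovers $f=(\text{leaves})/2$; the construction is polynomial-time and relativizes.

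For the oracle separation I would apply the Diagonalization Theorem~\ref{thm:diagonalization} with $k=\ka=1$, $I=\langle 0\rangle$, $Z=\IQ$, $S=\IN_{\geq 1}$ (which conveniently avoids the degenerate $f=0$ case), reference point $\vv t=1$, the binomial-bad polynomial $\varphi(f):=(f-1)^2 = 2\binom{f}{2}-\binom{f}{1}+\binom{f}{0}$, and the multiplicities function $\problem{Multiplicities}(B):=\#\{\text{edges in the \problem{AllLeaves} graph encoded by }B\}$, equivalently half the leaf count, which is always an integer because max-degree-$2$ graphs have an even number of leaves. The deliberate choice to count edges rather than leaves lets the set-instantiator index a single slot per edge, sidestepping the leaf-pairing pathology. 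Binomial-badness of $\varphi+I$ is witnessed by the coefficient $-1$ on $\binom{f}{1}$, and the three technical conditions of the theorem are immediate: $0\in Z$; $\overline{C'_S}^{\textup{Zar}}=\overline{C'_T}^{\textup{Zar}}=\IQ_{\geq 0}$ since both $S=\IN_{\geq 1}$ and $T=\IN$ are infinite; and the $\zeta^{\textup{hom}}$-positivity condition is vacuous.

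The main combinatorial obstacle is the set-instantiator. Given $b$, I would select $b$ disjoint pairs of oracle positions $\{(p_i,q_i)\}_{i\in[b]}$ uniformly at random in the length-$j$ slice and define $\inst_\SI(\vv s)$, for each nonempty $\vv s\subseteq[b]$, to be the \problem{AllLeaves} instance whose circuits place an edge between $p_i$ and $q_i$ exactly for $i\in\vv s$ and leave every other position isolated; then $\problem{Multiplicities}(\inst_\SI(\vv s))=|\vv s|$ directly. The perception $\perc_\SI(\tau)$ would record the set of indices $i$ such that $\tau$ queried a position in $\{p_i,q_i\}$ and received an "edge-present" response. Luckiness---that $\tau$ is accepting on $\inst_\SI(\vv s)$ iff $\perc_\SI(\tau)\subseteq\vv s$---would then follow from the same high-probability union-bound argument as in $\S$\ref{sec:SourceOrExcessSetInstantiator}: a polytime computation path makes only polynomially many oracle queries, so with high probability over the random placement none of those queries accidentally hits a slot position for a slot outside $\vv s$.

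To conclude that $p_A\in\sharpCOUNTALL[-PPA]\problemp{Leaf}^A/2$, I would exhibit, for each input $w$, an oracle circuit $C_w^A$ describing an \problem{AllLeaves} instance with leaf count $2p_A(w)=2(f-1)^2$ on the active branch (and $0$ on the inactive branch $\vv t=1$), where $f=\problem{Multiplicities}(\tilde A_{|w|})$: the circuit enumerates ordered pairs $(e,e')$ of candidate input edges of $\tilde A_{|w|}$ and outputs one disjoint output edge per pair satisfying a polytime "non-boundary" filter designed so that exactly $(f-1)^2$ pairs survive. Combined with $p_A\notin\sharpP^A$ from the Diagonalization Theorem, this yields the desired strict containment. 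The most delicate step will be the luckiness analysis for the set-instantiator, since the intrinsic pairing structure of \problem{AllLeaves} (every discovered leaf reveals its partner through the oracle answer) has to be reconciled with the subset-perception condition through a sufficiently spread-out random placement of edge slots.
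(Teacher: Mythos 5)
Your first part (the inclusion $\sharpP\subseteq\sharpCOUNTALL[-PPA]\problemp{Leaf}/2$ via one fresh edge per accepting witness) is exactly the paper's argument, and your set-instantiator sketch (random disjoint edge slots, perception via queried slots, luckiness by a union bound over polynomially many queries) matches the paper's construction for \problem{AllLonely} in \S\ref{sec:CountallLeafSetInstantiator} in all essentials. The gap is in your choice of $\varphi$ and the resulting membership step. You diagonalize against $p_A=(f-1)^2$ where $f$ is the edge count, but to conclude the theorem you must also show $p_A\in\sharpCOUNTALL[-PPA]\problemp{Leaf}^A/2$, i.e.\ exhibit a polytime oracle-circuit family describing an \problem{AllLeaves} instance with exactly $2(f-1)^2$ leaves whenever the oracle slice encodes $f$ edges. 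Your proposed ``non-boundary filter'' on ordered pairs of input edges would have to discard exactly $2f-1$ of the $f^2$ pairs; any such rule (e.g.\ excluding pairs involving a distinguished edge) requires global information about the oracle contents that a local polynomial-time circuit cannot compute. More fundamentally, such a construction would yield a relativizing realization of the map $f\mapsto 2(f-1)^2=4\binom f2-2f+2$ by a $\sharpP$-type counting device with black-box access to the $f$ marked positions, and this polynomial is binomial-bad, so Theorem~\ref{thm:relatclosure} (equivalently the Witness/Diagonalization machinery you are already invoking) rules it out. So the membership half of your separation cannot be completed.

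The fix is the paper's choice of parameters: take $\problem{Multiplicities}$ to be the leaf (lonely-node) count itself, $S=2\IN$, and $\varphi(f)=f/2$, which is binomial-bad because of the non-integer coefficient $\tfrac12$ on $\binom f1$. Then $2p_A$ is \emph{literally} the leaf count of the instance encoded in the oracle slice, so $p_A\in\sharpCOUNTALL[-PPA]\problemp{Leaf}^A/2$ is immediate (no auxiliary graph construction is needed), while the Diagonalization Theorem still gives $p_A\notin\sharpP^A$. This is the whole point of calling it the ``halving separation'': the obstruction is the division by two, not a shifted square. Your $(f-1)^2$ is the right polynomial for the complete-square/non-monotone separations earlier in the paper, but not for this class, whose members are by definition of the form $(\text{leaf count})/2$.
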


\begin{proof}
The inclusion \. $\sharpP^A \subseteq \sharpCOUNTALL[-PPA]\problemp{Leaf}^A/2$ \.
is of the same level of difficulty as the considerations in~$\S$\ref{sec:syntacticcountingsubclasses}.
Given a \ts \problem{CircuitSat} \ts instance we construct a
\ts \problem{AllLeaves} \ts instance of as follows:
Wherever the circuit yields True, we add two vertices that are connected by an edge.
We add no other vertices.
We obtain an instance of value~$2f$. We are done, because \ts $(2f)/2 = f$.

For the other direction, we use the
Diagonalization Theorem~\ref{thm:diagonalization}
and the set-instantiator for \ts \problem{AllLonely} \ts (see Theorem~\ref{thm:COUNTALLPPAsetinstantiator}),
with
\[
k \. = \. 1\., \quad S \. = \. 2\ts\IN\ts.
\]
Set \ts $\ka=k=1$ \ts and have no functions \ts $\zeta$.

We can now verify the preconditions of the Diagonalization Theorem~\ref{thm:diagonalization}:
\begin{enumerate}
\item $Z$ \ts contains the integer point~$0$,
\item $C'_S = \IQ_{\geq 0}$ \. lies Zariski-dense in \ts $\IQ$, and
\item the last point is a vacuous truth.
\end{enumerate}
We fix \. $\problem{Multiplicities} = \problem{AllLonely}$.
The necessary set-instantiators are given by Theorem~\ref{thm:COUNTALLPPAsetinstantiator}.
We set \ts $\varphi(f_1) = f_1/2$.
Clearly \ts $\varphi+I = f_1/2+\langle 0 \rangle$ \ts is \ts binomial-bad.
From the Diagonalization Theorem~\ref{thm:diagonalization}, it follows that for every~$M$
we have \. $p_A(0^j)\neq \#\acc_{M^A}(0^j)$.
Hence \. $\sharpCOUNTALL[-PPA]\problemp{Lonely}^A \neq \sharpP^A$.
The statement follows from the relativizing parsimonious equivalence between \ts
\problem{Lonely} \ts and \ts \problem{Leaf}, once we observe that \. $p_A \in
\sharpCOUNTALL[-PPA]\problemp{Lonely}^A/2$.

Let \ts $C_\bot$ \ts be an input to \ts \problem{AllLonely} \ts
(i.e., a circuit\footnote{Note that in the case of \problem{Lonely}
it is naturally a single circuit, not a list of circuits,
cf.~\cref{footnote:singlecircuit}.}) that yields output $\vv t$.
Let \ts $\ell$ \ts be the number of inputs of the circuit \ts $C_\bot$\ts.
Let \ts $\nu_j(w) = \textup{True}$ \ts if and only if \ts the string~$w$
has only zeros at positions \ts $j+1,j+2,\ldots$ \ts
Finally, let \ts $C_j$ \ts be an input to \ts $\problem{AllLonely}^A$ \ts
that describes the $(j-1)$-input circuit that consists of a single arity~$j$
oracle gate that takes a constant 1 into its first input, takes all inputs
into its remaining $(j-1)$ inputs, and forwards its output directly to the
circuit output.

We construct a circuit \. $\alpha(x) := D_{|x|}$ \.
with \ts $\max\{\ell,|x|\}$ \ts many inputs and
with oracle gates defined as follows \. $\big(w\in\{0,1\}^{\max\{\ell,|x|\}}\big)$:
\begin{eqnarray*}
D_j(w) & \, := \, & \Big((A(0^{|w|+1})=0) \textsf{ and }
\nu_\ell(w)
\textsf{ and }
C_\bot(w_1,\ldots,w_\ell)\Big)
\\
& \, \textsf{ and } \, &
\Big( (A(0^{|w|+1})=1)
\textsf{ and }
\nu_{|x|}(w)
\textsf{ and }
C_{|x|+1}(w_1,\ldots,w_{|x|})
\Big)
\end{eqnarray*}
We define the polynomially balanced relation \ts $R$ \ts via \ts $R(w,y)$ \.
if and only if \. $\problem{rAllLonely}(\alpha(w),y)$.
We set \ts $\beta$ \ts to be the identity function.
By definition, \ts $R \in \ComCla{rCOUNTALL-PPA}$.
The following function is thus in \ts $\sharpCOUNTALL[-PPA]$:
\[
\sum_{y \in\{0,1\}^*}R(x,y) \ = \
\sum_{y \in\{0,1\}^*}\problem{rAllLonely}(\alpha(x),y)
\]
which equals
\[
\begin{cases}
\problem{AllLonely}(D_{|x|})
& \textup{ if } \ A(0^{|x|+1})\ts =\ts 1,
\\
t & \textup{ otherwise.}
\end{cases}
\]
This function equals \ts $2p_A(w)$, which finishes the proof.
\end{proof}

\smallskip

\subsection{A set-instantiator for {\normalfont\sharpCOUNTALL{-\PPA}\problemp{Leaf}
}}
\label{sec:CountallLeafSetInstantiator}

\begin{theorem}
\label{thm:COUNTALLPPAsetinstantiator}
Let \ts $M$ \ts be a polynomial time nondeterministic Turing machine.
For every \ts $b \in \IN$,
there exists a threshold \ts $j_0\in \IN$, such that for every \ts
$j \geq j_0$ \ts and every \ts $A_{<j}\in\{0,1\}^{<j}$,
there exists a set-instantiator \ts $\SI$ \ts against \.
$\big(M,j,A_{<j},S\!=\!2\ts\IN,b\big)$ \. such that \.
$\problem{AllLonely}(\inst_\SI(s)) = |s|$ \. for all \ts $s$ \ts with \ts
$|s|\in S$.
\end{theorem}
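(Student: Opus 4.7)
The plan is to follow the template established by Theorem~\ref{thm:occurrencesetinstantiator} and adapted in Theorem~\ref{thm:PPADsetinstantiator}, with modifications to handle the parity constraint inherent to \problem{AllLonely} (the count of lonely nodes is always even, which matches $S = 2\IN$). First I would use the $2^{j-1}$ oracle bits to encode a function \ts $\psi : [2^n] \to [2^n]$ \ts with $n$ polynomially related to $j$, restricting attention to involutions ($\psi \circ \psi = \id$) so that lonely nodes coincide with fixed points. For \ts $a \in 2\IN$, an $a$-creation is such an involution with exactly $a$ fixed points; this matches $\problem{AllLonely}(\psi) = a$.

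Next I would define a $b$-creator \ts $\xi$ \ts to consist of $b$ ordered, pairwise-disjoint ``slot pairs'' $(c_1,d_1),\ldots,(c_b,d_b)$ of distinct positions in $[2^n]$ together with a fixed-point-free base involution \ts $\pi$ \ts on $[2^n]$ satisfying $\pi(c_i)=d_i$ for each $i$. For $s \subseteq [b]$ with $|s|$ even, the creation \ts $\xi_s$ \ts is obtained from $\pi$ by (i) setting $\psi(c_i)=c_i$ for $i \in s$, and (ii) re-pairing the orphaned buddies $\{d_i : i \in s\}$ via any canonical rule (e.g., pair $d_{i_1}$ with $d_{i_2}$, $d_{i_3}$ with $d_{i_4}$, \ldots, where $i_1<i_2<\ldots$ enumerates $s$). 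The resulting $\xi_s$ is an involution whose fixed-point set is exactly $\{c_i : i \in s\}$, so $\problem{AllLonely}(\xi_s)=|s|$.

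I would call \ts $\xi$ \ts \emph{lucky} if for every even-cardinality $s \subseteq [b]$, no accepting path of \ts $h_i^{\xi_s}(0^j)$ \ts queries the oracle at any position in $\{c_i,d_i\}_{i \notin s}$. A uniformly random creator is lucky w.h.p., exactly along the lines of Claim~\ref{cla:PPADwhpsuccess}: for each fixed $s$, $\xi_s$ is uniformly distributed over involutions with the specified fixed-point count, so the ``forbidden'' positions $\{c_i,d_i\}_{i \notin s}$ are uniformly random unused positions of $[2^n]$; each of the $\varphi(|s|)\ts t_i(j)$ many queries of the accepting paths hits a forbidden position with probability $O(b/2^n)$, and Bernoulli's inequality plus a union bound over the at most $2^b$ choices of $s$ gives luckiness with high probability. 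Fixing such a lucky \ts $\xi$, I set $\inst_\SI(s):=\xi_s$ and, for an accepting path $\tau$ on $\xi_b$, let $\perc_\SI(\tau)$ be the set of $i \in [b]$ such that $\tau$ either queries $c_i$ or $d_i$, declaring $\perc_\SI(\tau)=\top$ otherwise. The biconditional in Definition~\ref{def:setinstantiator} then follows by the same argument as in Proposition~\ref{pro:PPADsetinstantiatorcorrect}: an accepting path on $\xi_s$ is also accepting on $\xi_b$ by luckiness, giving $\perc_\SI(\tau) \subseteq s$; conversely, $\perc_\SI(\tau) \subseteq s$ plus consistency of oracle answers inside the active blocks gives that $\tau$ is also an accepting path on $\xi_s$.

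The main obstacle is the $s$-dependent re-pairing of the buddies $\{d_i : i \in s\}$: unlike in the $\problem{OccurrenceMulti}$ case, toggling a single index $i$ in or out of $s$ does not affect only the block at $i$, because the matching among the active buddies must be readjusted to preserve the involution property. Consequently one cannot hope that accepting paths never query $d_i$ for $i \in s$; the resolution is to register \emph{every} index whose block is touched into the perception, so that if $\tau$ queries $d_i$ and receives $d_j$ as the answer then $\{i,j\} \subseteq \perc_\SI(\tau)$ (still contained in $s$ since both $d_i$ and $d_j$ sit in active blocks). The only way luckiness can fail is that $\tau$ queries a position outside the active blocks altogether, and this is precisely the event whose probability is bounded above by $O(b\ts t_i(j)/2^n)$ in the Ramsey-style counting.
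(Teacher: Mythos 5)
Your overall architecture (creations, random creators, a ``lucky'' event holding w.h.p., perception read off from queried positions) matches the paper's, but your specific construction has a gap that your final paragraph does not repair. In the paper, the creator designates $b$ potential-lonely positions and fixes \emph{once and for all} a pairing of the entire complement $[2^n]\setminus\textsu{lonelies}(\xi)$; the instantiation for $s$ then pairs up only the designated positions indexed by $[b]\setminus s$ among themselves, and luckiness forbids accepting paths from touching exactly those. Consequently every position an accepting path is allowed to query has an oracle value that is \emph{independent of $s$} (lonely designated positions map to themselves; non-designated positions follow the fixed base pairing), which is what lets a single perception function, computed on the reference instance $\xi_{[b]}$, satisfy the biconditional of Definition~\ref{def:setinstantiator} for all $s$ simultaneously. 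In your construction the $s$-dependent re-pairing sits on the buddies $d_i$ with $i\in s$ --- positions you explicitly permit accepting paths to query. Concretely: let $\tau$ accept on $\xi_{\{1,3\}}$ and query $d_1$, receiving the answer $d_3$. Since a computation path records its oracle answers, $\tau$ is not a computation path of $\xi_{\{1,2,3,4\}}$, where $d_1\mapsto d_2$, nor of the reference instance $\xi_{[b]}$; so $\perc_\SI(\tau)=\top$ and the forward direction fails for $s=\{1,3\}$, and no redefinition of which indices get ``registered'' into the perception can help, because the perception must be a single function of $\tau$ making the biconditional hold for every $s$ at once.

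The irony is that the route you dismiss is the one that works: for each fixed $s$ the buddies $\{d_i\}$ are (conditionally on the instance) uniformly distributed among the non-lonely positions, a polynomial number of queries hits them with probability $O(b\ts t_i(j)/2^n)$, and a union bound over the $\leq 2^b$ sets $s$ shows a random creator is, w.h.p., lucky in the stronger sense that accepting paths avoid \emph{all} positions $\{c_i,d_i\}_{i\notin s}\cup\{d_i\}_{i\in s}$. Once you impose that, the buddies play no role and your construction collapses to the paper's: keep a fixed base pairing off the $b$ designated positions, and for each $s$ pair up the complement $[b]\setminus s$ (which is where the parity constraint $|[b]\setminus s|$ even, equivalently $|s|\in 2\IN$ for $b$ even, is consumed). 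I would rewrite the luckiness definition and the final verification along those lines.
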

Intuitively, this says that for all our purposes $\problem{AllLonely}$ behaves exactly as $\sharpP_{\textup{even}}$.
The rest of this section is devoted to proving this theorem.
We will first define a \emph{creator} whose \emph{creations} will be the instantiations in the end, but the creator is not limited to a single creation for each set. We will then define the set-instantiator from the creator by picking for every subset $\vv s$ with $|\vv s|\in 2\IN$ just any one of its creations for~$\vv s$.

We crucially use the parsimonious polynomial-time reductions from \problem{AllLonely} to \problem{AllLeaves} and back, so we can focus on \problem{AllLonely} here.
The reductions are gadget-based (with extremely simple gadgets), but \problem{AllLonely} fits perfectly for our proof technique, while it is slightly more trouble to work with \problem{AllLeaves}.

\subsubsection{Creations}
Having \ts $2^{j-1}$ \ts many bits available, we can encode in a standard way a function \.
$\psi:[2^{n}]\to[2^{n}]$, where $n$ and~$j$ are polynomial related.
We ignore any extra bits that are not needed for this encoding.
A \defn{paired node} in \ts $\psi$ \ts is an \ts $x\in[2^n]$ \ts for which \ts $y \in[2^n]\sm\{x\}$ \ts
exists such that \ts $\psi(x)=y$ \ts and \ts $\psi(y)=x$.
A \defn{lonely node} is a node that is not paired.
Since \ts $2^n$ \ts is even, clearly the number of lonely nodes is always even.

For \ts $a \in 2\IN$, a function \ts $\psi$ \ts is called an \defn{$a$-creation}  \ts
if it consists of exactly $a$~many lonely nodes, and for each lonely node~$x$ we have \ts $\psi(x)=x$.
Let \ts $\textsu{$a$-creations}$ \ts denote the set of all $a$-creations.
Let \. $\textsu{$a^-$-creations} \ts := \ts \bigcup_{1 \leq c \leq a}\textsu{$c$\,-creations}$.
Finally, let \ts $\textsu{lonely}(\psi)\subseteq[2^n]$ \ts denote the set of all lonely nodes.

\subsubsection{Lucky creators}
In this section we introduce the concept of a \defn{creator}.
A creator is a slightly less restrictive version of a set-instantiator.
For each blueprint a creator outputs a creation, where a blueprint is slightly more general than the set $\vv s$ for a set-instantiator, but serves the same purpose.

For a set~$X$, let \ts $\binom{X}{2 \ts, \ts \ldots \ts , \ts 2}$ \ts denote the set of all set partitions of $X$ into pairs.
We call these elements \defn{pairings}.
For \ts $b\in 2\IN$, a \defn{$b$-creator} \. $\xi$ \ts is a
finite ordered set \ts $\textsu{lonelies}(\xi)$ \ts of~$b$ many distinct elements
of \ts $[2^n]$ \ts together with a pairing \.
$\textsu{pairing}(\xi)\in\binom{[2^n]\sm \textsu{lonelies}(\xi)}{2 \ts, \ts \ldots \ts , \ts 2}$.
Let $\textsu{$b$-creators}$ denote the set of all $b$-creators.

Given a $b$-creator \ts $\xi$, a subset \ts $L\subseteq[b]$ \ts with \ts $|L|$ \ts even,
and a pairing \. $o \in \binom{L}{2 \ts, \ts \ldots \ts , \ts 2}$, we obtain a $(b-|L|)$-creation \ts $\xi_{L,o}$ \ts
by taking the pairing \ts $\textsu{pairing}(\xi)$, enlarging it pairing up the nodes as indicated by~$o$ (i.e., if $a$ and $b$ are paired in $o$, then \ts $\textsu{lonelies}(\xi)_a$ \ts and \ts $\textsu{lonelies}(\xi)_b$ \ts get paired), and setting \ts $\xi_{L,o}(x)=x$ \ts for all nodes that are still unpaired.

\smallskip

\begin{definition}
We call \ts $\xi \in \textsu{$b$-creators}$ \ts \defn{lucky} \ts if
for all \ts
$L\subseteq[b]$ \ts with \ts $|L|$ \ts even,
and all pairings \. $o \in \binom{L}{2,\ldots,2}$, we have:
all accepting paths of the computation \ts $h_i^{\xi_{L,o}}(0^j)$ \ts
do not access the oracle at any point in \.  $\textsu{lonelies}(\xi)\sm \textsu{lonely}(\xi_{L,o})$.
\end{definition}

\smallskip

In other words, this says that the accepting paths do not access the oracle at positions
where lonely nodes are not, but could potentially be.

\smallskip

\begin{claim}
If \ts $\xi$ \ts is sampled from \ts $U_\textup{\textsu{$b$-creators}}$\ts, then \ts $\xi$ \ts is lucky w.h.p.
\end{claim}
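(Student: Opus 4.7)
The plan is to mirror the proofs of the analogous luckiness claims already established in the paper, most directly Claim~\ref{cla:PPADwhpsuccess} for \problem{SourceOrExcess} and Claim~\ref{cla:PLSwhpsuccess} for \problem{Iter}, adapting them to the pairing structure used by \problem{AllLonely}. Since $b$ is fixed, the number of pairs $(L,o)$ with $L\subseteq[b]$ of even cardinality and $o\in\binom{L}{2,\ldots,2}$ is a constant depending only on $b$. By the union bound, it therefore suffices to prove that, for each fixed such $(L,o)$, a uniformly random $\xi\sim U_{\textsu{$b$-creators}}$ fails to be lucky with respect to $(L,o)$ with probability tending to $0$.

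Fix $(L,o)$. By the obvious symmetry of $U_{\textsu{$b$-creators}}$, the induced creation $\xi_{L,o}$ is uniformly distributed on $(b-|L|)$-creations. I would condition on the event $\xi_{L,o}=\psi$ for an arbitrary fixed $(b-|L|)$-creation $\psi$. Under this conditioning, the forbidden set $\textsu{lonelies}(\xi)\setminus\textsu{lonely}(\xi_{L,o})$ is precisely the set of $|L|/2$ pairs of $\psi$'s pairing whose endpoints occupied the $L$-indexed slots of $\textsu{lonelies}(\xi)$, labeled in accordance with $o$. A straightforward counting (the data of $\xi$ outside the $L$-indexed slots is exactly $\psi$ itself, so once $\psi$ is fixed, every choice of $|L|/2$ pairs of $\psi$ plus an $o$-compatible labeling comes from the same number of creators) shows that this forbidden set is uniformly distributed over ordered $|L|/2$-subsets of the $(2^n-(b-|L|))/2$ pairs of $\psi$. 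In particular, for every fixed oracle position $x\in[2^n]$, the conditional probability that $x$ belongs to the forbidden set is at most $|L|/(2^n-b)$.

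Given $\xi_{L,o}=\psi$, the nondeterministic computation $h_i^\psi(0^j)$ and all of its accepting paths are fully determined, and each accepting path performs at most $t_i(j)$ oracle queries. Exactly as in Claims~\ref{cla:PLSwhpsuccess} and~\ref{cla:PPADwhpsuccess}, Bernoulli's inequality then yields $(1-|L|/(2^n-b))^{t_i(j)}\geq 1-|L|\,t_i(j)/(2^n-b)$, so the probability that any query of any such path lands in the forbidden set is at most $|L|\,t_i(j)/(2^n-b)$. Since $|L|$ and $b$ are constants while $n$ and $j$ are polynomially related, this bound tends to $0$ as $j\to\infty$, and the outer union bound over the constantly many $(L,o)$ completes the argument.

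The only real obstacle is bookkeeping: verifying the uniform conditional distribution of the forbidden set used in paragraph two. This is handled by the bijection mentioned above, and the verification is essentially identical to the analogous verifications in the earlier set-instantiator claims, so I do not foresee any additional difficulty beyond what already appears in those proofs.
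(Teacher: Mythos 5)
Your proposal is correct and follows essentially the same route as the paper's proof: a union bound over the constantly many pairs $(L,o)$, the observation that $\xi_{L,o}$ is uniformly distributed on $(b-|L|)$-creations, a per-query bound of order $|L|/2^n$ on hitting the forbidden set, and Bernoulli's inequality combined with the polynomial relation between $n$ and $j$. Your extra bookkeeping (conditioning on $\xi_{L,o}=\psi$ and tracking that the forbidden set is a uniformly random union of $|L|/2$ pairs of $\psi$) is just a slightly more explicit justification of the uniformity statement the paper asserts directly.
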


\begin{proof}
We describe a way of sampling from \ts $U_\textup{\textsu{$b$-creators}}$.
Let \ts $\binom{X}{b}_\textup{ordered}$ \ts
denote the set of length~$b$ lists of distinct elements from~$X$.
We first sample \ts $\textsu{lonelies}(\xi)$ \ts from \ts $U_{\binom{[2^n]}{b}_\textup{ordered}}$.
We then sample \ts $\textsu{pairing}(\xi)$ \ts from \ts
$U_{\binom{[2^n]\sm \textsu{lonelies}(\xi)}{2,\ldots,2}}$.
This process samples a random variable \ts $\xi$ \ts
from \ts $U_\textup{\textsu{$b$-creators}}$.

Observe that for a fixed \ts $(L,o)$, where \ts $L\subseteq [b]$ \ts and \ts $o\in\binom{L}{2,\ldots,2}$,
we have that \ts $\xi_{L,o}$ \ts is uniformly distributed from \ts \mbox{$(b-|L|)$\textsu{-creations}}.
It remains to show that for such fixed \ts $(L,o)$, we have that \ts $\xi$ \ts is lucky w.h.p.
Note that since there are only constantly many such \ts $(L,o)$, the claim immediately follows
from the union bound.
Hence, for the rest of the proof fix $(L,o)$.

We have that \ts $\xi_{L,o}$ \ts is uniformly distributed.
Therefore, the set of \ts $\textsu{lonelies}(\xi)\sm\textsu{lonely}(\xi_{L,o})$ \ts
is uniformly distributed over a set of size \ts $\binom{2^n-b+|L|}{|L|}$.
By the union bound, an oracle query accesses such a position with probability \. $\leq \frac{|L|}{2^n-b+|L|}$.
By Bernoulli's inequality, we have
$$\left(1-\frac{|L|}{2^n-b+|L|}\right)^{t_i(j)} \ \geq \ 1 \. - \, \frac{|L| \ts \cdot \ts t_i(j)}{2^n-b+|L|}\..
$$
Since $n$ and $j$ are polynomially related, this proves the claim.
\end{proof}

\subsubsection{Defining the set-instantiator}
For an $a$-creation \ts $\psi$ \ts and a computation path \ts $\tau \in \{0,1\}^*$ \ts
of a computation \ts $h_i^\psi(0^j)$, let \.
$\textsu{perception}(\tau) \subseteq \textsu{lonely}(\psi)$ \.
denote the set of accessed oracle positions that are lonely nodes.

Let \ts $\xi$ \ts be a lucky $b$-creator.
We interpret the list \ts $\textup{\textsu{lonelies}}(\xi)$ \ts
as a bijection \. $\bij : [b]\to\textup{\textsu{lonelies}}(\xi)$.
Let \ts $S = 2\ts\IN$.
For every \ts $\vv s \in \mP(\vv b)_S$
(actually, \ts $\vv s = s$ \ts is univariate),
we let \ts $L_s := [b] \sm s$ \ts
and we choose a \ts
$o_{s} \in \binom{L}{2,\ldots,2}$.
We set
\[
\inst_\SI(s) \. := \. \xi_{L_s,o_{s}}\..
\]
Finally, for $\tau\in\{0,1\}^*$, let
\[
\textsu{perc}_\SI(\tau) \ := \ \begin{cases}
\textsu{bij}^{-1}(\textsu{perception}(\tau)) & \text{ if \ts $\tau$ \ts is an accepting path of the computation \ts $h_i^{\inst_\SI(\vv b)}(0^j)$\ts,}
\\
\top & \text{ otherwise.}
\end{cases}
\]

The rest of this section is devoted to proving that \ts $\SI$ \ts satisfies the requirements of Definition~\ref{def:setinstantiator}, which then proves Theorem~\ref{thm:COUNTALLPPAsetinstantiator}, because clearly \ts $\problem{Lonely}(\inst_\SI(\vv s))=|\vv s|$.
Formally, we have the following result.

\begin{proposition}
For all \ts $\vv s \in \mP(\vv b)_S$, we have: \ts
$\tau\in\{0,1\}^*$ \ts is an accepting path for the computation \ts
$h^{\inst_\SI(\vv s)}(0^j)$ \. if and only if \. $\textup{\textsu{perc}}_\SI(\tau) \subseteq \vv s$.
\end{proposition}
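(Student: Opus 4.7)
The plan is to mirror the structure of the analogous correctness proofs given earlier in the paper (notably the one right after Theorem~\ref{thm:occurrencesetinstantiator} and Proposition~\ref{pro:PLSsetinstantiatorcorrect}), since the definitions of $\inst_\SI$ and $\textsu{perc}_\SI$ here follow the same template. The core observation is that the instances $\inst_\SI(\vv s) = \xi_{L_s,o_s}$ and $\inst_\SI(\vv b) = \xi_{\emptyset,()}$ differ from one another only at the positions in $\textsu{bij}(L_s) = \textsu{lonelies}(\xi)\sm\textsu{lonely}(\inst_\SI(\vv s))$: at these positions the $\vv b$-instance exhibits loops while the $\vv s$-instance uses the pairing~$o_s$, and at every other position the two instances agree as functions $[2^n]\to[2^n]$. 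Acceptance of a fixed path $\tau$ depends only on the oracle answers that $\tau$ actually reads, so it will suffice to argue that under the hypotheses of each direction the path $\tau$ never queries the oracle at a position in $\textsu{bij}(L_s)$.

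For the forward direction, I would start from an accepting path $\tau$ of $h^{\inst_\SI(\vv s)}(0^j)$. Luckiness of $\xi$, applied to $(L_s,o_s)$, prevents $\tau$ from querying any position in $\textsu{bij}(L_s)$, so the oracle answers seen by $\tau$ on $\inst_\SI(\vv s)$ and on $\inst_\SI(\vv b)$ coincide; therefore $\tau$ is also an accepting path of $h^{\inst_\SI(\vv b)}(0^j)$, which means $\textsu{perc}_\SI(\tau) = \textsu{bij}^{-1}(\textsu{perception}(\tau))$ with $\textsu{perception}(\tau)\subseteq\textsu{lonelies}(\xi)$. Since the queried positions in $\textsu{lonelies}(\xi)$ avoid $\textsu{bij}(L_s)$, they lie in $\textsu{bij}(s)$, giving $\textsu{perc}_\SI(\tau)\subseteq s$.

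For the reverse direction, suppose $\textsu{perc}_\SI(\tau)\subseteq s$; in particular $\textsu{perc}_\SI(\tau)\neq\top$, so $\tau$ accepts $h^{\inst_\SI(\vv b)}(0^j)$, and $\textsu{perception}(\tau)\subseteq\textsu{bij}(s)$. Every position in $\textsu{bij}(L_s)$ is lonely in $\inst_\SI(\vv b)$, so any query $\tau$ made there would have landed in $\textsu{perception}(\tau)\setminus\textsu{bij}(s)$, contradicting the inclusion. Hence $\tau$ queries no position in $\textsu{bij}(L_s)$, the oracle answers for $\inst_\SI(\vv b)$ and $\inst_\SI(\vv s)$ agree along~$\tau$, and $\tau$ accepts $h^{\inst_\SI(\vv s)}(0^j)$ as well.

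I do not expect a genuine obstacle here: the combinatorial content (that accepting paths avoid the ``swapped'' positions) is already packaged into the luckiness definition, and the bookkeeping is essentially identical to the prior correctness propositions. The only mild subtlety worth double-checking is that the sets $\textsu{perception}(\tau)$ appearing in the two directions are computed with respect to the same reference instance $\inst_\SI(\vv b)$ (per the definition of $\textsu{perc}_\SI$), which is why we first need to lift $\tau$ from an accepting path of $\inst_\SI(\vv s)$ to one of $\inst_\SI(\vv b)$ before reading off $\textsu{perc}_\SI(\tau)$.
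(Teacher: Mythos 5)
Your proof is correct and follows essentially the same route as the paper's: lift $\tau$ from the $\vv s$-instance to the $\vv b$-instance (and back) by showing it never queries the positions in $\textsu{bij}(L_s)$ where the two instances differ, using luckiness in one direction and the inclusion $\textsu{perception}(\tau)\subseteq\textsu{bij}(\vv s)$ in the other. Your explicit remark that $\textsu{perception}(\tau)$ must be read against the reference instance $\inst_\SI(\vv b)$ is exactly the bookkeeping point the paper's (terser) proof relies on.
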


\begin{proof}
Since \ts $\tau$ \ts is an accepting path of \ts $h^{\inst_\SI(\vv s)}(0^j)$,
and since \ts $\xi$ \ts is lucky, we conclude that \ts $\tau$ \ts
is also an accepting path of the computation \ts $h^{\inst_\SI(\vv b)}(0^j)$.
This implies that \.
$\textsu{perc}_\SI(\tau) = \textsu{bij}^{-1}(\textsu{perception}(\tau))$.
Clearly \ts $\textsu{perception}(\tau) \subseteq \textsu{bij}(\vv s)$,
because otherwise \ts $\tau$ \ts would not even be a computational path of
\ts $h^{\inst_\SI(\vv s)}(0^j)$ \ts because of its oracle answers when
querying lonely nodes in \ts $\textsu{perception}(\tau) \sm \textsu{bij}(\vv s)$.
We conclude \. $\textup{\textsu{perc}}_\SI(\tau) \subseteq \vv s.$

The argument is reversible.  Indeed, let \.
$\textup{\textsu{perc}}_\SI(\tau) \subseteq \vv s$, so in particular \ts
$\textup{\textsu{perc}}_\SI(\tau) \neq \top$.
Then \ts $\tau$ \ts is an accepting path of the computation \ts
$h^{\inst_\SI(\vv b)}(0^j)$. Since \ts $\tau$ \ts is an accepting path
of \ts $h^{\inst_\SI(\vv b)}(0^j)$, and since \ts $\xi$ \ts is lucky,
we conclude that \ts $\tau$ \ts is also an accepting path of the
computation \ts $h^{\inst_\SI(\vv s)}(0^j)$, as desired.
\end{proof}

\smallskip

\subsection{The unbalanced flow separation}
\label{sec:unbalancedflowseparation}
In this section we stud the \ts \problem{BipartiteUnbalance} \ts problem, and prove that
$$\sharpCOUNTGAP\problemp{BipartiteUnbalance}^A \, \not\subseteq \, \sharpCOUNTALL[-PPA]\problemp{Leaf}^A/2\..
$$
For clarity of exposition, in notation of $\S$\ref{subsubsec:BipartiteUnbalance},
the vertices in $V_+$ are called \defn{white} and the vertices in $V_-$ are called \defn{dark}.

\smallskip

While all other classes in Figure~\ref{fig:inclusions} are contained in
$$\PolynP_{\textup{univariate}}\ := \ \big\{\varphi(\sharpP) \, : \, \varphi \textup{ univariate}\big\},
$$
it is not clear if the problem \problem{BipartiteUnbalance} lies in that class.
It does lie in
$$\GapP \, \subseteq \, \PolynP_{\textup{bivariate}} \ := \ \big\{\varphi(\vv\sharpP) \,:\, \varphi \textup{ bivariate}\big\}.
$$
To put it into the complexity class inclusion diagram in Figure~\ref{fig:inclusions}, the definition of its corresponding complexity class \ts $\sharpCOUNTGAP\problemp{BipartiteUnbalance}$ \ts is done in analogy to the definitions in~$\S$\ref{sec:countingclassesandTFNPINTRO}.

Consider the two relations
$$\problem{rBipartiteUnbalanceWhite} \quad \text{and} \quad
\problem{rBipartiteUnbalanceDark}\ts,
$$
defined as follows.
Let \ts $(C,w)\in\problem{rBipartiteUnbalanceWhite}$ \ts if and only if \ts $C$ \ts describes a graph in which~$w$ is a white vertex.
Let the relation \ts \problem{rBipartiteUnbalanceDark} \ts be defined analogously.

Let \ts $\ComCla{rCOUNTGAP}\problemp{BipartiteUnbalance}$ \ts be the set of
pairs of polynomially balanced relations \ts $R=(R_1,R_2)$ \ts for which a pair
\ts $(\alpha,\beta)$ \ts of polynomial-time computable maps exists with
$(C,\beta(x)) \in R_1$ \ts if and only if \ts $(\alpha(C),x)\in \problem{rBipartiteUnbalanceWhite}$
and \ts
$(C,\beta(x)) \in R_2$ \ts if and only if \ts $(\alpha(C),x)\in \problem{rBipartiteUnbalanceDark}$.
In addition, we require that \ts $(C,\beta(x))\in R_1$ \ts and \ts $(C,\beta(y))\in R_1$ \ts implies \ts $x=y$,
and we require that \ts $(C,\beta(x))\in R_2$ \ts and \ts $(C,\beta(y))\in R_2$ \ts implies \ts $x=y$.

Let \ts $\ComCla{rP}\times\ComCla{rP}$ \ts denote the set of pairs of polynomially balanced relations that can be evaluated in polynomial time.
By definition, \. $\ComCla{rCOUNTGAP}\problemp{BipartiteUnbalance} \subseteq \ComCla{rP}\times\ComCla{rP}$.
We attach an oracle completely analogous to~\S\ref{sec:countingclassesandTFNPINTRO},
to obtain \. $\ComCla{rCOUNTGAP}\problemp{BipartiteUnbalance}^A \subseteq \ComCla{rP}^A\times\ComCla{rP}^A$.
We define the corresponding counting class $\sharpCOUNTGAP\problemp{BipartiteUnbalance}^A$ via
\begin{eqnarray*}
 && f \in \sharpCOUNTGAP\problemp{BipartiteUnbalance}^A \qquad \textup{ if and only if }
 \\
 &&
\exists R\in\ComCla{rCOUNTGAP}\problemp{BipartiteUnbalance}^A \ : \ f(w) \ = \ \sum_{y\in\{0,1\}^*} \big(R_1(w,y)-R_2(w,y)\big).
\end{eqnarray*}
By definition and Proposition~\ref{p:flow-graph} we have \.
$\sharpCOUNTGAP\problemp{BipartiteUnbalance}^A\subseteq \GapP_{\geq 0}^A$.
The main result of this subsection (Proposition~\ref{pro:unbalancedflowseparation}) follows from the following theorem.

\smallskip

\begin{theorem}
\label{thm:unbalancedflowseparation}
There exists \ts $A\subseteq\{0,1\}^*$ \ts with \.
$$2\ts\sharpCOUNTGAP\problemp{BipartiteUnbalance}^A \not\subseteq \sharpP^A\ts.
$$
\end{theorem}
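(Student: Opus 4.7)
The plan is to invoke the Diagonalization Theorem~\ref{thm:diagonalization} with the simplest possible setup: $k=\ka=2$, no functions~$\zeta$, so $I=\langle 0\rangle$ and $Z=\IQ^2$; the set $S:=\{(f_1,f_2)\in\IN^2: f_1\geq f_2\}$ carved out by the inequality of Proposition~\ref{p:flow-graph}; the polynomial $\varphi(f_1,f_2):=2(f_1-f_2)$; any fixed $\vv t\in S$, say $\vv t=(0,0)$; and $\problem{Multiplicities}:=\problem{OccurrenceMulti}_2$, for which the required set-instantiators are provided by Theorem~\ref{thm:occurrencesetinstantiator} (restricted to our~$S$).

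Next I would verify the preconditions of the Diagonalization Theorem, all of which are immediate in this case: $Z=\IQ^2$ contains integer points, condition~(3) is vacuous because $\ka=k$, and $C'_S$ already contains the full-dimensional cone $\{(v_1,v_2)\in\IQ_{\geq 0}^2 : v_1\geq v_2\}$, so $\overline{C'_S}^{\,\textup{Zar}}=\overline{C'_T}^{\,\textup{Zar}}=\IQ^2$. The coset $\varphi+I$ is simply $\varphi=2\binom{f_1}{1}-2\binom{f_2}{1}$, which is binomial-bad because of the negative coefficient on $\binom{f_2}{1}$. The theorem then produces an oracle $A\subseteq\{0,1\}^*$ and a function $p_A$ with $p_A\notin\sharpP^A$, together with the structural guarantee that whenever $A_j(0^j)=1$, the length-$j$ layer of $A$ is the output of $\inst_\SI$ on some~$\vv s$ with $|\vv s|\in S$.

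The remaining step is to exhibit $p_A$ as an element of $2\sharpCOUNTGAP\problemp{BipartiteUnbalance}^A$. The key observation will be that one may interpret each set-instantiator output as describing a bipartite graph consisting solely of \emph{isolated} vertices: the first half of the oracle bits names the white vertices present in~$V_+$ and the second half names the dark vertices present in~$V_-$, with no edges at all. Since the edge set is empty, the unbalanced guarantee $\deg(u)\geq\deg(v)$ quantifies over the empty set and holds vacuously, so the instance is a legitimate \problem{BipartiteUnbalance} input, and its value $|V_+|-|V_-|$ agrees coordinate-wise with $\problem{OccurrenceMulti}_2$ of the oracle layer. Packaging this interpretation as a pair of polynomially balanced relations $(R_1,R_2)\in\ComCla{rCOUNTGAP}\problemp{BipartiteUnbalance}^A$, along the same lines as the closing argument of Theorem~\ref{thm:halvingseparation}, and hard-coding a fixed instance of value $\varphi(\vv t)=0$ for the case $A(0^{|w|+1})=0$, will yield $\sum_y(R_1(w,y)-R_2(w,y))=f_1(w)-f_2(w)=p_A(w)/2$, placing $p_A\in 2\sharpCOUNTGAP\problemp{BipartiteUnbalance}^A$ and completing the separation.

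The hard part will be essentially nothing: the argument is the two-coordinate analogue of the halving separation (Theorem~\ref{thm:halvingseparation}), with the mildly delicate point being the compatibility check that the isolated-vertex instances produced by the set-instantiator really do satisfy the syntactic ``unbalanced'' guarantee demanded by the definition of~\problem{BipartiteUnbalance}. This is automatic because the definition only constrains pairs $(u,v)\in E$, and $E=\emptyset$; no new set-instantiator technology beyond Theorem~\ref{thm:occurrencesetinstantiator} is needed.
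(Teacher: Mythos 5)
There is a genuine gap, and it sits exactly at the step you dismiss as ``essentially nothing'': the claim that an edgeless bipartite graph with $f_1$ isolated white and $f_2$ isolated dark vertices is a legitimate \problem{BipartiteUnbalance} instance of value $f_1-f_2$. The nonnegativity guarantee of the class rests on Proposition~\ref{p:flow-graph}, whose double-counting argument rewrites $|V_-|=\sum_{u\in V_-}\sum_{(u,w)\in E}\frac{1}{\deg(u)}$ and therefore implicitly requires every counted dark vertex to have degree at least~$1$; with isolated dark vertices the inequality $|V_+|\ge|V_-|$ fails outright (take $V_-=\{u\}$ isolated and $V_+=\emp$). So either isolated vertices are excluded from $V_\pm$ --- in which case your instances all have value $0$ and the reduction collapses --- or they are admitted, in which case your ``instances'' realize all of $\GapP_{\geq 0}$ and the containment $\sharpCOUNTGAP\problemp{BipartiteUnbalance}\subseteq\GapP_{\geq 0}$, which is the whole point of the problem, is destroyed. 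In short, you cannot freely prescribe the pair $(|V_+|,|V_-|)$ subject only to $f_1\geq f_2$; the realizable multiplicity vectors are cut out by a nontrivial structural (flow-conservation-type) constraint.

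The paper's proof has to confront exactly this. It builds a set-instantiator for \problem{Multi(1\ts{}Source,\ts{}3\ts{}Sink)} out of carefully chosen positive-degree gadgets (a $6$-fold source worth $2(\#\textup{white}-\#\textup{dark})=8$, sources worth $+2$, triple sinks worth $-4$, edges worth $0$), which forces the realizable vectors onto the line $f_2=3f_1-6$ and yields $\varphi=8-4f_1+2f_2$, i.e.\ $\varphi'=2f_1-4$ on the variety. Binomial-badness of $\varphi+I$ is then a genuinely delicate Polyhedron Theorem computation: the LP relaxation of $\mathcal P(\varphi,\zeta)$ is nonempty (it contains $(0,0,\tfrac23)$) and only the absence of \emph{integer} points gives the separation; the paper explicitly warns that many other gadget choices do not work. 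To repair your argument you would need to replace the isolated-vertex instantiation with such a gadget construction, re-derive the induced set $S$ and polynomial $\varphi$, and redo the binomial-badness check for the resulting coset --- none of which is automatic.
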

\smallskip

The following proposition is the inclusion and separation shown in Figure~\ref{fig:inclusions}.

\smallskip

\begin{proposition}
\label{pro:unbalancedflowseparation}
$\sharpCOUNTALL[-PPA]\problemp{Leaf}/2 \subseteq
\sharpCOUNTGAP\problemp{BipartiteUnbalance}$
via a relativizing parsimonious reduction.
Moreover, there exists an oracle $A$ with respect to which
$\sharpCOUNTGAP\problemp{BipartiteUnbalance}^A \not\subseteq \sharpCOUNTALL[-PPA]\problemp{Leaf}^A/2$.
\end{proposition}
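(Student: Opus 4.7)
The plan is to prove the inclusion by an explicit parsimonious reduction, and to deduce the oracle separation from Theorem~\ref{thm:unbalancedflowseparation} using the relativizing containment $\sharpCOUNTALL[-PPA]\problemp{Leaf}^A \subseteq \sharpP^A$ noted in $\S$\ref{sec:countingclassesandTFNPINTRO}.

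For the inclusion, given an \problem{AllLeaves} instance $C=(C_1,C_2)$ encoding a maximum-degree-two graph $G$, I will construct the \problem{BipartiteUnbalance} instance $\alpha(C)$ by taking $V_+$ to be the non-isolated vertices of $G$ and $V_-$ to be the edges of $G$ (encoded canonically as ordered pairs $(u,v)$ with $u<v$ for which the circuit checks confirm an edge), with bipartite incidences being the obvious vertex-edge containments. Every edge $e\in V_-$ then has bipartite degree exactly $2$, while every $v\in V_+$ has bipartite degree equal to its degree in $G$, namely $1$ or $2$. Hence the unbalanced condition $\deg(u)\geq \deg(v)$ for $u\in V_-$, $v\in V_+$ is satisfied and the syntactic padding in the definition of \problem{BipartiteUnbalance} is never triggered. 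Running through the three component types of $G$ --- a path on $k\ge 2$ vertices contributes $k-(k-1)=1$, a cycle on $k$ vertices contributes $k-k=0$, and an isolated vertex contributes $0$ because it is excluded from $V_+$ --- we get $|V_+|-|V_-|=$ number of path components of $G$, which equals $\problem{AllLeaves}(C)/2$. The maps required by the definition of $\sharpCOUNTGAP\problemp{BipartiteUnbalance}$ are then transparent: $\alpha$ is the above construction, $\beta$ is the identity, and $R_1,R_2$ test membership in $V_+$ and $V_-$ respectively, all of which are polynomial-time computable using only oracle calls to $C_1,C_2$, so the reduction is parsimonious and relativizes.

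For the separation, recall that $\sharpCOUNTALL[-PPA]\problemp{Leaf}^A\subseteq \sharpP^A$ relativizes (one checks in polynomial time whether each $v$ is a leaf and counts accordingly). Hence $2\cdot\bigl(\sharpCOUNTALL[-PPA]\problemp{Leaf}^A/2\bigr) = \sharpCOUNTALL[-PPA]\problemp{Leaf}^A \subseteq \sharpP^A$. If we had $\sharpCOUNTGAP\problemp{BipartiteUnbalance}^A\subseteq \sharpCOUNTALL[-PPA]\problemp{Leaf}^A/2$ for the oracle $A$ supplied by Theorem~\ref{thm:unbalancedflowseparation}, doubling both sides would yield $2\sharpCOUNTGAP\problemp{BipartiteUnbalance}^A\subseteq \sharpP^A$, contradicting that theorem.

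The main obstacle is external to this proposition: it is Theorem~\ref{thm:unbalancedflowseparation} itself, which requires constructing a set-instantiator for \problem{BipartiteUnbalance} and applying the Diagonalization Theorem~\ref{thm:diagonalization} to the bivariate polynomial $\varphi(f,g)=2(f-g)$ on the $\sharpP$ pair (white count, dark count), subject to the semantic nonnegativity guarantee coming from the unbalanced condition. Within the proposition itself the only delicacy is to encode the edges of $G$ as $V_-$ vertices of bipartite degree exactly $2$, so that $\max_{v\in N(u)}\deg(v)\le \deg(u)$ for every $u\in V_-$ and the built-in syntactic padding is never triggered; otherwise added dark vertices would inflate $|V_-|$ and destroy the exact equality with $\problem{AllLeaves}(C)/2$.
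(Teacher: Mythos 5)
Your proposal is correct and matches the paper's own proof: the inclusion is established by the same vertex-to-white, edge-to-dark construction (your component-by-component accounting is in fact slightly more careful than the paper's), and the separation is obtained by the same doubling argument from Theorem~\ref{thm:unbalancedflowseparation} via the relativizing containment of the Leaf counting class in \ts $\sharpP^A$.
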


\begin{proof} 
We prove the first inclusion
via a relativizing reduction that preserves the function value (in the same way a parsimonious reduction preserves the function value).
Given a \ts \problem{Leaf} \ts instance, replace each vertex that has nonzero degree by a white vertex and replace each edge by a black vertex connecting the corresponding two white vertices.
The value of the resulting \ts \problem{BipartiteUnbalance} \ts instance is exactly the number of connected components in the original instance.

The (non-)inclusion in the second part follows directly from dividing
Theorem~\ref{thm:unbalancedflowseparation} by~2.  Formally,
for all oracles $A$, we have \.
$\sharpCOUNTALL[-PPA]\problemp{Leaf}^A \subseteq \sharpP_{\textup{even}}^A$\ts.
Therefore, we have \. $
\sharpCOUNTALL[-PPA]\problemp{Leaf}^A/2 \subseteq \sharpP_{\textup{even}}^A/2$\ts.

We use Theorem~\ref{thm:unbalancedflowseparation} to obtain an oracle $A$ such that
$$2\.\sharpCOUNTGAP\problemp{BipartiteUnbalance}^A \. \not\subseteq \. \sharpP^A\ts.
$$
In particular, we have
$$2\.\sharpCOUNTGAP\problemp{BipartiteUnbalance}^A\. \not\subseteq \. \sharpP^A_{\textup{even}}\ts.
$$
We divide by 2:
$$\sharpCOUNTGAP\problemp{BipartiteUnbalance}^A \. \not\subseteq \.  \sharpP^A_{\textup{even}}/2\ts.
$$
But if
$$\sharpCOUNTGAP\problemp{BipartiteUnbalance}^A \. \subseteq \. \sharpCOUNTALL[-PPA]\problemp{Leaf}^A/2$$
were true,
then
$$\sharpCOUNTGAP\problemp{BipartiteUnbalance}^A \. \subseteq \. \sharpP^A_{\textup{even}}/2$$
were also true, which we know is false.
\end{proof}

\smallskip

\begin{proof}[Proof of Theorem~\ref{thm:unbalancedflowseparation}]
We use the
Diagonalization Theorem~\ref{thm:diagonalization}
and the set-instantiator for \.
\problem{Multi(1\ts{}Source,\ts{}3\ts{}Sink)}, see Theorem~\ref{thm:BipUnbsetinstantiator}),
with
\[
k=2, \quad S \. = \. \big\{\vv f \in \IN^2 \. : \. f_1=3f_2+6\big\}.
\]
We switch the positions to get the set-instantiator for \.
\problem{Multi(3\ts{}Sink,\ts{}1\ts{}Source)},
with
\[
k=2, \quad S \. = \. \big\{\vv f \in \IN^2 \. : \. f_2=3f_1-6\big\}.
\]
We set \ts $\ka=1$ \ts and \ts $\zeta_2(v_1) = 3v_1-6$, so that \ts $I = \langle  3v_1-6-f_2\rangle$ \ts and \ts
$Z = \big\{\vv f \in \IQ^2 \.:\. 3f_1-6-f_2=0\big\}$.
We verify the preconditions of the Diagonalization Theorem~\ref{thm:diagonalization}:

\begin{enumerate}
\item $Z$ contains the integer point \ts $(0,6)$,
\item $C'_S$ \ts lies Zariski-dense in \ts $\IQ$ because \ts $(f_1,3 f_1-6) \in S$ \ts for all \ts $f_1 \in \IN_{\geq 2}$\ts, and
\item $\vv v = (1)$ \ts satisfies \ts $3v_1>0$.
\end{enumerate}
We fix \ts $\problem{Multiplicities} =
\problem{Multi(3\ts{}Sink,\ts{}1\ts{}Source)}$.
The necessary set-instantiators are given by Theorem~\ref{thm:BipUnbsetinstantiator}.
We set \ts $\varphi(f_1,f_2) = 8-4f_1+2f_2$, which is the value of a \ts \problem{2BipartiteUnbalance} \ts
instance with one 6-source, \ts $f_1$ \ts many 3-sinks, and \ts $f_2$ \ts many 1-sources,
as created by the set-instantiator in~$\S$\ref{sec:BipartiteUnbalanceSetInstantiator}.

We verify that \ts $\varphi+I = -4f_1+2f_2+8+\langle 3f_1-f_2-6 \rangle$ \ts
is binomial-bad by using the Polyhedron Theorem~\ref{thm:polyhedron}.
In the notation of the Polyhedron Theorem~\ref{thm:polyhedron}, we have \ts
$\varphi' = 2f_1 -4$, \ts $\delta=1$, so that \ts $\IO_\delta$ \ts
has only 2 elements.  Hence, the polyhedron \ts $\mathcal P(\varphi,\zeta)$ \ts
will be defined using 2 equations.

Let $(1,v_1)$ be the order of basis vectors of basis vectors of $\IQ[\vv v]_{\leq 1}$. We express the 3 polynomials $y_{\vv e}$ over this basis and obtain coefficient vectors $(1,0)$, $(0,1)$, $(-6,3)$.
The polyhedron given by
\[
\begin{array}{rrrrrr}
 x_0 &     &  -6x_2    & \. = \. &  -4 \\
     & x_1 &  +3x_2    &\. = \. & 2 \\
 \multicolumn{3}{r}{x_0, \. x_1, \. x_2}  &\. \geq \. & 0 \\
\end{array}
\]
has no integer points.  Therefore, \ts $\varphi+I$ \ts is binomial-bad by the Polyhedron Theorem~\ref{thm:polyhedron}.\footnote{Note that the LP relaxation has solutions, for example \ts $(0,0,\frac 2 3)$, so a if a computation path could not just accept or reject, but accept with a fractional contribution, then this would work out.}
From the Diagonalization Theorem~\ref{thm:diagonalization} it follows that for every $M$ we have \. $p_A(0^j)\neq \#\acc_{M^A}(0^j)$.
Hence \ts $p_A\notin\sharpP^A$.
It remains to show that \.
$p_A \ts \in \ts
2\ts\sharpCOUNTGAP\problemp{BipartiteUnbalance}^A$.

Let $C_\bot$ be an input to \ts \problem{Multi(3\ts{}Sink,\ts{}1\ts{}Source)} \ts
(i.e., a circuit merged from multiple circuits that describe the exponentially large graph,
cf.~\cref{footnote:singlecircuit}) that yields output~$\ts\vv t$.
Let \ts $\ell$ \ts be the number of inputs of the circuit \ts $C_\bot$.
Let \ts $\nu_j(w) = \textup{True}$ \ts if and only if \ts
the string \ts $w$ \ts has only zeros at positions \ts $j+1,j+2,\ldots$

Finally, let \ts $C_j$ \ts be an input to \ts $\problem{Multi(3\ts{}Sink,\ts{}1\ts{}Source)}^A$ \ts
that describes the $(j-1)$-input circuit that consists of a single arity $j$ oracle gate that
 takes a constant~1 into its first input, takes all inputs into its remaining \ts $j-1$ \ts inputs,
 and forwards its output directly to the circuit output.

We construct a circuit \ts $\alpha(x) := D_{|x|}$ \ts
with \ts $\max\{\ell,|x|\}$ \ts many inputs, and
with oracle gates as follows \. $(w\in\{0,1\}^{\max\{\ell,|x|\}})$:
\begin{eqnarray*}
D_j(w) & := & \Big((A(0^{|w|+1})=0) \textsf{ \ \  and \ \  }
\nu_\ell(w)
\textsf{ \ \ and \ \ }
C_\bot(w_1,\ldots,w_\ell)\Big)
\\
&\textsf{ \ \  and \ \ }&
\Big( (A(0^{|w|+1})=1)
\textsf{ \ \ and \ \ }
\nu_{|x|}(w)
\textsf{ \ \  and \ \ }
C_{|x|+1}(w_1,\ldots,w_{|x|})
\Big)
\end{eqnarray*}
We define the polynomially balanced relations \ts $R=(R_1,R_2)$ \ts as follows: \ts
let \ts $R_1(w,y)$ \ts if and only if \ts $\problem{rBipartiteUnbalanceWhite}(\alpha(w),y)$.
Similarly, let \ts $R_2(w,y)$ \ts if and only if \ts $\problem{rBipartiteUnbalanceDark}(\alpha(w),y)$.
We set $\beta$ to be the identity function.

By definition, $R \in \ComCla{rCOUNTGAP\problemp{BipartiteUnbalance}}$.
Therefore, the following function is in \. $\sharpCOUNTGAP\problemp{BipartiteUnbalance}$:
\[\aligned
& \sum_{y \in\{0,1\}^*}R(x,y) \\
& = \ \sum_{y \in\{0,1\}^*}(\problem{rBipartiteUnbalanceWhite}-
\problem{rBipartiteUnbalanceDark})(\alpha(x),y) \\
& = \
\begin{cases}|V_+|(D_{|x|})-|V_-|(D_{|x|})
& \textup{ if \ \ } A(0^{|x|+1})=1,
\\
t_1-t_2 & \textup{ otherwise.}
\end{cases}
\endaligned
\]
Thus, this function equals \ts $p_A(w)+1$, as desired. This finishes the proof.\footnote{We remark that the
graph gadgets that make this separation work must be carefully chosen, and
for several other choices of gadgets we do not get a separation.}
\end{proof}

\smallskip

\subsection{A set-instantiator for {\ts{}}{\normalfont$2\ts$\sharpCOUNTGAP\problemp{BipartiteUnbalance}
}}
\label{sec:BipartiteUnbalanceSetInstantiator}

\begin{theorem}
\label{thm:BipUnbsetinstantiator}
Let \ts $M$ \ts be a polynomial time nondeterministic Turing machine. Fix $k=2$.
Then, for every \ts $\vv b \in \IN^2$,
there exists a threshold \ts $j_0\in \IN$, such that for every \ts $j \geq j_0$
and every \ts $A_{<j}\in\{0,1\}^{<j}$, there exists a set-instantiator \ts $\SI$ \ts
against \. $(M,j,A_{<j},S=S_{\textup{BUline}},\vv b)$, so that \.
 $\problem{Multi(1\ts{}Source,\ts{}3\ts{}Sink)}(\inst_\SI(\vv s)) \ts = \ts |\vv s|$ \.
 for all \ts $\vv s$ \ts with \ts $|\vv s|\in S$.
\end{theorem}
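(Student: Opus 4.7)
The plan is to follow the template established by the earlier set-instantiator constructions in $\S\ref{sec:CountallLeafSetInstantiator}$, $\S\ref{sec:SourceOrExcessSetInstantiator}$, and $\S\ref{sec:IterSetInstantiator}$: introduce a notion of \emph{creation} and of \emph{creator}, show that a uniformly random creator is \emph{lucky} with high probability, and extract the set-instantiator by fixing a lucky instance and, for each blueprint $\vv s$ with $|\vv s|\in S_{\textup{BUline}}$, pairing it with one of its creations. Encode bipartite unbalanced graphs on $[2^n]_+\sqcup[2^n]_-$ (with $n$ polynomial in $j$) by lists of circuits as described in $\S\ref{ss:TFNP-Background}$, and use the syntactic degree-padding of that subsection to enforce the per-edge unbalance condition $\deg(u)\ge\deg(v)$ automatically.

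Design three constant-size bipartite unbalanced building blocks: a single fixed \emph{anchor gadget} placed at a canonical position, a \emph{1-source gadget}, and a \emph{3-sink gadget}, with per-copy contributions to $|V_+|-|V_-|$ chosen so that the disjoint union of one anchor, $f_1$ many 1-source gadgets and $f_2$ many 3-sink gadgets, padded everywhere else by trivial zero-contribution pieces (e.g.\ single matched $(V_+,V_-)$-edges with both endpoints of degree~$1$), is an unbalanced bipartite graph whose \problem{Multi(1\ts{}Source,\ts{}3\ts{}Sink)} output is exactly $(f_1,f_2)$ and whose value $|V_+|-|V_-|$ agrees with the intended polynomial $\varphi(f_1,f_2)$ from the proof of Theorem~\ref{thm:unbalancedflowseparation}. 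By the defining equation of $S_{\textup{BUline}}$ this value is a nonnegative integer on every $\vv s$ that is actually used.

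A $\vv b$-creator $\xi$ consists of the anchor together with an ordered list of $b_1$ candidate positions for 1-source gadgets and $b_2$ candidate positions for 3-sink gadgets, all pairwise disjoint and sampled uniformly at random in the complement of the anchor in $[2^n]_+\sqcup[2^n]_-$, with trivial padding on every remaining position. Given $\vv s=(s_1,s_2)\in\mP(\vv b)$ with $|\vv s|\in S_{\textup{BUline}}$, the creation $\xi_{\vv s}$ is obtained by turning ``on'' exactly the gadgets indexed by $s_1$ and $s_2$ and replacing every unselected candidate position by the same trivial padding as elsewhere. The ordered candidate lists supply the bijection $\bij:[b_1]\sqcup[b_2]\to\textsu{positions}(\xi)$ used in $\perc_\SI$. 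Call $\xi$ \emph{lucky} if for every admissible $\vv s$, no accepting path of $h_i^{\xi_{\vv s}}(0^j)$ queries the oracle at any candidate position (nor at any nearby bookkeeping position created by the gadget or by padding) that belongs to a gadget not activated by $\vv s$. Exactly as in Claim~\ref{cla:PLSwhpsuccess}, Claim~\ref{cla:PPADwhpsuccess}, and the analogous claim of $\S\ref{sec:CountallLeafSetInstantiator}$, for each fixed $\vv s$ the ``off'' candidate positions are uniform in an exponentially large set, so each of the polynomially many queries issued by the accepting paths hits such a position with probability $O(b/2^n)$; a union bound over the constantly many (in $\vv b$) choices of $\vv s$ and over the constantly many bookkeeping positions per gadget shows that a uniformly random $\vv b$-creator is lucky w.h.p.\ in~$j$.

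Fixing such a lucky $\xi$, set $\inst_\SI(\vv s):=\xi_{\vv s}$, and define $\perc_\SI(\tau):=\top$ unless $\tau$ is an accepting path of $h_i^{\inst_\SI(\vv b)}(0^j)$, in which case let $\perc_\SI(\tau)$ be the $\bij$-preimage of the set of queried candidate positions that are actually activated in $\inst_\SI(\vv b)$. The two implications of Definition~\ref{def:setinstantiator} are then verified exactly as in Proposition~\ref{pro:PLSsetinstantiatorcorrect}: luckiness lifts an accepting path on $\inst_\SI(\vv s)$ to an accepting path on $\inst_\SI(\vv b)$, and conversely, an accepting path of $\inst_\SI(\vv b)$ whose queried positions all lie inside $\vv s$ receives identical oracle answers on $\inst_\SI(\vv s)$ and is therefore accepting there as well. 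The main obstacle is the concrete gadget design: unlike \problem{Iter} or \problem{Lonely}, the bipartite instance carries the global semantic condition $\deg(u)\ge\deg(v)$ on every edge, and the syntactic padding of $\S\ref{ss:TFNP-Background}$ silently adds $V_-$-vertices whose contribution to $|V_+|-|V_-|$ must be absorbed into each gadget's value budget; one must therefore engineer the three gadgets so that they remain unbalanced after padding, their signed contributions to $|V_+|-|V_-|$ realize the coefficients prescribed by $\varphi$, and the padding-created neighbors can themselves be treated as random bookkeeping positions that the luckiness argument can hide. Once these constant-size gadgets are fixed, everything else reduces to a routine instance of the random-placement Ramsey-type argument already used throughout $\S\ref{sec:TFNP}$.
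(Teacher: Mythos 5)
Your high-level template (creations, a random creator shown to be lucky by a union bound over the constantly many blueprints, then $\inst_\SI$ and $\perc_\SI$ read off from a fixed lucky creator) is exactly the right one, and it matches the paper, which obtains this set-instantiator by reusing the construction of Theorem~\ref{thm:PPADsetinstantiator}. But there is a genuine gap precisely where you flag ``the main obstacle'': you never construct the gadgets, and the paper's proof consists almost entirely of exhibiting them — a $6$-fold source gadget at the zero vertex with $2(\#\textup{white}-\#\textup{dark})=8$, a $1$-source gadget with contribution $2$, a triple-sink gadget with contribution $-4$, a balanced $4$-vertex gadget with contribution $0$ for intermediate nodes, and double edges replacing directed edges — after which the rest is declared analogous to $\S$\ref{sec:SourceOrExcessSetInstantiator}. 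A specification of what the gadgets must achieve is not a proof that they exist.

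Moreover, the architecture you commit to — a \emph{disjoint union} of isolated gadgets, with every unused position filled by an isolated matched edge — cannot be made to work. First, $\problem{Multi(1\ts{}Source,\ts{}3\ts{}Sink)}$ and the constraint $f_1-3f_2+6=0$ defining $S_{\textup{BUline}}$ only make sense when the $1$-sources, the $6$-fold anchor and the $3$-sinks are wired together by (doubled) edges so that out-degrees balance in-degrees; an isolated ``$3$-sink gadget'' receives no incoming edges and is not a $3$-sink at all. Second, and more fundamentally, the one-line argument of Proposition~\ref{p:flow-graph} applies verbatim to each connected component, so every component of an unbalanced bipartite graph contributes \emph{nonnegatively} to $|V_+|-|V_-|$; hence no closed, isolated gadget can realize the negative coefficient ($-4$ per $3$-sink in $2(|V_+|-|V_-|)$) that the target polynomial $\varphi$ from Theorem~\ref{thm:unbalancedflowseparation} requires — and without a negative coefficient $\varphi$ would be binomial-good on the ambient variables and there would be nothing to separate. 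The repair is the paper's route: start from a $\problem{SourceOrExcess}$-style directed instance as in Theorem~\ref{thm:PPADsetinstantiator}, in which sources are joined by paths to multi-sinks, and replace its local pieces by the bipartite gadgets, so that each sink gadget's deficit is absorbed inside the connected component containing the sources that feed it; the luckiness argument must then also hide the connecting structure, exactly as the \textsu{straight} map and the loops adjacent to double sinks are hidden in $\S$\ref{sec:SourceOrExcessSetInstantiator}.
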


\begin{proof}
The set-instantiator is constructed analogously to the construction for \problem{Multi(NZSource,\ts{}Excess)} in Theorem~\ref{thm:PPADsetinstantiator}, but with the source at zero, while the other sources, the directed edges
and the double sinks are replaced by graph gadgets.
The rest of the proof then works analogously.

We have slightly different parameters in this case.  Let \.
$$S_{\textup{BUline}} \ := \ \big\{\ts (f_1,f_2) \in \IN^2 \, :\, f_1-3f_2+6 \ts = \ts 0\ts \big\}.
$$
The graph gadgets are as follows, where the dark nodes have degree at least as high as all adjacent white nodes.

\smallskip

First, a \emph{directed edge} is replaced by a \emph{double edge}, where each edge connects a white node of degree~2 or~3,
and a dark node of degree~3.  An \emph{indegree} $=$ \emph{outdegree}~1 node is replaced by
the following gadget:
\begin{center}
\begin{tikzpicture}[every node/.style={draw,circle,inner sep=0pt,minimum size=0.3cm}]
\node[fill=gray] (a) at (0,0) {};
\node[fill=gray] (b) at (0,1) {};
\node            (c) at (1,0) {};
\node            (d) at (1,1) {};
\draw ($(a)+(-1,0)$) -- (a);
\draw ($(b)+(-1,0)$) -- (b);
\draw ($(c)+(+1,0)$) -- (c);
\draw ($(d)+(+1,0)$) -- (d);
\draw (a) -- (c);
\draw (a) -- (d);
\draw (b) -- (c);
\draw (b) -- (d);
\end{tikzpicture}
\end{center}
\noindent Note that \. $2\ts{}(\#\textup{white nodes}-\#\textup{dark nodes})=0$ \. in this case.

\smallskip

Next, the \emph{zero source} is replaced by the following gadget (in the original directed problem this would be a 6-fold source):
\begin{center}
\begin{tikzpicture}[scale=0.8, every node/.style={draw,circle,inner sep=0pt,minimum size=0.3cm}]
\node            (v) at (0,0) {};
\node[fill=gray] (v1) at (  0:1) {};
\node[fill=gray] (v2) at (120:1) {};
\node[fill=gray] (v3) at (240:1) {};
\node             (w11) at ( 20:2) {};
\node             (w12) at (340:2) {};
\node             (w21) at (100:2) {};
\node             (w22) at (140:2) {};
\node             (w31) at (220:2) {};
\node             (w32) at (260:2) {};
\node[draw=none]  (x111) at ($(w11)+(50:1.3)$) {};
\node[draw=none]  (x112) at ($(w11)+(30:1.3)$) {};
\node[draw=none]  (x121) at ($(w12)+(310:1.3)$) {};
\node[draw=none]  (x122) at ($(w12)+(330:1.3)$) {};
\node[draw=none]  (x211) at ($(w21)+(90:1.3)$) {};
\node[draw=none]  (x212) at ($(w21)+(70:1.3)$) {};
\node[draw=none]  (x221) at ($(w22)+(170:1.3)$) {};
\node[draw=none]  (x222) at ($(w22)+(150:1.3)$) {};
\node[draw=none]  (x311) at ($(w31)+(210:1.3)$) {};
\node[draw=none]  (x312) at ($(w31)+(190:1.3)$) {};
\node[draw=none]  (x321) at ($(w32)+(290:1.3)$) {};
\node[draw=none]  (x322) at ($(w32)+(270:1.3)$) {};
\draw (v) -- (v1);
\draw (v) -- (v2);
\draw (v) -- (v3);
\draw (v1) -- (w11);
\draw (v1) -- (w12);
\draw (v2) -- (w21);
\draw (v2) -- (w22);
\draw (v3) -- (w31);
\draw (v3) -- (w32);
\draw (w11) -- (x111);
\draw (w11) -- (x112);
\draw (w12) -- (x121);
\draw (w12) -- (x122);
\draw (w21) -- (x211);
\draw (w21) -- (x212);
\draw (w22) -- (x221);
\draw (w22) -- (x222);
\draw (w31) -- (x311);
\draw (w31) -- (x312);
\draw (w32) -- (x321);
\draw (w32) -- (x322);
\end{tikzpicture}
\end{center}
\noindent Note that \. $2\ts(\#\textup{white nodes}-\#\textup{dark nodes})=8$ \. in this case.
The \emph{source} is replaced by the following simple gadget:
\begin{center}
\begin{tikzpicture}[every node/.style={draw,circle,inner sep=0pt,minimum size=0.3cm}]
\node (a) at (0,0) {};
\draw ($(a)+(1,0.3)$) -- (a);
\draw ($(a)+(1,-0.3)$) -- (a);
\end{tikzpicture}
\end{center}
\noindent Note that $2\ts (\#\textup{white nodes}-\#\textup{dark nodes})=2$.

\smallskip

Finally, there are no double sink nodes, but for a triple sink we use the following gadget:
\begin{center}
\begin{tikzpicture}[every node/.style={draw,circle,inner sep=0pt,minimum size=0.3cm}]
\node[fill=gray] (a) at (-0.4,0) {};
\node[fill=gray] (b) at (0.4,0) {};
\node[draw=none] (x1) at (85:2.5) {};
\node[draw=none] (x2) at (95:2.5) {};
\node[draw=none] (x3) at (205:2.5) {};
\node[draw=none] (x4) at (215:2.5) {};
\node[draw=none] (x5) at (325:2.5) {};
\node[draw=none] (x6) at (335:2.5) {};
\draw (x1) -- (b);
\draw (x2) -- (a);
\draw (x3) -- (a);
\draw (x4) -- (b);
\draw (x5) -- (a);
\draw (x6) -- (b);
\end{tikzpicture}
\end{center}
\noindent Note that \. $2\ts (\#\textup{white nodes}-\#\textup{dark nodes})=-4$ \. in this case.
This finishes the proof.
\end{proof}

\medskip

\section{Open problems}\label{s:open}
\newcounter{openproblemscounter}

\subsection{Counting subgraphs} \label{ss:open-log-concavity}

Let \ts $G=(V,E)$ \ts be a simple graph.  Denote by \ts $m_k=m_k(G)$ \ts
the number of \defn{$k$-matchings}, i.e.\ sets of $k$ edges with disjoint
vertices.  As we mentioned in the introduction, the Heilmann--Lieb theorem
states~\cite{HL72}:
$$
\de(k,G) \, := \, m_k^2 \. - m_{k+1}\ts m_{k-1} \. \ge \. 0 \quad \text{for all} \ k\ge 1,
$$
where we assume \ts $m_0=1$. Clearly, $\de\in \GapPP$\ts.

\smallskip

\begin{proposition}[\cite{Pak19}]
\label{prop:mathing}
In the notation above, $\de\in \SP$\ts.
\end{proposition}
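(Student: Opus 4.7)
The plan is to exhibit a nondeterministic polynomial-time Turing machine $M$ whose number of accepting paths on input $(G,k)$ is exactly $\delta(k,G) = m_k(G)^2 - m_{k-1}(G)\,m_{k+1}(G)$. The engine of the proof is Krattenthaler's bijective argument \cite{Kra96}, which I would repackage as a polynomial-time injection
\[
\iota \colon \mathcal X_{k-1,k+1} \ \hookrightarrow \ \mathcal X_{k,k}\ts,\qquad
\mathcal X_{a,b} \. := \. \bigl\{(A,B)\, :\, A \text{ is an $a$-matching},\ B \text{ is a $b$-matching}\bigr\}\.,
\]
whose image is decidable in polynomial time. Once $\iota$ is in hand, the machine $M$ simply guesses $(C,D) \in \mathcal X_{k,k}$, verifies that both $C$ and $D$ are $k$-matchings, and accepts iff $(C,D) \notin \iota(\mathcal X_{k-1,k+1})$; the number of accepting paths is then $|\mathcal X_{k,k}| - |\iota(\mathcal X_{k-1,k+1})| = \delta(k,G)$, establishing $\delta \in \SP$.

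The construction of $\iota$ goes through the standard alternating-path gadget. Given $(A,B)$ with $|A| = k-1, |B| = k+1$, the symmetric difference $A\triangle B$ has maximum degree two, so it decomposes into cycles and paths in which edges alternate between $A\setminus B$ and $B\setminus A$. Cycles contribute equal numbers of $A$- and $B$-edges; each path has surplus in $\{-1,0,+1\}$ depending on the types of its two endpoint edges. Since $|B|-|A| = 2$, the multiset of paths has a net surplus of $+2$ on the $B$-side, hence contains at least one $B$-surplus path. Fixing once and for all a total order on $V(G)$, define $\iota(A,B)$ by selecting the canonical ($=$ lex-smallest) $B$-surplus path $P$, flipping its edges (swapping $P\cap A$ with $P\cap B$), and returning the resulting pair $(C,D)$; one readily checks $|C|=|D|=k$. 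Krattenthaler's argument shows this map is injective because the inverse is equally canonical: given $(C,D)\in \mathcal X_{k,k}$, compute $C\triangle D$, locate the lex-smallest path that is flagged as ``flipped'' by an explicit local criterion on the endpoint-edge types, and, if such a path exists and satisfies the criterion, undo the flip; otherwise declare $(C,D)$ to be outside the image. All of this — building the symmetric difference, enumerating components, sorting endpoints, testing the flip criterion — runs in time polynomial in $|V(G)|$.

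Thus the membership test $(C,D) \stackrel{?}{\in} \iota(\mathcal X_{k-1,k+1})$ is in $\FP$, and $M$ as described above is a valid nondeterministic polynomial-time machine. I expect the main obstacle of the write-up to be not the existence of the injection but the bookkeeping that certifies it: one must pin down every tie-breaking choice (which surplus path, which edge of $P$ to pivot, how to recognise a ``flipped'' path in $C\triangle D$) by an explicit rule tied to the fixed order on $V(G)$, and then verify that these rules compose to an honest injection rather than a mere surjection onto a subset of $\mathcal X_{k,k}$. Modulo this careful canonicalisation — which is exactly what Krattenthaler supplies — the reduction to an $\SP$ machine is immediate, and the proof follows the same pattern as the Sperner example in $\S$\ref{ss:intro-basics-SP}\eqref{motivational:sperner}: a sign-reversing pairing on a signed set, whose positive fixed points give the sought combinatorial interpretation.
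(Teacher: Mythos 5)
Your overall reduction is exactly the paper's: the paper gives no argument beyond citing \cite{Pak19}, which observes that Krattenthaler's injection \ts $\iota:\mathcal X_{k-1,k+1}\hookrightarrow\mathcal X_{k,k}$ \ts is polynomial-time computable with polynomial-time decidable image, so an $\SP$ machine guesses a pair of $k$-matchings and accepts iff it lies outside the image. That framing, and the analogy with the lex-smallest-representative trick, is correct.

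However, the concrete injection you describe --- flip the single lex-smallest $B$-surplus path of \ts $A\triangle B$ --- is \emph{not} injective, and this is not mere bookkeeping. Since \ts $A\triangle B=C\triangle D$ \ts and \ts $A\cap B=C\cap D$ \ts are preserved, inverting the map amounts to deciding which component was flipped; after the flip, the chosen path becomes a $C$-surplus path sitting among the old $A$-surplus paths, and ``lex-smaller than every $D$-surplus path'' does not single it out. Concretely, take surplus paths \ts $P_1<P_2<P_3<P_4$ \ts with $(A,B)$-surpluses \ts $(-1,+1,+1,+1)$ \ts and $(A',B')$-surpluses \ts $(+1,-1,+1,+1)$: your rule flips $P_2$ in the first pair and $P_1$ in the second, and both land on the same \ts $(C,D)$ \ts (namely $C$-surplus on $P_1,P_2$ and $D$-surplus on $P_3,P_4$). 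The canonicalisation that actually works --- and that Krattenthaler supplies --- is not a single-path flip chosen by vertex order but a choice governed by partial sums of the surplus sequence \ts $\epsilon_1,\epsilon_2,\ldots\in\{-1,0,+1\}$ \ts (a cycle-lemma/ballot-type selection of a suffix or prefix of components whose surpluses sum to~$1$), and the image is then recognised by the corresponding partial-sum condition on \ts $C\triangle D$. That rule is still computable in polynomial time, so your $\SP$ machine goes through once the injection is replaced by the correct one; but as written your map is only a (non-injective) surjection argument and would not certify \ts $\de\in\SP$.
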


\smallskip

The proof in~\cite{Pak19} is an easy adaptation of the proof
by Krattenthaler~\cite{Kra96}.  Many other log-concavity problems
remain open.  Proposition~\ref{prop:log-concavity} suggests some of
them might not be in~$\SP$.

\smallskip

\begin{theorem}[{\cite{AHK18}}]
Let \ts $G=(V,E)$ \ts be a simple graph, and denote by \ts
$\tau_k=\tau_k(G)$ \ts the number of spanning forests in~$G$
with $k$ edges.  Then:
$$
\tau_k^2 \. \ge \tau_{k+1}\ts \tau_{k-1} \quad \text{for all} \ k\ge 1.
$$
\end{theorem}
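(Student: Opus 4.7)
My plan starts with identifying $\tau_k(G)$ as the number of independent sets of size $k$ in the graphic matroid $M(G)$: a $k$-edge subset is acyclic precisely when it is independent in $M(G)$. The stated inequality is then the graphic case of \emph{Mason's conjecture}, namely that for every matroid $M$, the sequence $\big(I_k(M)\big)_{k \geq 0}$ of independent-set counts by size is (ultra-)log-concave.

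The approach I would take is via the theory of \emph{Lorentzian polynomials} due to Br\"and\'en--Huh. Lorentzian polynomials are multivariate homogeneous polynomials satisfying a Hessian signature condition (at most one positive eigenvalue after differentiating down to degree two), and their central structural property is that their normalized coefficients are ultra-log-concave along every coordinate line. The first step is to show that the \emph{basis generating polynomial} $B_M(x_1,\ldots,x_n) := \sum_{B \in \mathcal B(M)} \prod_{e \in B} x_e$ of any matroid $M$ is Lorentzian. From this, a short argument using differentiation and a polymatroid-style extension of $M$ shows that the homogenized independence polynomial
\[
P_M(x,y) \ := \ \sum_{k \geq 0} \. \frac{I_k(M)}{\binom{n}{k}} \, x^k y^{n-k}, \qquad n = |E(M)|,
\]
is also Lorentzian. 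Extracting coefficients along the diagonal in $(x,y)$ then yields ultra-log-concavity of $\big(I_k(M)\big)$, which in particular implies $\tau_k^2 \geq \tau_{k-1} \ts \tau_{k+1}$.

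The main obstacle is proving that $B_M$ is Lorentzian. This is the deep ingredient, ultimately resting on the Adiprasito--Huh--Katz Hodge--Riemann relations in degree one for the Chow ring $A^\bullet(M)$ of the matroid. Constructing this Chow ring combinatorially (via the order complex of the lattice of flats), verifying Poincar\'e duality and the hard Lefschetz theorem for it, and finally establishing the Hodge--Riemann bilinear relations in degree one requires the full combinatorial K\"ahler package of Adiprasito--Huh--Katz. Everything else---the reduction from bases to independent sets, the Br\"and\'en--Huh characterization of Lorentzian polynomials, and the derivation of ultra-log-concavity from the Lorentzian property---is comparatively routine once this algebro-geometric input is in place. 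I would expect any attempt to bypass the Hodge-theoretic core by purely bijective or combinatorial means to be the main stumbling block, since no such proof is currently known even for graphic matroids.
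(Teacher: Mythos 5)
The paper does not prove this statement; it is quoted from Adiprasito--Huh--Katz, and the surrounding text merely points to \cite{Huh18} for the algebraic approach and to \cite{CP21,CP22} for an elementary linear-algebraic one. So the only meaningful comparison is between your outline and the known proofs. Your overall strategy --- identify $\tau_k(G)$ with $I_k$ of the graphic matroid, prove the basis generating polynomial $B_M$ is Lorentzian, pass to the homogenized independence polynomial via a free (co-)extension, and read off ultra-log-concavity of the normalized coefficients --- is a correct outline of the Br\"and\'en--Huh (equivalently, Anari--Liu--Oveis Gharan--Vinzant) proof of the strong Mason conjecture, which indeed implies the stated log-concavity. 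The coefficient extraction at the end is also right: for a bivariate Lorentzian polynomial the coefficients normalized by $k!\ts(n-k)!$ are log-concave, which unwinds to $\bigl(I_k/\binom nk\bigr)^2 \ge \bigl(I_{k-1}/\binom n{k-1}\bigr)\bigl(I_{k+1}/\binom n{k+1}\bigr)$ and hence to $\tau_k^2\ge\tau_{k-1}\ts\tau_{k+1}$.

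The one genuine error is your claim that the Lorentzian property of $B_M$ ``ultimately rest[s] on the Adiprasito--Huh--Katz Hodge--Riemann relations.'' It does not. Br\"and\'en and Huh prove that the basis generating polynomial of any matroid is Lorentzian by an elementary induction on the ground set, using the symmetric exchange axiom and their characterization of Lorentzian polynomials in terms of M-convex supports; no Chow ring, Poincar\'e duality, hard Lefschetz, or Hodge--Riemann input is needed. The AHK machinery enters a \emph{different} proof of the present theorem: it yields log-concavity of the coefficients of the (reduced) characteristic polynomial, and one then reaches the $I_k$'s via the free co-extension reduction of Brylawski and Lenz --- a route that never mentions Lorentzian polynomials and only gives plain (not ultra-) log-concavity. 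If you were to actually pursue the dependency you describe, i.e.\ try to derive ``$B_M$ is Lorentzian'' from the degree-one Hodge--Riemann relations for $A^\bullet(M)$, you would be attempting something that is not in the literature and is much harder than the elementary argument that is. So: keep the Lorentzian route, but replace the Hodge-theoretic core with the exchange-axiom induction; or keep AHK as the core, but then switch to the characteristic-polynomial-plus-co-extension reduction.
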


\smallskip

This is a celebrated result by Adiprasito, Huh and Katz, which holds
for all matroids, not just truncations of the graphical matroids.
We refer to~\cite{Huh18} for a survey of the algebraic approach,
and to~\cite{CP21,CP22} for an elementary approach using linear algebra.

\smallskip

\noindent\refstepcounter{openproblemscounter}$(\theopenproblemscounter)$ \.
Is it true that \. $\tau_k^2 \. \geqslant_\# \tau_{k+1}\ts \tau_{k-1}$?
This remains open and is a major challenge in the area, see e.g.~\cite{Pak19} and \cite[$\S$17.17]{CP21}.

\smallskip

For many other unimodality and log-concavity results, see \cite{Bra15,Huh18,Sta80}.
Following \cite{Pak19} and the approach in this paper, most of these results
can be rephrased as questions about corresponding counting functions in~$\SP$.

\smallskip

\subsection{Counting linear extensions} \label{ss:open-LE}
Let \ts $P=(X,\prec)$ \ts be a poset on \ts $|X|=n$ \ts elements.
A \defn{linear extension} is a order-preserving bijection \ts $f: X\to [n]$,
i.e.\ \ts $f(x)<f(y)$ \ts for all \ts $x\prec y$.  As in \ts $\S$\ref{ss:intro-basics-SP}\eqref{motivational:LE},
denote by \ts $\EE(P)$ \ts and \ts $e(P):=|\EE(P)|$ \ts the set and the number
of linear extensions, respectively.  The problem of computing $e(P)$ is famously
$\SP$-complete~\cite{BW91}, even in special cases of posets of height two,
or of width two~\cite{DP18}.

There are some surprising combinatorial inequalities for the numbers of linear
extensions.  For each element \ts $x \in X$, let \.
$B(x) := \big\{y \in X \. : \. y \succcurlyeq x  \big\}$ \.
be the \defn{upper order ideal} generated by~$x$, and let \. $b(x):=|B(x)|$.

\smallskip

\begin{theorem}[{\defn{Bj\"orner--Wachs inequality}~\cite{BW89}}]\label{t:BW-ineq}
In the notation above, we have:
\begin{equation*}
e(P) \,\. \prod_{x \in X} b(x) \ \geqslant_\#  \  \. n!
\end{equation*}
\end{theorem}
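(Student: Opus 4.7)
The plan is to exhibit an explicit, polynomial-time computable injection
\[
\iota \, : \, \aS_n \, \hookrightarrow \, \EE(P) \times \Phi, \qquad \text{where} \quad \Phi \. := \. \{\phi : X \to X \mid \phi(x) \in B(x)\}.
\]
Since $|\EE(P) \times \Phi| = e(P) \prod_x b(x)$ and pairs $(L,\phi)$ are verifiable in polynomial time, $\EE(P) \times \Phi$ is naturally in $\sharpP$. Once I additionally show that the image of $\iota$ is polynomial-time decidable and that $\iota^{-1}$ is polynomial-time computable on its image, the difference $e(P) \prod_x b(x) - n!$ counts pairs $(L,\phi) \notin \operatorname{Im}(\iota)$, and membership in this set is certifiable in polynomial time, placing it in $\sharpP$.

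To build $\iota$, I would use a deterministic \emph{$P$-bubble sort}. Given $\sigma \in \aS_n$ viewed as a bijection $[n] \to X$, initialize $L := \sigma$ and $\phi(x) := x$ for all $x$. Scan positions $i = 1, 2, \ldots, n$: while $L(i)$ is not minimal in $P$ among $\{L(i), L(i+1), \ldots, L(n)\}$, find the smallest $j > i$ with $L(j) \prec L(i)$, write $y := L(i)$ and $x := L(j)$ (so $x \prec y$), swap the two entries so that now $L(i) = x$ and $L(j) = y$, and record $\phi(x) := y$. Then move to the next $i$. Each swap strictly decreases the number of $P$-inversions, so the procedure runs in polynomial time with $L$ a linear extension and $\phi(x) \in B(x)$ by construction. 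Set $\iota(\sigma) := (L,\phi)$.

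To invert on the image, given $(L,\phi)$, process the pairs $(x,\phi(x))$ with $\phi(x) \neq x$ in decreasing order of $L^{-1}(x)$; at each step, swap the positions of $x$ and $\phi(x)$ in the currently reconstructed word. This yields a candidate permutation $\widetilde\sigma$, and I declare $(L,\phi) \in \operatorname{Im}(\iota)$ exactly when running the forward algorithm on $\widetilde\sigma$ reproduces $(L,\phi)$. Since both the forward and reverse procedures are deterministic, whenever the test succeeds $\widetilde\sigma$ is the unique preimage, which gives injectivity. A toy check on $P = \{a,b,c\}$ with $a,b \prec c$ is reassuring: here $e(P) = 2$, $\prod_x b(x) = 4$, and running the algorithm on all six permutations embeds $\aS_3$ into the eight pairs, leaving exactly the $2$ pairs $((a,b,c),\{a\mapsto c\})$ and $((b,a,c),\{b \mapsto c\})$ uncovered, matching $e(P)\prod b(x)-n!=8-6=2$.

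The main obstacle will be proving that the inversion procedure really is a right inverse of the forward algorithm \emph{in full generality} — i.e., that the decreasing-$L^{-1}(x)$ order genuinely reverses the order in which swaps were performed by the forward pass. The correctness argument will rest on an invariant to the effect that, after the forward algorithm finishes position $i$, the recorded values of $\phi$ on the elements already placed form totally ordered chains in $P$ whose chronological order is reconstructible from the final positions in $L$. A potential pitfall is that two distinct swap histories could a priori yield the same $(L,\phi)$, which would break injectivity; ruling this out depends crucially on the ``smallest $j$'' pivot rule and on the $P$-chain structure of $\phi$-images. This is essentially the combinatorial content of the Bj\"orner--Wachs inequality and the step that will require the most care; the Hammett--Pittel random-generation algorithm for linear extensions offers an alternative existence proof of the underlying inequality, but it does not directly supply the deterministic polynomial-time injection needed for $\sharpP$-membership.
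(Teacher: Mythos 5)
Your high-level strategy is the right one and is exactly what the paper has in mind: Theorem~\ref{t:BW-ineq} is not proved in the paper but attributed to the explicit, polynomial-time computable injection \ts $\aS_n\hookrightarrow \EE(P)\times\Phi$ \ts of Bj\"orner--Wachs (see the citation to \cite{CPP22} right after the statement), and your reduction from such an injection (with polytime-decidable image) to $\sharpP$-membership of the difference is correct. The problem is the injection you actually construct: it is not injective. Take $P$ to be the $3$-chain $a\prec b\prec c$, so that $e(P)\prod_x b(x)=1\cdot 3\cdot 2\cdot 1=6=3!$ and any valid injection must be a bijection. Running your $P$-bubble sort on $\sigma=cba$ gives: at $i=1$, $b$ displaces $c$ (record $\phi(b)=c$, word $bca$), then $a$ displaces $b$ (record $\phi(a)=b$, word $acb$); at $i=2$, $b$ displaces $c$ again (record $\phi(b)=c$, word $abc$). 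Running it on $\sigma=bca$ gives: at $i=1$, $a$ displaces $b$ ($\phi(a)=b$, word $acb$); at $i=2$, $b$ displaces $c$ ($\phi(b)=c$, word $abc$). Both inputs land on the same pair $\bigl(abc,\{a\mapsto b,\ b\mapsto c\}\bigr)$, and the pair $\bigl(abc,\{a\mapsto c\}\bigr)$ is never attained. The root cause is exactly the hazard you half-identified: an element can be moved forward as the ``smaller element'' more than once, and each time $\phi(x)$ is overwritten, so the swap history is not recoverable from $(L,\phi)$.

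There is also a logical gap in how you argue injectivity even granting the construction: passing the round-trip test (run the reverse procedure to get $\widetilde\sigma$, then check that the forward algorithm on $\widetilde\sigma$ reproduces $(L,\phi)$) only certifies that $\widetilde\sigma$ is \emph{a} preimage, not that it is the \emph{unique} one; determinism of both procedures does not rule out a second permutation with the same image, as the example above shows. So the proposal as written does not establish the theorem. To repair it you need the genuine Bj\"orner--Wachs injection (or the effective version in \cite{CPP22}), whose bookkeeping is designed precisely so that the full sorting history --- not just the last recorded displacement per element --- is encoded in $\phi$; your toy check on the poset with $a,b\prec c$ passes only because no element is ever displaced twice there.
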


\smallskip

The Bj\"orner--Wachs inequality is an equality for ordered forests, where it is called the
\defng{hook-length formula}, see e.g.~\cite{SY89,Sta12}.  The original
proof in~\cite{BW89} uses an explicit injection which is easily computable in polytime,
see e.g.~\cite{CPP22}.

\smallskip

For an element \ts $x\in X$ \ts and integer \ts $1\le a \le n$, let \ts
$\EE(P,x,a)$ \ts be the set of linear extensions \ts $f\in \EE(P)$ \ts such
that \ts $f(x)=a$.  Denote by \. $e(P, x,a):=\bigl|\EE(P, x,a)\bigr|$ \.
the number of such linear extensions.
\smallskip

\begin{theorem}[{\rm \defn{Stanley inequality}~\cite{Sta81}}]\label{t:Sta}
In the notation above, we have:
\begin{equation*}
e(P, x,a)^2 \ \ge \ e(P, x,a+1) \,\. e(P, x,a-1).
\end{equation*}
\end{theorem}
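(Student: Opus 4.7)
The plan is to reduce Stanley's inequality to the Alexandrov--Fenchel inequality for mixed volumes, using the \emph{order polytope} of~$P$. Define
$$\mathcal{O}(P) \ := \ \bigl\{f : X \to [0,1] \ : \ f(u) \le f(v) \text{ whenever } u \prec v\bigr\} \ \subset \ \IR^n.$$
The canonical triangulation of $\mathcal{O}(P)$ into unit simplices indexed by linear extensions of~$P$ gives \ts $n!\,\mathrm{vol}(\mathcal{O}(P)) = e(P)$, and more generally will let me identify each $e(P,x,a)$ with a volume of a carefully chosen convex body.

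First I would fix $x \in X$ and, for $t \in [0,1]$, study the slice \. $\Omega_x(t) \. := \. \mathrm{vol}_{n-1}\bigl(\mathcal{O}(P) \cap \{f(x)=t\}\bigr)$. Refining the simplex decomposition of $\mathcal{O}(P)$ by the relative position of $f(x)$ inside $(f(y))_{y \in X}$, I would show that $\Omega_x$ is piecewise polynomial and that
$$e(P,x,a) \ = \ n! \cdot \int_{\Delta_a} \Omega_x(t)\, d\mu_a(t),$$
for an explicit measure $\mu_a$ supported on the subinterval corresponding to rank~$a$, so that the integer sequence $\bigl(e(P,x,a)\bigr)_{1\le a \le n}$ is recovered from coefficients of $\Omega_x$.

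Next I would partition the ground set as $X = D \sqcup \{x\} \sqcup U \sqcup I$, where $D=\{y\prec x\}$, $U=\{y\succ x\}$, and $I$ is the set of elements incomparable to~$x$. Rescaling the $D$-coordinates into $[0,t]$ and the $U$-coordinates into $[t,1]$ (with the $I$-coordinates free in $[0,1]$ subject to the order constraints among themselves), I would exhibit the slice at height $t$ as (the image under a volume-preserving map of) a Minkowski-type combination of two convex bodies $K_-(P,x)$ and $K_+(P,x)$, plus a family of convex bodies $L_J$ indexed by the possible subsets $J\subseteq I$ of incomparable elements placed below~$x$. Summing over $J$ and collecting, $\Omega_x(t)$ becomes the volume of a two-parameter Minkowski sum \ts $s K + t L$ \ts after substitution, whose expansion has coefficients expressible as mixed volumes $V(K[i], L[n-1-i])$.

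Finally, by the Alexandrov--Fenchel inequality (Theorem~\ref{t:AF}, used in its general mixed-volume form $V(K_1,K_2,K_3,\ldots)^2 \ge V(K_1,K_1,K_3,\ldots)\, V(K_2,K_2,K_3,\ldots)$, with $K_3,\ldots,K_{n-1}$ all equal to~$L$ and $K_1=K_2$ obtained from $K$ and $L$), the sequence of mixed volumes $V(K[i],L[n-1-i])$ is log-concave in~$i$. Translating back through the combinatorial dictionary yields \ts $e(P,x,a)^2 \ge e(P,x,a+1)\,e(P,x,a-1)$. The main obstacle is the bookkeeping in the Minkowski decomposition: the incomparable set~$I$ makes the slice polytope not literally a Minkowski sum of two bodies but a signed combination indexed by subsets $J\subseteq I$, and pinning down the exact convex bodies whose mixed volumes reproduce $e(P,x,a)$ with the correct factorial normalization is the delicate step. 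Once this dictionary is in place, Alexandrov--Fenchel furnishes the inequality immediately, with no further combinatorics needed.
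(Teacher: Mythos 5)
The paper does not prove Theorem~\ref{t:Sta}; it is stated as a cited classical result, and the surrounding text explicitly attributes the original proof to ``the Alexandrov--Fenchel inequalities applied to order polytopes'' --- which is precisely the route you take. So your strategy is the canonical one (Stanley, 1981), not a new one, and in outline it is correct: triangulate $\mathcal{O}(P)$ by linear extensions, slice at $f(x)=t$, realize the slice volume as a mixed-volume expansion, and invoke Alexandrov--Fenchel to get log-concavity of the mixed volumes $V(K_0[n-1-i],K_1[i])$, which are (up to the factor $(n-1)!$ and a binomial normalization) the numbers $e(P,x,a)$.

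The one place your sketch goes astray is exactly the step you flag as ``delicate.'' There is no need for --- and you must not use --- a ``signed combination indexed by subsets $J\subseteq I$'' of incomparable elements: mixed volumes and the Alexandrov--Fenchel inequality are only available for genuine Minkowski combinations of convex bodies, so a signed decomposition would sink the argument. The saving observation, which is the heart of Stanley's proof, is that the slice is \emph{literally} a Minkowski combination of two explicit polytopes. Define, for $\epsilon\in\{0,1\}$, the polytope $K_\epsilon\subset\IR^{n-1}$ (coordinates indexed by $X\setminus\{x\}$) cut out by $0\le f(y)\le 1$, by $f(y)\le f(z)$ for $y\prec z$ in $X\setminus\{x\}$, by $f(y)\le\epsilon$ for $y\prec x$, and by $f(y)\ge\epsilon$ for $y\succ x$. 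The incomparable elements impose no constraint involving $\epsilon$ and require no case analysis. One then checks directly from the defining inequalities that the slice of $\mathcal{O}(P)$ at $f(x)=t$ equals $(1-t)K_0+tK_1$; expanding $\vol_{n-1}\bigl((1-t)K_0+tK_1\bigr)$ as a polynomial in $t$ and comparing with the count of simplices in the canonical triangulation having $x$ in position $a$ gives the identification of $e(P,x,a)$ with a mixed volume, after which Alexandrov--Fenchel finishes the proof exactly as you say. You may also wish to note, as the paper does, that an elementary linear-algebraic proof avoiding convex geometry entirely appears in \cite{CP21}.
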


\smallskip

The original proof in~\cite{Sta81} uses the Alexandrov--Fenchel inequalities applied
to \defng{order polytopes} (see~$\S$\ref{sec:completesquaresandnonmonotone}),
suggesting that a direct combinatorial proof might not exist.  Recently, Stanley's
inequality has been the subject of intense recent investigation, see~\cite{CPP21,SvH20}.
Notably, an elementary linear algebraic proof of the inequality is given in~\cite{CP21}.

\smallskip

\noindent\refstepcounter{openproblemscounter}$(\theopenproblemscounter)$ \.
Is it true that \. $e(P, x,a)^2 \ts \geqslant_\# \ts e(P, x,a+1) \, e(P, x,a-1)$?

\smallskip

\begin{theorem}[{\defn{XYZ inequality}~\cite{She82}}]
Let $P=(X,\prec)$, $x,y,z\in X$ incomparable elements.  Let \.
$P_{xy}:= P\cup \{x\prec y\}$, \. $P_{xz}:= P\cup \{x\prec z\}$ \. and \. $P_{xyz}:= P\cup \{x\prec y, x\prec z\}$.
Then:
\begin{equation*}
e\bigl(P\bigr) \, e\bigl(P_{xyz}\bigr) \ \ge \
e\bigl(P_{xy}\bigr) \, e\bigl(P_{xz}\bigr).
\end{equation*}
\end{theorem}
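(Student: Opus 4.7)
The plan is to establish the XYZ inequality via an FKG-type correlation argument, following Shepp's original probabilistic approach. First I would recast the inequality as a statement about the uniform measure on $\EE(P)$: since $e(P_{xy}) = e(P)\,\Pr[\pi(x) < \pi(y)]$ for $\pi$ a uniformly random linear extension, and similarly for the other augmented posets, dividing by $e(P)^2$ reduces the claim to the positive correlation statement
\[
\Pr\bigl[\pi(x) < \pi(y)\ \text{and}\ \pi(x) < \pi(z)\bigr] \ \geq \ \Pr[\pi(x) < \pi(y)] \, \Pr[\pi(x) < \pi(z)].
\]
Intuitively, both events assert that $x$ lies ``below'' something, so heuristically they should reinforce each other.

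Next I would lift the problem to a continuous setting. Sample i.i.d.\ uniform $U_v \in [0,1]$ for $v \in X$, and condition on the event $\mathcal{C}$ that $U_v < U_w$ whenever $v \prec w$ in $P$. The relative ranks of the $U_v$ under this conditional law form a uniformly random linear extension of $P$, so the target inequality becomes
\[
\Pr[U_x < U_y,\ U_x < U_z \mid \mathcal{C}] \ \geq \ \Pr[U_x < U_y \mid \mathcal{C}] \, \Pr[U_x < U_z \mid \mathcal{C}].
\]
This reformulation is useful because the conditional distribution is the restriction of a product measure to an order ideal, which is the natural setting for correlation inequalities.

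The main obstacle is that the events $\{U_x < U_y\}$ and $\{U_x < U_z\}$ are \emph{not} monotone in the coordinate-wise order on $[0,1]^X$ (each is increasing in one coordinate but decreasing in another), so the FKG or Ahlswede--Daykin inequality (Theorem~\ref{t:AD}) does not apply directly. Shepp's insight, which I would follow, is to further condition on the value $U_x = t$. On the sliced space $[0,1]^{X \setminus \{x\}}$ with the coordinate-wise order, the event $\{U_y > t\}$ is increasing in $U_y$, the event $\{U_z > t\}$ is increasing in $U_z$, and the conditioning set $\mathcal{C} \cap \{U_x = t\}$ is a distributive sublattice, so the continuous FKG inequality yields
\[
\Pr[U_y > t,\ U_z > t \mid \mathcal{C},\, U_x = t] \ \geq \ \Pr[U_y > t \mid \mathcal{C},\, U_x = t]\,\Pr[U_z > t \mid \mathcal{C},\, U_x = t]
\]
for every $t$.

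The final and most delicate step is to integrate over $t$ while preserving the correlation inequality. Here I would invoke Chebyshev's sum inequality (or a second one-variable application of FKG), using that both conditional marginals $\Pr[U_y > t \mid \mathcal{C},\, U_x = t]$ and $\Pr[U_z > t \mid \mathcal{C},\, U_x = t]$ are monotone in $t$ in the same direction (they decrease as $t$ grows, since the constraint $U_y > t$ becomes harder to satisfy), weighted by the conditional density of $U_x$ under $\mathcal{C}$. Verifying the monotonicity of these conditional marginals in $t$ carefully, and ensuring that the order constraints imposed by $P$ do not flip the direction of the monotonicity at some atypical values of $t$, is where I expect the technical heart of the argument to lie; it requires a separate monotone coupling on $[0,1]^{X \setminus \{x\}}$ between the laws indexed by different values of $t$.
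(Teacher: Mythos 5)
The paper does not actually prove this theorem: it quotes it from Shepp~\cite{She82} and records only that Shepp derives it from the FKG inequality. So your outline has to stand on its own. Your reduction to the correlation statement $\Pr[\pi(x)<\pi(y),\,\pi(x)<\pi(z)]\ge\Pr[\pi(x)<\pi(y)]\,\Pr[\pi(x)<\pi(z)]$, the passage to i.i.d.\ uniforms conditioned on the order constraints, and the observation that the two events are not monotone in the coordinate-wise order are all correct. The slice-wise FKG step is also sound: $\mathcal{C}\cap\{U_x=t\}$ is closed under coordinate-wise join and meet, and both events are increasing on the slice.

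The gap is the integration step, and it is not a technicality. You need $F(t)=\Pr[U_y>t\mid\mathcal{C},\,U_x=t]$ to be non-increasing in $t$, and the tool you propose for this --- a monotone coupling between the slice measures for $t<t'$ --- points the wrong way. As $t$ increases, the elements above $x$ are pushed up, which relaxes the upper constraints on everything below them, so the slice measures shift stochastically \emph{upward} in the $U_y$ coordinate: for a \emph{fixed} threshold $s$, the quantity $\Pr[U_y>s\mid\mathcal{C},\,U_x=t]$ tends to \emph{increase} in $t$. The asserted decrease of $F(t)$ is therefore a competition between this upward drift and the rising threshold, and claiming that the threshold wins is itself an XYZ-type correlation statement; it does not follow from Chebyshev's inequality or from any coupling of the naive kind. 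You have relocated the difficulty rather than resolved it. Note also that the slicing device is not Shepp's: his actual insight is to re-center every coordinate at $U_x$ (work with $U_v-U_x$) and to endow the cube with a non-standard lattice structure in the shifted coordinates, under which both events become increasing while the order constraints and the transformed box remain closed under join and meet; FKG (equivalently, Theorem~\ref{t:AD}) then applies globally, with no slicing and no integration over $t$. To complete your argument you would either have to prove the monotonicity of $F$ and $G$ by independent means or abandon the slicing in favor of that change-of-lattice argument.
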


\smallskip

This correlation inequality is derived by Shepp from the \defng{FKG inequality}
mentioned in~$\S$\ref{sec:ahlswededaykinkleitman},
again suggesting that a direct combinatorial proof might not exist.

\smallskip

\noindent\refstepcounter{openproblemscounter}$(\theopenproblemscounter)$ \.
Is it true that \. $e\bigl(P\bigr) \, e\bigl(P_{xyz}\bigr) \ts \geqslant_\# \ts
e\bigl(P_{xy}\bigr) \, e\bigl(P_{xz}\bigr)$?

\smallskip

The closest to resolving the
problem in the positive is the combinatorial (but not injective!)
proof in~\cite{BT02}, which involves an
application of {\sc BipartiteUnbalance}, see \ts $\S$\ref{ss:TFNP-Background}
and~$\S$\ref{sec:unbalancedflowseparation}.

\smallskip

\subsection{Ranks of matrices}
In~\cite{GKPT16}, the authors show the $\sharpP$-hardness of the dimension $\dim\partial^* f$ of the
space of partial derivatives of a polynomial~$f$,
when the input is given in the sparse presentation (i.e., as a list of monomials that have nonzero coefficients).
They explicitly state as an open question whether or not \ts $\dim\partial^* f$ \ts or \ts $\dim\partial^{=k} f$ \ts are in~$\ts\sharpP$.
When $f$ is homogeneous, then \ts $\partial^{=k} f$ \ts is a special case of a so-called \defng{Young flattening}.

For a partition~$\la$, let \ts $S^\la \IC^n$ \ts denote the irreducible $\GL_n$-representation of type~$\la$, i.e.,
for \ts $\la=(n)$, we have \. $S^{(d)} \IC^n = \IC[x_1,\ldots,x_n]_d$.
Similarly, for \ts $\la=(1^d)$,  we have \.
$S^{(1^d)} \IC^n = \bigwedge^d \IC^n$.
Pieri's rule states that if the Young diagrams of two partitions
\ts $\la \subseteq \mu$ \ts differ by $d$ boxes, at most by~1 in each column,
and if they both have least $n$ many rows,
then there is a unique nonzero $\GL_n$-equivariant map \. $S^{(d)}\IC^n\otimes S^\la \IC^n \to S^\mu \IC^n$.
For every \ts $f \in S^{(d)}\IC^n$ \ts this induces a linear map \.
$\mathcal F_{\la,\mu}:S^{(d)} \to \textup{End}(S^\la\IC^n,S^\mu\IC^n)$,
see e.g.~\cite{Sam09, HI21}.
The rank of this map has been used to derive lower bounds in algebraic complexity theory, see e.g.\ \cite{LO15, Lan15, Far16}.

\smallskip

\noindent\refstepcounter{openproblemscounter}$(\theopenproblemscounter)$ \.
Is the map \. $(\la,\mu,d,n,f) \mapsto \textup{rank}(\mathcal F_{\la,\mu}(f))$ \. in \ts $\sharpP$,
where~$f$ is given in the sparse presentation?

\smallskip

In \cite{BIMPS20}, a very similar lower bounds technique, based on \cite{HL16},
is used to show that a polynomial can be computed by an algebraic branching program
of a certain format \emph{only when approximations are allowed}.

\smallskip

\subsection{Orbit closures and representation theoretic multiplicities}
\label{ss:open-orbits}
The Kronecker and the plethysm coefficients
have interpretations as \defng{dimensions of highest weight vector spaces} as follows.
For a $\GL_m$-representation $V$ a \emph{weight vector} is a vector $v$
that is rescaled by the action of the algebraic torus: \.
$\diag(\alpha_1,\ldots,\alpha_m) \ts v \ts = \ts \alpha_1^{\la_1}\cdots\alpha_n^{\la_m}\ts v$.
The list \ts $\la\in\IZ^m$ \ts is called the \emph{weight} of~$v$.

A \emph{highest weight vector} is a weight vector that is fixed under the action of a maximal unipotent subgroup $U$ (for example the upper triangular matrices with 1s on the main diagonal).
In this case, $\la$ is always a partition.
Highest weight vectors of weight $\la$ form a vector space that we denote by $\HWV_\la(V)$. In general, we have $\dim\HWV_\la(V)=\mult_\la(V)$, where $\mult_\la(V)$ is the number of irreducible representations of type $\la$ in any decomposition of $V$ into irreducibles (this number does not depend on the actual decomposition).  We have:
\[
p_\la(d,n) \ = \ \dim\HWV_\la\big(S^{(d)}(S^{(n)}\IC^m)\big) \ = \ \mult_\la\big(S^{(d)}(S^{(n)}\IC^m)\big)
\]
for $m$ large enough. In fact, the dimension is zero if the number of rows of $\la$ is larger than $m$, and it is a fixed number otherwise.
An analog argument can be done for the triple product $\GL_m \times \GL_m \times \GL_m$ on the space of order 3 tensors $\tensor^3\IC^m$:
\[
g(\la,\mu,\nu) \ = \ \dim\HWV_{(\la,\mu,\nu)}\big(S^{(d)}(\tensor^3\IC^m)\big) \ = \ \mult_{(\la,\mu,\nu)}\big(S^{(d)}(\tensor^3\IC^m)\big).
\]
For a $\GL_m$-orbit closure\footnote{indeed, for any $\GL_m$-variety that is closed under rescaling, i.e., a projective variety} $Z\subseteq S^{(n)}\IC^m$, the plethysm coefficient decomposes:
\[
p_\la(d,n) \ = \ \mult_\la\big(I(Z)_d\big) \, + \, \mult_\la\big(\IC[Z]_d\big),
\]
We write \. $p_\la(d,n)_{Z^+} := \mult_\la(\IC[Z]_d)$ \.
and \. $p_\la(d,n)_{Z^-} := \mult_\la(I(Z)_d)$.
Analogously for the tensor setting, for a $\GL_m^3$-orbit closure $Z$, with $d$ being the number of boxes in $\la$ and also in $\mu$ and $\nu$,
we have:
\[
g(\la,\mu,\nu) \ = \ \mult_{(\la,\mu,\nu)}\big(I(Z)_d\big) \, + \, \mult_{(\la,\mu,\nu)}\big(\IC[Z]_d\big)
\]
We write \. $g(\la,\mu,\nu)_{Z^+} := \mult_{(\la,\mu,\nu)}(\IC[Z]_d)$ \.
and \. $g(\la,\mu,\nu)_{Z^-} := \mult_{(\la,\mu,\nu)}(I(Z)_d)$.
These numbers are zero when $m$ is too small, but are constant otherwise.

If $Z$ is an orbit closure of a point $q$, then
an upper bound for
$p_\la(d,n)_{Z^+}$ (and for \ts $g(\la,\mu,\nu)_{Z^+}$ \ts in the tensor setting),
is given by the dimension of the invariant space of the irreducible
$\GL_m$-representation $V_\la$ of type~$\la$
under the action of the stabilizer $H$ of~$q$, i.e.
$p_\la(d,n)_{Z^+} \leq \dim V_\la^H$ \. or \.
$g(\la,\mu,\nu)_{Z^+} \leq \dim(V_\la\otimes V_\mu\otimes V_\nu)^H$, see \cite{BLMW11}.
We denote this quantity by \ts $u_\la(q)$.

The \defng{geometric complexity} \ts paper
\cite{MS08} pioneered studying a subset of the following multiplicities,
where input partitions and input numbers are always given in unary.
Let
$${\det}_{n}  \ := \ \sum_{\pi\in\aS_n} \sgn(\pi) \. \prod_{i=1}^n x_{i\ts \pi(i)} \quad \text{and let}  \quad
\per_n \ := \ \sum_{\pi\in\aS_n} \. \prod_{i=1}^n x_{i\ts \pi(i)}\..
$$

\smallskip

\noindent\refstepcounter{openproblemscounter}\label{open:detI}$(\theopenproblemscounter)$ \.
For \. $Z = \overline{\GL_{n^2}\det_n}$,
is \. $(d,n,\la)\mapsto p_\la(d,n)_{Z^\pm}$ \. in $\ts\sharpP$? \,
Is \. $(d,n,\la)\mapsto u_\la(\det_n)$ \. in \ts $\sharpP$?
These problems are studied in \cite{Kum15, IP17, BIP19}.

\smallskip

\noindent\refstepcounter{openproblemscounter}$(\theopenproblemscounter)$ \.
For \. $Z = \overline{\GL_{n^2}\per_n}$,
is $(d,n,\la)\mapsto p_\la(d,n)_{Z^\pm}$ in $\ts\sharpP$?
Is \. $(d,n,\la)\mapsto u_\la(\per_n)$ \. in $\ts\sharpP$?
See \cite[eq~(5.5.2)]{BLMW11}.

\smallskip

Let
\[
\textup{IMM}_{m}^{(n)} :=
\textup{tr} \. \left[
\left(\begin{smallmatrix}
x_{1, 1}^{(1)} & \cdots & x_{1, m}^{(1)} \\
\vdots & \ddots & \vdots \\
x_{m, 1}^{(1)} & \cdots & x_{m, m}^{(1)}
\end{smallmatrix}\right)
\cdots
\left(\begin{smallmatrix}
x_{1, 1}^{(n)} & \cdots & x_{1, m}^{(n)} \\
\vdots & \ddots & \vdots \\
x_{m, 1}^{(n)} & \cdots & x_{m, m}^{(n)}
\end{smallmatrix}\right)
\right]
\quad \text{and let}  \quad
\textup{Pow}_{m}^{(n)} :=
\textup{tr} \. \left[
\left(\begin{smallmatrix}
x_{1, 1}^{(1)} & \cdots & x_{1, m}^{(1)} \\
\vdots & \ddots & \vdots \\
x_{m, 1}^{(1)} & \cdots & x_{m, m}^{(1)}
\end{smallmatrix}\right)^n
\right]
\]

\smallskip

\noindent\refstepcounter{openproblemscounter}$(\theopenproblemscounter)$ \.
For \. $Z = \overline{\GL_{m^2n}\textup{IMM}_m^{(n)}}$,
is \. $(d,n,m,\la)\mapsto p_\la(d,n)_{Z^\pm}$ \. in $\sharpP$?
Is $(d,n,m,\la)\mapsto u_\la(\textup{IMM}_m^{(n)})$ \. in~$\ts\sharpP$?

\smallskip

\noindent\refstepcounter{openproblemscounter}$(\theopenproblemscounter)$ \.
For \. $Z = \overline{\GL_{m^2}\textup{Pow}_m^{(n)}}$,
is \. $(d,n,m,\la)\mapsto p_\la(d,n)_{Z^\pm}$ \. in $\sharpP$?
Is \. $(d,n,m,\la)\mapsto u_\la(\textup{Pow}_m^{(n)})$ \. in~$\ts\sharpP$?
This last quantity has been studied in \cite{GIP17}.

\smallskip

\noindent\refstepcounter{openproblemscounter}\label{open:powersumI}$(\theopenproblemscounter)$ \.
For \. $Z = \overline{\GL_{m}(x_1^n+x_2^n+\cdots+x_m^n)}$,
is \. $(d,n,m,\la)\mapsto p_\la(d,n)_{Z^\pm}$ \. in $\sharpP$?
Is \. $(d,n,m,\la)\mapsto u_\la(x_1^n+x_2^n+\cdots+x_m^n) \in \ts\sharpP$?

\smallskip

\noindent\refstepcounter{openproblemscounter}\label{open:productI}$(\theopenproblemscounter)$ \.
For \. $Z = \overline{\GL_{n}(x_1\cdots x_n)}$,
is \. $(d,n,\la)\mapsto p_\la(d,n)_{Z^\pm}$ in $\ts\sharpP$?

\smallskip

The multiplicities in \eqref{open:powersumI} and \eqref{open:productI} are used in \cite{DIP20, IK20}, see also \cite[Thm.~22.4.1]{BI18}.
The multiplicities in \eqref{open:productI} play a central role in the Foulkes--Howe approach to the \defng{Foulkes conjecture},
see e.g.~\cite{McK08,CIM17}.
For a partition \ts $\la$ \ts with \ts $|\la|=nd$, we actually have \ts $u_\la(x_1\cdots x_n)=p_\la(n,d)$.

\smallskip

Let \ts $e_i\in\IC^m$ \ts denote the $i$-th standard basis vector, and let \.
$\{e_{i,j} \. : \. 1 \leq i,j\leq m\}$ \. be a basis of~$\IC^{m^2}$.
 Consider the unit tensor and the  \defn{matrix multiplication tensor}:
 $$
E_m \ := \ e_1^{\otimes 3}+\cdots+e_m^{\otimes 3}
\quad\text{ and }\quad
M_m \  := \ \sum_{i,j,k=1}^m \. e_{i,j} \otimes e_{j,k} \otimes e_{k,i}\.
$$

\noindent\refstepcounter{openproblemscounter}\label{open:unittensorI}$(\theopenproblemscounter)$ \.
For \. $Z = \overline{\GL_{m}^3(E_m)}$,
is \. $(d,m,\la,\mu,\nu)\mapsto g(\la,\mu,\nu)_{Z^\pm}$ \. in $\sharpP$?
Is $(d,m,\la,\mu,\nu)\mapsto q_{(\la,\mu,\nu)}(E_m) \in \ts \sharpP$?

\smallskip

\noindent\refstepcounter{openproblemscounter}\label{open:mamutensorI}$(\theopenproblemscounter)$ \.
For \. $Z = \overline{\GL_{m^2}^3(M_m)}$,
is \. $(d,m,\la,\mu,\nu)\mapsto g(\la,\mu,\nu)_{Z^\pm}$ \. in $\sharpP$?
Is $(d,m,\la,\mu,\nu)\mapsto q_{(\la,\mu,\nu)}(M_m) \in \ts\sharpP$?

\smallskip

The coefficients in open problems \eqref{open:unittensorI} and
\eqref{open:mamutensorI}
were used in \cite{BI11, BI13STOC} to study the \defng{complexity of matrix multiplication}.
For \eqref{open:unittensorI}, see also \cite[Thm.~22.4.3]{BI18}.
See \cite{BILPS21} for more tensor setting examples.

In~\cite{BHIM22}, the authors give software to compute $p_\la(d,n)_{Z^\pm}$
in small cases.
The papers \cite{BIJL18} and \cite{GIMOWW20} point out the hardness of computing these quantities in general.
Notably, Grochow \cite{Gro15} reinterprets numerous lower bounds results in \defng{algebraic complexity theory} \ts
as group orbit (closure) separations, and thus they fall into the category of this paragraph,
but the focus in \cite{Gro15} is not directly on the multiplicities,
but on the more general approach of finding separating modules.

\smallskip

\subsection{Algorithms for the binomial basis on an affine variety}
Expressing a polynomial in its binomial basis can be done using a greedy algorithm starting from the top total degrees and proceeding to the lower degrees. Hence, given a polynomial $\varphi$ in its monomial basis, one can check in polynomial time whether or not $\varphi$ is binomial-good. We have seen that it is important to understand when a coset $\varphi+I$ contains a binomial-good element (we say that $\varphi+I$ is binomial-good), where $I$ is an ideal.

\smallskip

\noindent\refstepcounter{openproblemscounter}$(\theopenproblemscounter)$ \.
If we assume that a set of generating elements of $I$ is presented in the binomial basis, what is the complexity of determining whether or not $\varphi+I$ is binomial-good?
In which cases do we get efficient algorithms?


\bigskip

\subsection*{Acknowledgements}
We are grateful to Swee Hong Chan, Nikita Gladkov and Greta Panova for
numerous helpful discussions and remarks on the subject.  We thank
Joshua Grochow and Greg Kuperberg for many inspiring conversations
on computational complexity over the years.  
We learned about \cite{HVW95} and \cite{Bei97} only 
after the paper was written; we thank Lane Hemaspaandra 
who kindly for pointed out these references.
We thank Markus Bl\"aser and Paul Goldberg for helpful comments on a draft version of this paper.

We are grateful to the Oberwolfach Research Institute for Mathematics
where we started this project in the innocent times of February of 2020.
The first author was partially supported by the DFG grant IK 116/2-1 and the EPSRC grant EP/W014882/1.
The second author was partially supported by the NSF grant CCF-2007891.

\newpage

{\footnotesize

}

\end{document}